\documentclass[11pt,letterpaper]{report}

\usepackage[T1]{fontenc}
\usepackage[utf8]{inputenc}
\usepackage{lmodern}
\usepackage[english]{babel}
\usepackage{microtype}             

\usepackage{geometry}
\usepackage{amsmath}
\usepackage{amssymb}
\usepackage{amsfonts}
\usepackage{amsthm}
\newtheorem{theorem}{Theorem}       
\usepackage{mathtools}
\usepackage{dsfont}

\usepackage{enumitem}
\usepackage[table]{xcolor}          
\usepackage{longtable}              
\usepackage{multicol}               
\usepackage{makeidx}
\makeindex
\definecolor{azulobs}{HTML}{037FFC}
\definecolor{rojocita}{HTML}{E8491D}
\definecolor{rosasobrio}{HTML}{C2185B}

\usepackage{graphicx}
\graphicspath{{figures/}}
\usepackage[most]{tcolorbox}

\tcbset{
    enhanced,
    colback=cyan!20,
    colbacktitle=black,
    coltitle=white,
    frame hidden,
    fonttitle=\bfseries,
    boxrule=0pt,
    opacityback=0.6,
    opacitybacktitle=0.6,
    sharp corners=downhill
}

\newtcolorbox{proofbox}{
    enhanced,
    blanker,
    borderline west={1.2pt}{0pt}{azulobs!70},
    left=10pt,
    breakable,
    sharp corners
}
\renewenvironment{proof}[1][\proofname]{%
    \begin{proofbox}%
    \noindent\textit{#1.}\space
}{%
    \hfill$\square$%
    \end{proofbox}%
}
\renewcommand{\proofname}{Proof}

\usepackage{subcaption}
\usepackage{booktabs}
\usepackage{array}
\usepackage{float}

\usepackage{tikz}
\usetikzlibrary{decorations.pathreplacing}

\usepackage{fancyhdr}
\fancypagestyle{plain}{%
    \fancyhf{}
    \fancyfoot[R]{\thepage}

}

\usepackage{titlesec}

\titleformat{\chapter}[display]
  {\normalfont\Large\bfseries\raggedleft}
  {\titlerule[1pt]\vspace{0.5em}Chapter \thechapter}
  {0.4em}
  {\hfill}
  [{\vspace{0.5em}\titlerule[1pt]}]
  
\titlespacing*{\chapter}{0pt}{1.5em}{1.2em}

\titleformat{\section}
  {\normalfont\Large\bfseries}{\thesection}{0.8em}{}
\titlespacing*{\section}{0pt}{1em}{0.5em}

\titleformat{\subsection}
  {\normalfont\large\bfseries}{\thesubsection}{0.8em}{}
\titlespacing*{\subsection}{0pt}{0.8em}{0.4em}

\titleformat{\subsubsection}
  {\normalfont\normalsize\bfseries}{\thesubsubsection}{0.8em}{}
\titlespacing*{\subsubsection}{0pt}{0.7em}{0.3em}

\newcommand{\formatoapendices}{%
  \titleformat{\chapter}[display]
    {\normalfont\Large\bfseries\raggedleft}
    {\titlerule[1pt]\vspace{0.5em}Appendix \thechapter}
    {0.4em}
    {\hfill}
    [{\vspace{0.5em}\titlerule[1pt]}]%
}



\usepackage{csquotes}              

\usepackage{hyperref}
\hypersetup{
    colorlinks=true,
    linktoc=page,
    linkcolor=blue,
    citecolor=rojocita,
    urlcolor=rosasobrio,
    pdfauthor={Javier Gomez Morales},
    pdftitle={Analysis of the Principal Components of Correlation Matrices of S\&P 500 Financial Data from an Econophysics Perspective}
}

\usepackage[backend=biber,
            style=numeric,
            sorting=none,
            doi=true,
            url=true,
            isbn=true,
            giveninits=true]{biblatex}
\DeclareFieldFormat{doi}{%
  \mkbibacro{DOI}\addcolon\space
  \href{https://doi.org/#1}{\nolinkurl{#1}}}

\title{Analysis of the Principal Components of Correlation Matrices of S\&P 500 Financial Data from an Econophysics Perspective}
\author{
Javier G\'omez Morales\\
Faculty of Sciences, National Autonomous University of Mexico
}
\date{2026}

\begin{document}

\begin{titlepage}
\centering
\vspace*{1.0cm}
{\Huge\bfseries
Analysis of the Principal Components\\[0.15cm]
of Correlation Matrices\\[0.15cm]
of S\&P 500 Financial Data\\[0.15cm]
from an Econophysics Perspective
\par}
\vspace{0.8cm}
{\par\noindent\centering\rule{0.82\textwidth}{0.5pt}\par}
\vspace{1.0cm}
{\Large
Javier G\'omez Morales
\par}
\vspace{0.25cm}
{\large
\href{mailto:japogm@ciencias.unam.mx}{\texttt{japogm@ciencias.unam.mx}}
\par}
\vspace{0.8cm}
{\large
Faculty of Sciences\\
Universidad Nacional Aut\'onoma de M\'exico
\par}
\vspace{1.4cm}
{\large
Undergraduate thesis submitted in fulfillment of the requirements for the degree of\\[0.15cm]
\textbf{Bachelor of Physics}
\par}
\vspace{1.0cm}
{\large
Advisor:\\[0.15cm]
Dr. Manuel Mija\'il Mart\'inez Ramos
\par}
\vfill
\begin{minipage}{0.82\textwidth}
\centering
{\normalsize
English version prepared for arXiv\\[0.15cm]
Original Spanish version: \href{https://tesiunamdocumentos.dgb.unam.mx/ptd2026/ene_mar/0881908/Index.html}{UNAM digital thesis repository}\\[0.15cm]
2026
\par}
\end{minipage}
\vspace*{0.8cm}
\end{titlepage}

\clearpage
\pagenumbering{roman}

\chapter*{Abstract}
\phantomsection
\addcontentsline{toc}{chapter}{Abstract}

This work studies the collective behavior of the financial market using the daily adjusted prices of a group of S\&P 500 companies, spanning from \textbf{January 3, 2012} to \textbf{December 29, 2023}.

The purpose of this work is to describe how correlation matrices evolve by means of spectral methods. To this end, the eigenvalues were analyzed, the eigenvectors were examined through their squared entries, and matrices defined as linear combinations of eigenvectors, weighted by their eigenvalues, were constructed. With this approach, the existence and evolution of the Discrete Market States (\emph{MS}) were identified, and the distinctiveness of the \emph{COVID} State as a differentiated regime was assessed.

As noted by Mart\'inez Ramos in \emph{Series de tiempo financieras en el marco de la F\'isica Estad\'istica}~\cite{martinezramosSeriesTiempoFinancieras2024}, the \emph{COVID} State manifests itself as an atypical \emph{State of the Market} in the correlation matrices. In the present work, the analysis of eigenvectors through their squared entries, together with \emph{k}-Means \emph{clustering}, made it possible to identify the sectors and companies with the greatest relative participation in the collective dynamics of the market. In particular, the configuration with windows of $q=40$ and $k=5$ allowed the \emph{COVID} State to be recognized as a singular regime. Nevertheless, none of the approaches considered accurately reproduced the Discrete Market States (\emph{MS}) obtained through the direct \emph{clustering} of the correlation matrices.
\vspace{2.5cm}

\noindent\textbf{Keywords:} Econophysics, Discrete Market States, \emph{COVID} State, correlation matrices, eigenvalue, eigenvector, \emph{k}-Means.

\clearpage
\tableofcontents

\clearpage
\pagenumbering{arabic}

\chapter{Introduction}
\label{chapter:intro}


\section{Background}

Prior to the consolidation of Econophysics, some physicists applied their knowledge to financial problems. A prominent example is the work of Isaac Newton at the \emph{Royal Mint} during the period following the \emph{Great Recoinage} of 1696. As \emph{Warden}\footnote{The office of \emph{Warden of the Mint} entailed administrative and judicial responsibilities for protecting the currency of the United Kingdom.}, and later as \emph{Master of the Mint}\footnote{The \emph{Master} was the highest-ranking official at the \emph{Royal Mint}, responsible for the production of coinage, overseeing its quality and preventing crimes such as counterfeiting.}, Newton reorganized the minting processes, applied experimental methods to assess metals, and led a legal campaign against the most sophisticated counterfeiters of the era, such as William Chaloner\cite{levensonNewtonCounterfeiterUnknown2009}. His method, which combined observation, numerical analysis, and legal strategy, had a positive impact on the stability of the English monetary system.
\index{Background!Newton, Isaac}

The doctoral thesis of Louis Bachelier, \emph{Th\'eorie de la Sp\'eculation}, presented in 1900, is a precursor of the union between Physics and Economics. Bachelier, under the supervision of Henri Poincar\'e, represented the prices of a set of stocks listed on the Paris stock exchange as continuous-time random walks~\cite{bachelierTheorieSpeculation1900}. Today this behavior is understood as a stochastic process with diffusion~\cite{voitStatisticalMechanicsFinancial2005}. This work was pioneering, anticipating the mathematics of random motion and laying the foundations for quantitative financial theory and Econophysics.

\index{Background!Bachelier, Louis}

\section{Motivation}

From the perspective of Econophysics, price fluctuations in stock markets can be modeled as a complex system that transitions between \textbf{discrete market states} \cite{munnixIdentifyingStatesFinancial2012}. M\"unnix et al. proposed identifying such states from correlation patterns that emerge and persist over well-defined time intervals. Under this approach, the market is represented as a sequence of regimes that alternate between periods of high and low synchronization, which makes it possible to analyze its dynamics through transitions between states.

Although \emph{PCA} has been widely used in finance, the detailed application of the leading eigenvector of the market correlation matrix has been comparatively less explored relative to other areas. A recent example is found in \cite{ochoa-gonzalezSubnarrowBandSleep2024}, where the joint analysis of eigenvalues and eigenvectors in multiband correlation matrices made it possible to distinguish sleep stages from electroencephalography recordings. This parallel is illustrative: even in the presence of high variability across individuals, or, in the financial context, across assets and sectors, spectral decomposition makes it possible to extract stable features that facilitate the distinction between states.

\index{S\&P 500}

The \textbf{S\&P 500} is one of the principal stock-market indices of the United States, developed by the financial firm \emph{Standard \& Poor's}\footnote{\emph{Standard \& Poor's} is a financial services company founded in 1860, specializing in the construction of stock-market indices and in market analysis.}. The index brings together approximately 500 companies\footnote{At the time of writing of this thesis, the S\&P is composed of \textbf{503 companies}.} selected on the basis of criteria of size, liquidity, and representativeness of the U.S. economy, whose shares are traded on the \emph{New York Stock Exchange (NYSE)}\footnote{The \emph{New York Stock Exchange} (NYSE) is the stock exchange with the largest market capitalization, located on Wall Street, New York.}\index{NYSE} and on the \emph{NASDAQ}\footnote{The \emph{NASDAQ} is an electronic stock exchange headquartered in New York, with a high concentration of technology companies.}.\index{NASDAQ}

In research in econophysics and quantitative finance, the S\&P~500 is used as a representative indicator (\emph{proxy})\index{Proxy} of the U.S. stock market. Owing to its sectoral breadth and its depth of liquidity, it provides a reliable quantitative reference for describing the joint behavior of a broad spectrum of stocks.

\index{Matrix!correlation}
\index{Matrix!random}

A tool that was developed to study \emph{complex systems} is random matrix theory (\textbf{RMT})\index{Theory!random matrix}. Historically, \textbf{RMT} originated in the context of nuclear physics with the pioneering works of Eugene P. Wigner, motivated by the need to understand the energy spectrum of complex atomic nuclei (for example, heavy nuclei such as uranium) \cite{wignerDistributionRootsCertain1958, wignerCharacteristicVectorsBordered1955}. Decades later, this approach was transferred to finance: instead of energy levels, price fluctuations and their correlations are studied. Correlation matrices of returns serve to describe how financial markets are related. From a spectral perspective, Wirtz et al. demonstrated that the extreme eigenvalues, both the largest and the smallest, of correlated random matrices concentrate statistically relevant information, whereas the central region of the spectrum is dominated by largely noisy contributions \cite{wirtzLimitingStatisticsLargest2015}. It has likewise been found that a large part of the spectrum is noise (the Mar\v{c}enko--Pastur distribution)\index{Mar\v{c}enko-Pastur!distribution} and that only a few eigenvalues reflect genuine economic information \cite{mantegnaIntroductionEconophysicsCorrelations2004, lalouxNoiseDressingFinancial1999, plerouRandomMatrixTheory2000}.

The way in which markets move can be explained by recurrent patterns, called \emph{State of the Market} (\emph{SOTM})\footnote{\emph{State of the Market} (SOTM) refers to the value of the largest eigenvalue of the correlation matrix \cite{mantegnaIntroductionEconophysicsCorrelations2004}.} and \emph{Market States} (\emph{MS})\footnote{\emph{Market States} (MS): the correlation matrices can be grouped into different \emph{clusters} known as \textit{discrete market states} \cite{munnixIdentifyingStatesFinancial2012}. From this point onward, SOTM and MS will be used as their English-language acronyms.}.\index{State of the Market} \index{Market!States} These patterns help us to identify trends in the change of correlations between financial assets over time, which makes it possible to describe and compare events such as crises or significant shifts in the market.

The concept of \emph{stylized facts}\index{Facts!stylized} was originally introduced by Kaldor (1961) in the field of macroeconomics, and was later adopted in the field of finance to describe empirical regularities that hold over time and across different contexts. These facts serve as a starting point in the construction of theoretical models that seek to reproduce the properties observed in financial data\cite{chakrabortiEconophysicsReviewEmpirical2011}.

Among the most relevant \emph{stylized facts} in financial time series, the following stand out:

\begin{enumerate}[label=\roman*)]
    \item \textbf{Heavy-tailed return distributions.}
    The distributions of returns exhibit heavier tails than those of a normal distribution, which indicates a higher probability of occurrence of extreme events in financial prices \cite{chakrabortiEconophysicsReviewEmpirical2011}.

    \item \textbf{Absence of linear autocorrelation in returns.}
    On short time scales, successive returns show no significant linear correlation, which suggests an essentially unpredictable evolution in the direction of prices \cite{mantegnaIntroductionEconophysicsCorrelations2004}.

    \item \textbf{Temporal persistence in volatility (\emph{volatility clustering}).}
    Periods of high volatility tend to cluster temporally, being followed by phases of lower variability. This pattern, known as \emph{volatility clustering}, has been widely documented in the literature \cite{voitStatisticalMechanicsFinancial2005,chakrabortiEconophysicsReviewEmpirical2011}.

    \item \textbf{Correlation between the dominant eigenvalue and the average correlation.}
    The largest eigenvalue $\lambda_{N}$ of the correlation matrices exhibits a marked correlation with the average of the correlation coefficients between assets. This average, denoted as $\langle C_{ij} \rangle$, is:

    \begin{equation}
        \langle C_{ij} \rangle = \frac{2}{N(N-1)} \sum_{i<j} C_{ij},
        \label{eq:promedio_correlaciones}
    \end{equation}

    where $C_{ij}$ represents the Pearson correlation coefficient between the returns of assets $i$ and $j$, and $N$ is the total number of assets considered. This behavior motivates the study of the collective interactions of the financial market through dominant spectral modes. \cite{martinezramosSeriesTiempoFinancieras2024}.
\end{enumerate}

The \emph{ansatz}\footnote{A German term meaning \textbf{initial assumption}. In physics and mathematics it designates the form proposed for the solution of a problem.}\index{Ansatz} of this thesis consists in treating financial markets as a \emph{complex system}\index{System!complex}. In this way, the general aim is to understand the structure and dynamics of the correlations of the returns of a subset of S\&P~500 stocks by means of tools from statistical mechanics and complex systems. \cite{munnixIdentifyingStatesFinancial2012,pharasiIdentifyingLongtermPrecursors2018,pharasiDynamicsMarketStates2024}.

In the Latin American context, Mart\'inez Ramos et al.~\cite{martinezramosSeriesTiempoFinancieras2024} analyzed the dynamics of the leading eigenvectors $\{\mathbf{v}_j(t)\}_{j=1}^{3}$. In particular, the eigenvector $\mathbf{v}_N(t)$, associated with the largest eigenvalue $\lambda_N(t)$, indicates the direction of maximum variance and is usually interpreted as the predominant collective component of the system.

The main objective of this thesis is to use the correlation structure and the information contained in the eigenvalues and their associated eigenvectors, in particular the dominant one, to characterize discrete market states and their transitions, as well as to identify a representative regime associated with the \emph{COVID} State.

\chapter{Theoretical Framework}
\label{chapter:marco-teorico}
\section{Logarithmic Returns}
\index{Return!logarithmic}
Let $S_i(t)$ be the daily closing prices for $i\in\{1,\dots,N\}$ assets and $t\in\{0, 1,\dots,T\}$ dates.
Following the notation of \cite{heckensNewCollectivityMeasures2022}, the logarithmic returns are
\begin{equation}
  G_i(t)\;=\;\ln\!\left(\frac{S_i(t)}{S_i(t-\Delta t)}\right),\qquad \Delta t=1,\quad t\in\{1,\dots,T\},\quad i=1,\dots,N.
  \label{eq:def-retornos-HG}
\end{equation}

The returns were organized into matrices as shown:
\begin{equation}
  \mathds{X} \;=\;
  \begin{pmatrix}
    G_{1}(1)   & \cdots & G_{N}(1)\\
    \vdots     & \ddots & \vdots  \\
    G_{1}(T) & \cdots & G_{N}(T)
  \end{pmatrix}
  \in \mathbb{R}^{T\times N},
  \qquad t\in\{1,\ldots,T\},\quad i\in\{1,\ldots,N\}.
  \label{eq:def-Gm-ventanas}
\end{equation}

\section{Short-Epoch Window and Full-Horizon Matrices}\label{section:epocas-horizonte-completo}
\index{Window!short-epoch}\index{Horizon!full}\index{Matrix!correlation}

In the analysis of financial series, an \emph{epoch} is a contiguous time window of length \(q\) over which returns and associated statistics (mean, covariance, and correlation) are computed. For each time \(t\), the return matrix \(G(t,q)\in\mathbb{R}^{q\times N}\) is constructed, with \(N\) assets as columns and \(q\) consecutive observations as rows; this arrangement preserves \(N\) columns and \emph{short-epoch} windows of length \(q\). From \(G(t,q)\) the correlation matrix \(C(t,q)\) is estimated, usually with overlapping windows in order to track the temporal evolution.

Given the non-stationary nature of the stock market, the choice of \(q\) must balance two effects: with \emph{short} \(q\), the estimate \(C(t,q)\) becomes noisier; with \emph{long} \(q\), it loses sensitivity to the prevailing state of the market~\cite{munnixIdentifyingStatesFinancial2012}. By contrast, the \emph{full-horizon} matrix \(C^{(\mathrm{full})}\) is computed with the entire available sample and summarizes the long-term average structure of the market; it is used as a reference point for comparing the local estimates \(C(t,q)\), which capture the short-term dynamics.

For daily data, it is reasonable to employ \textbf{one-calendar-month windows}, which operationally correspond to approximately \(20\) trading days\footnote{The exact figure depends on the holiday calendar and on any market suspensions.}. This horizon offers a temporal resolution sufficiently fine to track short-term variations in the correlation structure \cite{martinezramosSeriesTiempoFinancieras2024, pharasiDynamicsMarketStates2024}. Another common choice consists in using \textbf{two-calendar-month windows} (approx.\ \(40\) trading days) when computing sliding correlation matrices, which helps to attenuate statistical noise \cite{munnixIdentifyingStatesFinancial2012}.

\subsection{Overlapping Moving Windows}

\begin{figure}[ht]
\centering
\begin{tikzpicture}[x=0.75cm,y=1.0cm]
  \definecolor{skyblue}{RGB}{135,206,235} 
  \definecolor{navy}{RGB}{0,0,128}        
  \definecolor{magentaedge}{RGB}{200,0,180}

  \def\T{16}    
  \def\q{4}     
  \def\eps{0.18}

  \draw[line width=1.2pt, skyblue] (1,0) -- (\T,0);

  \foreach \i in {1,...,\T} \fill[navy] (\i,0) circle (2.0pt);

  \draw[draw=magentaedge, line width=1.6pt]
       ({1-\eps},-0.33) rectangle ({\q+\eps},0.33);

  \draw[draw=gray!70, line width=1.2pt]
       (2-\eps,-0.30) rectangle (\q+1+\eps,0.30);

  \draw[draw=gray!55, line width=1.2pt]
       (3-\eps,-0.28) rectangle (\q+2+\eps,0.28);

  \draw[decorate,decoration={brace,mirror,amplitude=5pt},black!80]
       ({1-\eps},-0.55) -- ({\q+\eps},-0.55);
  \node[below=2pt,black!80] at ({(1+\q)/2},-0.55) {$q$};

  \node[font=\small] at ({(1+\T)/2},1.6) {$T$ dates};

  \node[below, gray!70] at (\T+0.35,-0.40) {\small $t$};
  \draw[->, gray!65, line width=1.1pt] (\T-0.6,-0.40) -- (\T+0.2,-0.40);
\end{tikzpicture}

\caption[Short-epoch windows] {For a total of $T$ dates and a window length $q$ with $1<q\le T$, the moving windows ending at $t=q,\dots,T$ generate exactly $T-q+1$ contiguous subsamples. Consecutive windows share $q-1$ dates.}
\label{fig:ventanas-moviles}
\end{figure}

Let \(\mathds{X}\in\mathbb{R}^{T\times N}\) be the \emph{logarithmic return matrix} defined in Eq.~\ref{eq:def-Gm-ventanas}; fixing an \emph{epoch length} \(q\in\mathbb{N}\) with \(1<q\le T\), the moving window that \emph{ends} at \(s\) is defined, for each \emph{closing index} \(s\in\{q,\dots,T\}\), as the set of consecutive rows
\begin{equation}
\mathcal{E}(s,q)\;=\;\{\,s-q+1,\;s-q+2,\;\dots,\;s \,\},\qquad s\in\{q,\dots,T\}.
\label{eq:def-epoca}
\end{equation}

Each window induces the submatrix
\begin{equation}
    \mathds{X}_{\mathcal{E}(s,q)}\in\mathbb{R}^{q\times N}
\label{eq:submatriz-q}
\end{equation}

formed by the \(q\) rows of \(\mathds{X}\) with indices in \(\mathcal{E}(s,q)\). When the closing index is incremented from \(s\) to \(s+1\), the window is displaced one day forward: the new window retains exactly \(q-1\) rows of the previous one and adds one date at the right end. This overlap is shown in Fig.~\ref{fig:ventanas-moviles}. The total number of windows generated is
\begin{equation}
\left|\left\{\mathcal{E}(s,q):\;s\in\{q,\dots,T\}\right\}\right|
\;=\;T-q+1.
\label{eq:conteo-ventanas}
\end{equation}

\section{Covariance Matrix}
\index{Covariance!matrix}

Fixing a moving window $\mathcal{E}(s,q)$ and the submatrix $\mathds{X}_{\mathcal{E}(s,q)}$ defined in~\eqref{eq:submatriz-q}, the mean for asset $i$ is defined as:
\begin{equation}
\mu_i \;=\; \frac{1}{q}\sum_{u\in\mathcal{E}(s,q)} G_i(u)\,\qquad i\in\{1,\dots,N\},
\label{eq:promedios-locales}
\end{equation}

The covariance between asset $i$ and $j$ is:
\begin{equation}
    \Sigma_{ij}[\mathcal{E}(s,q)]=\frac{1}{q-1}\sum_{u\in\mathcal{E}(s,q)}[G_i(u)-\mu_i][G_j(u)-\mu_j]
\label{eq:covarianza-ij}
\end{equation}

This construction induces the full covariance matrix 
$\boldsymbol{\Sigma}_{\mathcal{E}(s,q)}$, 
which, by definition, is symmetric and positive semidefinite.

\section{Correlation Matrix}
\index{Correlation!matrix}

The standard deviation $\sigma_i$ is defined as\footnote{Strictly speaking, it should be denoted as $\sigma_i[\mathcal{E}(s,q)]$ to emphasize its explicit dependence on the time window $\mathcal{E}(s,q)$. Nevertheless, throughout the text the simplified notation $\sigma_i$ is employed in order to avoid overloading the notation.}:
\begin{equation}
\sigma_i = \sqrt{\frac{1}{q-1}\sum_{u\in\mathcal{E}(s,q)} [G_i(u)-\mu_i]^2}
\label{eq:desv-est}
\end{equation}

The correlation between assets $i$ and $j$ in the window $\mathcal{E}(s,q)$ is expressed as:
\begin{equation}
C_{ij}\big[\mathcal{E}(s,q)\big] \;=\; 
\frac{\Sigma_{ij}\big[\mathcal{E}(s,q)\big]}{\sigma_i\,\sigma_j}\,, 
\label{eq:def-corr}
\end{equation}
where $\sigma_i$ and $\sigma_j$ represent the standard deviations of the returns of assets $i$ and $j$ in the window considered.

This definition induces the full correlation matrix 
$\mathbf{C}\big[\mathcal{E}(s,q)\big]$, 
which, like the covariance matrix, is symmetric and positive semidefinite. Its diagonal elements satisfy
\begin{equation}
C_{ii}\big[\mathcal{E}(s,q)\big]\;=\; 1.
\label{eq:correlacion-diagonal}
\end{equation}

By symmetry, it suffices to consider the strict upper triangular part (or, equivalently, the strict lower triangular part). Let
\begin{equation}
    M=\left\{0,1,...,N-1\right\},
\end{equation}
the total number of distinct correlations corresponds to the number of combinations of $N$ elements taken two at a time:

\begin{equation}
\left|\left\{(i,j)\in{M\times M}:\ i<j\right\}\right|
=\binom{N}{2}
=\frac{N!}{2!\,(N-2)!}
=\frac{N(N-1)}{2}.
\label{eq:corr-pares-distintos}
\end{equation}

Within this framework, Pharasi et al.~\cite{pharasiIdentifyingLongtermPrecursors2018} applied \emph{clustering} techniques to low-dimensional representations derived from correlation matrices and proposed the use of the \emph{power map}, a nonlinear transformation that suppresses the value of correlations close to zero\footnote{The exponent \(\varepsilon\) of the \emph{power map} is used to measure the closeness of the correlation matrix to zero.}. The aim of this transformation is to mitigate noise and improve the separability between groups.

\section{The \textit{k}-Means Method}\label{section:kmeans}
\index{K-Means}\index{Clusters}\index{Norm!Euclidean}\index{Standardization!z-score}

The \textit{k}-Means algorithm partitions a set of observations into \(K\) disjoint \emph{clusters}, seeking that each group forms a compact agglomeration under the distance induced by a norm. In this thesis the Euclidean norm (\(\ell_2\)) is used; for a discussion of alternative norms, see Appendix~\ref{appendix: Normas}. To carry out the \textit{k}-Means algorithm, the library 
\href{https://scikit-learn.org/stable/}{Scikit-learn} (version 1.3.1) was used \cite{blairPythonDataScience2019}. Each observation corresponds to an \emph{epoch} \(s\): a \emph{simple flattening} is applied to the correlation matrix \(\mathbf{C}\!\big[\mathcal{E}(s,q)\big]\) to obtain a vector of correlations \(\mathbf{x_k}^{(s)}\). To make the dimensions comparable over time, component-wise \textit{z-score} standardization is applied:

\[
z^{(s)}_{k}=\frac{x^{(s)}_{k}-\mu_{k}}{\sigma_{k}},\qquad
\mu_{k}=\big\langle x^{(s)}_{k}\big\rangle_{s},\quad
\sigma_{k}=\sqrt{\big\langle\!\big(x^{(s)}_{k}-\mu_{k}\big)^{2}\big\rangle_{s}},
\]
yielding the standardized vector \(\mathbf{z}^{(s)}\). Here, \(\langle\cdot\rangle_{s}\) denotes the average over epochs.

Fixing the total number of \emph{clusters} \(K\), the criterion to be optimized is written as
\begin{equation}
\min \;\sum_{k=1}^{K}\;\sum_{\mathbf{C}\!\big[\mathcal{E}(s,q)\big]\in\mathcal{C}_k}
\big\lVert \mathbf{z}^{(s)}-\mathbf{M}_k\big\rVert_{2}^{2},
\label{eq:kmeans-objetivo-l2-final}
\end{equation}
where \(\mathcal{C}_k\) is the cluster of matrices assigned to \emph{cluster} \(k\) and \(\mathbf{M}_k\) is its \textbf{centroid} in the standardized space. This centroid is obtained as the average of the standardized vectors associated with the matrices of the \emph{cluster}:
\begin{equation}
\mathbf{M}_k=\frac{1}{|\mathcal{C}_k|}\sum_{\mathbf{C}\!\big[\mathcal{E}(s,q)\big]\in\mathcal{C}_k}\mathbf{z}^{(s)}.
\label{eq:centroide-promedio-z}
\end{equation}

\paragraph{Characteristics}
\begin{enumerate}[label=\roman*)]
  \item \textbf{Initialization (\textit{k}-Means++).} 
  In the present work, the $K$ initial centroids were selected with the \textit{k}-Means++ procedure\footnote{\textit{k}-Means++ is also known as \emph{D$^{2}$-seeding}.}, preferred over a naive random selection because it produces initial centroids that are sufficiently separated and reduces the probability of converging to low-quality solutions \cite{blairPythonDataScience2019}. 
  The process is replicated for \emph{each} of the $K$ centroids:
  \begin{enumerate}[label=\alph*)]
    \item The first centroids are chosen at random from among the observations.
    \item For each $\bar{\mathbf{x}}$, $D(\bar{\mathbf{x}})$ is computed as the distance to the nearest centroid already chosen.
    \item The next centroid is selected by sampling over the observations with the probability law
    \begin{equation}
    P\!\left(\bar{\mathbf{x}}\right)=
    \frac{D\!\left(\bar{\mathbf{x}}\right)^{2}}{\sum_{\bar{\mathbf{x}}} D\!\left(\bar{\mathbf{x}}\right)^{2}},
    \label{eq:kmeanspp-prob}
    \end{equation}
    \item Steps (b)--(c) are repeated $K-1$ times.
  \end{enumerate}
 After initialization, the procedure alternates two steps.
 \begin{itemize}
     \item [i)] \textbf{Assignment:} each observation is sent to the \emph{cluster} whose centroid is nearest according to the chosen metric. 
     \item [ii)] \textbf{Update:} each centroid is recomputed as the average of the observations in its \emph{cluster}. 
These two steps are repeated until convergence. In this thesis a stopping criterion of \(10^{-4}\) was used for vector representations \(\bar{\mathbf{x}}\in\mathbb{R}^{d}\) and of \(10^{-5}\) for correlation matrices \(\mathbf{C}\!\big[\mathcal{E}(s,q)\big]\in\mathbb{R}^{N\times N}\).
 \end{itemize}

  \item \textbf{Iterations and repetitions (ensemble).} In each iteration, 10 independent runs with different initial conditions were carried out. From those 10 initializations, the partition that attained the \textbf{lowest value} of the \textit{k}-Means objective function was chosen. This scheme (an ensemble of 10 runs and selection of the best) was repeated 15 times in total.

  \item \textbf{Reproducibility control.} 
  By fixing the random seed and keeping the preprocessing, the scaling, and the order of the observations constant, the same partition was always obtained.

 \item \textbf{Cluster ordering criterion.}
For the presentation of results, the \emph{clusters} were numbered according to the \emph{average maximum eigenvalue} computed over the matrices of each cluster. In particular, for \emph{cluster} \(k\) it is defined as:
\[
\overline{\lambda}_{\max}(k)
=\frac{1}{|\mathcal{C}_k|}
\sum_{\mathbf{C}\!\big[\mathcal{E}(s,q)\big]\in\mathcal{C}_k}
\lambda_{\max}\!\big(\mathbf{C}\!\big[\mathcal{E}(s,q)\big]\big).
\]
The ordering is established \textbf{in ascending order} of \(\overline{\lambda}_{\max}(k)\); therefore, the \emph{cluster} with the lowest index corresponds to the lowest average of \(\lambda_{\max}\), and the subsequent ones to increasing averages. 
\index{Clusters!ordering}

\end{enumerate}

\subsection{Ordering and Stability (IEM)}
\index{Stability!IEM}
After ordering the \emph{clusters} by the average maximum eigenvalue criterion, 15 independent executions of \textit{k}-Means were considered. For each instant $t$, the \emph{ensemble mode} was defined as the most frequent label among the $15$ assignments:
\begin{equation}
\widehat{L}_t \;=\; \operatorname{mode}\big\{L_t^{(r)}\big\}_{r=1}^{15}
\;=\; \arg\max_{k\in\{1,\dots,K\}} \sum_{r=1}^{15} \mathbf{1}\!\left\{\,L_t^{(r)}=k\,\right\}.
\label{eq:iem-moda}
\end{equation}

\paragraph*{IEM stability.}
From the already ordered labels of the \textit{ensemble}, stability was quantified as the fraction of runs that coincided with the mode at each $t$:
\begin{equation}
\mathrm{IEM}(t) \;=\; \frac{1}{15}\;\max_{k\in\{1,\dots,K\}}\; \sum_{r=1}^{15} \mathbf{1}\!\left\{\,L_t^{(r)}=k\,\right\}.
\label{eq:iem-estabilidad}
\end{equation}

\paragraph*{Example (5 executions):}
For $K=3$ and a time $t_0$, consider the set of labels from five runs
\[
\mathcal{L}(t_0)=\{2,\,2,\,3,\,2,\,1\}.
\]
The counts were $f_1=1$, $f_2=3$, $f_3=1$; therefore,
\[
\widehat{L}_{t_0}=2,\qquad
\mathrm{IEM}(t_0)=\frac{\max\{f_1,f_2,f_3\}}{5}=\frac{3}{5}=0.6.
\]

In general terms, \(\mathrm{IEM}(t)\in[0,1]\): values close to \(0\) indicate \emph{low consensus} among executions (high dispersion in the assigned labels), whereas values close to \(1\) indicate \emph{high consensus} or maximum stability (all executions coincide on the same label).

\section{Spectral Decomposition, Dyadic Decomposition, and Change of Basis}
\index{Decomposition!spectral}
\index{Decomposition!dyadic}
\index{Change!of basis}
\index{Diagonalization!orthogonal}

The spectral theorem is formulated in general for Hermitian matrices 
(in complex spaces). In this thesis, complex numbers are not used. For this reason, the focus will be on the real case, 
where the matrices are \textbf{symmetric} and their diagonalization is \textbf{orthogonal}. In addition, the action of a matrix is studied in terms of its sum of scaled projections \cite{hassaniMathematicalPhysicsModern2013}.

\subsection{Diagonalization and Dyadic Decomposition}
\index{Diagonalization!equation}

Let $M\in\mathbb{R}^{N\times N}$ be a square matrix that admits diagonalization; there exist an invertible matrix $P$ and a diagonal matrix $D$ such that
\begin{equation}
M \;=\; P\,D\,P^{-1}.
\label{eq:diag-general}
\end{equation}
where $P=\bigl[\mathbf{v}_{1}\,\mathbf{v}_{2}\,\cdots\,\mathbf{v}_{N}\bigr]$ is the matrix of eigenvectors and
\[
P^{-1}=
\begin{bmatrix}
\mathbf{w}_{1}^{\top}\\
\mathbf{w}_{2}^{\top}\\
\vdots\\
\mathbf{w}_{N}^{\top}
\end{bmatrix},
\]
with $\mathbf{w}_{i}^{\top}\mathbf{v}_{j}=\delta_{ij}$\footnote{\(\delta_{ij}\) is the \textbf{Kronecker delta}.}.

In this way, the \emph{dyadic decomposition} is obtained. Moreover, the diagonal matrix $D$ contains the eigenvalues of $M$ as shown below:
\begin{equation}
D \;=\; \begin{bmatrix}
\lambda_{1} & 0 & \cdots & 0 \\
0 & \lambda_{2} & \cdots & 0 \\
\vdots & \vdots & \ddots & \vdots \\
0 & 0 & \cdots & \lambda_{N}
\end{bmatrix}.
\label{matrix:hermitianD}
\end{equation}

\begin{equation}
M \;=\; \sum_{i=1}^{N} \lambda_{i}\,\mathbf{v}_{i}\,\mathbf{w}_{i}^{\top}.
\label{eq:dyadic}
\end{equation}

\subsection{Spectral Theorem for Real and Symmetric Matrices}
\index{Theorem!spectral}
\index{Matrix!symmetric}
\index{Matrix!Hermitian}

The spectral theorem states that every Hermitian matrix is diagonalizable by means of a unitary transformation. 
In the real case of interest here, if $A\in\mathbb{R}^{N\times N}$ is symmetric ($A=A^{\top}$), there exists an orthogonal matrix $Q$ such that
\begin{equation}
Q^{\top}Q \;=\; QQ^{\top} \;=\; I \qquad\Rightarrow\qquad Q^{\top}=Q^{-1},
\label{eq:ortogonalidad}
\end{equation}
and the orthogonal diagonalization holds
\begin{equation}
A \;=\; Q\,D\,Q^{\top},
\label{eq:diag-ortogonal}
\end{equation}
where $D$ is diagonal with the real eigenvalues of $A$ as in matrix~\eqref{matrix:hermitianD} and the columns of $Q$ are orthonormal eigenvectors. 

\subsection{Spectral Decomposition and Functions of Operators}\label{subsection:funciones-operadores}
\index{Operator!normal}\index{Decomposition!spectral}\index{Projectors}\index{Functions!of operators}

Let \(\mathcal{V}\) be a finite-dimensional real vector space with an inner product. 
A linear operator \(A:\mathcal{V}\to\mathcal{V}\) is said to be \textbf{normal} if \(AA^{\top}=A^{\top}A\).
By fixing an orthonormal basis of \(\mathcal{V}\) we can identify \(A\) with its matrix in that basis and work interchangeably with the operator and its matrix representation.

If \(A\) is \textbf{symmetric} \((A=A^{\top})\), there exists an orthonormal basis of eigenvectors \(\{v_{i}\}_{i=1}^{N}\) with real eigenvalues \(\{\lambda_i\}_{i=1}^{N}\).
By defining the orthogonal projectors \(P_i=v_i\,v_i^{\top}\), the following decomposition is obtained \cite{hassaniMathematicalPhysicsModern2013}:
\begin{equation}\label{eq:descomp-espectral-real}
A \;=\; \sum_{i=1}^{N}\lambda_i\,P_i,
\qquad
\sum_{i=1}^{N}P_i \;=\; I,
\qquad
P_i^{\top}=P_i .
\end{equation}
Similarly to \eqref{eq:diag-general}, the multiplication of projectors is:
\begin{equation}\label{eq:proyectores-kronecker}
P_iP_j
=\big(v_i v_i^{\top}\big)\big(v_j v_j^{\top}\big)
= v_i\,(v_i^{\top}v_j)\,v_j^{\top}
= \delta_{ij}P_i.
\end{equation}

The preceding orthogonality makes it possible to compute powers, polynomials, and functions defined by power series on an interval containing the spectrum of \(A\) \cite{hassaniMathematicalPhysicsModern2013}:
\begin{equation}\label{eq:func-operator}
A^{n}=\sum_{i=1}^{N}\lambda_i^{\,n}P_i\quad(n\in\mathbb{N}),
\qquad
p(A)=\sum_{i=1}^{N}p(\lambda_i)P_i,
\qquad
f(A)=\sum_{i=1}^{N}f(\lambda_i)P_i.
\end{equation}

The correlation matrices \(\mathbf{C}\!\big[\mathcal{E}(s,q)\big]\) that were analyzed are real and symmetric; therefore, for each epoch \(\mathcal{E}(s,q)\) it is possible to consider them in the following manner:
\[
\mathbf{C}\!\big[\mathcal{E}(s,q)\big]\;=\;\sum_{i=1}^{N}\lambda_i(s,q)\,P_i(s,q),
\]
with exactly \(N\) \textbf{eigenvalues} and \(N\) \textbf{eigenvectors}. The \textbf{Eckart--Young--Mirsky Theorem}, found in Appendix~\ref{sec:cov_correl_espectral}, establishes that, for any matrix \(A\in\mathbb{R}^{m\times n}\), the best rank-\(r\) approximation (in the Frobenius norm or in the spectral norm \ref{appendix: Normas}) is obtained by truncating its singular value decomposition. Stated differently, by retaining the \(r\) largest singular values and setting the rest to zero.

\section{\textit{MS} from Reduced-Rank Matrices}
\index{Rank!reduced}\index{Subspace!dominant}
\label{sec:subespacio-dominante-rango-reducido}
To analyze the collective and sectoral structure, it is useful to spectrally separate the covariance matrix. Following \cite{heckensNewCollectivityMeasures2022}, the covariance $\Sigma\in\mathbb{R}^{N\times N}$ admits the decomposition:
\begin{equation}
\Sigma = \sum_{i=1}^{N}\kappa_i\,\mathbf{u}_i \mathbf{u}_i^{\top}
= \underbrace{\kappa_{N}\,\mathbf{u}_{N}\mathbf{u}_{N}^{\top}}_{\Sigma_{\mathrm{SOTM}}}
+ \underbrace{\sum_{i=1}^{N-1}\kappa_i\,\mathbf{u}_i \mathbf{u}_i^{\top}}_{\Sigma_{\mathrm{res}}},
\label{eq:decomp_cov}
\end{equation}
where $\kappa_i$ and $\mathbf{u}_i$ are the eigenvalues and eigenvectors ordered from smallest to largest. The term $\Sigma_{\mathrm{SOTM}}$ contains the largest eigenvalue \emph{SOTM}, whereas $\Sigma_{\mathrm{res}}$ groups the residual modes, typically associated with economic sectors, which are found in Appendix~\ref{appendix:sectores}.

Analogously, for the correlation matrix $\mathbf{C}$ one has:
\begin{equation}
\mathbf{C} = \sum_{i=1}^{N}\lambda_i\,\mathbf{v}_i \mathbf{v}_i^{\top}
= \underbrace{\lambda_{N}\,\mathbf{v}_{N}\mathbf{v}_{N}^{\top}}_{\tilde{\Sigma}_{\mathrm{SOTM}}}
+ \underbrace{\sum_{i=1}^{N-1}\lambda_i\,\mathbf{v}_i \mathbf{v}_i^{\top}}_{\tilde{\Sigma}_{\mathrm{res}}},
\label{eq:decomp_corr}
\end{equation}
where $\lambda_i$ and $\mathbf{v}_i$ are the eigenvalues and eigenvectors of the correlation matrix. It is important to note that, after this separation, $\tilde{\Sigma}_{\mathrm{SOTM}}$ and $\tilde{\Sigma}_{\mathrm{res}}$ do \emph{not} retain the structure of correlation matrices (they lose the normalization with unit diagonal), but they do represent well-defined \emph{covariance matrices} of \emph{reduced rank} \cite{heckensNewCollectivityMeasures2022}. In this work, this property is exploited to identify relevant parameters that characterize the collective patterns of the market and the sectoral behaviors. It is worth noting that the correlation matrix has rank at most $q-1$. The justification of this property is found in Appendix~\ref{subsec:Rango efectivo corr}.
\section{\emph{MS} and the Principal Axes Theorem}
\label{section:estados-mercado-ejes}
\index{Market!States}\index{Theorem!Principal Axes}

The principal axes theorem states that every quadratic form in $\mathbb{R}^n$ can be written as:
\begin{equation}
    Q(\mathbf{x}) = \mathbf{x}^{T} A \mathbf{x},
\label{eq:quadr-ejes-principales}
\end{equation}
where $A$ is a symmetric matrix. Every symmetric matrix admits a spectral decomposition: its eigenvalues are real and it can be diagonalized by means of an orthonormal basis of eigenvectors \cite{strangLinearAlgebraEveryone2020}.  

Let $\{\mathbf{u}_1,\dots,\mathbf{u}_n\}$ be an orthonormal basis of eigenvectors associated with the eigenvalues $\lambda_1,\dots,\lambda_N$.  
Any vector $\mathbf{x}\in\mathbb{R}^N$ can be decomposed as:
\begin{equation}
\mathbf{x}=\sum_{i=1}^N c_i\,\mathbf{u}_i,
\label{eq:desc-espectral}
\end{equation}
where $c_i=\mathbf{u}_i^{\top}\mathbf{x}$ represents the projection of $\mathbf{x}$ onto the direction of the eigenvector $\mathbf{u}_i$. Substituting into $Q(\mathbf{x})$, the quadratic form is rewritten as:
\begin{equation}
    Q(\mathbf{x})=\lambda_1 c_1^2 + \lambda_2 c_2^2 + \cdots + \lambda_n c_n^2.
\label{eq:desc-quad}
\end{equation}

Each term describes the independent contribution of a principal axis: the direction is determined by the eigenvector $\mathbf{u}_i$ and the magnitude of the variation by the eigenvalue $\lambda_i$. In particular, the largest eigenvalue, $\lambda_N$, dominates the total contribution and reflects the direction of most significant variation of the market as a whole.  

Using this theorem, it is possible to interpret the correlation matrices of stock returns in terms of principal axes. The analysis of financial correlation matrices reveals that the majority of the eigenvalues are concentrated in what is statistically considered noise, whereas only a few contain relevant information \cite{lalouxNoiseDressingFinancial1999}. This structure can be described as shown below:

\begin{enumerate}[label=\roman*)]
    \item A leading eigenvalue $\lambda_N$ that typically exceeds $\lambda_+$ and represents the global market factor: the \emph{SOTM}~\cite{mantegnaIntroductionEconophysicsCorrelations2004}~\cite{lalouxNoiseDressingFinancial1999}.
    \item A small set of intermediate eigenvalues that reflect sectoral dynamics.
    \item Many smaller eigenvalues that constitute statistical noise.
\end{enumerate}

For a detailed description of the eigenvalue distribution, see Appendix \ref{sec: Marcenko-Pastur}.

\section{Stationary Vector from the Transition Matrix}
\label{sec:vector-estacionario}
\index{Vector!stationary}

Consider the context of homogeneous \emph{Markov Chains}, in a discrete state space $\mathcal{S}$. Let $P$ be a row-normalized transition matrix, that is,
$P_{xy}\ge 0$ and $\sum_{y\in\mathcal{S}}P_{xy}=1$ $\forall$ $x\in\mathcal{S}$,
where $\mathcal{S}$ denotes the set of states and $P_{xy}$ is the probability of passing from state $x$ to state $y$ in one step.

The \textbf{stationary vector} $\boldsymbol{\pi}$ is defined as a vector of probabilities over $\mathcal{S}$ that satisfies:
\begin{equation}
\boldsymbol{\pi}=\boldsymbol{\pi}P.
\label{eq:vector-estacionario}
\end{equation}
Its components are denoted by $\pi(y)$ for $y\in\mathcal{S}$, so that $\pi(y)\ge 0$ and $\sum_{y\in\mathcal{S}}\pi(y)=1$. In particular, $\pi(y)$ describes the asymptotic behavior: the probability of being in state $y$ when the system has evolved over a sufficiently long time.

Under the hypotheses of irreducibility,\footnote{$\forall$ $x,y\in\mathcal{S}$ there exists $n\ge 1$ such that $(P^{n})_{xy}>0$; that is, $y$ can be reached from $x$ in a finite number of steps with positive probability.}
positive recurrence,\footnote{$\forall$ $x\in\mathcal{S}$, the return time to state $x$ has finite expectation. Stated differently, the process returns to $x$ in a finite average time.}
and aperiodicity,\footnote{The process is not restricted to visiting a state only at times following a fixed alternation. In other words, it does not occur that returns to a state $x\in\mathcal{S}$ take place only at instants that are multiples of a single integer $d>1$.}
the $n$-step probabilities lose memory of the initial state. In particular, for $x,y\in\mathcal{S}$ it holds that
\begin{equation}
\lim_{n\to\infty}(P^{n})_{xy}=\pi(y),
\qquad x,y\in\mathcal{S}.
\label{eq:lim-estacionario}
\end{equation}
This establishes that the probability of being in $y$ at long times is approximated by $\pi(y)$, independently of the initial state $x$ \cite{hoelIntroductionStochasticProcesses1972}.

\section{Interpretation of Eigenvectors through Squared Entries}
\label{sec:probabilistic}

\index{Eigenvector!squared entries}

Let \(\mathbf{C}[\mathcal{E}(s,q)] \in \mathbb{R}^{N \times N}\) be the correlation matrix estimated in the window \(\mathcal{E}(s,q)\). Its spectral decomposition is:
\begin{equation}
\mathbf{C}[\mathcal{E}(s,q)] = \sum_{k=1}^{N} \lambda_{k} \, \mathbf{v}_{k} \, \mathbf{v}_{k}^T, 
\quad \left\| \mathbf{v}_{k} \right\|_2 = 1,
\label{eq:descomp_correlacion_ventana_ajustada}
\end{equation}
where the eigenvectors \(\mathbf{v}_{k}\) and eigenvalues \(\lambda_{k}\) are defined for the time window \(\mathcal{E}(s,q)\), and \emph{\(k\)} denotes the \(k\)-th eigenvalue--eigenvector pair \([\lambda_{k}, \mathbf{v}_{k}]\), where $\lambda_k$ is ordered from smallest to largest.
The largest eigenvalue corresponding to \(k=N\) represents the \emph{SOTM} \cite{mantegnaIntroductionEconophysicsCorrelations2004, lalouxNoiseDressingFinancial1999}.

For each time window $\mathcal{E}(s,q)$, consider the components of the normalized eigenvector $\mathbf{v}_{k}$ and define, for each asset $i$, the quantity:
\begin{equation}
w_i \;=\; \left[v_i\right]^2.
\end{equation}
Since $\mathbf{v}_{k}$ is taken with unit norm, it holds that $w_i \ge 0$ and that $\sum_{i=1}^{N} w_i = 1$. Consequently, the sequence $\{w_i\}_{i=1}^{N}$ possesses the \emph{structure} of a discrete distribution in the axiomatic sense, that is, a non-negative and normalized measure defined over a finite set, in accordance with the framework introduced by Kolmogorov \cite{a.n.kolmogorovFoundationsTheoryProbability1950}. 

Nevertheless, in this work $\left\{w_i\right\}$ is not interpreted as the probability of occurrence of a random event. Instead, these quantities are understood as a \textbf{discrete measure of participation}, which quantifies the relative contribution of each asset to the collective pattern described by the eigenvector $\mathbf{v}_{k}$.\footnote{To avoid overloading the notation, given an eigenvector $\mathbf{v}_k$ its entries were denoted as $\left[v_i\right]_{i=1}^N$. In this thesis, work will be carried out, in particular, with the eigenvector associated with the largest eigenvalue $\lambda_N$, that is, $\mathbf{v}_N$.} Within this framework, the indices $i\in\{1,\dots,N\}$ represent the set of market assets and each $w_i$ measures the relevance of asset $i$ within the collective pattern associated with the eigenvector considered, in the time window analyzed.

For each time window $\mathcal{E}(s,q)$, let $\mathbf{v}_{k}\in\mathbb{R}^N$ be the \emph{normalized eigenvector} associated with index $k$, with entries $\left[v_i\right]_{i=1}^N$ for $i=1,\dots,N$, and Euclidean normalization
\begin{equation}
\sum_{i=1}^{N}\left[v_i\right]^2=1.
\end{equation}

Following \cite{sandovalCorrelationFinancialMarkets2012}, the \emph{inverse participation ratio}\footnote{From this point onward, IPR will be used as its English-language acronym.}\index{Inverse!Participation Ratio} of the eigenvector $\mathbf{v}_{k}$ is defined as
\begin{equation}
\boldsymbol{\mathrm{IPR}}_{k}
=\sum_{i=1}^{N}\left[v_i\right]^4,
\end{equation}
and its inverse as the \emph{participation ratio}\footnote{From this point onward, PR will be used as its English-language acronym.} \index{Participation!Ratio},
\begin{equation}
\boldsymbol{\mathrm{PR}}_{k}
=\frac{1}{\boldsymbol{\mathrm{IPR}}_{k}}.
\end{equation}

$\boldsymbol{\mathrm{PR}}_{k}$ is interpreted as the \emph{effective number} of assets that contribute significantly to the collective behavior of the market for a given time window.

Below, some relevant cases are mentioned:
\begin{enumerate}
\item \textbf{Constant vector.}
Consider the idealized case in which the eigenvector distributes its weight uniformly among the $N$ assets,
\begin{equation}
\mathbf{v}_{k}=\frac{1}{\sqrt{N}}(1,1,\dots,1).
\end{equation}

\begin{equation}
\boldsymbol{\mathrm{IPR}}_k=\frac{1}{N},
\qquad
\boldsymbol{\mathrm{PR}}_k=\frac{1}{\boldsymbol{\mathrm{IPR}}_k}=N.
\end{equation}

\item \textbf{Fully concentrated vector.}
Consider an eigenvector whose norm is concentrated in a single asset. This can be represented, without loss of generality, by assuming that there exists an index $j\in\{1,\dots,N\}$ such that
\begin{equation}
\left[v_j\right]=1,
\qquad
\left[v_i\right]=0\quad \text{for}\quad i\neq j.
\end{equation}
Such a choice satisfies the Euclidean normalization. In this case,
\begin{equation}
\boldsymbol{\mathrm{IPR}}_k=1,\
\qquad
\boldsymbol{\mathrm{PR}}_k=\frac{1}{\boldsymbol{\mathrm{IPR}}_k}=1.
\end{equation}
\end{enumerate}

\section{Principal Component Analysis}
\label{section:PCA}
\index{PCA}

\emph{Principal Component Analysis} (\emph{PCA}) is a statistical technique used
for dimensionality reduction that seeks to represent a set of
highly correlated variables in a lower-dimensional space, preserving
most of the information contained in the data
\cite{janicijevicPrincipalComponentAnalysis2022}. 
In the financial context, \emph{PCA} is applied to covariance or correlation matrices
constructed from time series of returns, 
with the aim of identifying collective patterns and isolating the principal sources of variability.

Hirsa et al.~\cite{hirsaRobustRollingPCA2023} introduce the methodology of
\emph{Robust Rolling PCA}\index{PCA!Robust Rolling} (R2-PCA), designed for financial data where
\emph{non-stationarity} and temporal dependence hinder the application of standard
PCA. The central idea consists in applying \emph{PCA} in sliding time windows
of length $W$, generating sequences of eigenvalues and
eigenvectors that describe the dynamic evolution of the system. These windows
\textbf{overlap}, so that when advancing one step in time, \emph{PCA} is recomputed with
the $W$ most recent observations, reducing the variance of the estimators and
capturing short-term variations. A characteristic problem of this approach
is the \emph{sign indeterminacy} (\emph{sign flipping}), since each
eigenvector $\mathbf{v}_j$ and its opposite $-\mathbf{v}_j$ represent the same
direction in space. R2-PCA addresses this drawback by means of criteria
of temporal continuity that ensure consistent trajectories for the
eigenvectors.

For their part, Jani\'cijevi\'c et al.~\cite{janicijevicPrincipalComponentAnalysis2022} emphasize that
\emph{PCA} constitutes an essential tool for processing high-dimensional financial
information, such as balance sheets, income statements, or price series.
Its application makes it possible to reduce hundreds of variables to a limited set of
components, which facilitates tasks of classification, risk prediction,
credit allocation, and solvency analysis. Furthermore, by
combining \emph{PCA} with \emph{clustering} algorithms, grouping structures consistent
with the economic nature of the analyzed entities are revealed. Finally, an important point noted by Jani\'cijevi\'c et al. is that \emph{PCA} functions as a noise filter. 

From the covariance matrix $\mathbf{\Sigma}$ expressed in Eq. \eqref{eq:covarianza-ij}, the central problem of \emph{PCA} consists in solving the spectral decomposition:
\begin{equation}
\mathbf{\Sigma} \, \mathbf{v}_j = \lambda_j \, \mathbf{v}_j,
\label{eq:eig}
\end{equation}
where $\lambda_j$ are the \textbf{eigenvalues} and $\mathbf{v}_j$ the
\textbf{eigenvectors}. The eigenvectors form an orthonormal basis and the
eigenvalues quantify the variance in each of these directions.

\chapter{Results}
\label{chapter:results}

This section presents the main findings of the analysis of the correlation matrices constructed from financial time-series data.
First, the leading eigenvalues of these matrices are examined.
Subsequently, the leading eigenvectors with squared entries are studied, with the aim of analyzing the \emph{MS} they induce and the sectoral distribution. Finally, the dynamics are addressed using matrices of the form:
\begin{equation}
    \mathbf{C}^{l} \;=\; \mathcal{N}\!\left(\sum_{i=N-l+1}^{N} \lambda_i\, \mathbf{v}_i\, \mathbf{v}_i^{\top}\right), 
    \quad l \in \{1,2,3\},
    \label{eq:tipo_guhr}
\end{equation}
which make it possible to characterize the contribution of the eigenvectors weighted by their corresponding eigenvalues. In the remainder of the text, the notation $k$ will be used to denote the number of \emph{clusters}.

\section{Data Treatment}
\label{sec:datos}

The historical series used in this work were obtained from \href{https://finance.yahoo.com/}{Yahoo Finance}, a publicly accessible source with broad coverage of financial markets. It is appropriate to distinguish between the \textbf{raw closing price}\index{Price!raw closing}\ and the \textbf{adjusted closing price}\index{Price!adjusted closing}. The former is the last price traded at the close of the session, useful as an immediate market reference \cite{kratterBeginnersGuideStock2019}, but it does not correct for the discontinuities introduced by dividends and stock splits\footnote{A \emph{stock split} increases the number of shares and proportionally reduces the price per share, without altering the capitalization or the economic value for the shareholder. Conversely, a \emph{reverse split} reduces the number of shares and raises the price.}. The latter incorporates these transformations and provides a homogeneous measure that is more faithful to the economic value over time, which is why it is preferable for historical comparisons and for the computation of cumulative returns \cite{WhatAdjustedClosing}.

In this thesis, work is carried out exclusively with \textbf{adjusted closing prices}. In general terms, the quotations reflect the equilibrium between \textbf{supply and demand}, modulated by the performance of the companies, the expectations of investors, and the macroeconomic environment. The economic sectors of the S\&P~500 are shown in Appendix~\ref{tab:sectores-gics-sp500}, and the set of companies analyzed in Appendix~\ref{tab:companies-sp500}. The study period spans from \textbf{January 3, 2012} to \textbf{December 29, 2023}.

\section{Correlation Matrices}
\index{Matrix!correlation}
\label{sec:matrices_corr}

As a starting point, the correlation matrices estimated with short-epoch windows of size $q\in\{20,40\}$ days are analyzed. Their visual representations are displayed: heatmaps, temporal evolution of the \emph{SOTM}, transition matrices between states, and correlation histograms, with the aim of describing the \emph{morphology} of the system.

\subsection{Temporal Evolution of \textit{MS}}
\index{Market!States}
\index{Clusters}

In the discussion of the \emph{COVID} State\index{State!COVID} by Mart\'inez-Ramos \emph{et al.} \cite{martinezramosSeriesTiempoFinancieras2024}, the appearance of this state as a particular phase of the market was studied. In the results with $q=40$ in Fig.~\ref{fig:evolucion_ms_corr}, it is clearly observed that this state is activated in \emph{cluster} 2. Here, the \emph{average correlation} $\langle C_{ij} \rangle$ is attenuated by the \emph{cancellation} between positive and negative contributions. Likewise, the pattern appears in an \emph{isolated} manner at $q=40$, without any antecedent that replicates this pattern at earlier dates. Additionally, in 2017--2018 a continuous presence of \emph{cluster} 1 is recorded.

\index{Heatmaps}
\subsection{Heatmaps}

Fig.~\ref{fig:avg_corr_mats} presents the average correlation matrices obtained by \emph{cluster} for $q=40$ and $k=5$. Each matrix corresponds to the average of all the matrices belonging to the same cluster\footnote{Also referred to as the centroid, Eq.~\ref{eq:centroide-promedio-z}.}. In \emph{cluster} 1, the \textit{UT} and \textit{FN} sectors exhibit a greater presence of anticorrelations, whereas the intensity of the average correlation reflects the intersectoral correlation arising from the organization of the S\&P 500 companies. The sectoral nomenclature used is detailed in Appendix~\ref{tab:sectores-gics-sp500}. Additionally, Appendix~\ref{fig: Mapas de Calor apendices} shows complementary heatmaps that reproduce an analogous pattern, characterized by an increase in correlation as one advances toward the cluster of higher index.
\subsection{Transition Matrices}
\label{subsec:matrices_transicion_corr}
\index{Matrix!transition}

The transition matrices shown in Fig.~\ref{fig:matrices_transicion_corr} exhibit an almost tridiagonal structure, which reflects that state transitions occur predominantly between adjacent states. This feature supports the hypothesis of modeling the dynamics of the system as a \textbf{Markov chain}\index{Chain!Markov}.

In a Markov chain, the random variables \(X_n,\, n\geq 0\) have \emph{no memory}: the state at \(n+1\) depends solely on the state at \(n\), and not on the previous history \cite{hoelIntroductionStochasticProcesses1972}. Although financial markets present a \emph{non-Markovian} nature\footnote{Historical examples of dynamics with long-term memory are the \emph{Fukushima}\index{Fukushima} nuclear crisis in 2011 and the \emph{Great Depression}\index{Great Depression} of 1929, whose effects extended in a prolonged manner across global financial markets.}, the pattern of transitions between market states proves compatible with the hypothesis of \emph{Markovianity} \cite{martinezramosSeriesTiempoFinancieras2024, pharasiIdentifyingLongtermPrecursors2018}. Finally, it is worth noting that in the transition matrices of Fig.~\ref{fig:matrices_transicion_corr}, the state with the highest frequency of permanence is not preserved in a single \emph{cluster}, but rather varies between \emph{cluster} 2 and \emph{cluster} 3 for \(k=4\). For \(k=5\), the distribution of the values of the transition matrix is similar for windows of 20 and 40 days.

\subsection{\texorpdfstring{Relationship between the Largest Eigenvalue $\lambda_N$ and the Average Correlation $\langle C_{ij} \rangle$}{Relaci\'on entre el eigenvalor mayor y la correlaci\'on promedio <Cij>}}
\label{subsec:lambda_vs_corr}

Fig.~\ref{fig:lambda_vs_corr_pearson} shows the nearly linear relationship between the dominant eigenvalue $\lambda_N$ and the average correlation $\langle C_{ij}\rangle$, with a \textbf{Pearson coefficient} very close to $1$.
\index{Correlation!average}
\index{Correlation!Pearson}

In the work of Mart\'inez Ramos, the analysis was extended comparatively to different indices, such as the \textbf{Nikkei 225}\index{Nikkei 225}. There it was shown that, while in the \textbf{S\&P 500} the correlation between $\lambda_N$ and $\langle C_{ij}\rangle$ is very high, in the other markets the correlation remains high but with differences~\cite{martinezramosSeriesTiempoFinancieras2024}. Thus, both the dominant eigenvalue $\lambda_N$ and the average correlation $\langle C_{ij} \rangle$ function as indicators of the collective dynamics in stock markets. In particular, it has been observed that the average correlation tends to increase during periods of crisis \cite{mantegnaIntroductionEconophysicsCorrelations2004}.
\begin{figure}[p]
    \centering
    \vspace*{\fill}   

    \includegraphics[width=0.75\textwidth]{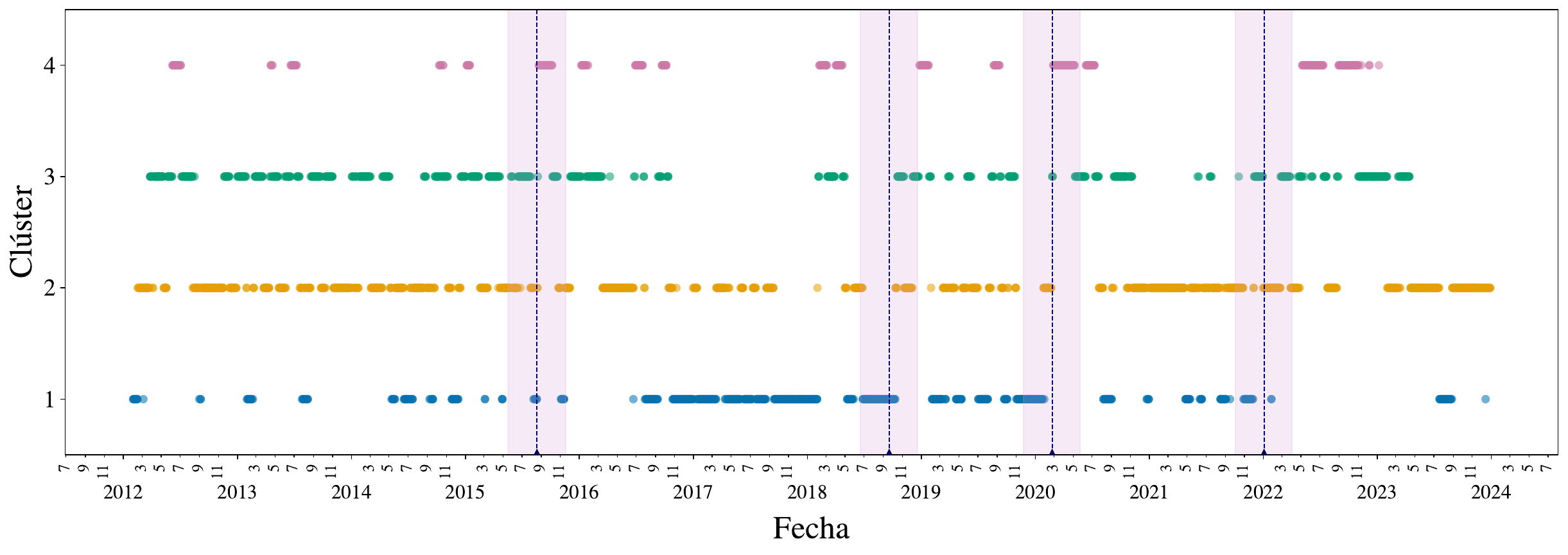}
    \caption*{a) $q=20$, $k=4$}

    \vfill   

    \includegraphics[width=0.75\textwidth]{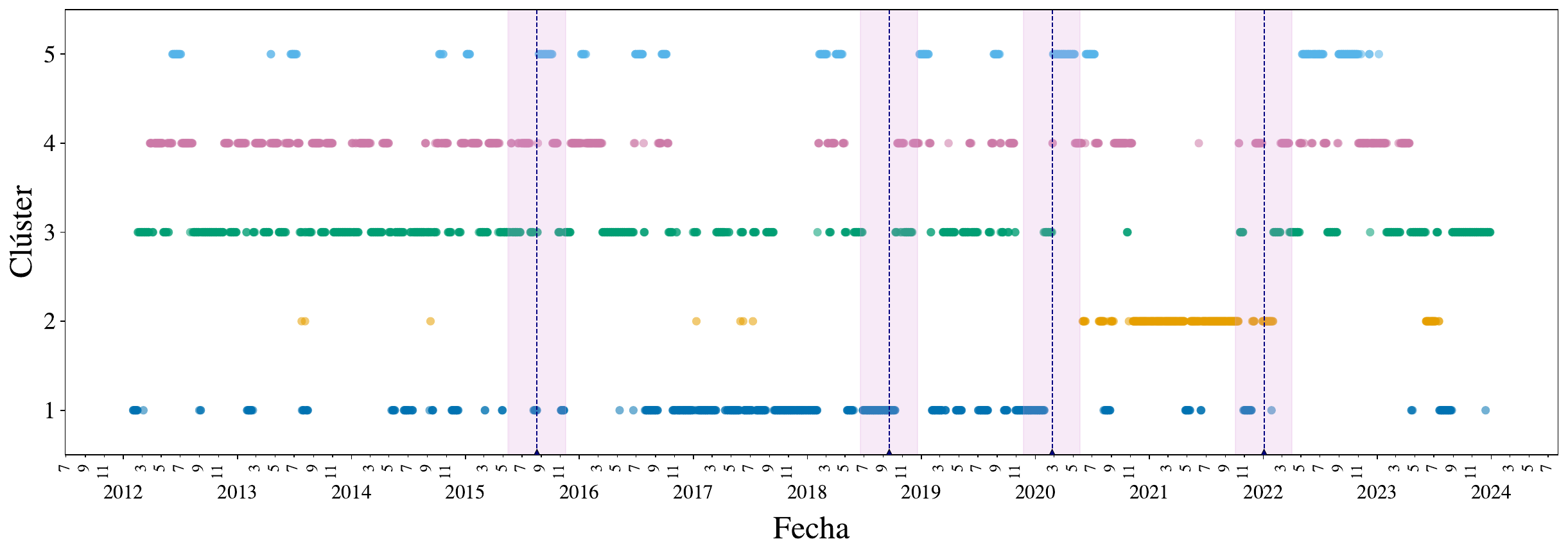}
    \caption*{b) $q=20$, $k=5$}

    \vfill

    \includegraphics[width=0.75\textwidth]{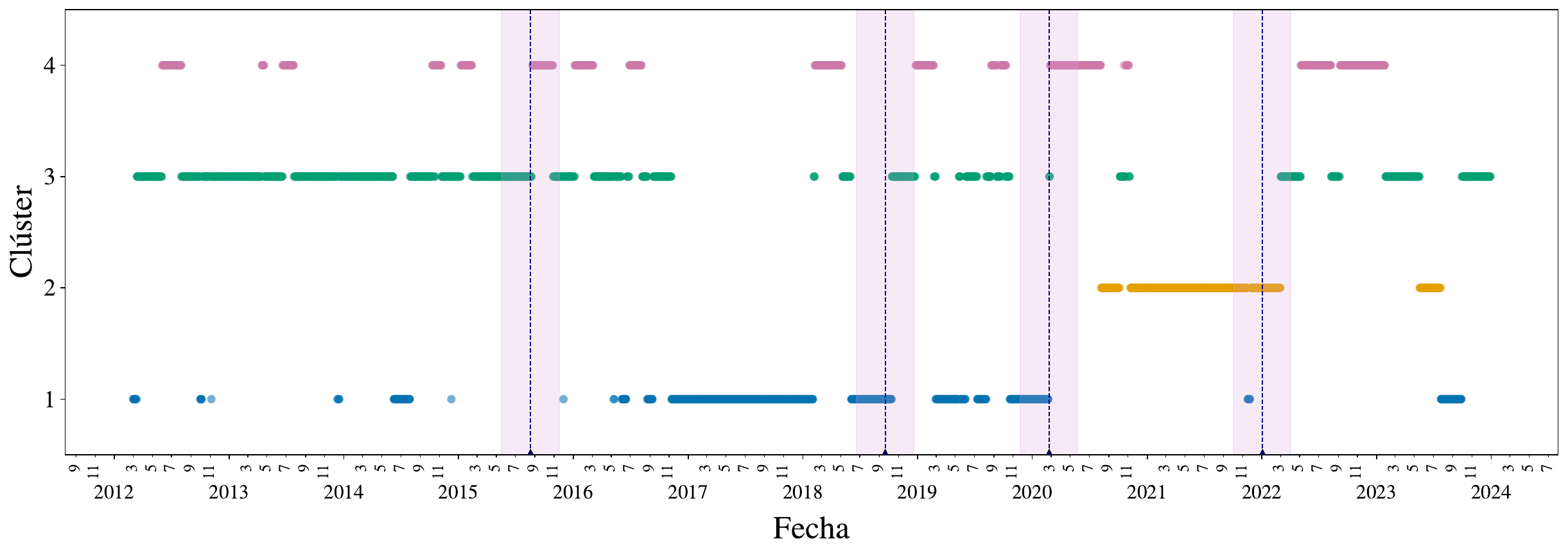}
    \caption*{c) $q=40$, $k=4$}

    \vfill

    \includegraphics[width=0.75\textwidth]{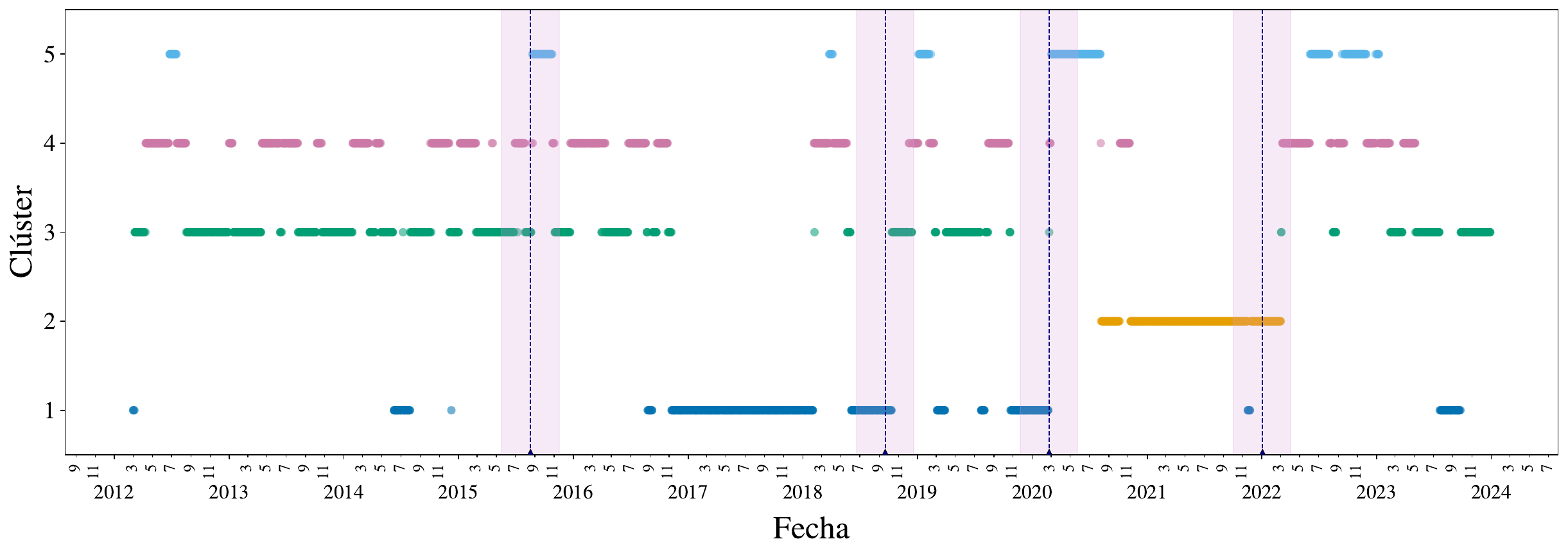}
    \caption*{d) $q=40$, $k=5$}

    \vspace*{\fill}   

    \caption[Temporal evolution of \textit{MS} from the correlation matrices]%
    {Temporal evolution of the \textit{MS} obtained by means of short-epoch-window correlation matrices.
    The atypical \emph{COVID} State is more clearly defined in \emph{cluster} 2 with $q=40$. There is also a constant appearance of \emph{cluster} 1 in 2017--2018.}
    \label{fig:evolucion_ms_corr}
\end{figure}

\begin{figure}[H]
    \centering
    \includegraphics[width=\textwidth,keepaspectratio]{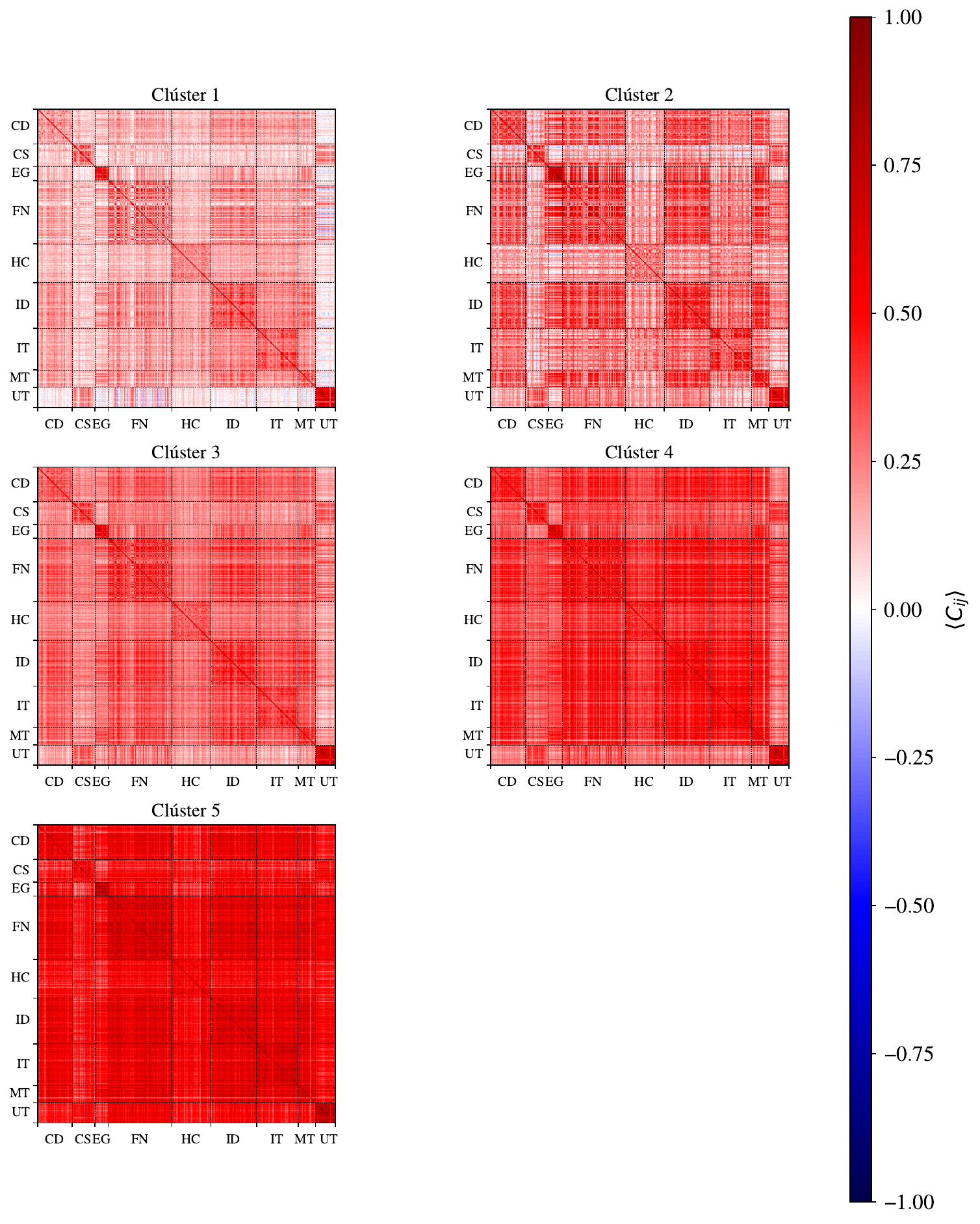}
    \caption[Heatmaps of the average correlation matrix by \emph{cluster}]%
    {Heatmaps of the average correlation matrix by \emph{cluster}. $q=40$ and $k=5$. \emph{Cluster} 1 corresponds to the lowest correlation values in the system. The sector nomenclature is detailed in Appendix~\ref{tab:sectores-gics-sp500}.}
    \label{fig:avg_corr_mats}
\end{figure}
\clearpage
\begin{figure}[ht]
\centering

\begin{subfigure}{0.49\linewidth}
  \centering
  \includegraphics[width=\linewidth]{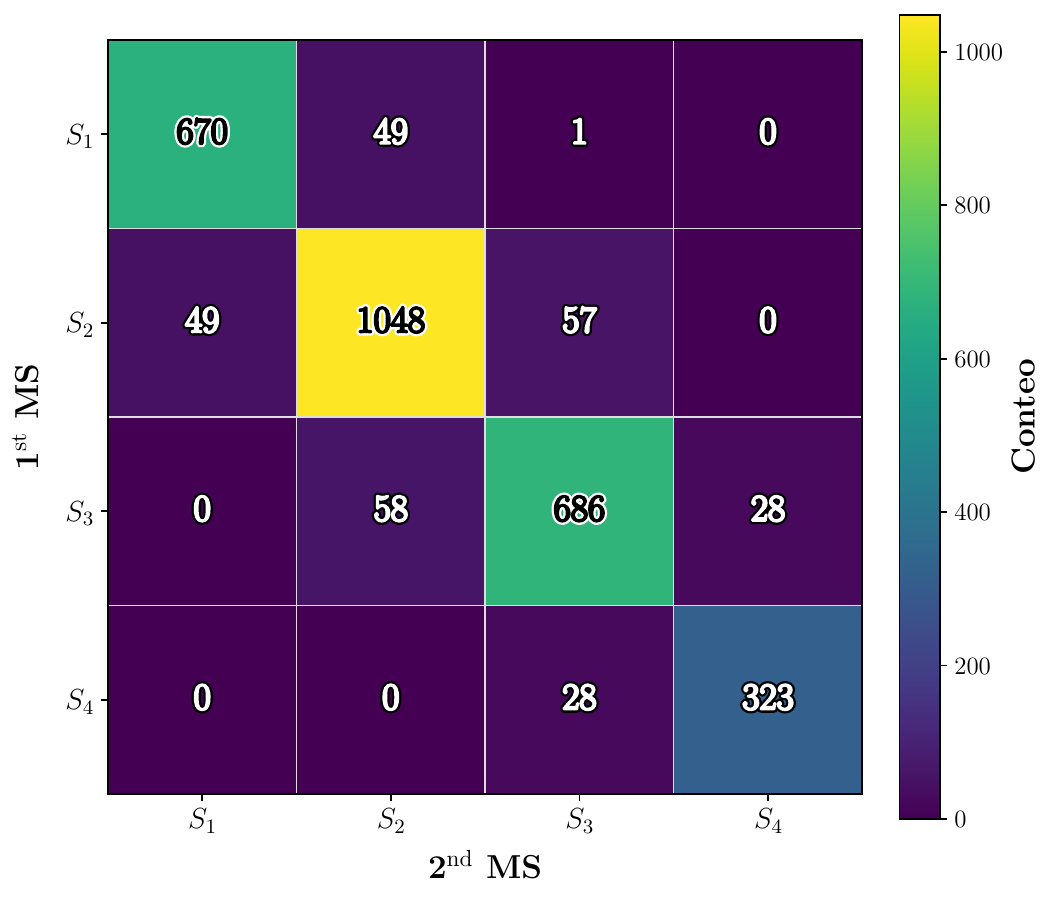}
  \caption*{a) $q=20$, $k=4$}
\end{subfigure}\hfill
\begin{subfigure}{0.49\linewidth}
  \centering
  \includegraphics[width=\linewidth]{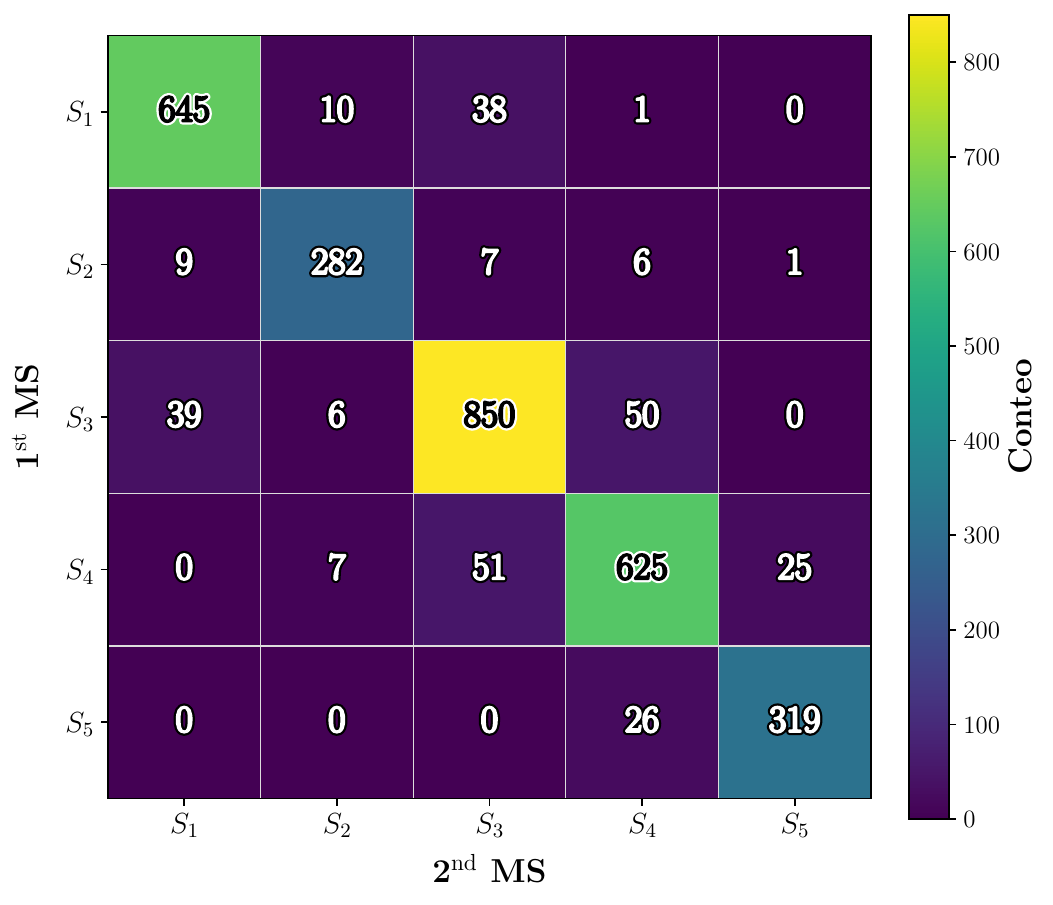}
  \caption*{b) $q=20$, $k=5$}
\end{subfigure}

\vspace{0.6em}

\begin{subfigure}{0.49\linewidth}
  \centering
  \includegraphics[width=\linewidth]{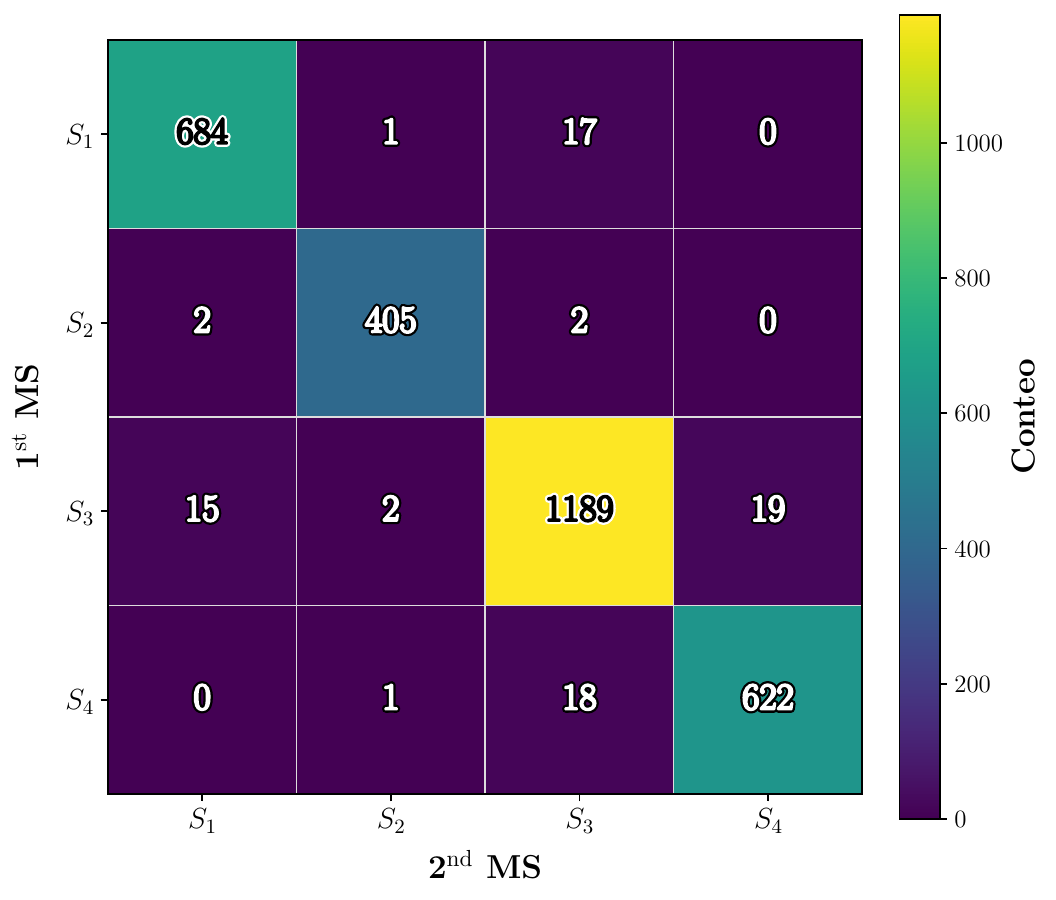}
  \caption*{c) $q=40$, $k=4$}
\end{subfigure}\hfill
\begin{subfigure}{0.49\linewidth}
  \centering
  \includegraphics[width=\linewidth]{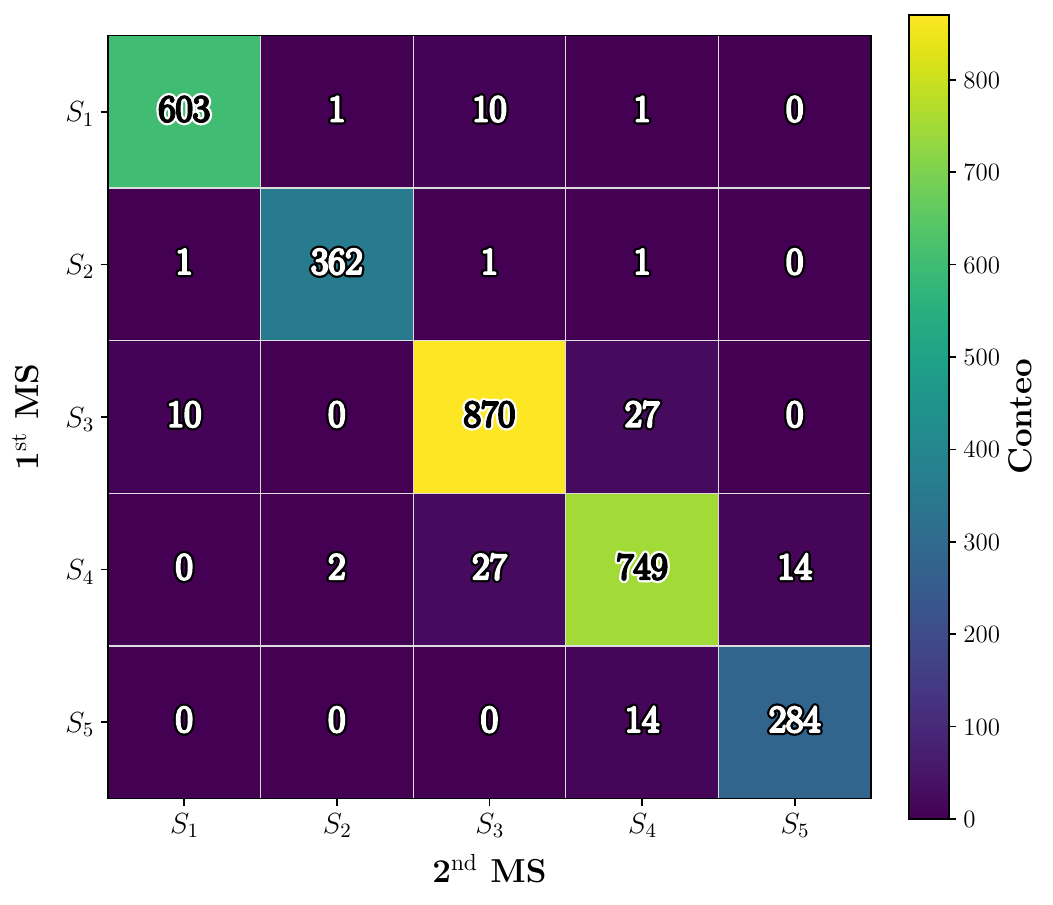}
  \caption*{d) $q=40$, $k=5$}
\end{subfigure}

\caption[Transition matrices indicated by the correlation matrices]%
{Transition matrices that quantify the jump of states from $i$ to $i\!+\!1$ of the correlation matrices. On the diagonal, the matrix with the highest occupation alternates between \emph{cluster} 2 and \emph{cluster} 3. The matrices are almost tridiagonal, since the differences between symmetric positions are of only a few units.}
\label{fig:matrices_transicion_corr}
\end{figure}
\clearpage
\begin{figure}[ht]
\centering

\begin{subfigure}{0.48\linewidth}
  \centering
  \includegraphics[width=\linewidth]{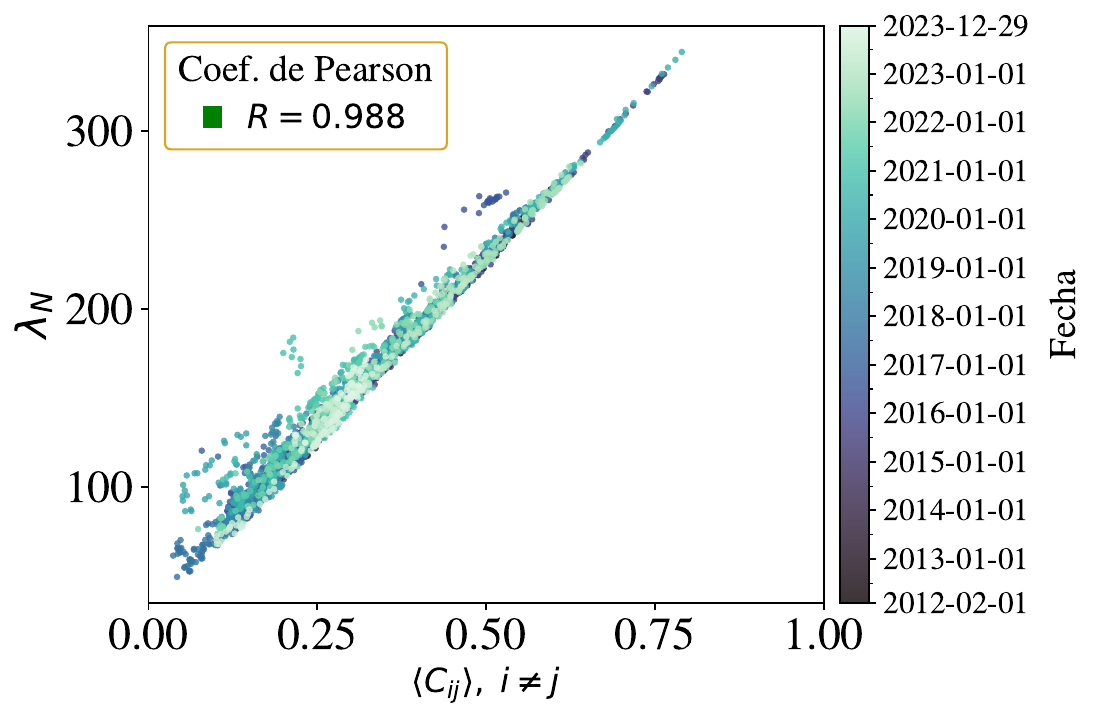}
  \caption*{a) $q=20$}
\end{subfigure}\hfill
\begin{subfigure}{0.48\linewidth}
  \centering
  \includegraphics[width=\linewidth]{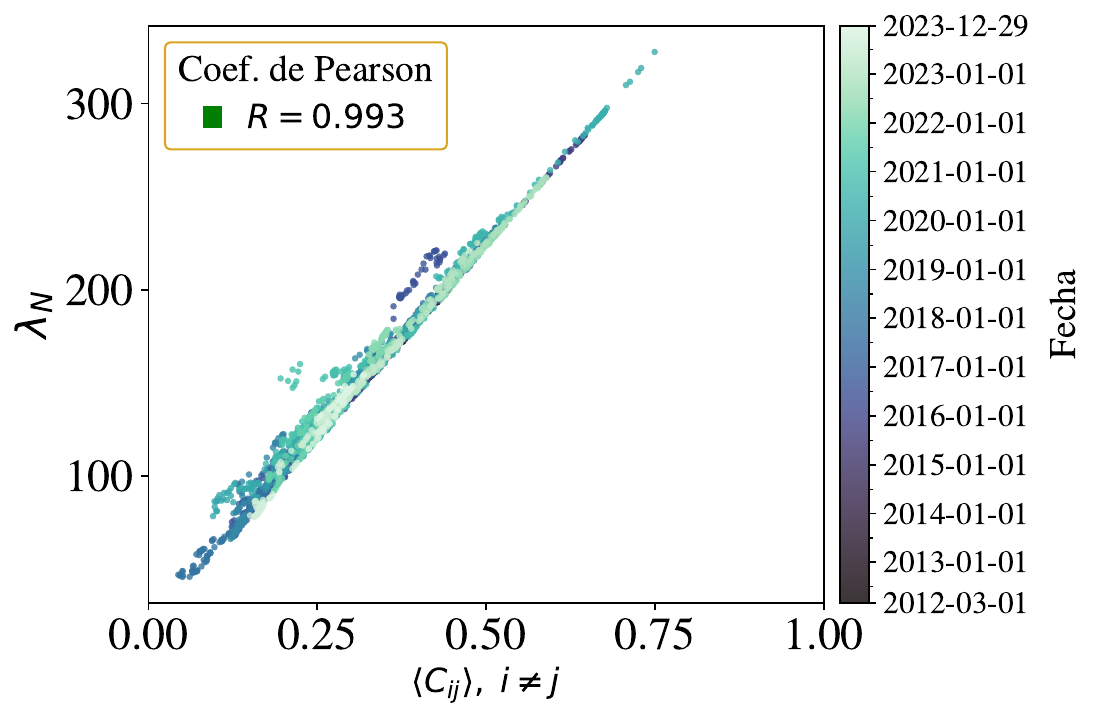}
  \caption*{b) $q=40$}
\end{subfigure}

\caption[\texorpdfstring{Relationship between $\lambda_N$ and the average correlation $\langle C_{ij}\rangle$ ($i\neq j$)}{Relaci\'on entre lambdaN y la correlaci\'on promedio}]%
{Relationship between the dominant eigenvalue $\lambda_N$ and $\langle C_{ij}\rangle$, with $i \neq j$. In both cases a strong linear relationship is observed, with a \textbf{Pearson coefficient} very close to 1.}
\label{fig:lambda_vs_corr_pearson}
\end{figure}

\subsection{Temporal Density Surfaces of Correlations}
\label{subsec:waves3d}
\index{Density!of correlations}

Fig.~\ref{fig:waves3d_corr} shows temporal surfaces of the density \(\rho(C_{ij})\) (normalized by date\footnote{\(\rho(C_{ij})\) is normalized so that \(\int_{-1}^{1}\rho(x)\,\mathrm{d}x=1\) for each day.}) for \(q=20\) and \(q=40\). In the case of \(q=20\), two relevant episodes were identified. Between \textbf{September 9 and 18, 2015}, the modal center\footnote{The modal center \(c_{\text{mode}}\) is defined as the correlation value at which the density \(\rho(C_{ij})\) attains its maximum.} was \(c_{\text{mode}}\approx 0.847\), whereas between \textbf{March 12 and 18, 2020} a value of \(c_{\text{mode}}\approx 0.870\) was obtained. In turn, for \(q=40\) a single episode was detected between \textbf{March 12 and 17, 2020}, in which the modal center reached a value of \(c_{\text{mode}}\approx 0.824\). For further detail regarding the crisis events, see Appendix~\ref{tab:fechas_episodios}.

\begin{figure}[ht]
  \centering

  \begin{subfigure}{0.67\linewidth}
    \centering
    \includegraphics[width=\linewidth]{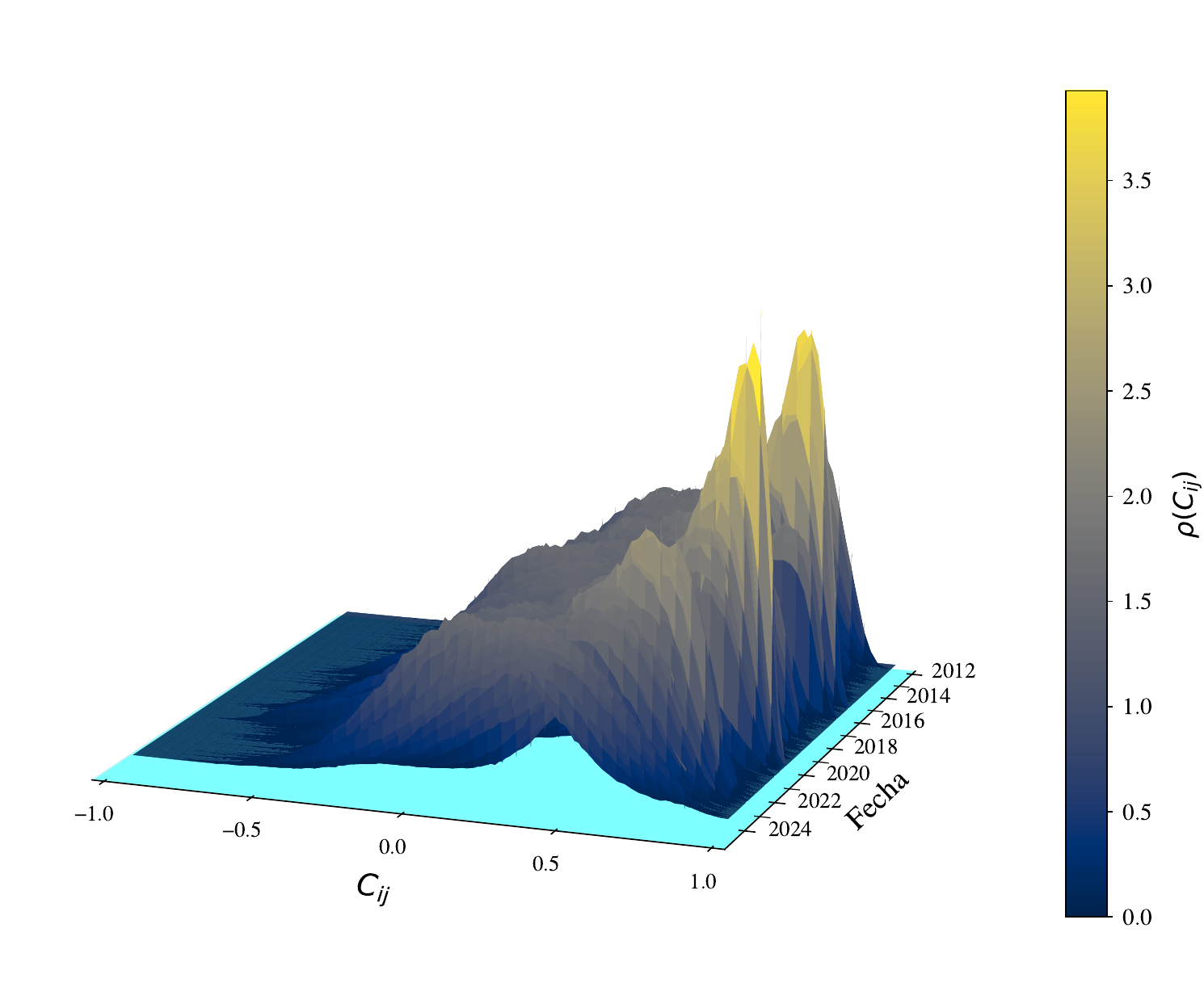}
    \caption*{a) $q=20$}
  \end{subfigure}

  \vspace{0.4cm}

  \begin{subfigure}{0.67\linewidth}
    \centering
    \includegraphics[width=\linewidth]{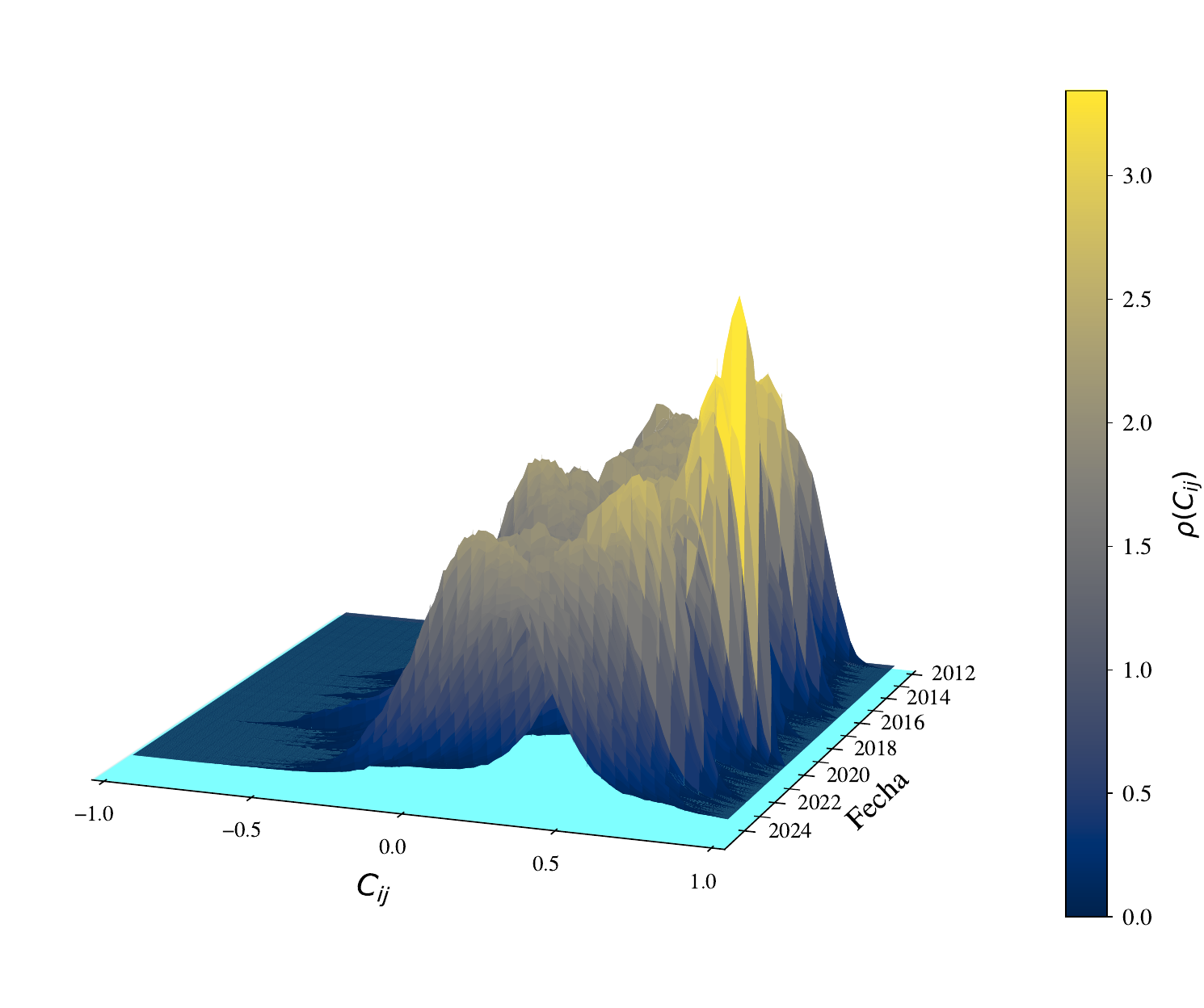}
    \caption*{b) $q=40$}
  \end{subfigure}

  \vspace{0.5cm}
  \caption[3D density surfaces of correlations]%
  {Each figure shows a \textbf{3D surface} of the temporal evolution of the distribution of correlations \(C_{ij}\), with \(i\neq j\): the color map is the \textbf{density} \(\rho(C_{ij})\), computed as a histogram \emph{normalized by date} for each day. Prominent peaks: for \(q=20\), from September 9 to 18, 2015 and from March 12 to 18, 2020 (\(\rho>3.5\)); for \(q=40\), from March 12 to 17, 2020 (\(\rho>3.5\)).}
  \label{fig:waves3d_corr}
\end{figure}

\subsection{Histograms by Cluster}
\label{subsec:histogramas_cluster}
\index{Clusters}
\index{Histograms}

Fig.~\ref{fig:histCorrByCluster_q20q40_k4k5} shows the \textbf{correlation histograms} by \emph{cluster} (on a \emph{semi-logarithmic scale}) for different windows $q$. \emph{Clusters} 1 and 2 exhibit \emph{positive skewness}. To define the \emph{bin} width $h$, the \textbf{Freedman--Diaconis rule} was employed, which establishes:
\begin{equation}
h = 2\,\frac{\mathrm{IQR}}{n^{1/3}}.
\end{equation}
In this way, the choice of \emph{bins} explicitly incorporates the variability of the sample and the number of available observations~\cite{DocumentationNumPyV112}. \emph{Multimodality} is observed in \emph{cluster} 1. The figures report the statistics: standard deviation $\sigma$, skewness $\gamma_{1}$, and kurtosis $\kappa$.
\begin{figure}[!htbp]
\centering

\begin{subfigure}[t]{0.465\linewidth}
  \centering
  \includegraphics[width=\linewidth]{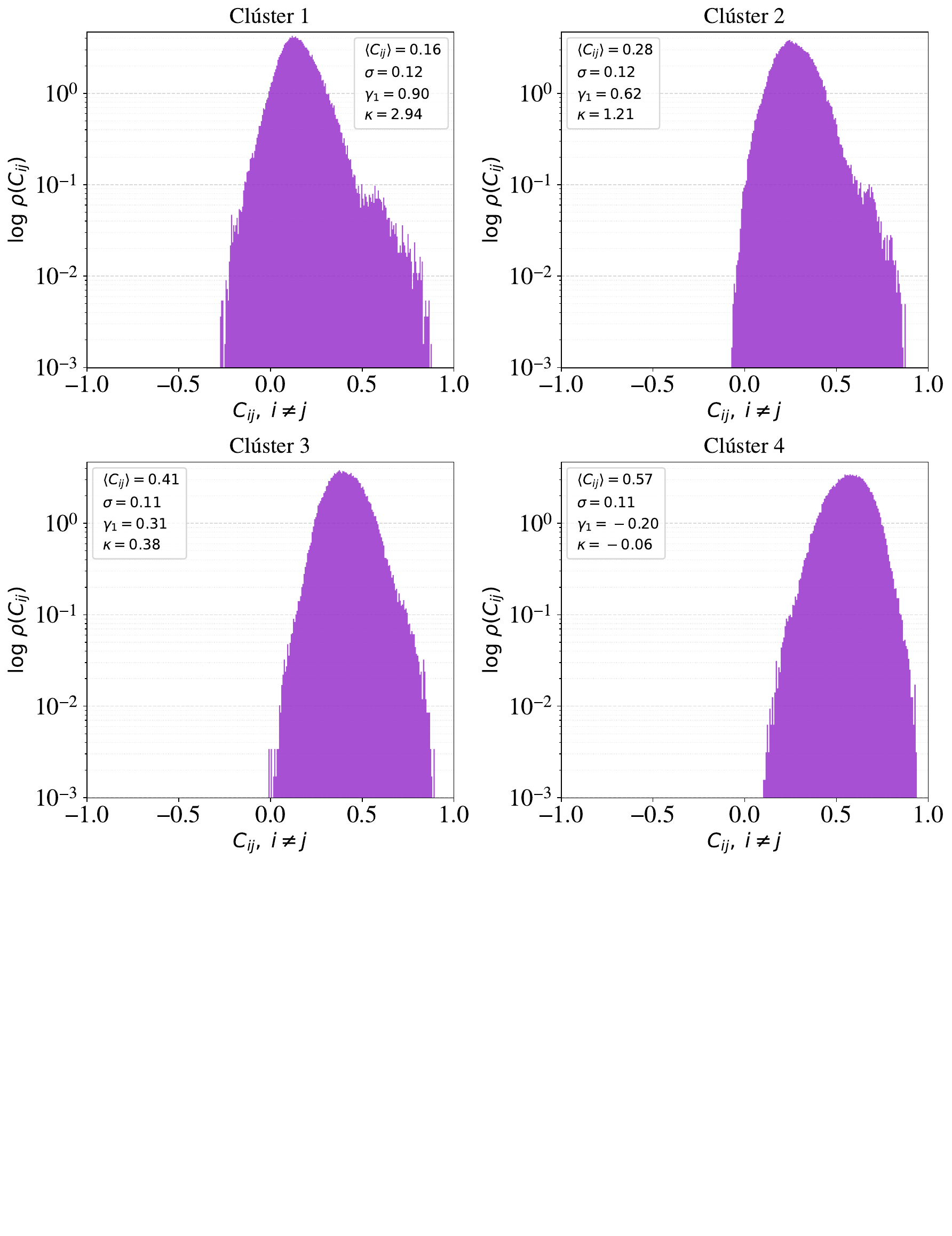}
  \caption*{\footnotesize a) $q=20$, $k=4$}
\end{subfigure}\hfill
\begin{subfigure}[t]{0.465\linewidth}
  \centering
  \includegraphics[width=\linewidth]{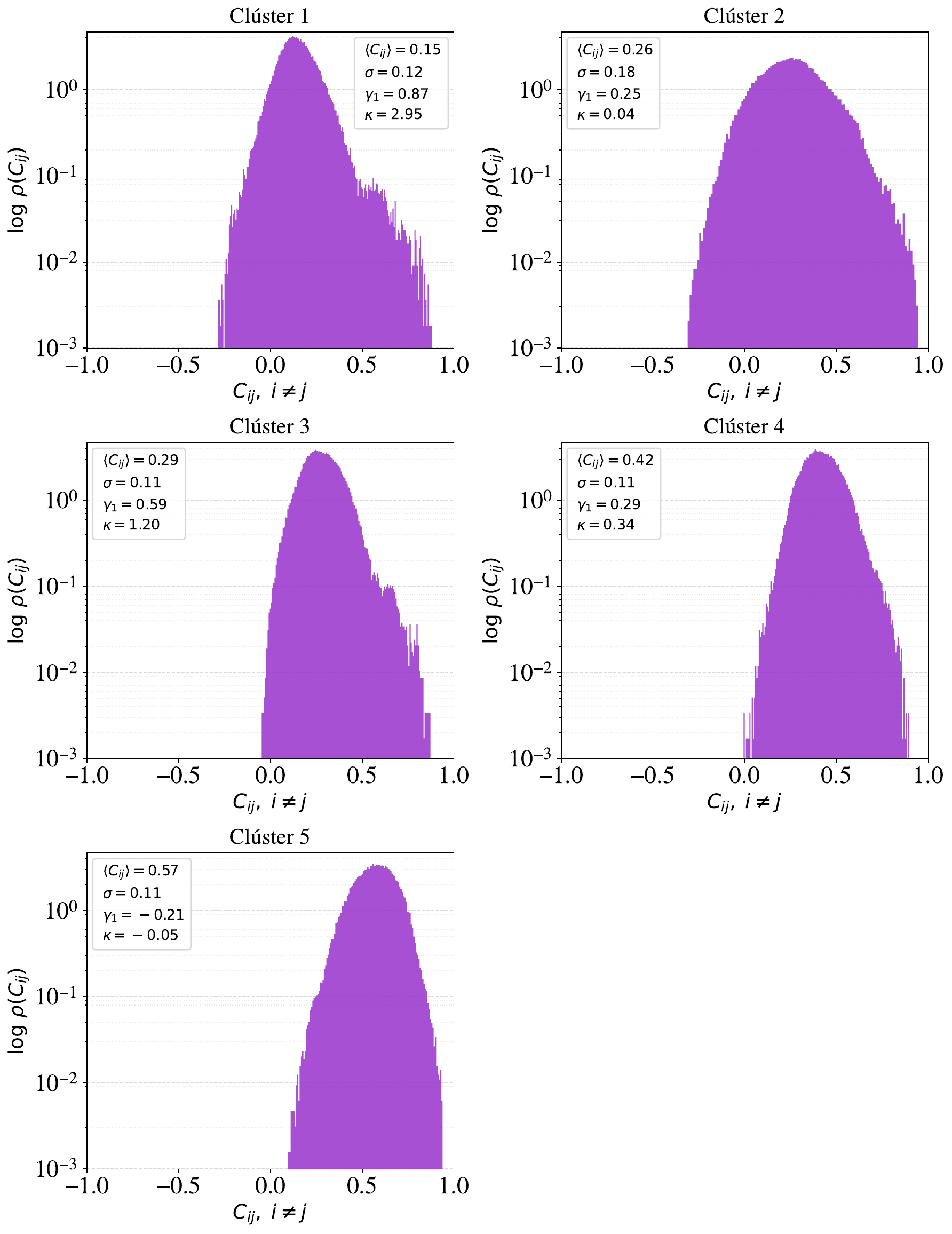}
  \caption*{\footnotesize b) $q=20$, $k=5$}
\end{subfigure}

\begin{subfigure}[t]{0.465\linewidth}
  \centering
  \includegraphics[width=\linewidth]{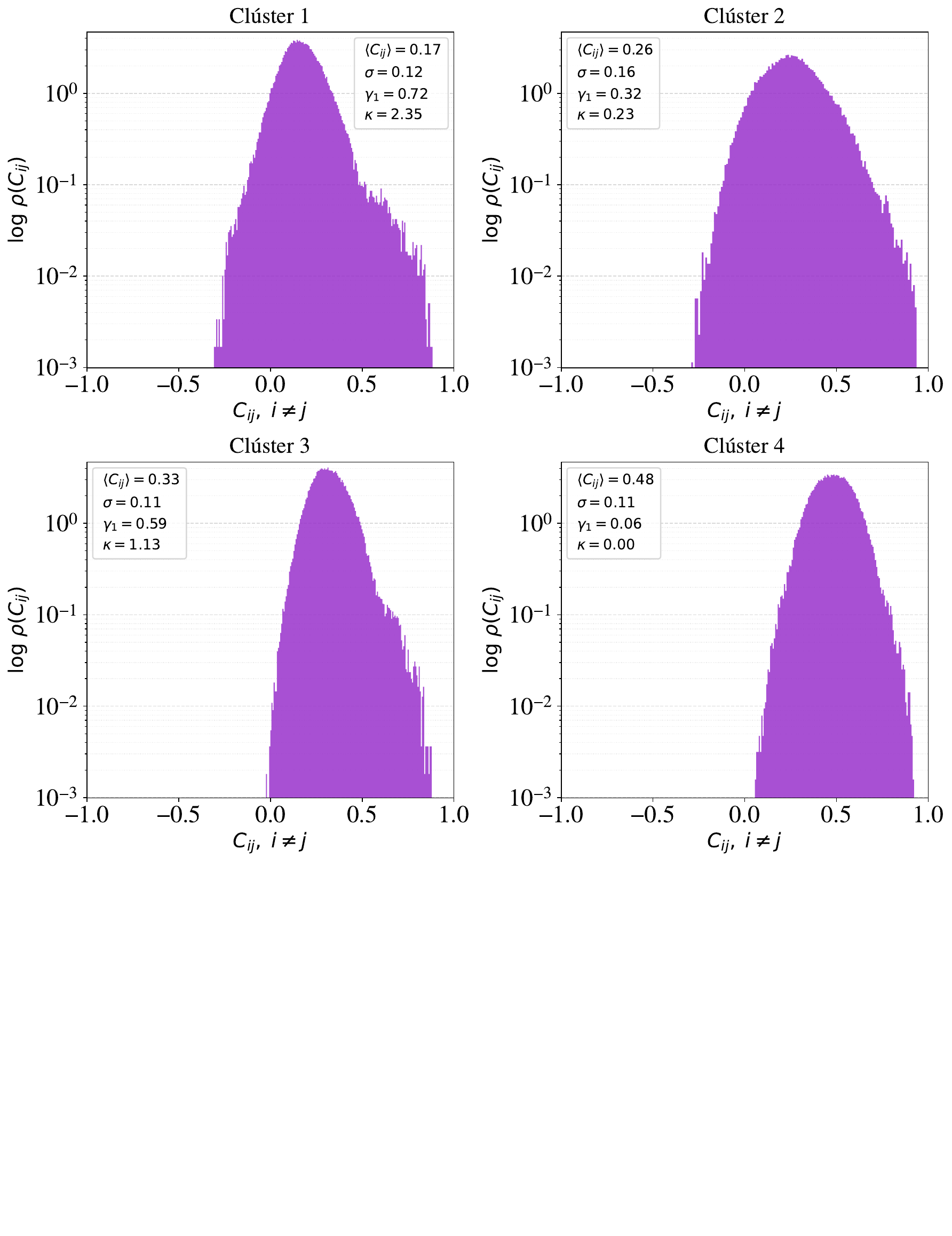}
  \caption*{\footnotesize c) $q=40$, $k=4$}
\end{subfigure}\hfill
\begin{subfigure}[t]{0.465\linewidth}
  \centering
  \includegraphics[width=\linewidth]{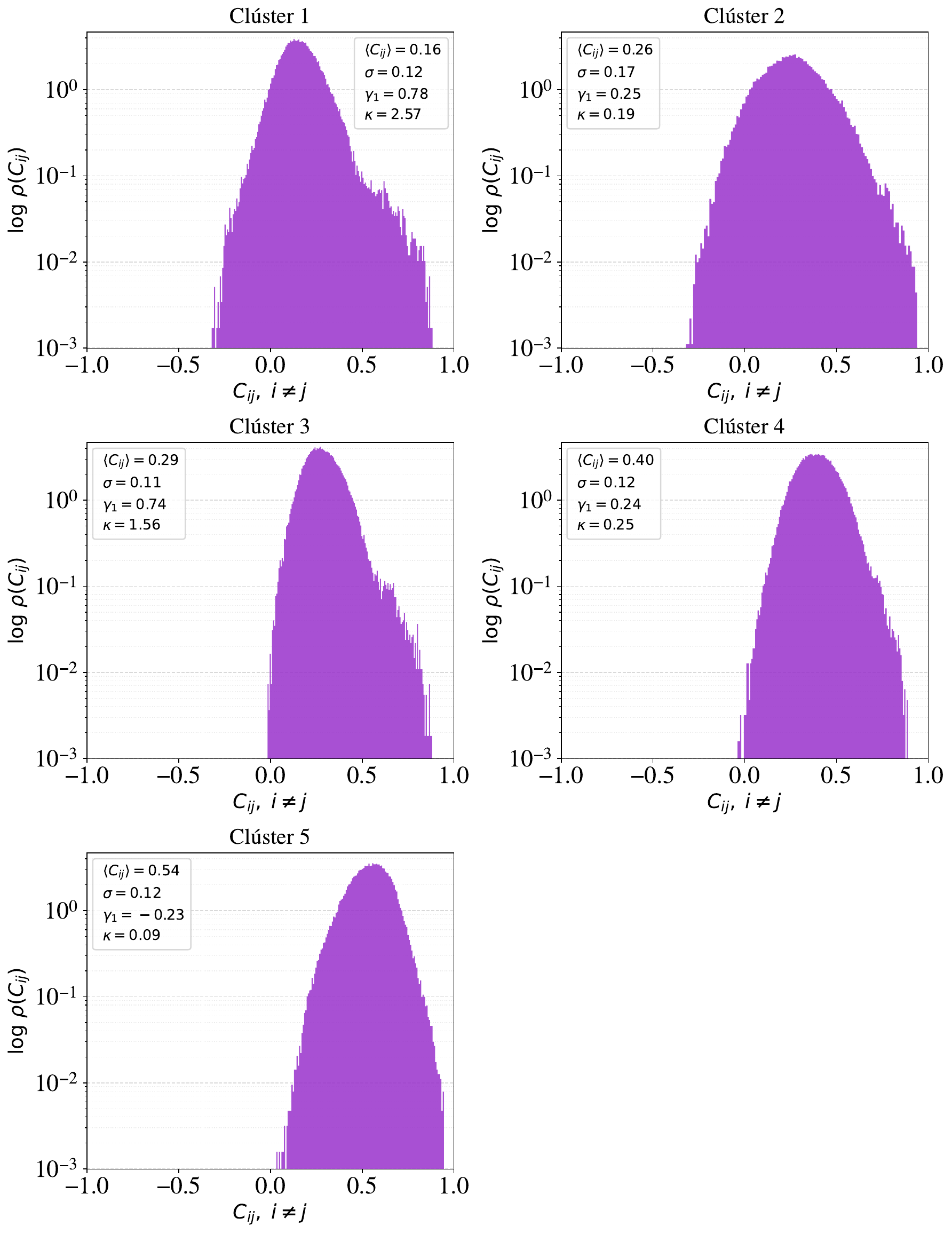}
  \caption*{\footnotesize d) $q=40$, $k=5$}
\end{subfigure}
\caption[Correlation histograms by \textit{cluster} on a semi-logarithmic scale]%
{Correlation histograms by \textit{cluster} on a \emph{semi-logarithmic scale}.
As the \textit{cluster} index increases, the average correlation $\langle C_{ij}\rangle$ grows, and the \emph{multimodality} becomes more evident in \emph{cluster} 1.
The following symbols are used: standard deviation $\sigma$, skewness $\gamma_{1}$, and kurtosis $\kappa$.}
\label{fig:histCorrByCluster_q20q40_k4k5}
\end{figure}

\clearpage

\section{\texorpdfstring{Dominant Eigenvalues $\lambda_N$}{Eigenvalores dominantes lambda N}}

\index{Eigenvalue!maximum}
\label{section: Dinamica Eigenvalores}

In this section, the dynamics indicated by the leading eigenvalues of the correlation matrix are analyzed, with emphasis on the maximum eigenvalue $\lambda_{N}$, interpreted as the \emph{SOTM} \index{State of the Market} \cite{mantegnaIntroductionEconophysicsCorrelations2004}. The temporal accumulation of $\lambda_{N}$ is examined and, in a complementary manner, the behavior of $\lambda_{N-1}$ and $\lambda_{N-2}$ is examined in Appendix \ref{sec: segundos y terceros eigen}. In turn, states were defined and ordered by means of the \emph{k}-Means algorithm \index{K-Means}, and transition matrices are constructed. Finally, the triplets $(\lambda_{N-2},\lambda_{N-1},\lambda_{N})$ are represented in three-dimensional space in order to visualize the spectral organization associated with the \emph{SOTM}.

Fig. \ref{fig:LambdaMax} shows the variation of $\lambda_N$ for short-epoch windows $q\in{20,40}$. The two series presented qualitatively similar patterns; nevertheless, for $q=20$ oscillations of greater amplitude and more abrupt local variations were observed, reflecting a greater sensitivity to short-term fluctuations. By contrast, for $q=40$ a smoothed profile was obtained.

\begin{figure}[ht]
\centering

\begin{subfigure}{0.96\linewidth}
    \centering
    \includegraphics[width=\linewidth]{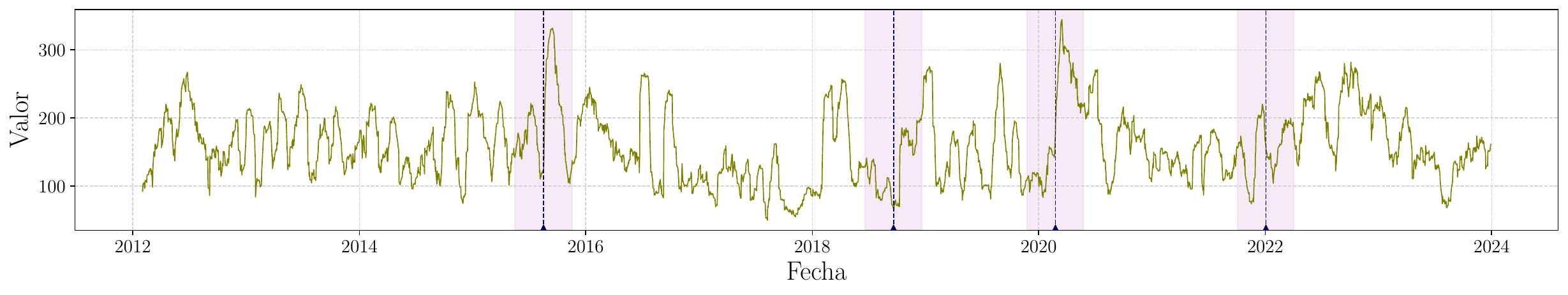}
    \caption*{a) $q=20$}
\end{subfigure}

\medskip

\begin{subfigure}{0.96\linewidth}
    \centering
    \includegraphics[width=\linewidth]{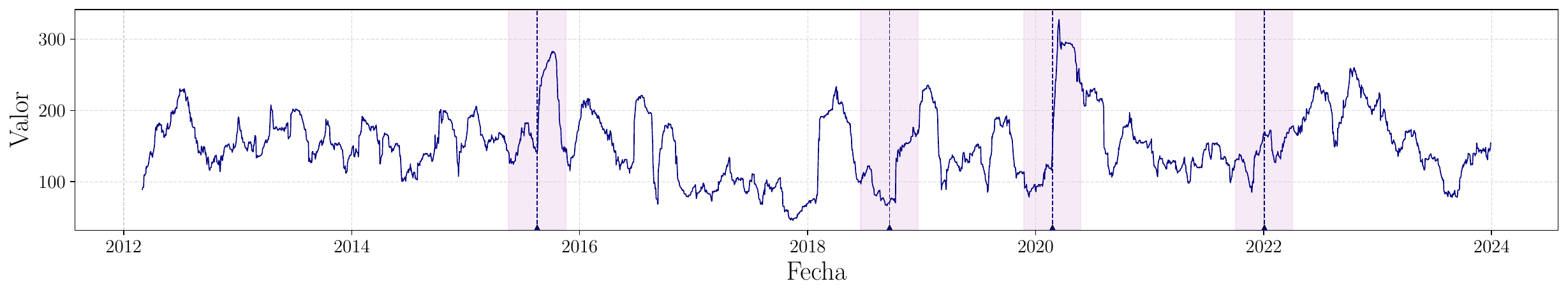}
    \caption*{b) $q=40$}
\end{subfigure}

\caption[Trajectories of the leading eigenvalue (\emph{SOTM})]%
{Trajectories of the dominant eigenvalue $\lambda_{N}(t)$, an indicator of the \emph{SOTM}. 
With $q=20$, oscillations of greater amplitude and more abrupt local variations were observed, whereas with $q=40$ the profile was more smoothed. The vertical dotted lines indicate the episodes listed in Appendix \ref{tab:fechas_episodios}, and the semi-transparent violet rectangles represent the window of three months before and three months after each event.}
\label{fig:LambdaMax}

\end{figure}

\index{Sturges!rule}
\subsection{Dispersion of the \texorpdfstring{$\lambda_N$}{eigenvalor m\'aximo} Values}
\index{Clusters}
\label{subsection: Dispersion eigenvalor maximo}
Fig.~\ref{fig:lambdaN_scatter_png} shows the values of $\lambda_{N}$, color-coded according to the assigned \emph{cluster}. Following the \textit{crashes} of 2015 and 2020 (Appendix \ref{tab:fechas_episodios}), a greater permanence in the \emph{cluster} associated with the highest eigenvalues was observed, in agreement with the increase in correlations during episodes of stress reported in the literature \cite{martinezramosSeriesTiempoFinancieras2024,pharasiDynamicsMarketStates2024,sandovalCorrelationFinancialMarkets2012}. From the end of 2016 until March 2018, a trend toward lower values of $\lambda_{N}$ is appreciated, followed by a subsequent rebound. From March 2018 onward, a transition from \emph{cluster} 1 toward higher \emph{clusters} was recorded.
\index{Crash!dates}
\begin{figure}[ht]
\centering
\begin{subfigure}{0.49\linewidth}
  \centering
  \includegraphics[width=\linewidth]{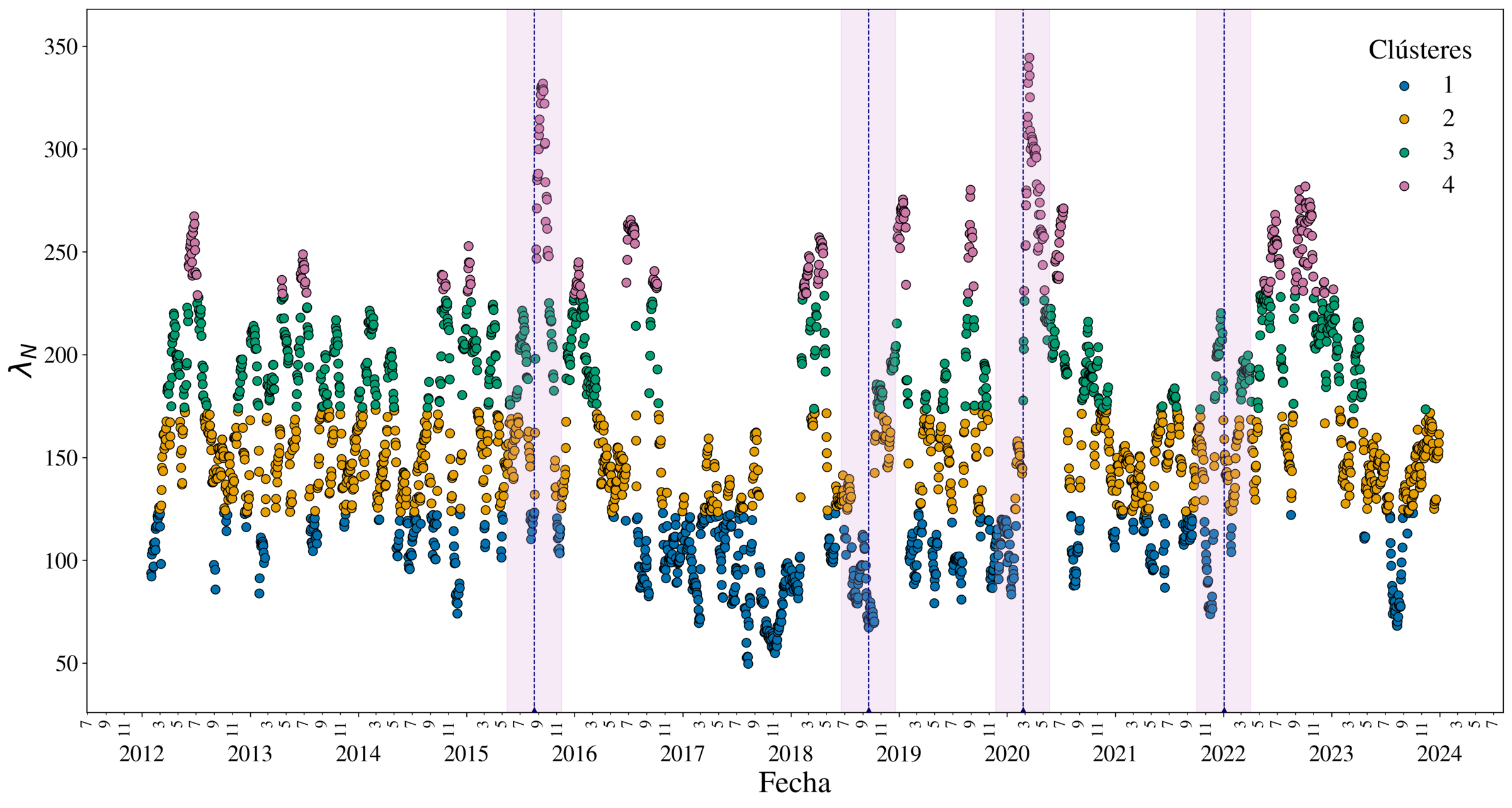}
  \caption*{a) $q=20$, $k=4$}
\end{subfigure}\hfill
\begin{subfigure}{0.49\linewidth}
  \centering
  \includegraphics[width=\linewidth]{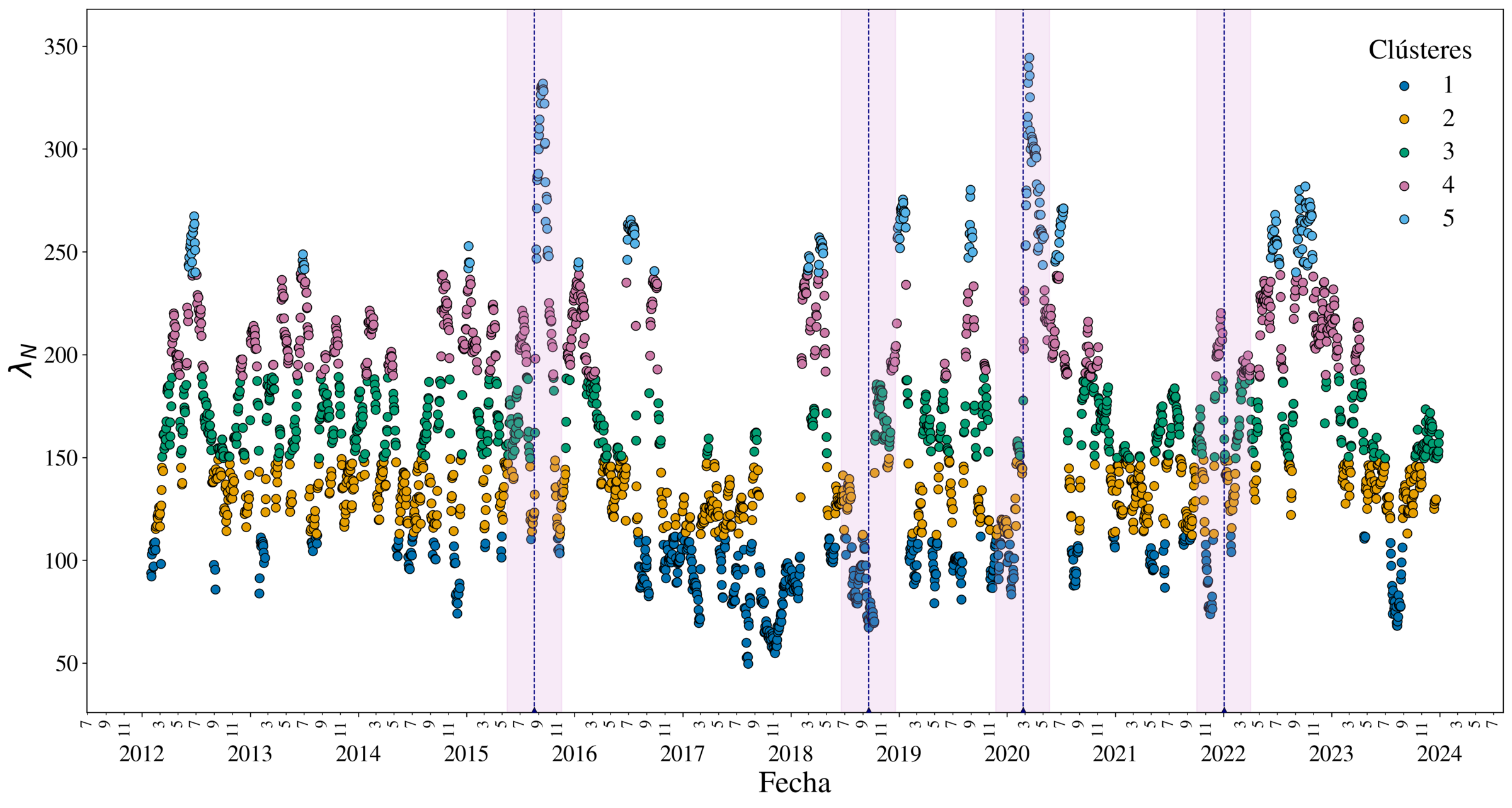}
  \caption*{b) $q=20$, $k=5$}
\end{subfigure}
\vspace{0.7em}
\begin{subfigure}{0.49\linewidth}
  \centering
  \includegraphics[width=\linewidth]{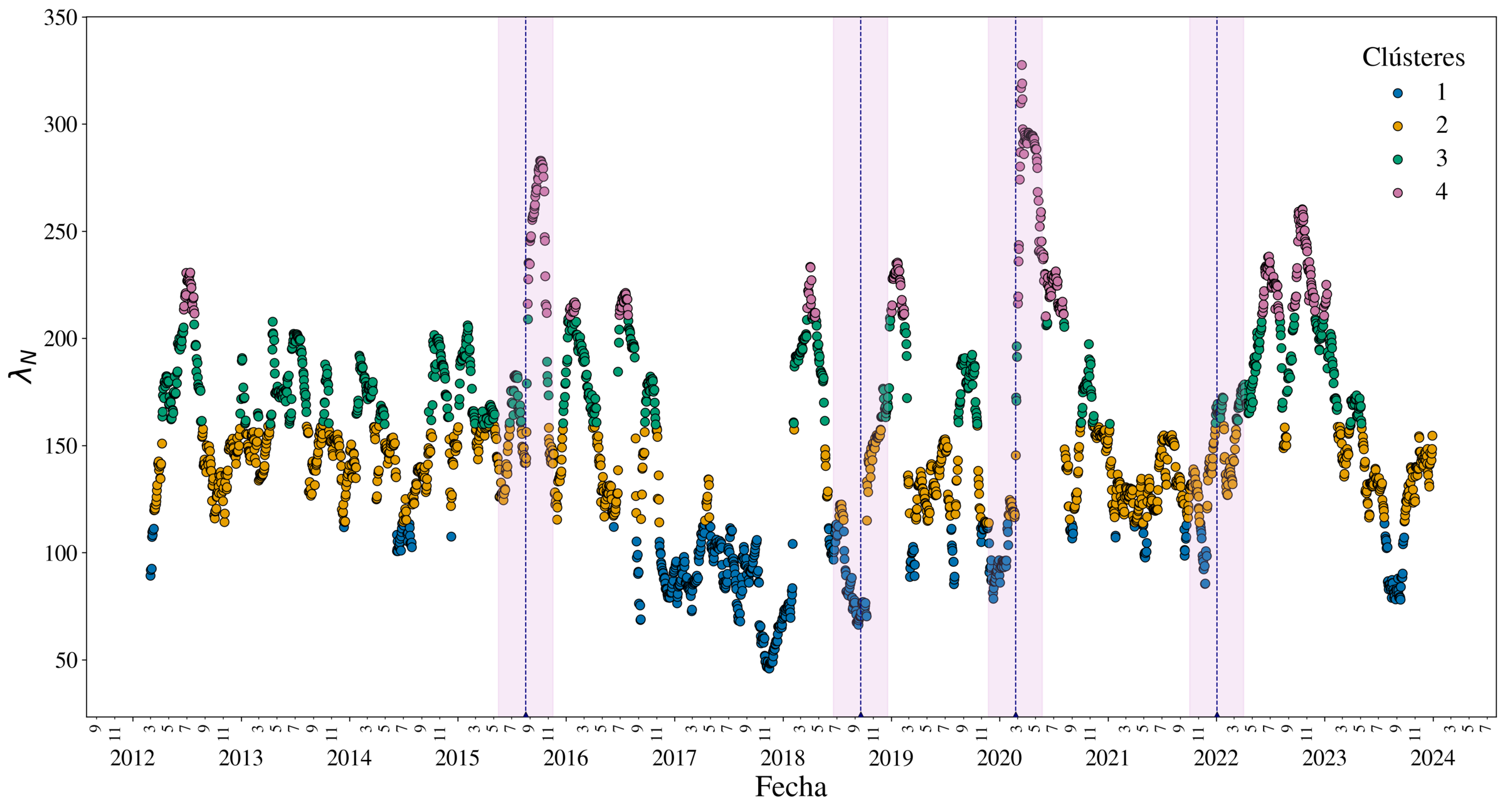}
  \caption*{c) $q=40$, $k=4$}
\end{subfigure}\hfill
\begin{subfigure}{0.49\linewidth}
  \centering
  \includegraphics[width=\linewidth]{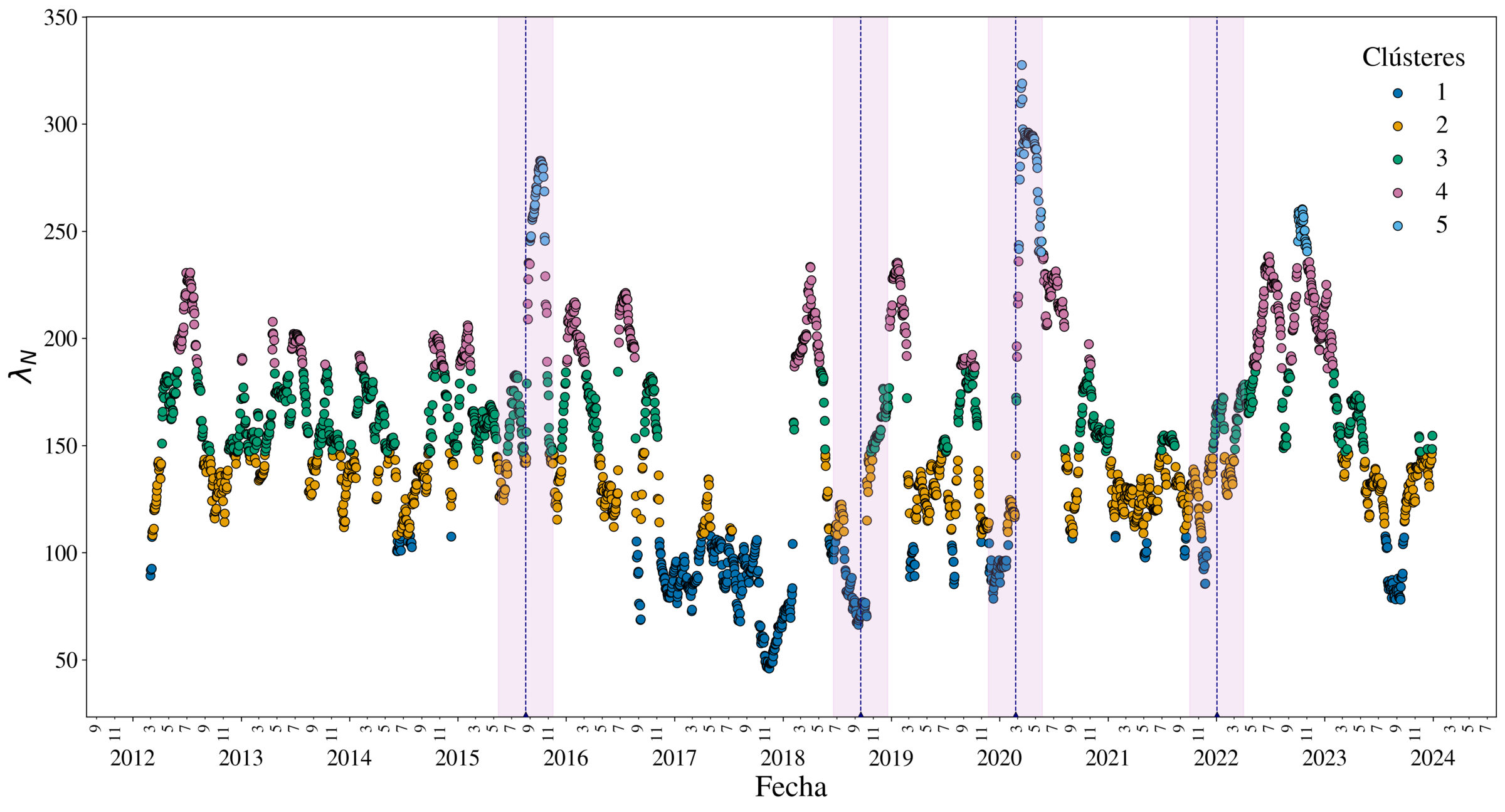}
  \caption*{d) $q=40$, $k=5$}
\end{subfigure}
\caption[Dispersion of the $\lambda_N$ values]%
{Values of $\lambda_{N}$ color-coded according to the \emph{cluster}. 
Following the \emph{crashes} of 2015 and 2020, a greater permanence in the \emph{cluster} associated with the highest eigenvalues is observed, in agreement with the increase in correlations during episodes of stress reported in the literature.}
\label{fig:lambdaN_scatter_png}
\end{figure}
In the figures of Appendix~\ref{fig:lambdaNmenos1_grid} and Appendix~\ref{fig:lambdaNmenos2_grid}, the plots of the following two leading eigenvalues $\lambda_{N-1}$ and $\lambda_{N-2}$ are found.

\subsection{Temporal Evolution of \textit{MS}}
\index{Market!States}
\index{Clusters}
\label{subsection: Evolucion temporal State of the Market}

In Fig.~\ref{fig: evolucionEstados eigenvalores lambda max} it is observed that, in the four state evolutions, \emph{cluster} 2 is activated in time intervals that do not necessarily coincide with the \emph{COVID} State\index{State!COVID}. This result suggests that a classification based solely on the value of $\lambda_N$ is not sufficient to unambiguously isolate that state. In agreement with this idea, the \emph{coarse graining} analysis of correlation matrices reported by Mart\'inez Ramos \emph{et al.} \cite{martinez-ramosCoarseGrainingCorrelation2024} indicates that the robust identification of states requires incorporating additional information from the spectral structure, such as the contribution of the eigenvectors and, in general, the complete form of the correlation matrix. It is worth noting that, although a reduction of parameters was considered in that work, the \emph{COVID} State was not included in the analysis.

\begin{figure}[p]   
\centering

\captionsetup[subfigure]{skip=2pt}
\setlength{\abovecaptionskip}{2pt}
\setlength{\belowcaptionskip}{2pt}

\begin{subfigure}{\textwidth}
  \centering
  \includegraphics[height=0.19\textheight,keepaspectratio]{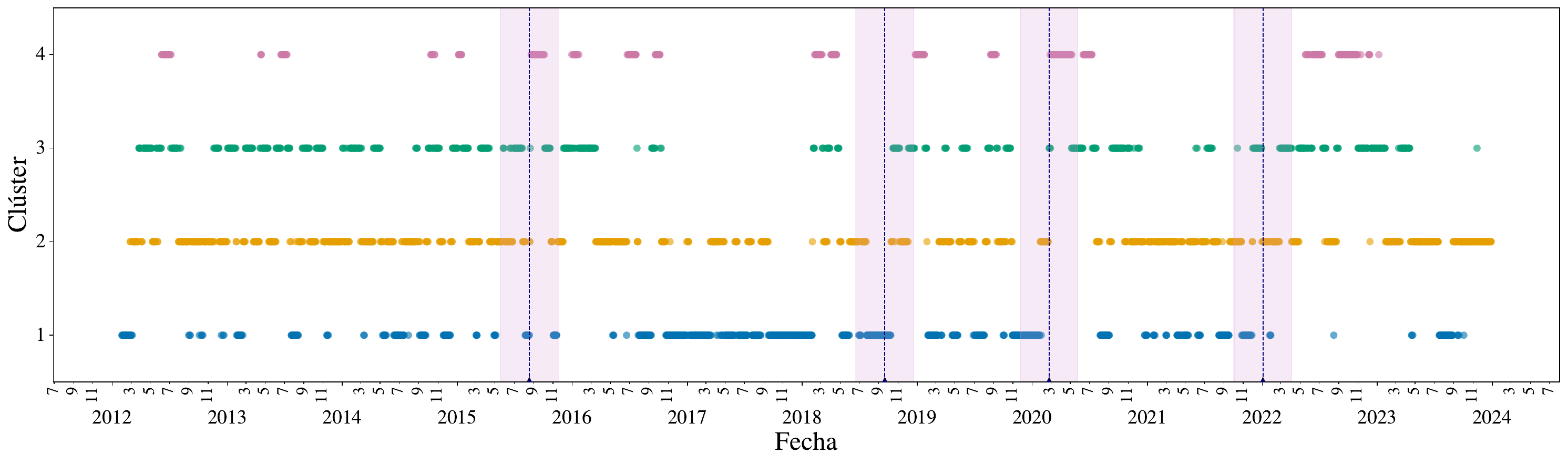}
  \caption*{a) $q=20$, $k=4$}
\end{subfigure}

\par\vspace{-2pt}

\begin{subfigure}{\textwidth}
  \centering
  \includegraphics[height=0.19\textheight,keepaspectratio]{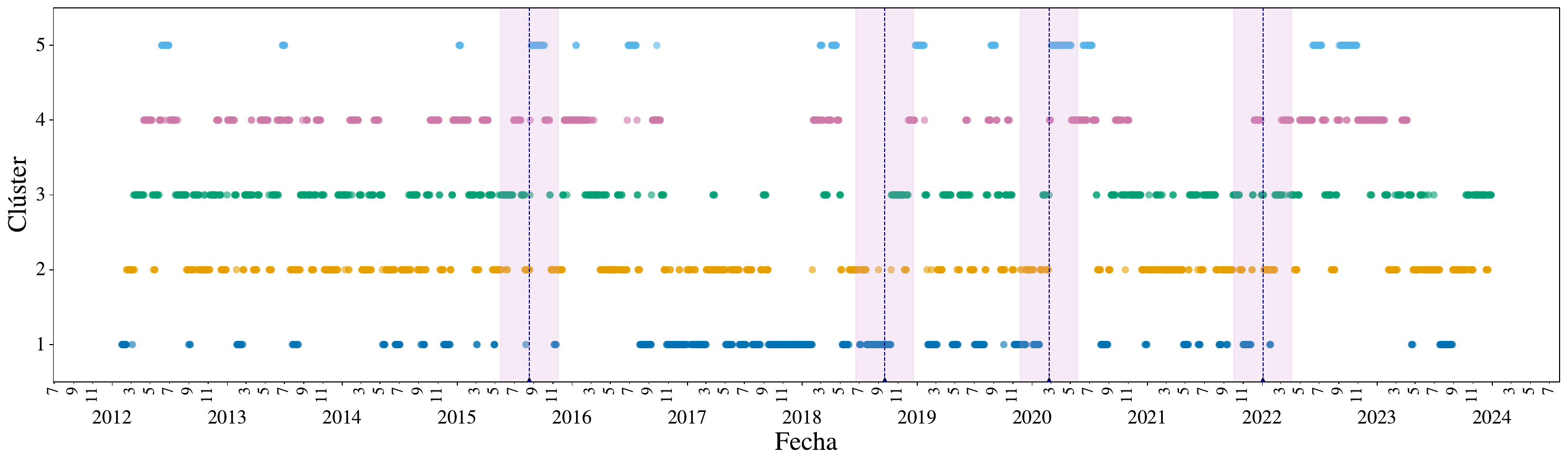}
  \caption*{b) $q=20$, $k=5$}
\end{subfigure}

\par\vspace{-2pt}

\begin{subfigure}{\textwidth}
  \centering
  \includegraphics[height=0.19\textheight,keepaspectratio]{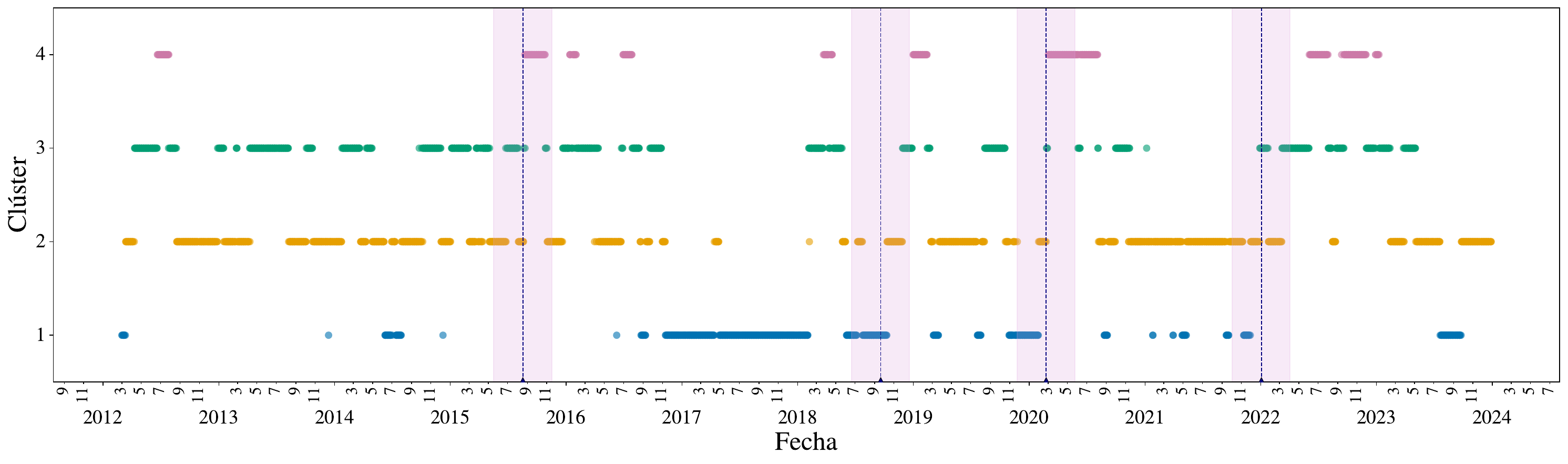}
  \caption*{c) $q=40$, $k=4$}
\end{subfigure}

\par\vspace{-2pt}

\begin{subfigure}{\textwidth}
  \centering
  \includegraphics[height=0.19\textheight,keepaspectratio]{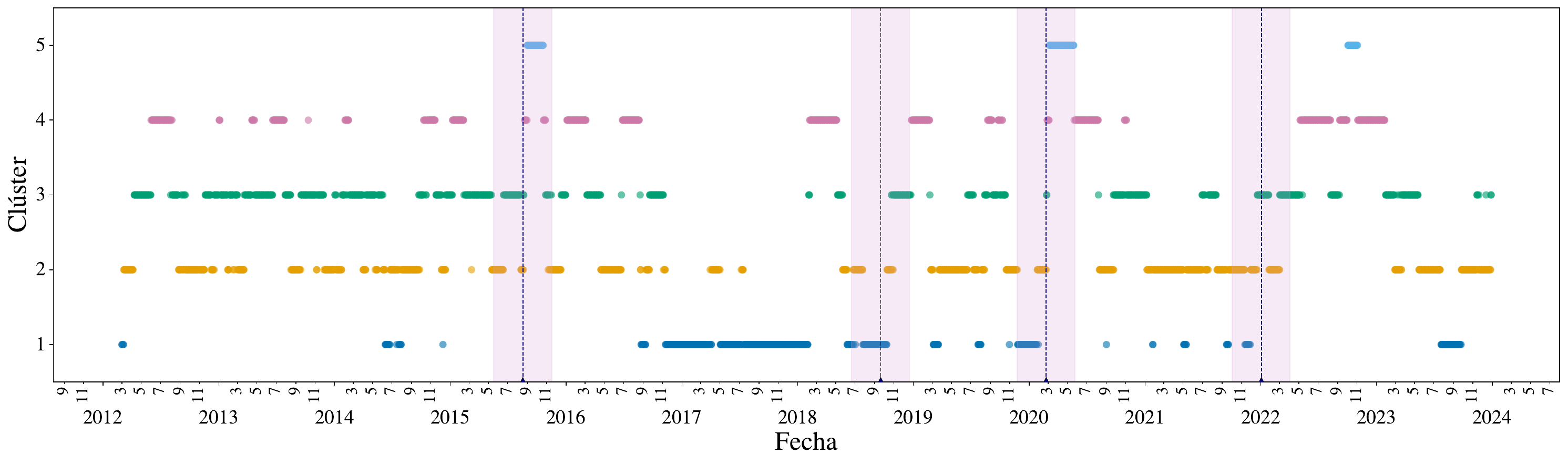}
  \caption*{d) $q=40$, $k=5$}
\end{subfigure}

\caption[Temporal evolution of \textit{MS} indicated by \emph{clustering} of $\lambda_N$]%
{Temporal evolution of the \textit{MS} induced by means of \emph{clustering} of $\lambda_N$. 
It is observed that the \emph{cluster} associated with the \emph{COVID} State appears in time intervals distinct from the episode of that event.}
\label{fig: evolucionEstados eigenvalores lambda max}
\end{figure}

\clearpage
\subsection{Transition Matrices}
\label{subsection: Matriz de transici\'on eigenvalores}
\index{Matrix!transition}
The transition matrices induced by the \emph{clustering} of the leading eigenvalues are shown in Fig.~\ref{fig:trans-matrices-eigenvalores}: the greatest persistence on the main diagonal corresponds to \emph{cluster} 2, with self-transition counts between \(701\) and \(1142\). In addition, the highest-index state (corresponding to \emph{cluster} 4 or \emph{cluster} 5, depending on the configuration) always exhibits the lowest frequency of permanence, with values in the range of 111 to 329.

It is worth noting that, unlike what occurs in the correlation matrices (Fig. \ref{fig:matrices_transicion_corr}), the largest value on the diagonal always occurs in \emph{cluster} 2. 

\begin{figure}[H]
\centering

\begin{minipage}{0.49\linewidth}
  \centering
  \includegraphics[width=\linewidth]{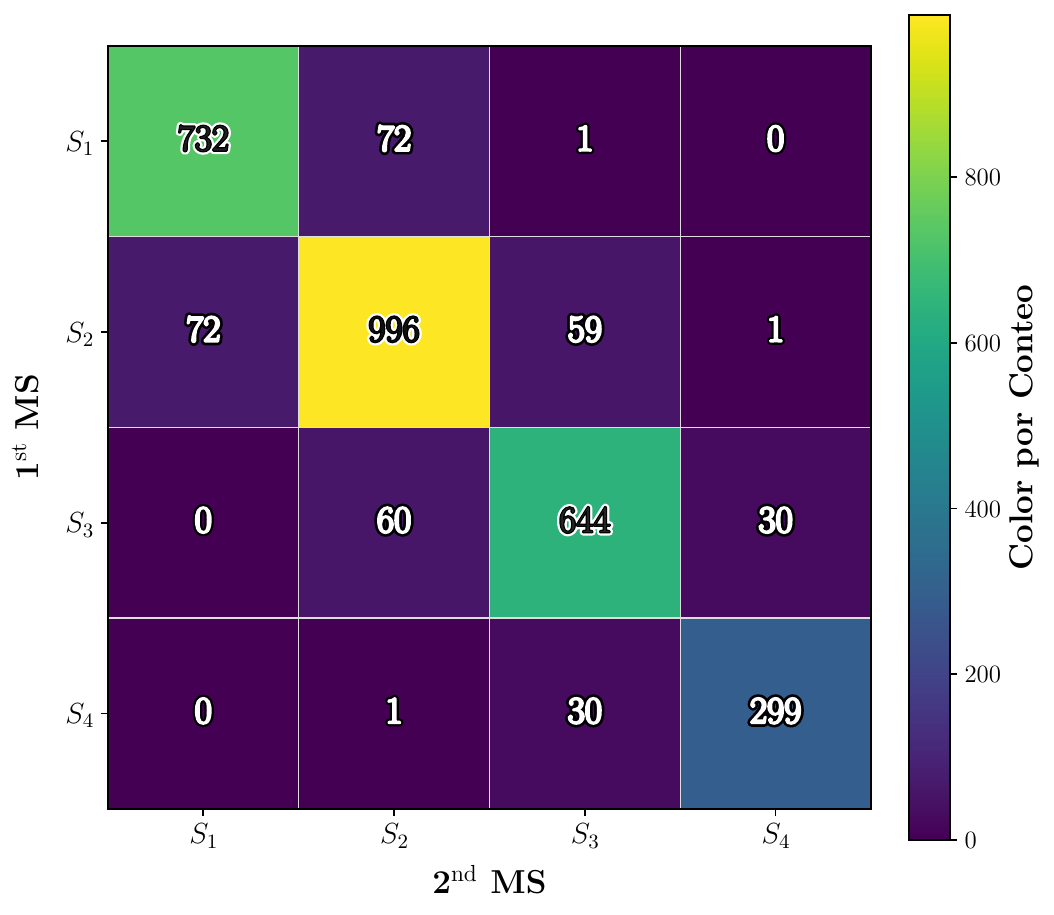}
  {\footnotesize a) $q=20$, $k=4$}
\end{minipage}\hfill
\begin{minipage}{0.49\linewidth}
  \centering
  \includegraphics[width=\linewidth]{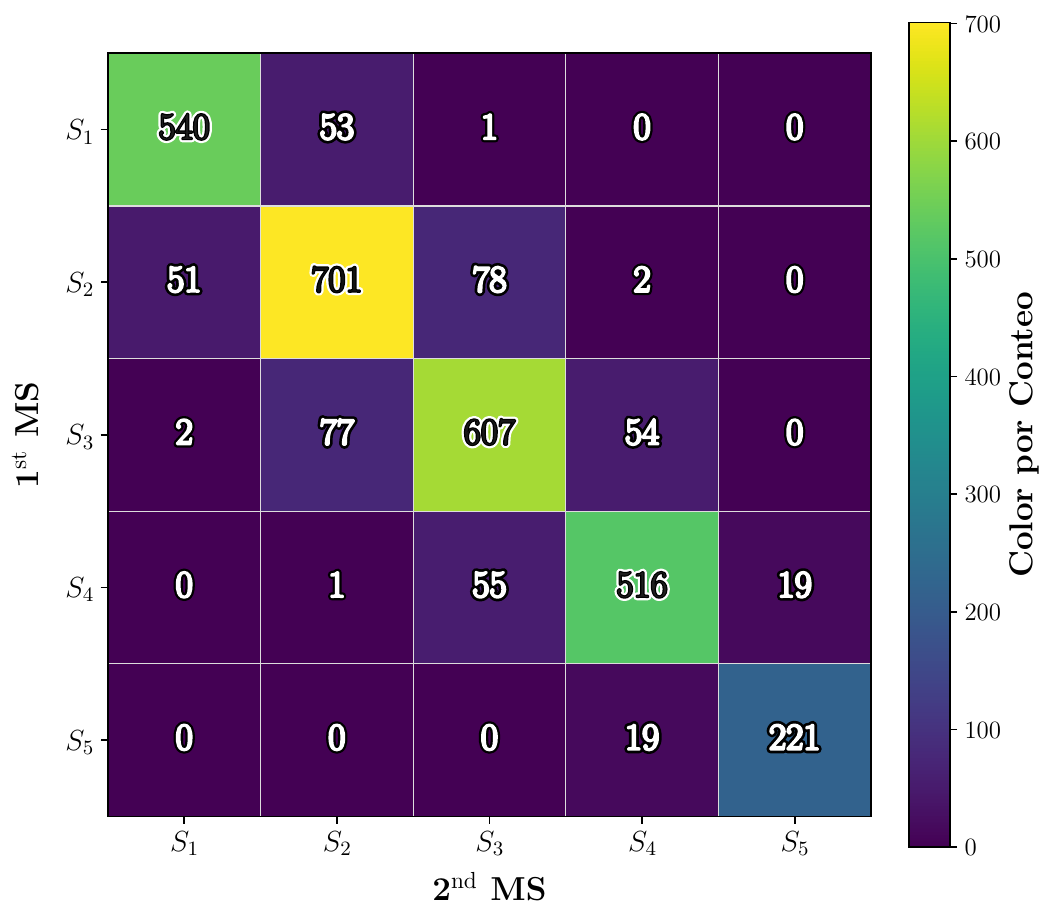}
  {\footnotesize b) $q=20$, $k=5$}
\end{minipage}

\vspace{0.8em}

\begin{minipage}{0.49\linewidth}
  \centering
  \includegraphics[width=\linewidth]{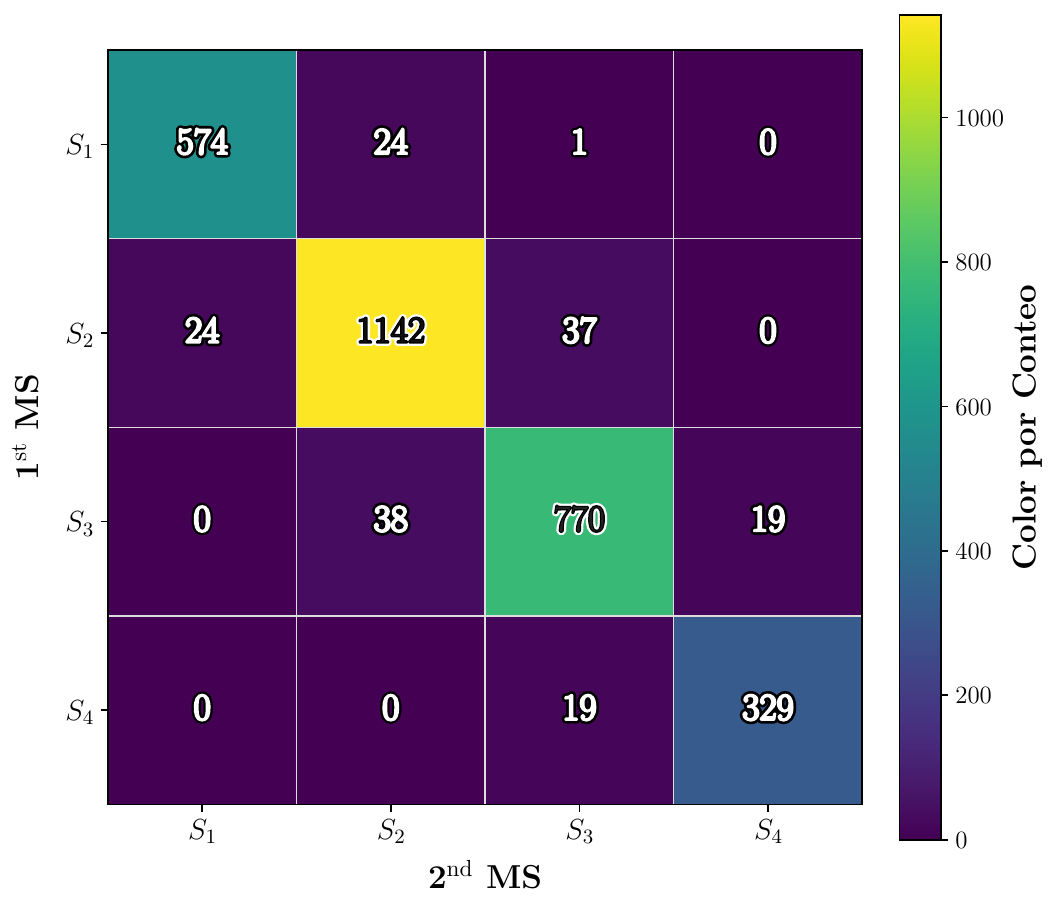}
  {\footnotesize c) $q=40$, $k=4$}
\end{minipage}\hfill
\begin{minipage}{0.49\linewidth}
  \centering
  \includegraphics[width=\linewidth]{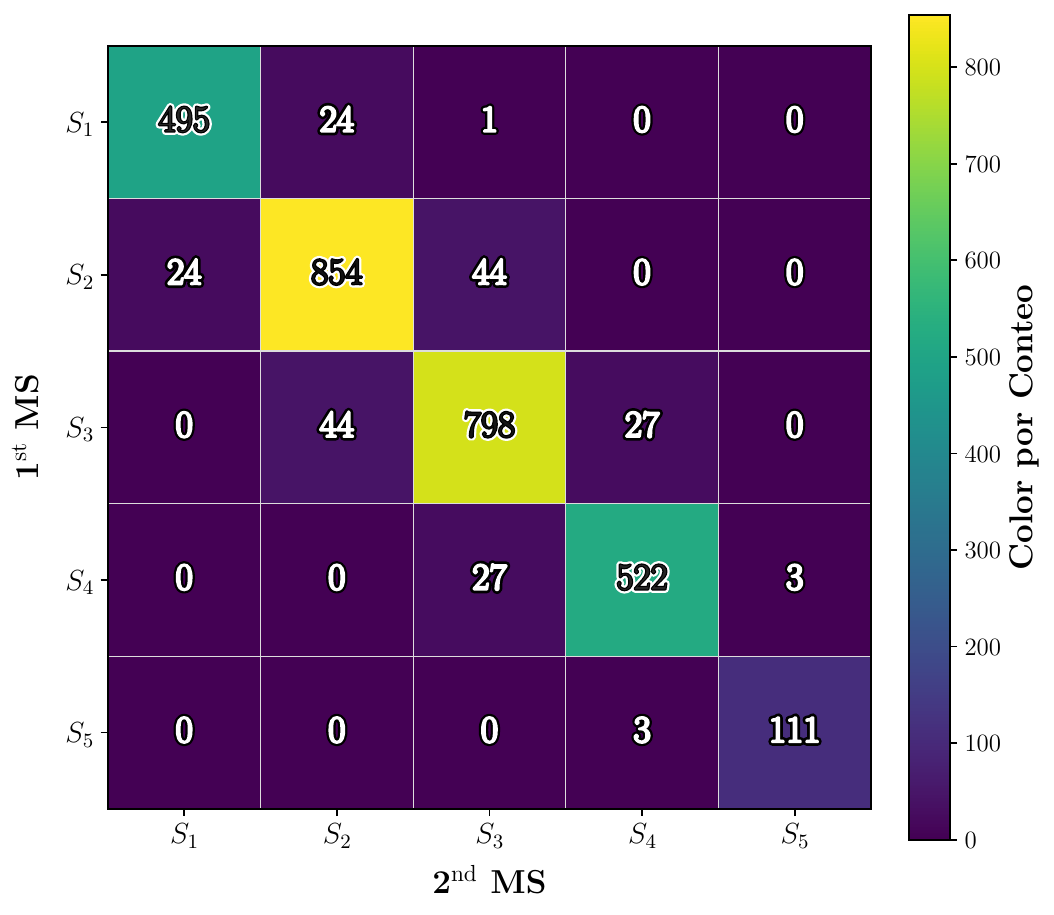}
  {\footnotesize d) $q=40$, $k=5$}
\end{minipage}

\caption[Transition matrices indicated by $\lambda_N$]{Representation of the transition matrices indicated by $\lambda_N$. In all cases an approximately tridiagonal and nearly symmetric structure is observed (with differences of the order of units).}
\label{fig:trans-matrices-eigenvalores}
\end{figure}

\clearpage

\subsection{Three-Dimensional Vector Formed by the Three Leading Eigenvalues}
\index{Triplet!leading eigenvalues}
\label{subsection: Triplete eigenvalores}
The spectral vector of each window is defined as:
\[
\boldsymbol{\bar{\lambda}(t)}\;=\;\bigl[\lambda_{N-2}(t),\,\lambda_{N-1}(t),\,\lambda_{N}(t)\bigr]\in\mathbb{R}^{3},
\]
that is, the triplet formed by the three leading eigenvalues of the correlation matrix in the window that closes at \(t\). The \textit{k}-Means algorithm\index{K-Means} was applied to the cloud of triplets \(\boldsymbol{\{{\bar{\lambda}(t)}\}_{t}}\) and, after relabeling in ascending order of the average of \(\lambda_{N}\) per cluster, a high consensus was obtained: the IEM stability turned out to be greater than or equal to \(98\%\) for each configuration of \(q\) and \(k\).

Fig.~\ref{fig:tripleteLambdaMax} shows the three-dimensional distribution of $\boldsymbol{\bar{\lambda}(t)}$ ordered in ascending order by \(\lambda_{N}\). For \(q=20\), a sharp pattern was observed: in \emph{cluster} 1, the cloud exhibits a \emph{broad base}, and as the \emph{cluster} index increases the mass \emph{contracts and sharpens}, tending toward a \emph{tip} pointing to the positive part of the vertical axis. By contrast, with \(q=40\), the dispersion remains more extended across the whole range of \emph{clusters} and the concentration toward a vertex does not follow a sharpening, so that the geometry of the higher \emph{clusters} proves less pointed and looser in comparison with the \(q=20\) case. This pattern is consistent with expression \eqref{eq:traza-espectro} of Appendix~\ref{subsec:Rango efectivo corr}: since $\mathrm{tr}\,C(t,q)=N$ is constant, any increase in $\lambda_{N}$ must be compensated by a decrease in the rest of the spectrum.

\begin{figure}[ht]
    \centering
    \begin{subfigure}{0.48\linewidth}
        \centering
        \includegraphics[width=\linewidth]{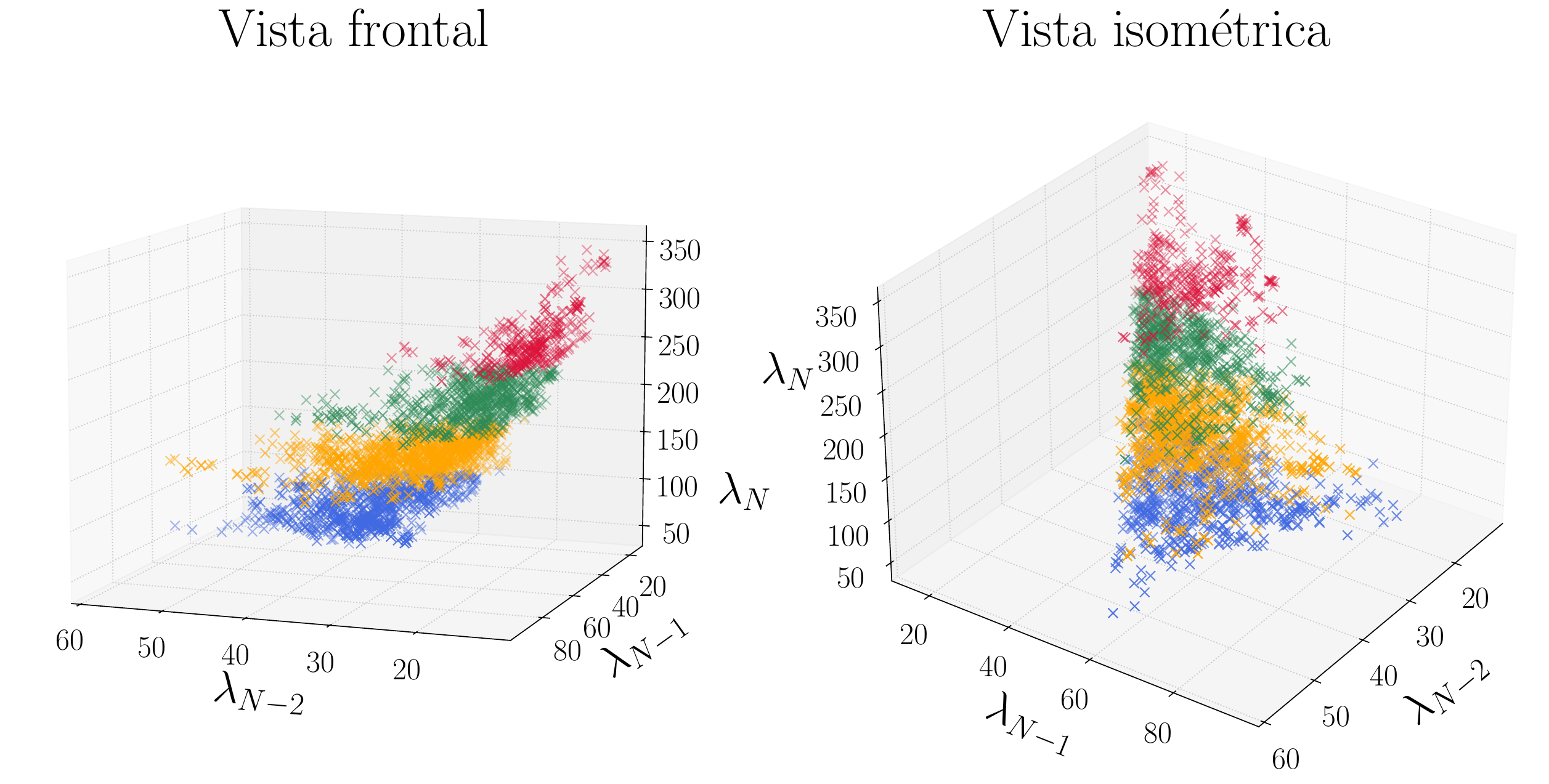}
        \caption*{a) $q=20$, $k=4$}
    \end{subfigure}
    \hfill
    \begin{subfigure}{0.48\linewidth}
        \centering
        \includegraphics[width=\linewidth]{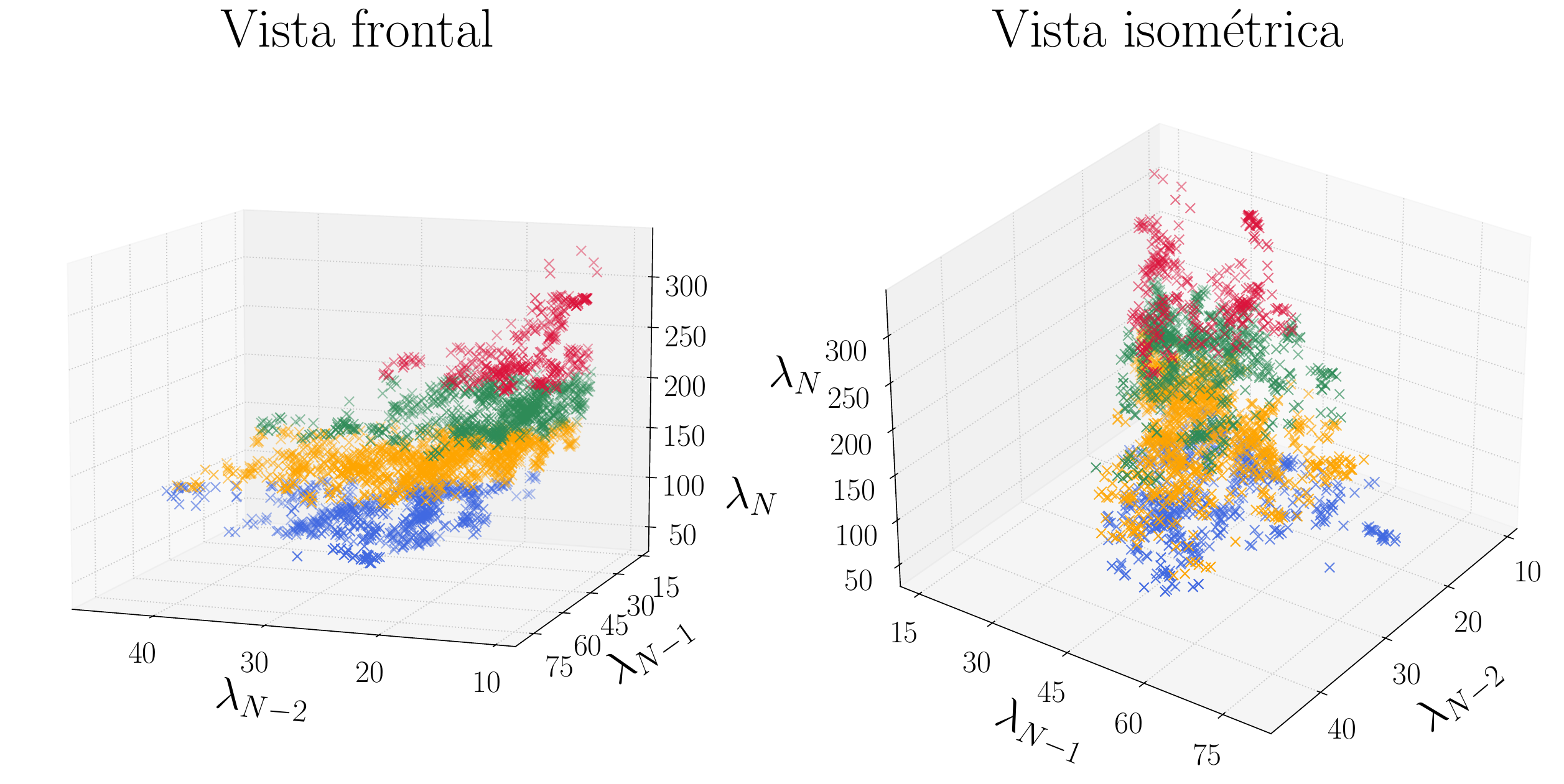}
        \caption*{b) $q=40$, $k=4$}
    \end{subfigure}

 \caption[Triplets of leading eigenvalues]%
{Triplets $\bigl[\lambda_{N-2}(t),\,\lambda_{N-1}(t),\,\lambda_{N}(t)\bigr]$ for configurations with $k=4$. 
At $q=20$, it is observed that the points tend to concentrate in a more compact structure with a broad base and a tip at the top. At $q=40$, the dispersion of the points remains more extended and the concentration toward a vertex is not as marked, 
generating a less sharpened geometry.}
\label{fig:tripleteLambdaMax}
\end{figure}

\section{Dominant Eigenvectors \texorpdfstring{$[v_i^2]_{i=1}^N$}{[vi²]}}
\label{section:Eigenvectores al cuadrado}
\index{Eigenvector!squared entries}

In this section, the eigenvectors associated with the leading eigenvalue $\lambda_N$ of the correlation matrices are analyzed through their squared entries, 
\([v_i^2]_{i=1}^N\). 
The study focuses on the \emph{MS} induced by means of the \emph{clustering} of these vectors. 
The distribution of the squared entries is presented in bar charts, where each bar is colored according to the corresponding economic sector, with the aim of estimating the \textbf{relative participation} of each stock in the collective dynamics. 

\subsection{Temporal Evolution of \textit{MS}}
\index{Market!States}
\index{Clusters}

To identify the \emph{MS}, the \textit{k}-Means algorithm \index{K-Means} was applied with configurations of \(k=4\) and \(k=5\). Fig.~\ref{fig:evolucionMS_squared_q20q40} shows the temporal evolution of the states from the eigenvectors with squared entries. It is observed that the \textit{COVID} episode emerges in \emph{cluster} 2 for \(k=4\) (both at \(q=20\) and at \(q=40\)), whereas for \(k=5\) this state is identified in \emph{cluster} 3. However, this state does not present itself as a strictly isolated regime, since the corresponding \emph{clusters} contain observations both prior to and subsequent to the \textit{COVID} State.

In addition, the 2017-2018 period appears dominated by \emph{cluster} 1, although, unlike what was observed in the analysis of the full correlation matrix, this period shows a significant participation of higher-index clusters. 
\index{State!COVID}

\subsection{Sectoral Distribution of the Dominant Eigenvector \texorpdfstring{$[v_i^2]_{i=1}^N$}{[vi²]}}
\index{Eigenvector!squared entries}

Fig.~\ref{fig: EigVect cuadrado q40} shows the temporal averages of the entries $[v_i^2]_{i=1}^N$ of the eigenvector $\mathbf{v}_{N}$ associated with the dominant eigenvalue $\lambda_N$, computed with windows $q=40$. Since the leading eigenvectors are normalized, it holds that
\begin{equation}
\sum_{i=1}^{N}\left[v_i^{2}\right] = 1.
\end{equation}
Consequently, for each asset $i$ the \textbf{relative participation weight} is defined as
\begin{equation}
w_i \;\equiv\; \left[v_i^{2}\right],
\label{eq:def-weights-participacion}
\end{equation}
so that $\sum_{i=1}^{N} w_i = 1$ as discussed in Section~\ref{sec:probabilistic}.

The dotted line indicates the reference value associated with a uniform assignment of weights over $N$ assets (also known as the \emph{discrete rectangular distribution}), for which $w_i=1/N$ \cite{balakrishnanPrimerStatisticalDistributions2003}. In this case, $N=430$, so that $w_i=\frac{1}{430}\approx 0.0023$. The coloring by sectors uses the GICS abbreviations of Table~\ref{tab:sectores-gics-sp500}. Table~\ref{tab:top10-vn2-q20} shows the 10 companies with the greatest average participation in the eigenvector associated with $\lambda_N$, computed across all the time windows. Finally, for the full time horizon a value of $\boldsymbol{\mathrm{IPR}_{N}}=0.00260$
\index{Inverse!Participation Ratio} was obtained and, consequently, $\boldsymbol{\mathrm{PR}_{N}}=385.0026$
\index{Participation!Ratio}, which suggests an effective participation of $385$ assets, equivalent to $89.53\%$ of the companies considered.

\clearpage
\begin{figure}[htbp]
  \centering
  \includegraphics[width=\textwidth]{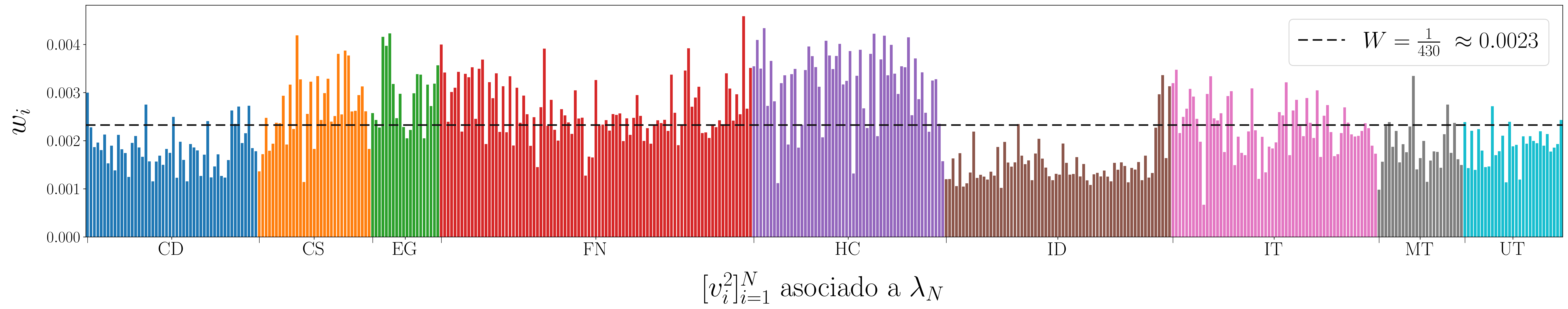}
  \caption[Averages of $v_i^2$ entries for the full time horizon, \texorpdfstring{$q=40$}{q=40}]%
  {Temporal averages of $\left[v_i^2\right]_{i=1}^{N}$ (eigenvector associated with $\lambda_{N}$) for $q=40$.
  The \emph{relative participation weight} $w_i\equiv v_i^2$ is defined and, by the normalization of the eigenvector,
  it holds that $\sum_{i=1}^{N} w_i = 1$.
  The dotted line indicates the reference value corresponding to a uniform assignment of weights over $N=430$ assets, that is, $w_i=1/N\approx 0.0023$.
  See also Fig.~\ref{fig: EigVect cuadrado q20}.
  For the full time horizon a value of $\boldsymbol{\mathrm{IPR}_{N}}=0.00260$ was obtained and, consequently, $\boldsymbol{\mathrm{PR}_{N}}=385.0026$, which suggests an effective participation of approximately $385$ assets, equivalent to $89.53\%$ of the companies considered.}
  \label{fig: EigVect cuadrado q40}
\end{figure}

\vspace{15em}
\begin{table}[ht]
  \centering
  \scriptsize
  \begin{tabular}{|c|c|c|r|}
    \hline
    \rowcolor{teal!80}
    \color{white}\textbf{Position} &
    \color{white}\textbf{Ticker} &
    \color{white}\textbf{Sector} &
    \color{white}\textbf{Value} \\
    \hline
    1  & WTW  & FN & 0.004587 \\
    2  & AMGN & HC & 0.004342 \\
    3  & DVN  & EG & 0.004233 \\
    4  & MOH  & HC & 0.004224 \\
    5  & GIS  & CS & 0.004190 \\
    6  & PFE  & HC & 0.004186 \\
    7  & CTRA & EG & 0.004162 \\
    8  & TECH & HC & 0.004149 \\
    9  & ABT  & HC & 0.004095 \\
    10 & GILD & HC & 0.004079 \\
    \hline
  \end{tabular}
  \vspace{0.35cm}
   \caption[Leading companies of $v_i^2$, $q=20$]{Leading companies in the entries $[v_i^2]_{i=1}^N$ for a window of $q=40$ days. The companies coincide with those of Table \ref{tab:top10-vn2-q20}, although in a different order. The list of companies studied in this thesis is found in Appendix~\ref{tab:companies-sp500}.}
  \label{tab:top10-vn2-q40}
\end{table}

\index{Sectors!GICS} 
\clearpage
\begin{figure}[p]
\centering
\vspace*{\fill}

\begin{subfigure}{0.8\linewidth}
  \centering
  \includegraphics[width=\linewidth,height=0.20\textheight,keepaspectratio]{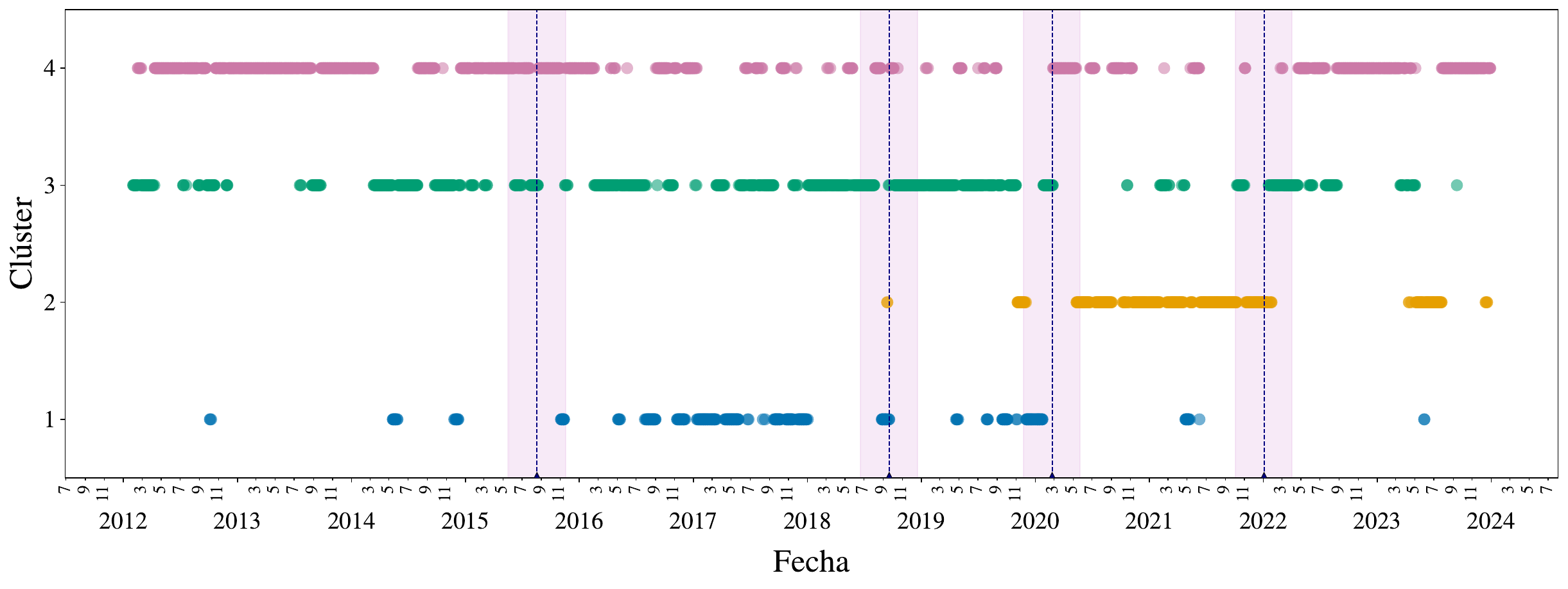}
  \caption*{a) $q=20,\; k=4$}
\end{subfigure}

\vfill

\begin{subfigure}{0.8\linewidth}
  \centering
  \includegraphics[width=\linewidth,height=0.20\textheight,keepaspectratio]{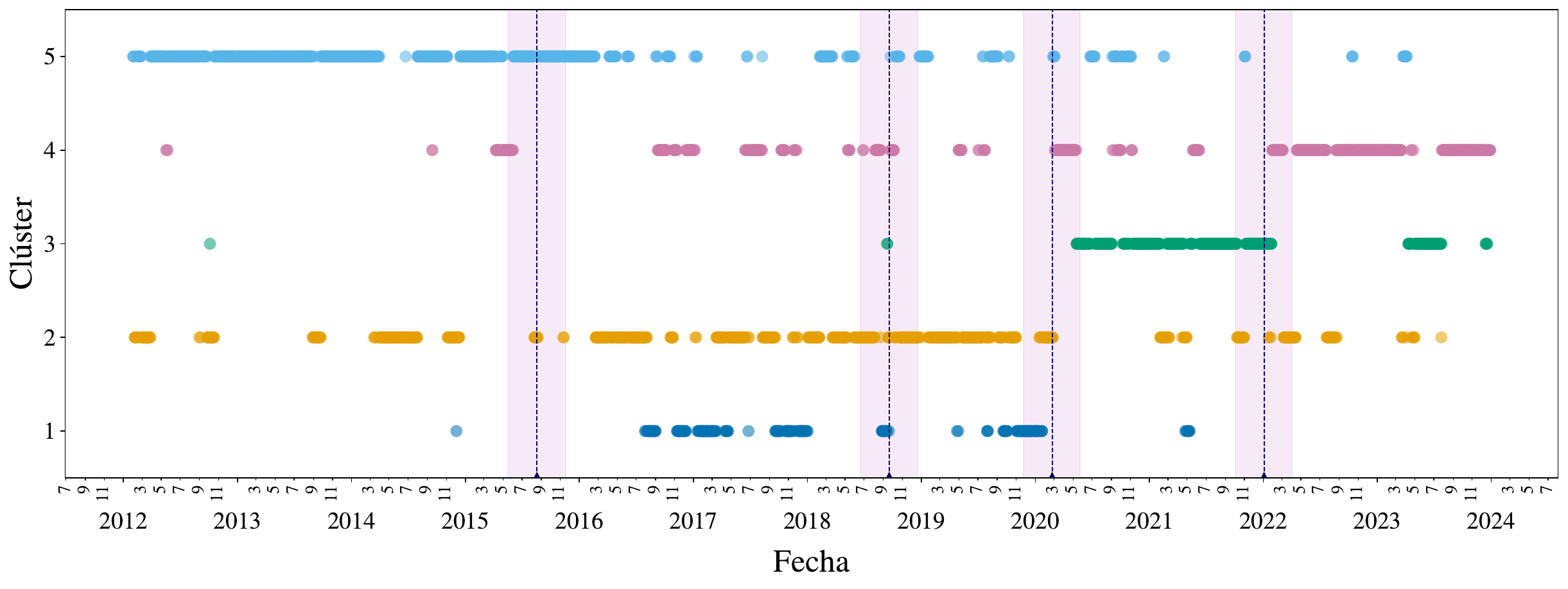}
  \caption*{b) $q=20,\; k=5$}
\end{subfigure}

\vfill

\begin{subfigure}{0.8\linewidth}
  \centering
  \includegraphics[width=\linewidth,height=0.20\textheight,keepaspectratio]{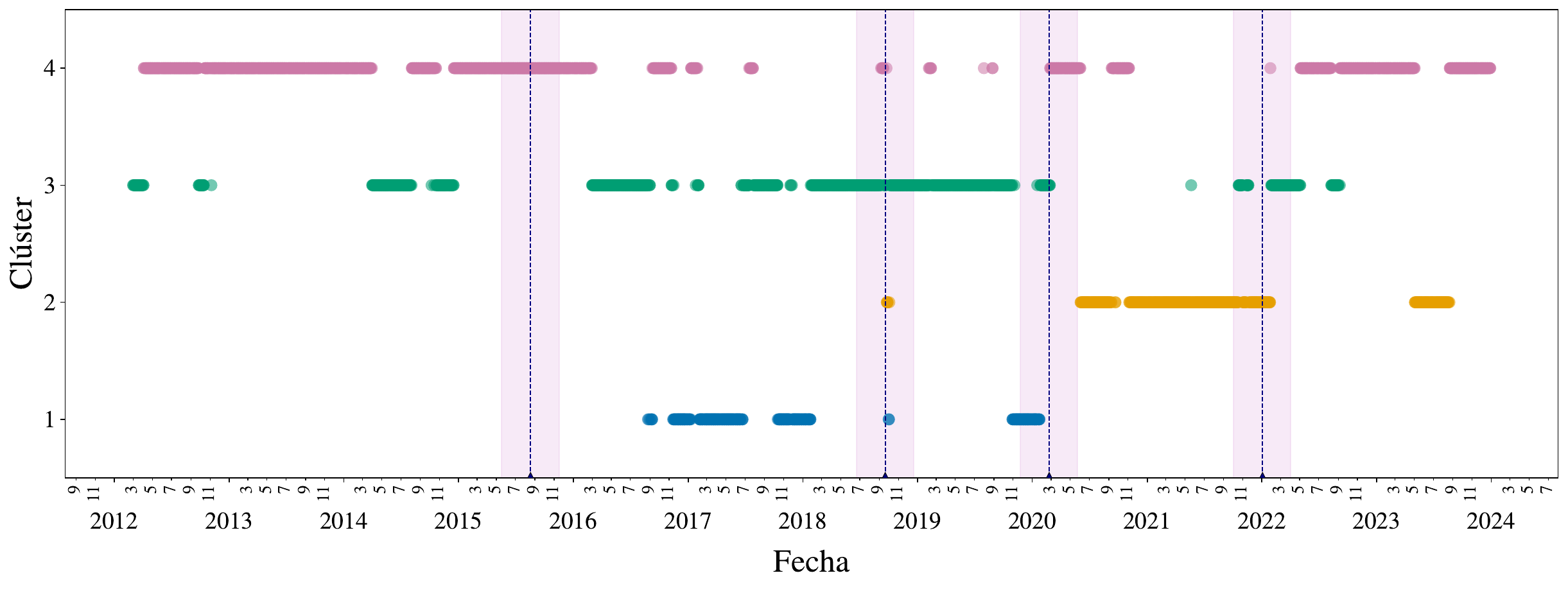}
  \caption*{c) $q=40,\; k=4$}
\end{subfigure}

\vfill

\begin{subfigure}{0.8\linewidth}
  \centering
  \includegraphics[width=\linewidth,height=0.20\textheight,keepaspectratio]{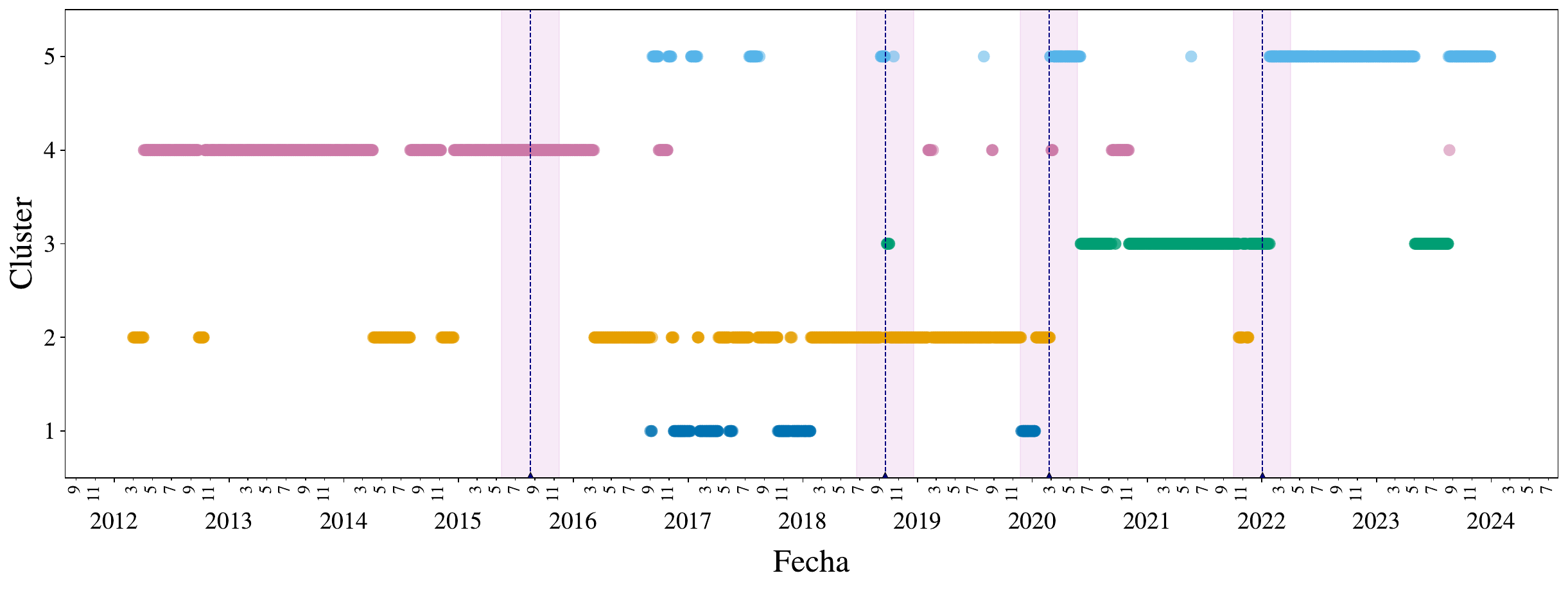}
  \caption*{d) $q=40,\; k=5$}
\end{subfigure}

\vspace*{\fill}

\caption[Temporal evolution of \textit{MS} indicated by $v_i^2$]
{Temporal evolution of the \emph{MS} identified from the entries $[v_i^2]_{i=1}^{N}$ of the eigenvector associated with $\lambda_{N}$. 
For $q=40$ and $k=4$ the episode associated with the \emph{COVID} State is reproduced, in agreement with Fig.~\ref{fig:evolucion_ms_corr}. 
Nevertheless, in this representation the presence of several intervals previously classified within \emph{cluster} 1 decreases.}
\label{fig:evolucionMS_squared_q20q40}
\end{figure}
\clearpage

\subsection{\emph{Clustering} of the Dominant Eigenvectors \texorpdfstring{$[v_i^2]_{i=1}^N$}{[vi²]}}
\index{Clusters}

The \emph{temporal averages} of each squared entry $[v_i^2]_{i=1}^{N}$ of the dominant eigenvector associated with $\lambda_{N}$, computed for $q=40$ and $k=5$, show sectoral patterns that change from one \emph{cluster} to another, as illustrated in Fig.~\ref{fig:vn2-q40-k5}. In particular, for each asset $i$ the quantities $v_i^2$ are averaged over all the dominant eigenvectors belonging to the same \emph{cluster}. It is thereby observed that, in \emph{cluster} 1, the \textbf{participation} is concentrated in a subset of companies (typically dominated by certain sectors), whereas in the higher-index \emph{clusters}, associated with periods of crisis, the participations tend to become more homogeneous across companies.

This is also seen in the values of $\boldsymbol{\mathrm{IPR}}_{N}$ and $\boldsymbol{\mathrm{PR}}_{N}$ reported in Table~\ref{tab:iprpr_q40_k5_vn2}. As can be appreciated, \emph{cluster} 1 concentrates the companies with the highest relative participations; consequently, $\boldsymbol{\mathrm{IPR}}_{N}$ takes its largest value and the effective number of companies $\boldsymbol{\mathrm{PR}}_{N}$ is the smallest, $\sim 278$. In the higher-index \emph{clusters}, $\boldsymbol{\mathrm{IPR}}_{N}$ tends to decrease and $\boldsymbol{\mathrm{PR}}_{N}$ to increase, which indicates that the \textbf{participation} is distributed among a larger set of assets.

\vspace{0.35cm}

\begin{table}[htbp]
\centering

\begin{tabular}{|c|c|c|}
\hline
\rowcolor{teal!80}
\color{white}\textbf{\emph{Cluster}} & \color{white}$\boldsymbol{\mathrm{IPR}_{N}}$ & \color{white}$\boldsymbol{\mathrm{PR}_{N}}$ \\
\hline
1 & $0.00360$ & $278.0681$ \\
\hline
2 & $0.00282$ & $354.8316$ \\
\hline
3 & $0.00291$ & $343.6030$ \\
\hline
4 & $0.00253$ & $395.9047$ \\
\hline
5 & $0.00255$ & $392.8114$ \\
\hline
\end{tabular}

\vspace{0.25cm}

\caption[Values of $\boldsymbol{\mathrm{IPR}_{N}}$ and $\boldsymbol{\mathrm{PR}_{N}}$, $q=40$ and $k=5$]{Values of $\boldsymbol{\mathrm{IPR}_{N}}$ and $\boldsymbol{\mathrm{PR}_{N}}$ for the vectors $[v_i^2]_{i=1}^{N}$, computed by \emph{cluster} for $q=40$ and $k=5$. Together with Fig.~\ref{fig:vn2-q40-k5}, it is observed that \emph{cluster} 1 presents the greatest concentration of \textbf{participation} (highest IPR) and, therefore, the smallest effective number of companies (PR $\sim 278$). In the higher-index \emph{clusters}, $\boldsymbol{\mathrm{IPR}_{N}}$ decreases and $\boldsymbol{\mathrm{PR}_{N}}$ increases, which suggests a \textbf{participation} more distributed among assets.}
\label{tab:iprpr_q40_k5_vn2}
\end{table}

\vspace{0.15cm}

Table~\ref{tab:Top10_q40_k5} shows the 10 leading companies identified for each cluster. \emph{Cluster} 1 is dominated by the \emph{HC} sector. The intersection between the 10 leading companies of each cluster is empty, although the company \emph{WTW} appears consistently in all the \emph{clusters} except the first, with high values. The companies studied are listed in Appendix~\ref{tab:companies-sp500}.

A similar analysis for $q=40$ and $k=4$ is presented in Table~\ref{tab:Top10_q40_k4} and in Appendix~\ref{sec: Clustering de los eigenvectores cuadrados}.

\begin{table}[ht]
\centering

\begin{subtable}{0.48\linewidth}
\centering
\scriptsize
\begin{tabular}{|c|c|c|r|}
\hline
\rowcolor{teal!80}
\color{white}\textbf{Position} & 
\color{white}\textbf{Ticker} & 
\color{white}\textbf{Sector} & 
\color{white}\textbf{Value} \\
\hline
1 & LH   & HC & 0.00795 \\
2 & HCA  & HC & 0.00762 \\
3 & DXCM & HC & 0.00745 \\
4 & A    & HC & 0.00741 \\
5 & BSX  & HC & 0.00737 \\
6 & PFE  & HC & 0.00727 \\
7 & WAT  & HC & 0.00718 \\
8 & HUM  & HC & 0.00710 \\
9 & GILD & HC & 0.00709 \\
10 & BDX & HC & 0.00709 \\
\hline
\end{tabular}
\caption*{a) \emph{cluster} 1}
\end{subtable}\hfill
\begin{subtable}{0.48\linewidth}
\centering
\scriptsize
\begin{tabular}{|c|c|c|r|}
\hline
\rowcolor{teal!80}
\color{white}\textbf{Position} & 
\color{white}\textbf{Ticker} & 
\color{white}\textbf{Sector} & 
\color{white}\textbf{Value} \\
\hline
1 & AMGN & HC & 0.00484 \\
2 & CTRA & EG & 0.00465 \\
3 & WTW  & FN & 0.00464 \\
4 & DVN  & EG & 0.00456 \\
5 & TECH & HC & 0.00454 \\
6 & ACGL & FN & 0.00450 \\
7 & GIS  & CS & 0.00445 \\
8 & PFE  & HC & 0.00445 \\
9 & DLR  & FN & 0.00443 \\
10 & MOH & HC & 0.00443 \\
\hline
\end{tabular}
\caption*{b) \emph{cluster} 2}
\end{subtable}

\vspace{0.6em}

\begin{subtable}{0.48\linewidth}
\centering
\scriptsize
\begin{tabular}{|c|c|c|r|}
\hline
\rowcolor{teal!80}
\color{white}\textbf{Position} & 
\color{white}\textbf{Ticker} & 
\color{white}\textbf{Sector} & 
\color{white}\textbf{Value} \\
\hline
1 & WTW  & FN & 0.00524 \\
2 & MOH  & HC & 0.00507 \\
3 & PFE  & HC & 0.00500 \\
4 & DHR  & HC & 0.00485 \\
5 & V    & FN & 0.00482 \\
6 & INCY & HC & 0.00475 \\
7 & PM   & CS & 0.00475 \\
8 & PEP  & CS & 0.00471 \\
9 & HSIC & HC & 0.00470 \\
10 & CTRA & EG & 0.00463 \\
\hline
\end{tabular}
\caption*{c) \emph{cluster} 3}
\end{subtable}\hfill
\begin{subtable}{0.48\linewidth}
\centering
\scriptsize
\begin{tabular}{|c|c|c|r|}
\hline
\rowcolor{teal!80}
\color{white}\textbf{Position} & 
\color{white}\textbf{Ticker} & 
\color{white}\textbf{Sector} & 
\color{white}\textbf{Value} \\
\hline
1 & WTW  & FN & 0.00433 \\
2 & TECH & HC & 0.00397 \\
3 & ABT  & HC & 0.00395 \\
4 & JNJ  & HC & 0.00392 \\
5 & GILD & HC & 0.00391 \\
6 & AMGN & HC & 0.00391 \\
7 & CTRA & EG & 0.00389 \\
8 & MOH  & HC & 0.00387 \\
9 & CVX  & EG & 0.00385 \\
10 & DVN & EG & 0.00385 \\
\hline
\end{tabular}
\caption*{d) \emph{cluster} 4}
\end{subtable}

\vspace{0.6em}

\begin{subtable}{0.48\linewidth}
\centering
\scriptsize
\begin{tabular}{|c|c|c|r|}
\hline
\rowcolor{teal!80}
\color{white}\textbf{Position} & 
\color{white}\textbf{Ticker} & 
\color{white}\textbf{Sector} & 
\color{white}\textbf{Value} \\
\hline
1 & DLR & FN & 0.00410 \\
2 & RJF & FN & 0.00405 \\
3 & AMGN & HC & 0.00394 \\
4 & WTW  & FN & 0.00388 \\
5 & GIS  & CS & 0.00372 \\
6 & PEP  & CS & 0.00370 \\
7 & DVN  & EG & 0.00367 \\
8 & ABT  & HC & 0.00365 \\
9 & FMC  & MT & 0.00358 \\
10 & GILD & HC & 0.00356 \\
\hline
\end{tabular}
\caption*{e) \emph{cluster} 5}
\end{subtable}

\caption[Leading companies by cluster, $q=40$ and $k=5$]%
{Leading companies indicated by the entries $[v_i^2]_{i=1}^{N}$ corresponding to Fig.~\ref{fig:vn2-q40-k5}. It stands out that, in \emph{cluster} 1, the $10$ leading contributions belong to the \emph{HC} sector. The companies studied in this thesis are found in Appendix~\ref{tab:companies-sp500}}
\label{tab:Top10_q40_k5}
\end{table}
\clearpage
\begin{figure}[p]
  \centering
  \captionsetup[subfigure]{justification=centering,singlelinecheck=false}
  \vspace*{\fill}

  \begin{subfigure}[t]{0.98\textwidth}
    \centering
    \includegraphics[width=\linewidth,height=0.15\textheight,keepaspectratio]{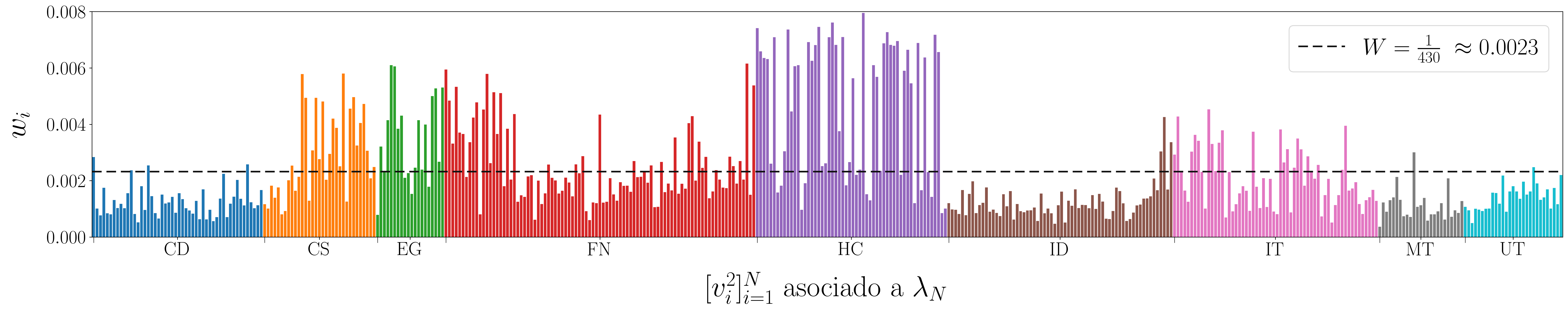}
    \caption{\emph{cluster} 1}
  \end{subfigure}

  \vfill

  \begin{subfigure}[t]{0.98\textwidth}
    \centering
    \includegraphics[width=\linewidth,height=0.15\textheight,keepaspectratio]{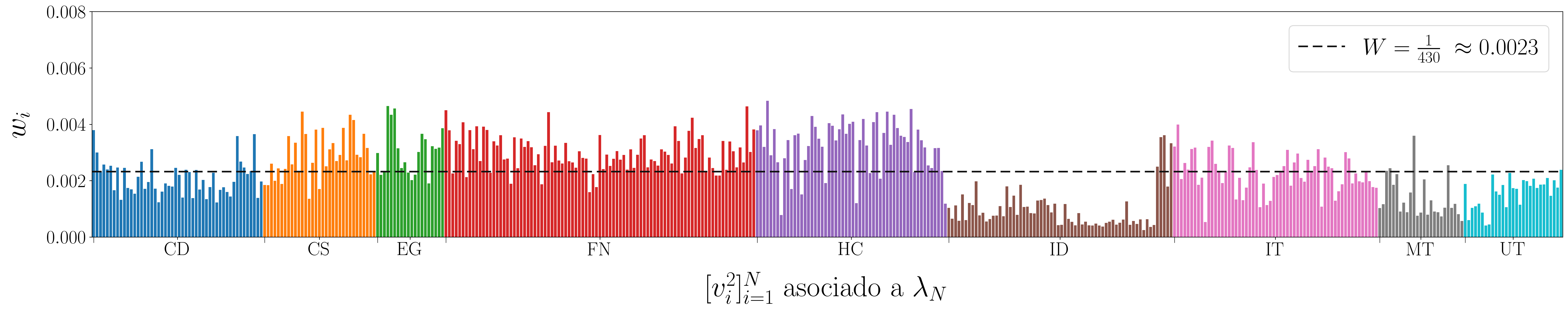}
    \caption{\emph{cluster} 2}
  \end{subfigure}

  \vfill

  \begin{subfigure}[t]{0.98\textwidth}
    \centering
    \includegraphics[width=\linewidth,height=0.15\textheight,keepaspectratio]{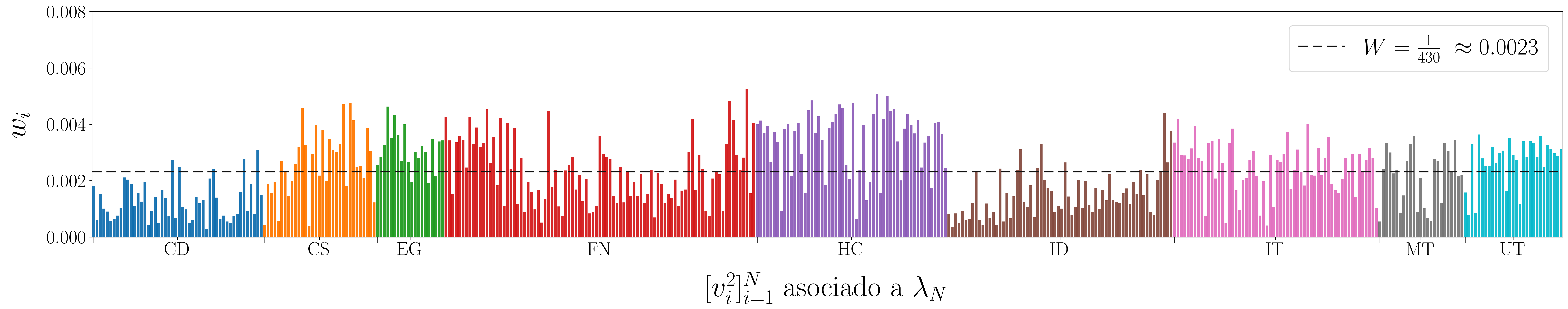}
    \caption{\emph{cluster} 3}
  \end{subfigure}

  \vfill

  \begin{subfigure}[t]{0.98\textwidth}
    \centering
    \includegraphics[width=\linewidth,height=0.15\textheight,keepaspectratio]{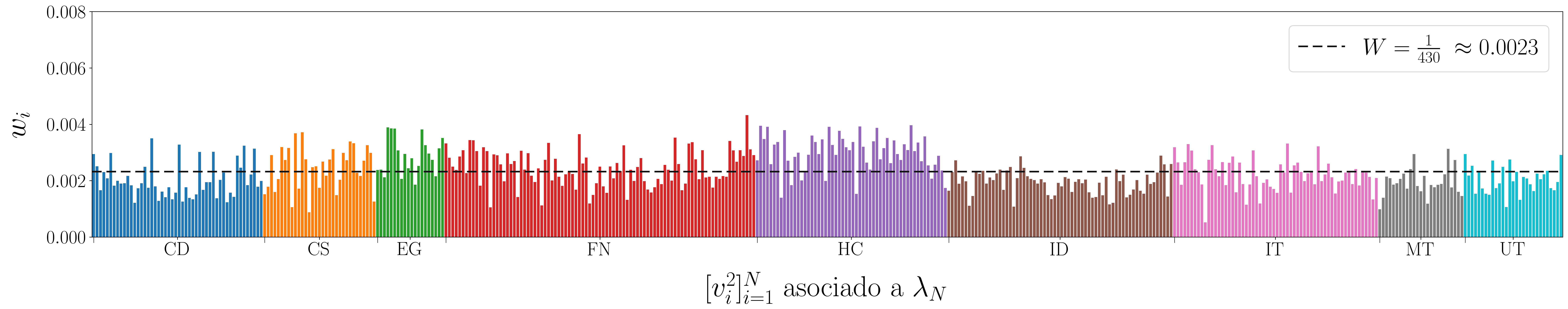}
    \caption{\emph{cluster} 4}
  \end{subfigure}

  \vfill

  \begin{subfigure}[t]{0.98\textwidth}
    \centering
    \includegraphics[width=\linewidth,height=0.15\textheight,keepaspectratio]{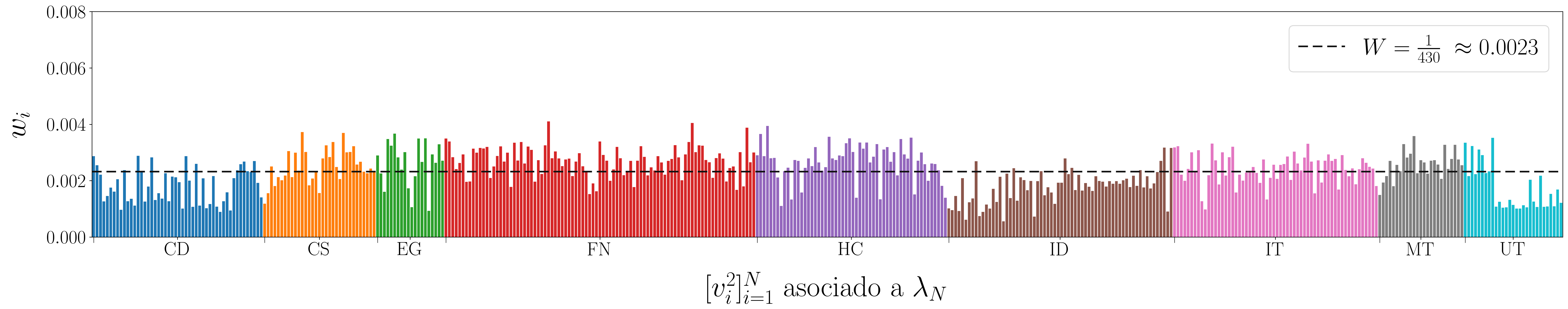}
    \caption{\emph{cluster} 5}
  \end{subfigure}

  \vspace*{\fill}

 \caption[Averages of squared entries by \emph{cluster}, \texorpdfstring{$q=40\;  \text{y}\;k=5$}{q=40 y k=5}]%
{Temporal averages of $w_i\equiv v_i^2$ of the eigenvector associated with $\lambda_N$ for $q=40$ and $k=5$.
In each \emph{cluster}, $w_i$ is computed as the temporal average of component $i$ over the windows assigned to that group, so that each bar represents the mean \textbf{participation weight} of company $i$.
Sectoral patterns are observed, with a predominance of \emph{HC} in \emph{cluster} 1 and a greater presence of \emph{CS}, \emph{EG}, \emph{FN}, and \emph{HC} in \emph{cluster} 2.}
\label{fig:vn2-q40-k5}
\end{figure}
\clearpage

\section{Matrices of Leading Eigenvectors Weighted by Eigenvalues}
\label{section: Matrices tipo Guhr}
In this section, the construction of matrices is employed
\begin{equation}
\mathbf{C}^{l} \;=\; \mathcal{N}\!\left(\sum_{i=N-l+1}^{N} \lambda_i\, \mathbf{v}_i\, \mathbf{v}_i^{\top}\right), 
\quad l \in \{1,2,3\},
\label{eq:matriz-ponderada}
\end{equation}
where $\lambda_i$ and $\mathbf{v}_i$ denote the \textbf{eigenvalues} and \textbf{eigenvectors} of the correlation matrices. 
Each $\mathbf{v}_i$ is a column vector and $\mathcal{N}$ represents the corresponding normalization to obtain matrices with properties analogous to those of correlation matrices, as will be detailed in Eqs. \eqref{eq:matriz-mk} \eqref{eq:matriz-dk} \eqref{eq:normalizacion}. 
This formulation considers only the \emph{dominant spectral components}, while the other contributions are set to zero.

The difference with the proposal of Heckens and Guhr \cite{heckensNewCollectivityMeasures2022} lies in the manner of separating the spectrum. 
They do not apply a direct truncation, but rather distinguish between the contribution of the \emph{SOTM} and the rest, as expressed in Eq.~\eqref{eq:decomp_cov} and Eq.~\eqref{eq:decomp_corr}. These authors analyze both the \emph{SOTM} part and the rest of the spectrum separately in order to isolate the collective measure.

By contrast, the present work employs the truncated construction where only the $l$ largest eigenvalues and their associated eigenvectors are retained, as expressed in Eq.~\eqref{eq:matriz-ponderada}. 

The normalization $\mathcal{N}$ is defined in the following manner: given the matrix
\begin{equation}
\mathbf{M}^{l} \;=\; \sum_{i=N-l+1}^{N} \lambda_i\, \mathbf{v}_i\, \mathbf{v}_i^{\top},
\label{eq:matriz-mk}
\end{equation}
the diagonal matrix $\mathbf{D}^{l}$ is constructed from the diagonal elements
of $\mathbf{M}^{l}$, that is:
\begin{equation}
\mathbf{D}^{l} \;=\; \operatorname{diag}\!\left(M^{l}_{11}, M^{l}_{22}, \dots, M^{l}_{NN}\right).
\label{eq:matriz-dk}
\end{equation}
The normalized matrix is obtained as
\begin{equation}
\mathbf{C}^{l} \;=\; \left(\mathbf{D}^{l}\right)^{-1/2}\, \mathbf{M}^{l}\, \left(\mathbf{D}^{l}\right)^{-1/2},
\label{eq:normalizacion}
\end{equation}
so that the diagonal of $\mathbf{C}^{l}$ is composed only of ones,
that is, $C^{l}_{ii} = 1$ for all $i\in\{1,\dots,N\}$. 
This procedure guarantees that the matrices $\mathbf{C}^{l}$ preserve the 
interpretation of correlation matrices.

\subsection{Matrices \texorpdfstring{$C^1$}{C¹}}
\label{subsection: Matrices C1}
\index{Theorem!sign matrix}
Next, the result corresponding to the matrix is presented  
\begin{equation}
\mathbf{C}^{1} \;=\; \mathcal{N}\!\left(\lambda_{N}\,\mathbf{v}_{N}\mathbf{v}_{N}^{\top}\right),
\end{equation}
\begin{tcolorbox}[title=\strut Remark (Sign matrix $\mathbf{C}^{1}$)]
Let $\mathbf{v}_N^\top=\left(x_1,\dots,x_N\right)$ be the leading eigenvector (associated with $\lambda_N$) with $x_i\neq 0$ for all $i\in\{1,\dots,N\}$.  
Then the matrix
\begin{equation}
\mathbf{C}^{1}=\textbf{s}\,\textbf{s}^{\!\top},\qquad s_i=\operatorname{sgn}\!\bigl(x_i\bigr),
\label{eq:observacion-C1}
\end{equation}
satisfies $\mathbf{C}^1_{ii}=1$ and $\mathbf{C}^1_{ij}\in\{1,-1\}$ for $i\neq j$, with the symmetry $\mathbf{C}_{ij}^{1}=\mathbf{C}_{ji}^1$.
\end{tcolorbox}
\begin{proof}
Consider
\begin{equation}
\mathbf{M}^{1} \;=\; \lambda_N\,\mathbf{v}_N\,\mathbf{v}_N^{\!\top},
\label{eq:M1-def}
\end{equation}
whose entries are
\begin{equation}
M^{1}_{ij} \;=\; \lambda_N\,x_i\,x_j, 
\qquad \forall\, i,j\in\{1,\dots,N\}.
\end{equation}
In particular, on the diagonal
\begin{equation}
M^{1}_{ii} \;=\; \lambda_N\,x_i^{2} \;>\; 0.
\end{equation}
\noindent
Define the diagonal matrix
\begin{equation}
\mathbf{D}^{1} \;=\; \operatorname{diag}\!\bigl(M^{1}_{11},\,M^{1}_{22},\,\dots,\,M^{1}_{NN}\bigr),
\end{equation}
and its inverse root
\begin{equation}
\bigl(\mathbf{D}^{1}\bigr)^{-1/2} \;=\; 
\operatorname{diag}\!\Bigl[\bigl(M^{1}_{11}\bigr)^{-1/2},\,\dots,\,\bigl(M^{1}_{NN}\bigr)^{-1/2}\Bigr]
 \;=\; \operatorname{diag}\!\Bigl(\tfrac{1}{\sqrt{\lambda_N}\,\lvert x_1\rvert},\,\dots,\,\tfrac{1}{\sqrt{\lambda_N}\,\lvert x_N\rvert}\Bigr).
\end{equation}
The normalization is performed as
\begin{equation}
\mathbf{C}^{1} \;=\; \bigl(\mathbf{D}^{1}\bigr)^{-1/2}\,\mathbf{M}^{1}\,\bigl(\mathbf{D}^{1}\bigr)^{-1/2}.
\end{equation}
Then, for the diagonal:
\begin{equation}
C^{1}_{ii}
= \frac{\lambda_N\,x_i^{2}}{\lambda_N\,x_i^{2}}
= 1,
\end{equation}
and for $i\neq j$:
\begin{equation}
C^{1}_{ij}
= \frac{\lambda_N\,x_i\,x_j}{\sqrt{\lambda_N}\,\lvert x_i\rvert \;\sqrt{\lambda_N}\,\lvert x_j\rvert}
= \frac{x_i}{\lvert x_i\rvert}\,\frac{x_j}{\lvert x_j\rvert}
= \operatorname{sgn}\!\bigl(x_i\bigr)\,\operatorname{sgn}\!\bigl(x_j\bigr).
\end{equation}
Therefore, $\mathbf{C}^{1}=\textbf{s}\,\textbf{s}^{\!\top}$ with $s_i=\operatorname{sgn}(x_i)\in\{1,-1\}$. Due to the commutativity of multiplication in $\mathbb{R}$, it is verified that $\mathbf{C}_{ij}^1=\mathbf{C}_{ji}^1$.
\end{proof}
Because the entries of the matrix $\mathbf{C}^{1}$ take only the values $1$ and $-1$, the information it provides proves too limited and does not allow the original properties of the correlation matrices to be adequately reconstructed.

\subsection{Matrices \texorpdfstring{$C^2$}{C²}}
\label{subsection: Matrices C2}
Next, the result corresponding to the matrix is presented  
\begin{equation}
\mathbf{C}^{2} \;=\; \mathcal{N}\left(\lambda_{N}\,\mathbf{v}_{N}\mathbf{v}_{N}^{\top}+\lambda_{N-1}\mathbf{v}_{N-1}\mathbf{v}_{N-1}^\top\right),
\end{equation}
\subsubsection{Temporal Evolution of \emph{MS}}
\index{Market!States}
Upon applying the \textit{k}-Means algorithm\index{K-Means} to the matrices $\mathbf{C}^{2}$, the stability of the \emph{clusters} was evaluated by means of the modal ensemble stability index (IEM). For $q=20$ and $k=4$, a consensus of $IEM=1.0$ was obtained in 96.5\% of the configurations.\index{Stability!IEM} Upon increasing to $k=5$, the results were distributed between $IEM=1.0$ with a frequency of 78.7\% and $IEM=0.9333$ with 10.7\%.  

In the case of $q=40$, an analogous pattern was observed, with differences between $k=4$ and $k=5$. For $k=4$, 92.2\% of the configurations attained $IEM=1.0$. By contrast, for $k=5$, 49.4\% of the configurations corresponded to $IEM=1.0$ and 41.7\% to $IEM=0.8$.  

Taken together, these results show that the stability of the \emph{clusters} obtained from
$\mathbf{C}^2$ depends both on the window length $q$ and on the number of \emph{clusters} $k$ considered,
the optimal values being $q=20$ and $k=4$.

Fig.~\ref{fig:C2_modaIEM_q20q40_k4k5} shows the temporal evolution of the \emph{MS} induced from the matrices $\mathbf{C}^{2}$ with $k=4$ and $k=5$. On both scales, a \textbf{large population} of the higher-index \emph{clusters} is appreciated, which contrasts with what was observed in the plots obtained from the original correlation matrices (Fig.~\ref{fig:evolucion_ms_corr}). For $q=20$, the dynamics reproduce neither the \textit{COVID} State nor the episode of permanence corresponding to the 2017--2018 period in \emph{cluster} 1. By contrast, when considering $q=40$ these episodes begin to emerge; nevertheless, the 2017--2018 interval does not appear isolated as in Fig.~\ref{fig:evolucion_ms_corr}, but rather accompanied by the population of other higher states. This result suggests that the technique based on $\mathbf{C}^{2}$ matrices tends to \textbf{overrepresent} the higher \emph{clusters}.

\subsubsection{Heatmaps}
\index{Heatmaps}
Fig.~\ref{fig:MeanMatrices_C2_q40_k5} shows the heatmaps of average matrices obtained by means of \emph{clustering} with the \textit{k}-Means algorithm applied to the matrices $\mathbf{C}^{2}$ with windows of $q=40$. 
In \emph{cluster} 1, well-delimited regions of null or negative correlation are identified. 
In \emph{cluster} 2, the anticorrelation zones (blue tones) disappear, whereas in \emph{cluster} 3 they manifest themselves again. 
\emph{Cluster} 5 concentrates the highest correlation values, which is reflected in the dark red tones that appear in Fig.~\ref{fig:evolucion_ms_corr}. 

When comparing with the matrices shown in Fig.~\ref{fig:avg_corr_mats}, it is observed that the $\mathbf{C}^{2}$ matrices do not accurately reproduce the characteristic correlations of the correlation matrices. 
The figure corresponding to the case $q=40$, $k=4$ is included in Appendix~\ref{fig:MeanMatrices_C2_q40_k4}.
\subsubsection{Transition Matrices}
\index{Matrix!transition}
Fig.~\ref{fig:MatricesTransC2} presents the \textbf{transition matrices} derived from the $\mathbf{C}^2$ constructions. Most of the occupation is concentrated on the diagonal, in particular in the highest-index \emph{cluster} (4 or 5, depending on the case). It is worth mentioning that the information represented differs from the transition matrices of the correlation matrices, Fig.~\ref{fig:matrices_transicion_corr}.

\begin{figure}[H]
\centering
\begin{minipage}{0.49\linewidth}
  \centering
  \includegraphics[width=\linewidth]{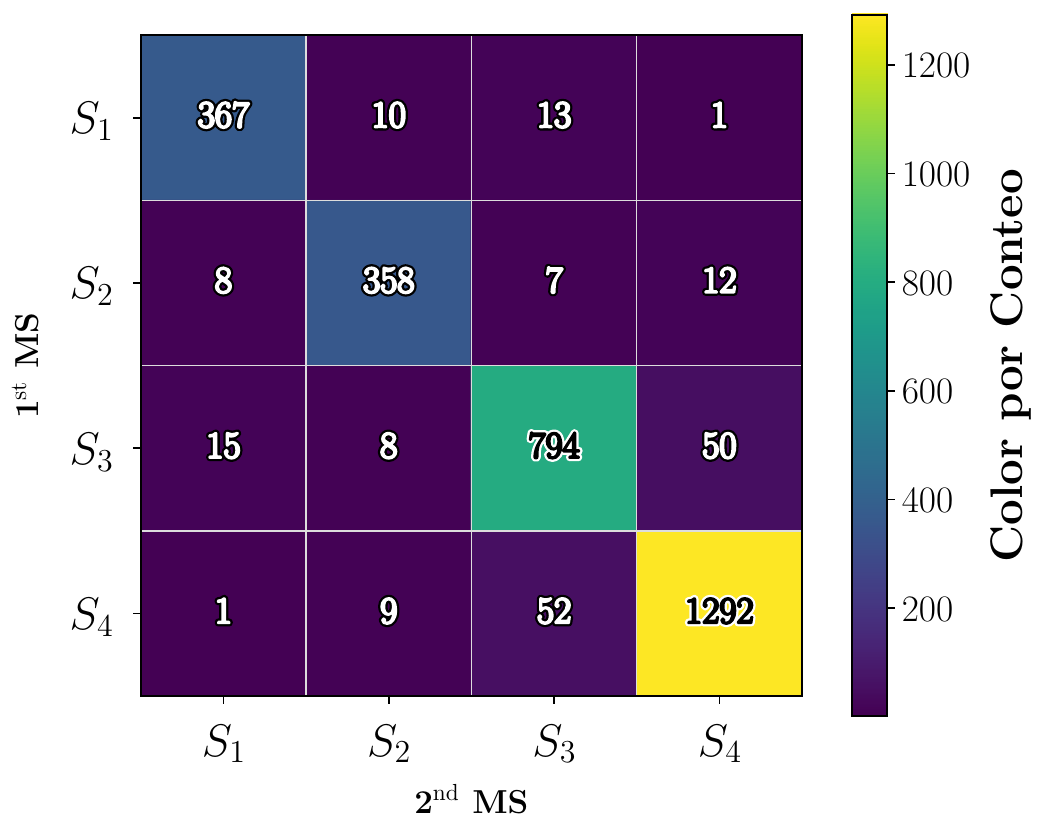}
  {\footnotesize a) $q=20$, $k=4$}
\end{minipage}\hfill
\begin{minipage}{0.49\linewidth}
  \centering
  \includegraphics[width=\linewidth]{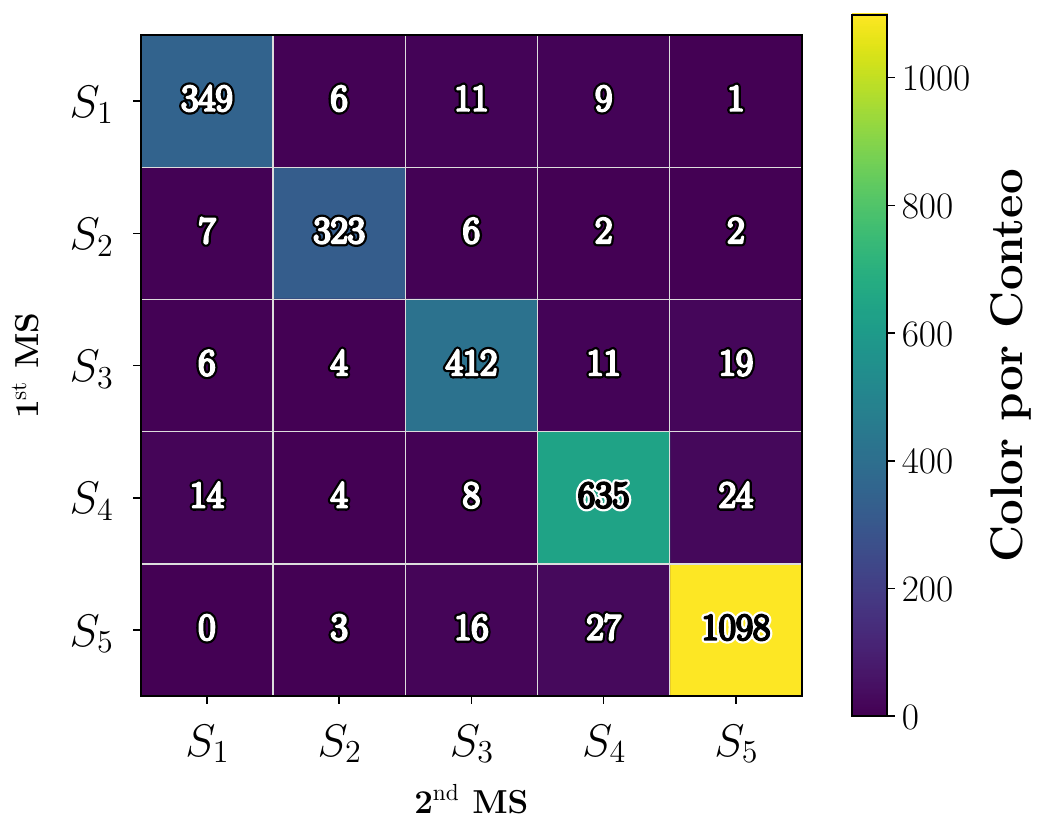}
  {\footnotesize b) $q=20$, $k=5$}
\end{minipage}
\vspace{0.8em}
\begin{minipage}{0.49\linewidth}
  \centering
  \includegraphics[width=\linewidth]{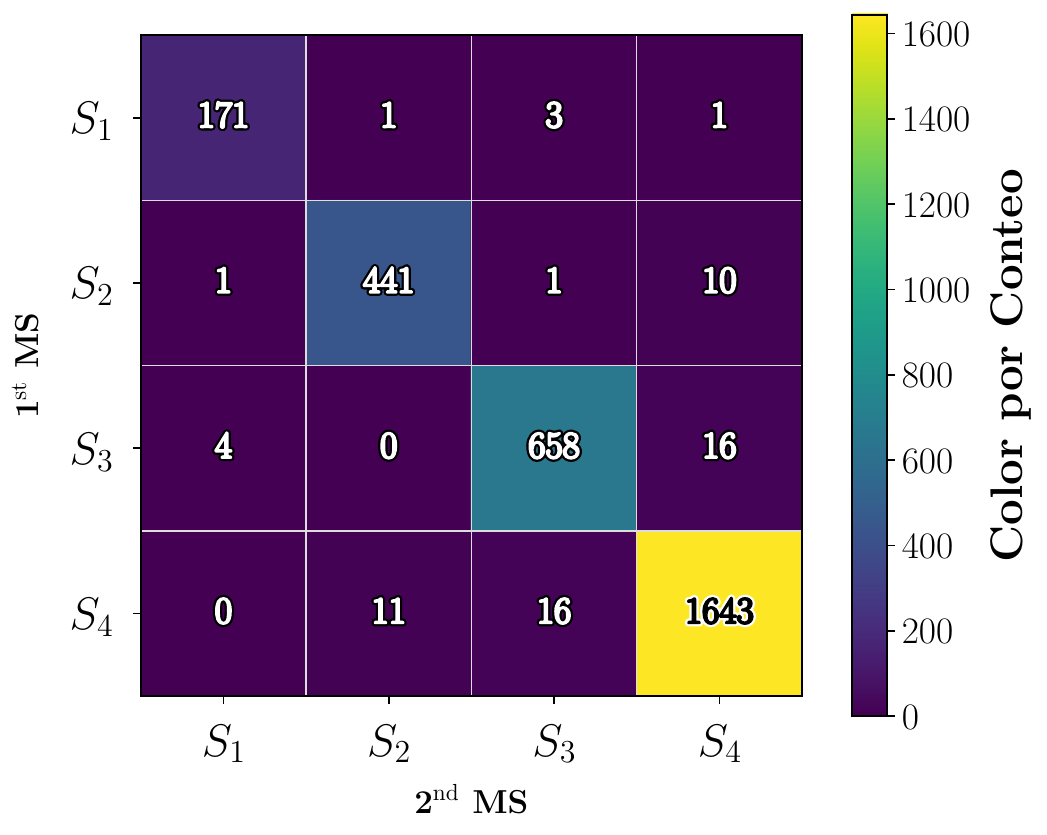}
  {\footnotesize c) $q=40$, $k=4$}
\end{minipage}\hfill
\begin{minipage}{0.49\linewidth}
  \centering
  \includegraphics[width=\linewidth]{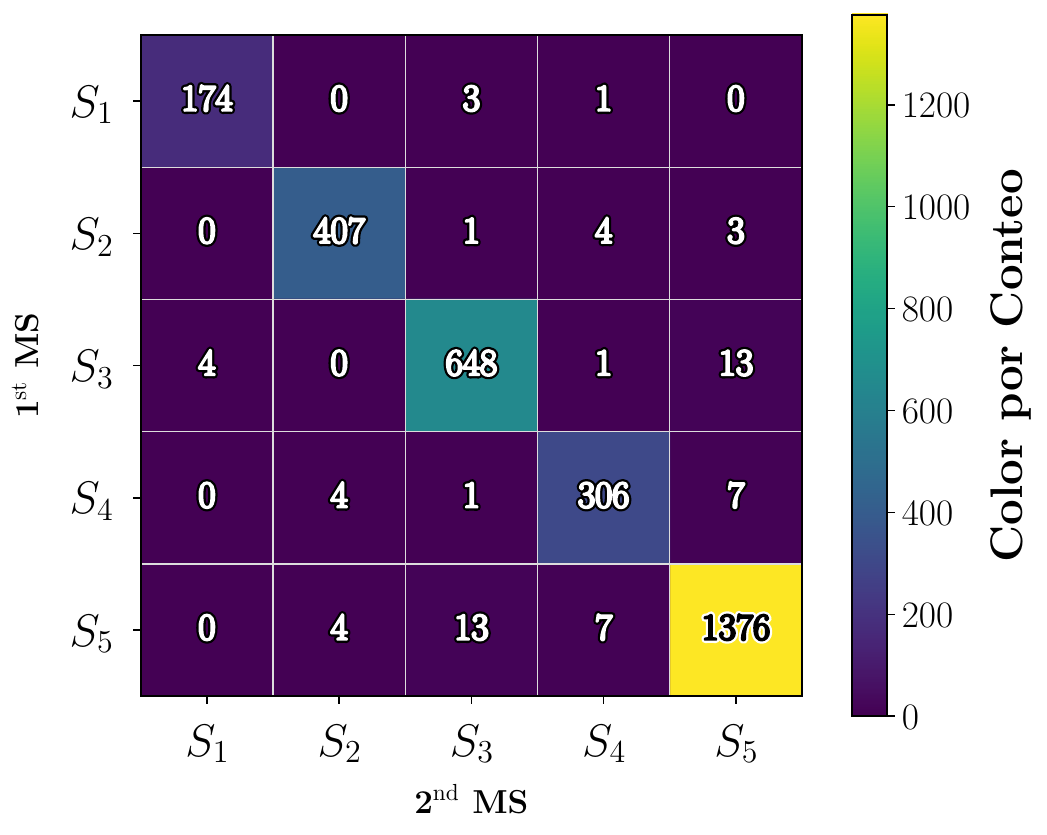}
  {\footnotesize d) $q=40$, $k=5$}
\end{minipage}
\caption[Transition matrices induced by the \texorpdfstring{$C^2$}{C²} matrices]%
{Transition matrices derived from the $\mathbf{C}^{2}$ constructions. 
It is observed that the greatest occupation is concentrated on the diagonal, 
particularly in the highest-index \emph{cluster} (4 or 5, as applicable).}
\label{fig:MatricesTransC2}
\end{figure}

\begin{figure}[p]
\centering
\vspace*{\fill}

\begin{subfigure}{0.88\linewidth}
  \centering
  \includegraphics[width=\linewidth,height=0.20\textheight,keepaspectratio]{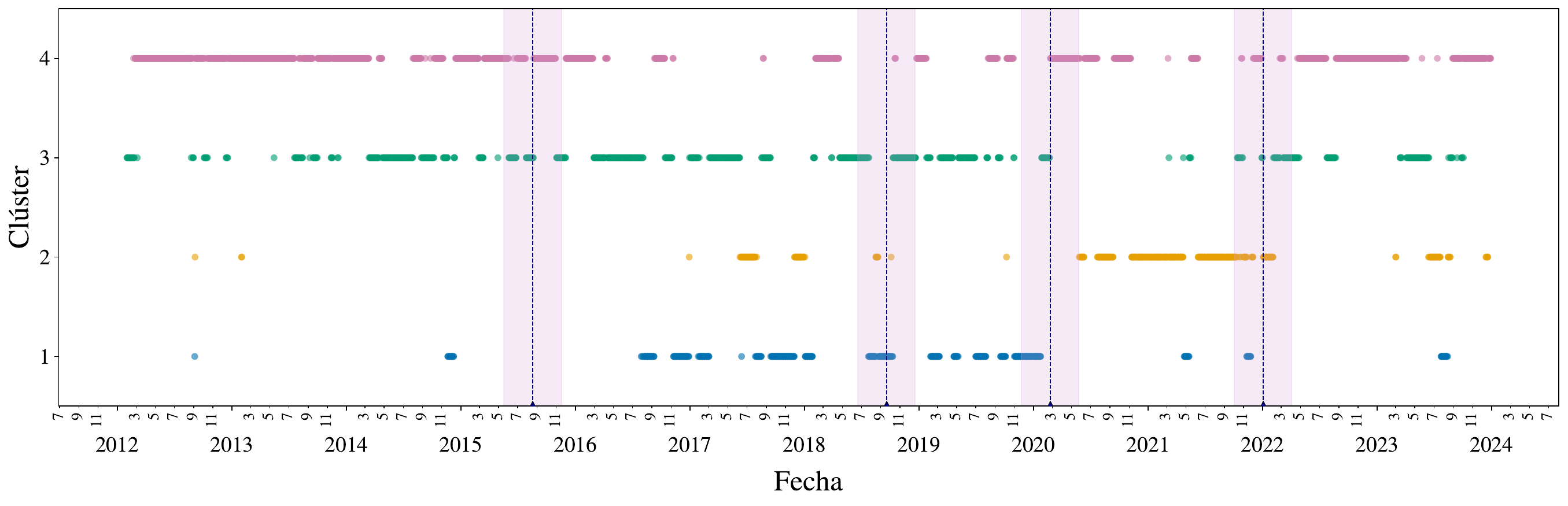}
  \caption*{\footnotesize a) $q=20$, $k=4$}
\end{subfigure}

\vfill

\begin{subfigure}{0.88\linewidth}
  \centering
  \includegraphics[width=\linewidth,height=0.20\textheight,keepaspectratio]{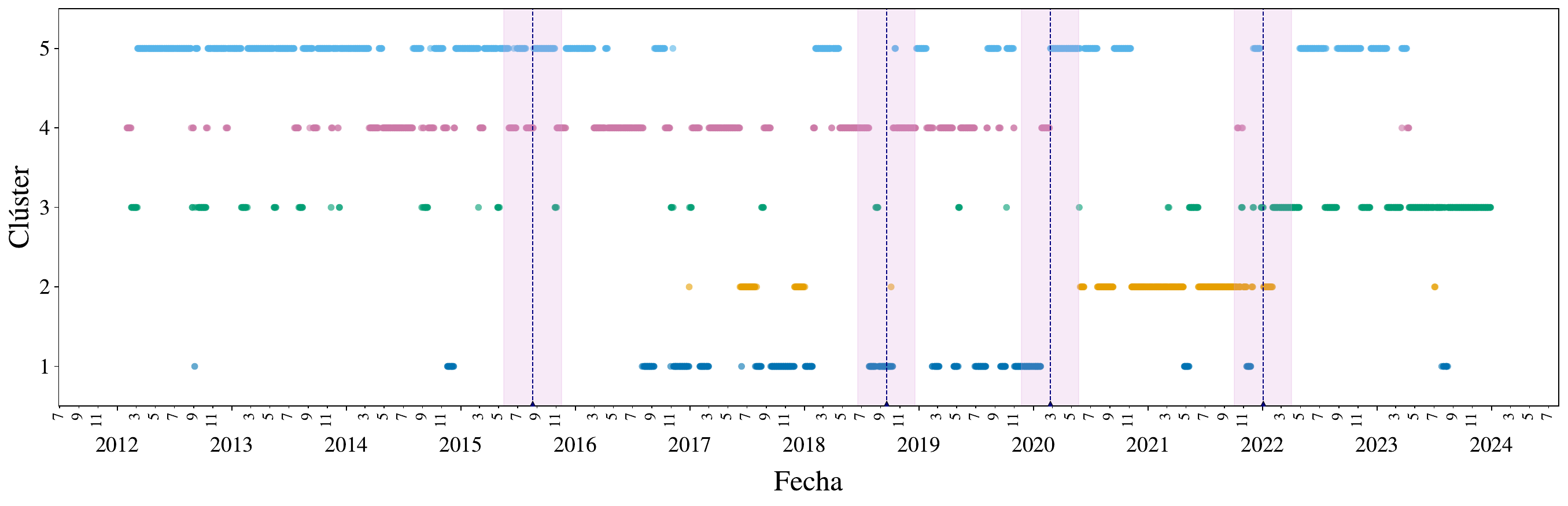}
  \caption*{\footnotesize b) $q=20$, $k=5$}
\end{subfigure}

\vfill

\begin{subfigure}{0.88\linewidth}
  \centering
  \includegraphics[width=\linewidth,height=0.20\textheight,keepaspectratio]{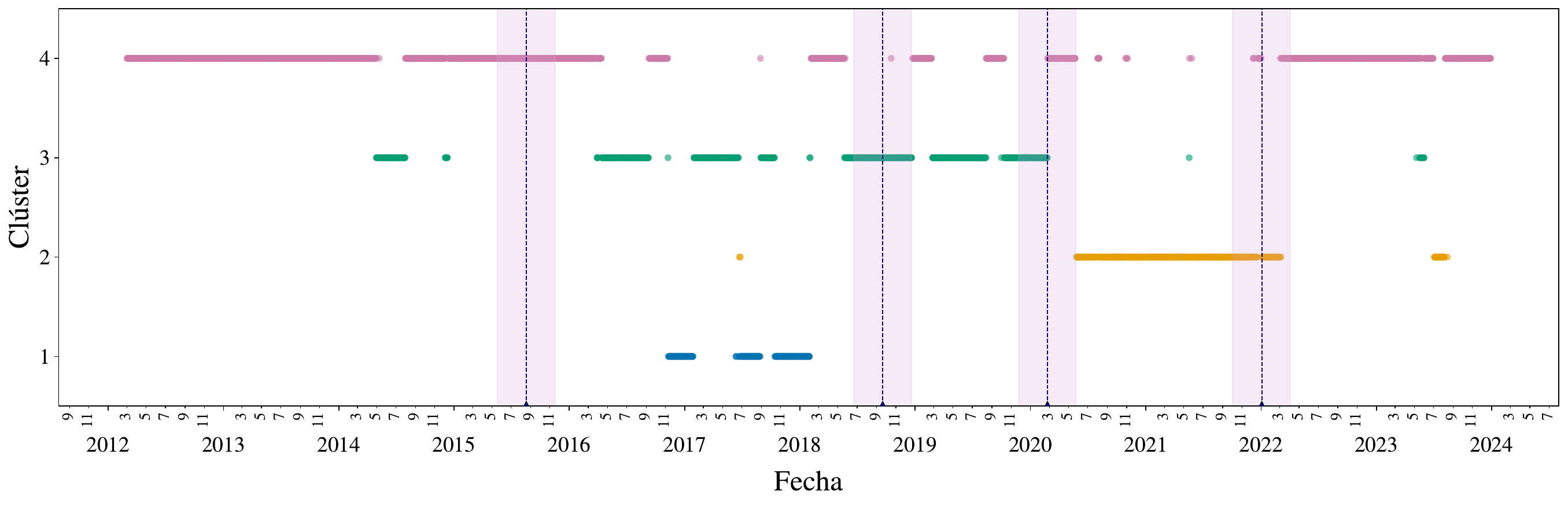}
  \caption*{\footnotesize c) $q=40$, $k=4$}
\end{subfigure}

\vfill

\begin{subfigure}{0.88\linewidth}
  \centering
  \includegraphics[width=\linewidth,height=0.20\textheight,keepaspectratio]{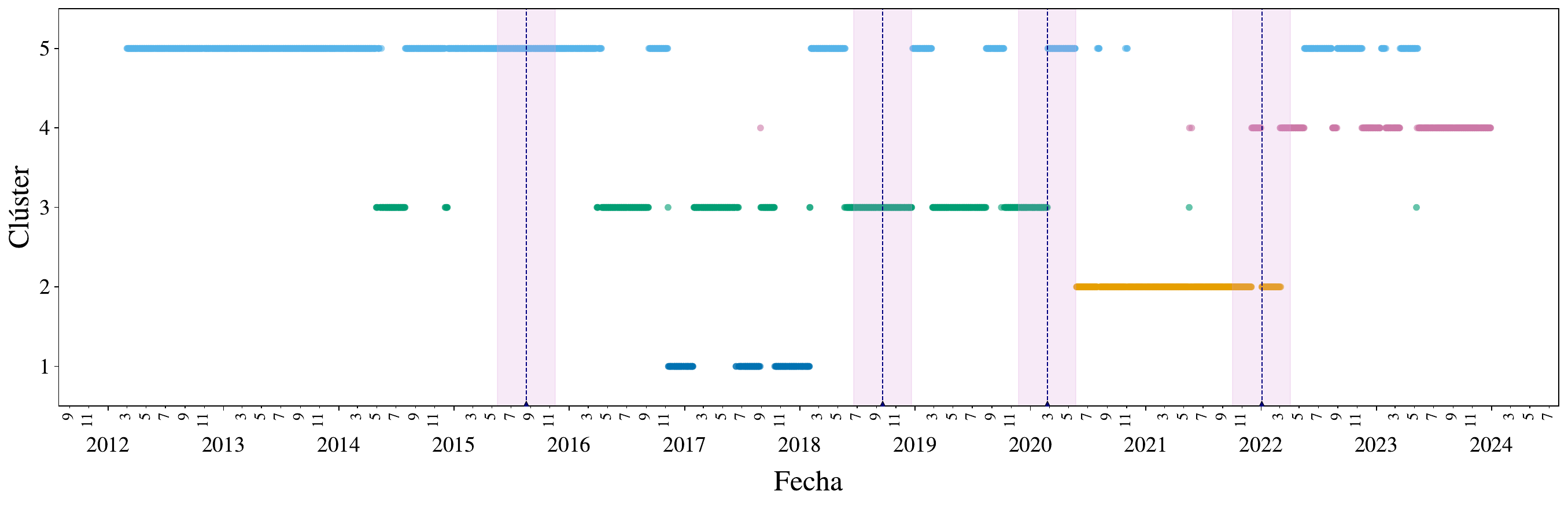}
  \caption*{\footnotesize d) $q=40$, $k=5$}
\end{subfigure}

\vspace*{\fill}

\caption[Temporal evolution of \emph{MS} from the \texorpdfstring{$C^2$}{C²} matrices]%
{Temporal evolution of \emph{MS} from the $\mathbf{C}^{2}$ matrices. On both time scales a \textbf{large population} is observed in the \emph{clusters} with higher $\langle C_{ij}\rangle$. For $q=20$, neither the \textit{COVID} State nor the 2017-2018 episode in \emph{cluster} 1 is replicated. By contrast, with $q=40$ these states begin to emerge; however, the 2017-2018 period does not appear isolated as in Fig.~\ref{fig:evolucion_ms_corr} for $q=40$.}
\label{fig:C2_modaIEM_q20q40_k4k5}
\end{figure}


\begin{figure}[t]
  \centering
  \includegraphics[width=\linewidth]{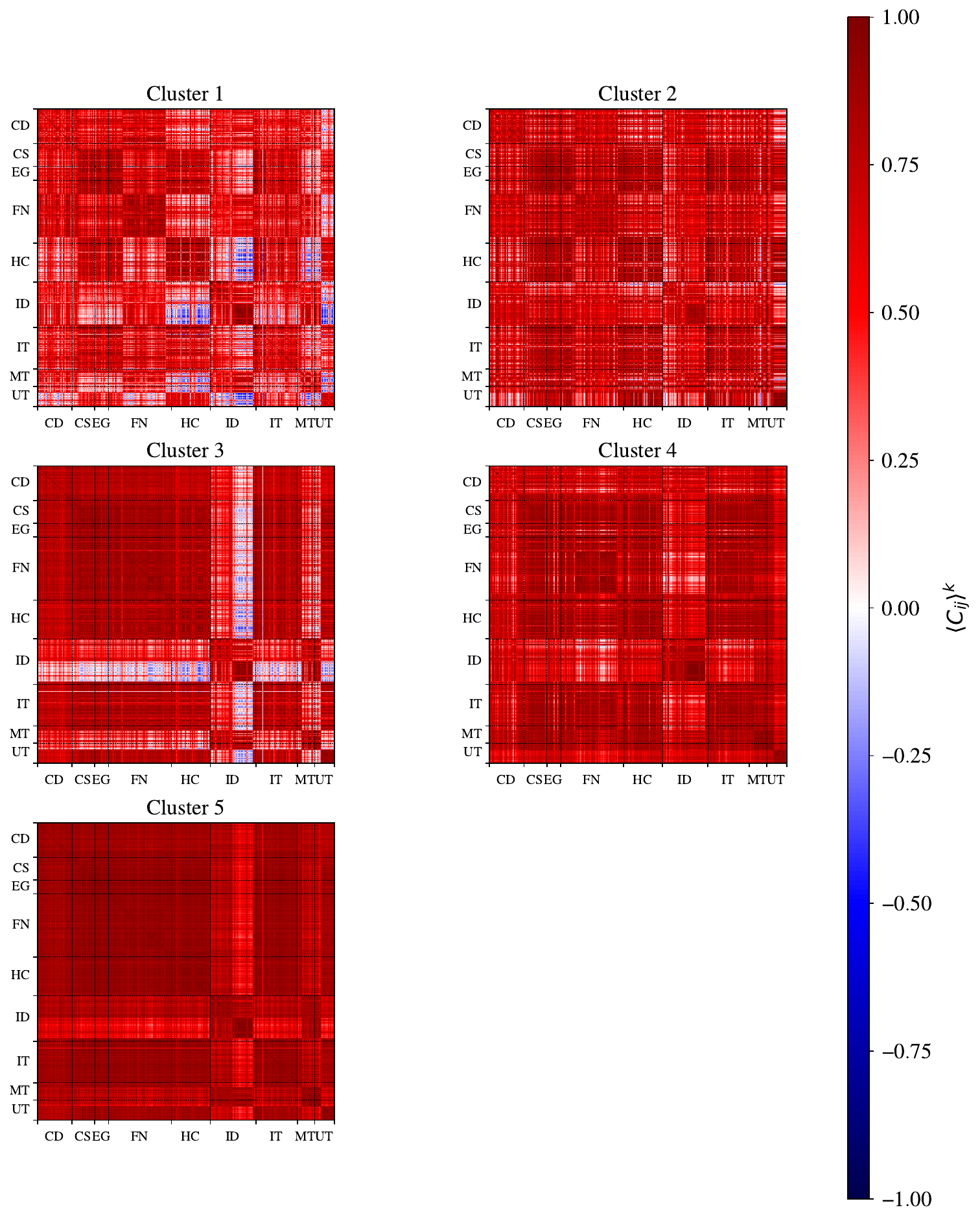}
  \caption[Heatmaps of the average \texorpdfstring{$C^2$}{C2} matrices by \emph{cluster}, \texorpdfstring{$q=40\;\text{y}\; k=5$}{q=40, k=5}]%
  {Heatmaps of the average $\mathbf{C}^{2}$ matrices obtained by \textit{k}-Means \emph{clustering} with $q=40$ and $k=5$. 
  In \emph{cluster} 1, delimited blocks with correlations close to $0$ or negative are observed. In \emph{cluster} 2 the blue anticorrelation bands disappear; 
  these reappear in \emph{cluster} 3. 
  \emph{Cluster} 5 concentrates the highest correlation values.}
  \label{fig:MeanMatrices_C2_q40_k5}
\end{figure}

\clearpage

\subsection{Matrices \texorpdfstring{$C^3$}{C³}}
\label{subsection: Matrices C3}
Next, the result corresponding to the matrix is presented  
\begin{equation}
\mathbf{C}^{3} \;=\; \mathcal{N}\!\left(\lambda_{N}\,\mathbf{v}_{N}\mathbf{v}_{N}^{\top} \;+\; \lambda_{N-1}\,\mathbf{v}_{N-1}\mathbf{v}_{N-1}^\top \;+\; \lambda_{N-2}\,\mathbf{v}_{N-2}\mathbf{v}_{N-2}^\top\right),
\end{equation}
\subsubsection{Temporal Evolution of \emph{MS}}
\index{Market!States}
Fig.~\ref{fig:C3_modaIEM_q20q40_k4k5} presents the temporal evolution of the
\emph{MS} induced by the matrices $\mathbf{C}^{3}$ in the windows $q=20$ and $q=40$. 
For $q=20$, the \textit{COVID} State appears in \emph{cluster} 2; nevertheless, the 2017-2018 period is not distinguished as an independent state, since the higher \emph{clusters} also concentrate a relevant population of observations. 
By contrast, for $q=40$ the \textit{COVID} State manifests itself intermittently at $k=4$ and separates more clearly when $k=5$ is considered. In none of the cases is \emph{cluster} 1, corresponding to the period between 2017 and 2018, isolated as a differentiated state, as it appears in Fig. \ref{fig:evolucion_ms_corr}.
In terms of stability, for $q=20$ and $k=4$, 98.7\% of the configurations attained $IEM=1.0$, whereas for $k=5$ the distribution was split between 56.5\% with $IEM=1.0$ and 39\% with $IEM=0.9333$.

When employing $q=40$, the behavior was analogous: with $k=4$, 96.5\% of the configurations recorded $IEM=1.0$, and with $k=5$ a distribution of 61.3\% at $IEM=1.0$ and 26.4\% at $IEM=0.7333$ was observed.
\subsubsection{Heatmaps}
\index{Heatmaps}
Fig.~\ref{fig:MeanMatrices_C3_q40_k5} presents the heatmaps of average matrices derived from the $\mathbf{C}^{3}$ construction with $q=40$ and $k=5$. 
Although a pattern similar to that of $\mathbf{C}^{2}$ is observed (Fig.~\ref{fig:MeanMatrices_C2_q40_k5}), particularly in the presence of white and blue bands within \emph{cluster} 1 that reappear in \emph{cluster} 3, there are notable differences. 
In $\mathbf{C}^{3}$, a greater grouping of null or near-zero correlations is evidenced, which suggests that the inclusion of an additional spectral component produces a more dispersed representation of the weak correlation relationships.
The heatmap for $q=40$ and $k=4$ is found in Appendix~\ref{fig:MeanMatrices_C3_q40_k4}.
\subsubsection{Transition Matrices}
 \index{Matrix!transition}
As observed in Fig.~\ref{fig: Matrices trans C3}, the matrices obtained with $q=20$ present greater variability in their configurations for $k=4$ and $k=5$, showing more noticeable changes between \emph{clusters}. By contrast, for $q=40$ the matrices prove more stable and consistent.

\begin{figure}[p]
\centering
\vspace*{\fill}

\begin{subfigure}{0.8\linewidth}
  \centering
  \includegraphics[width=\linewidth,height=0.20\textheight,keepaspectratio]{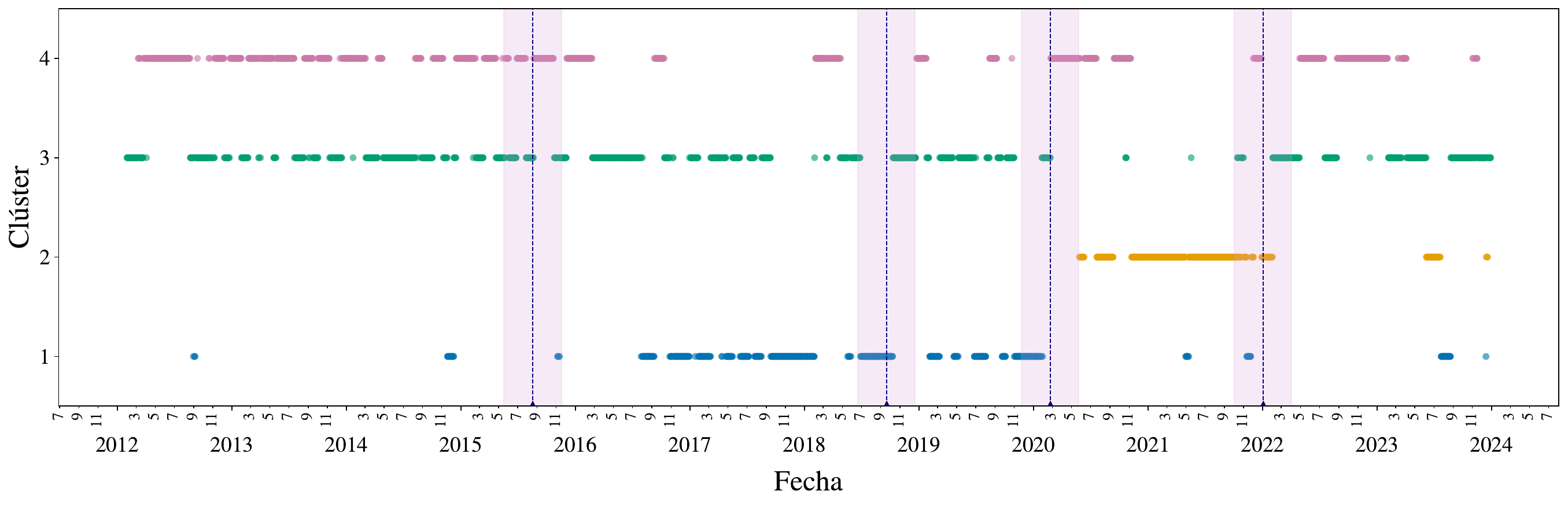}
  \caption*{\footnotesize a) $q=20$, $k=4$}
\end{subfigure}

\vfill

\begin{subfigure}{0.8\linewidth}
  \centering
  \includegraphics[width=\linewidth,height=0.20\textheight,keepaspectratio]{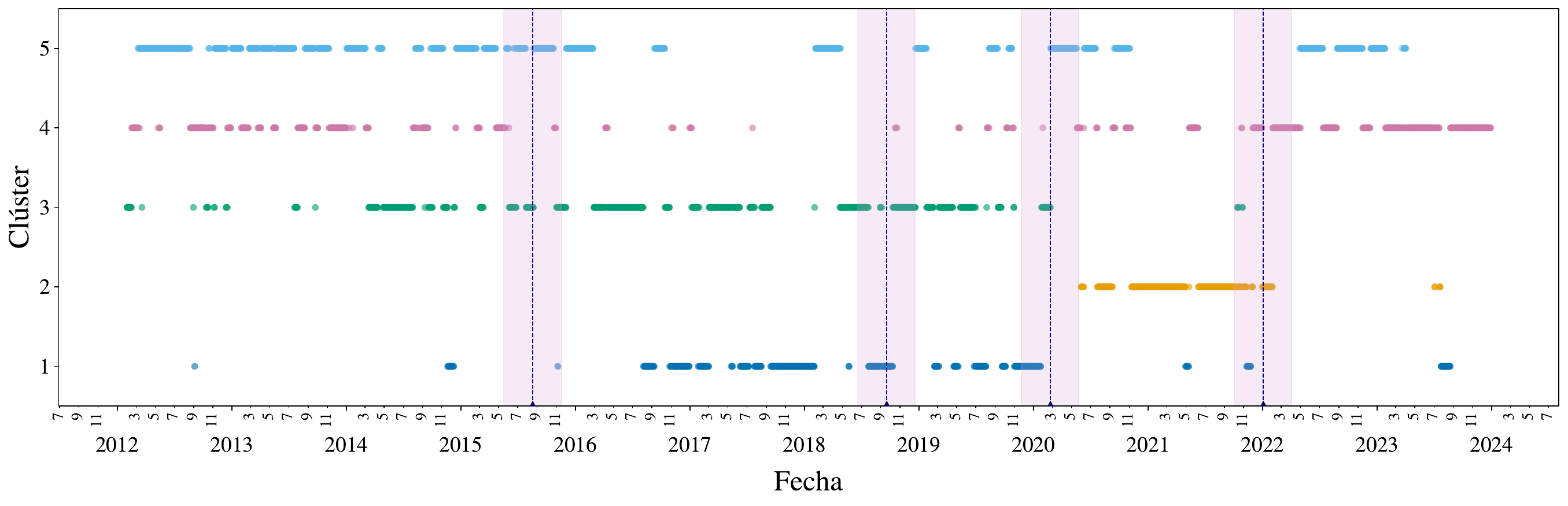}
  \caption*{\footnotesize b) $q=20$, $k=5$}
\end{subfigure}

\vfill

\begin{subfigure}{0.8\linewidth}
  \centering
  \includegraphics[width=\linewidth,height=0.20\textheight,keepaspectratio]{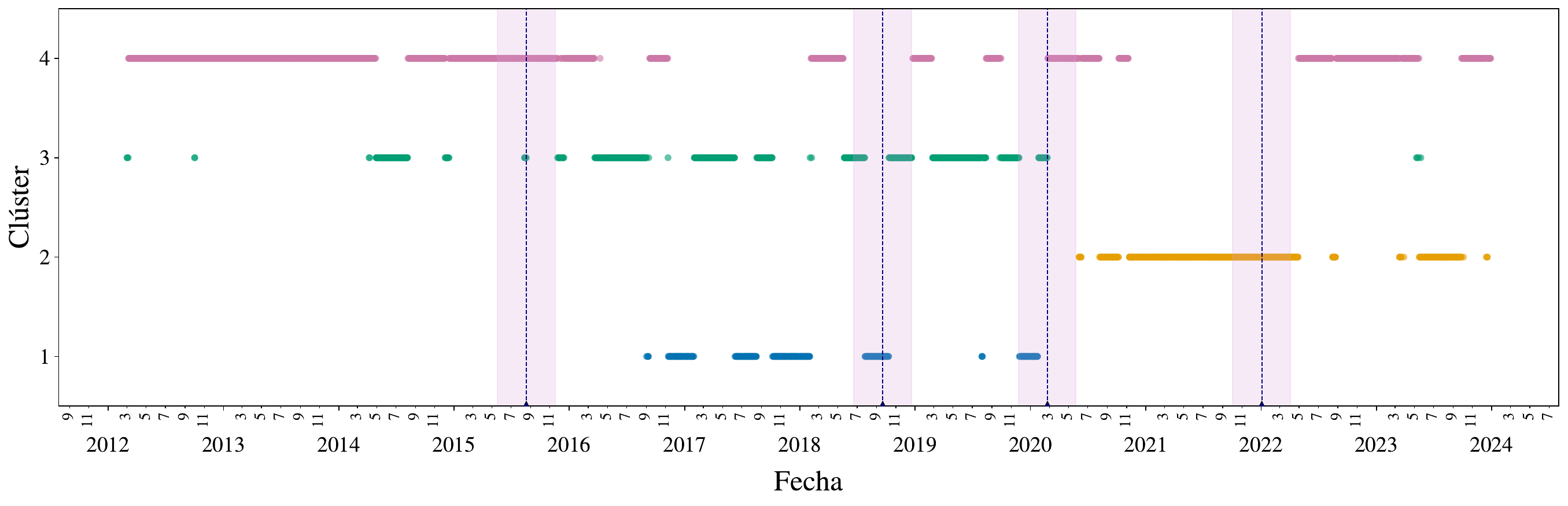}
  \caption*{\footnotesize c) $q=40$, $k=4$}
\end{subfigure}

\vfill

\begin{subfigure}{0.8\linewidth}
  \centering
  \includegraphics[width=\linewidth,height=0.20\textheight,keepaspectratio]{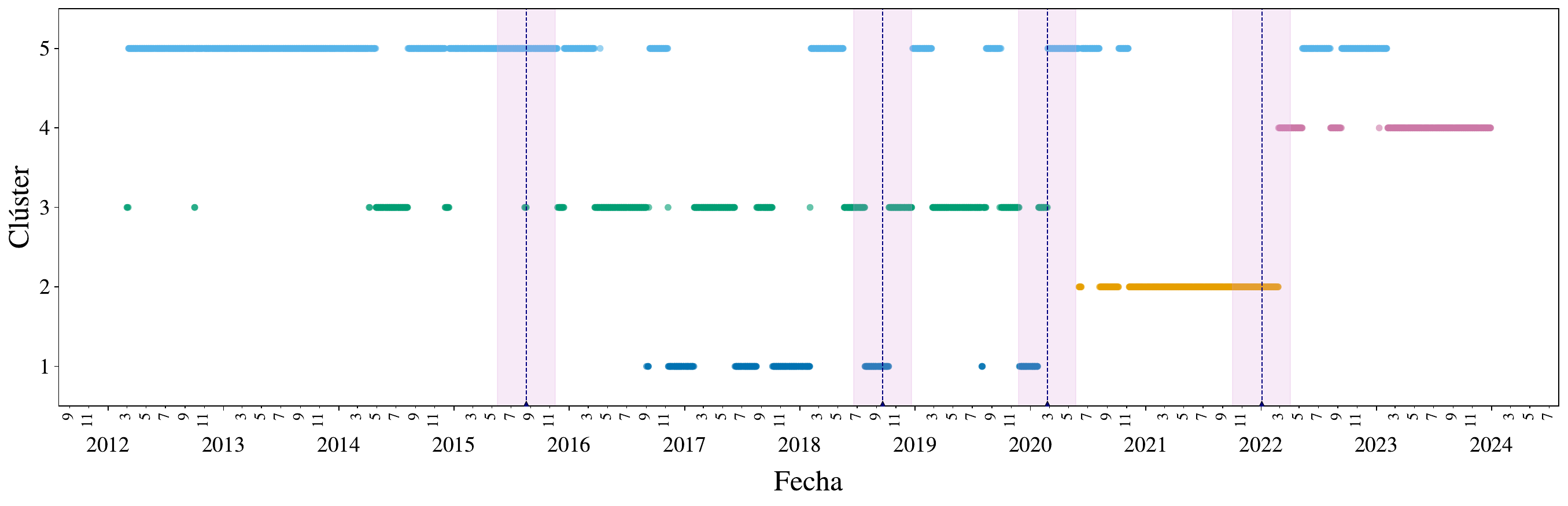}
  \caption*{\footnotesize d) $q=40$, $k=5$}
\end{subfigure}

\vspace*{\fill}

\caption[Temporal evolution of \emph{MS} from the \texorpdfstring{$C^3$}{C³} matrices]%
{Temporal evolution of the \emph{MS} obtained from the $\mathbf{C}^{3}$ matrices with $q=20$ and $q=40$. 
For $q=20$ the \textit{COVID} State begins to appear in \emph{cluster} 2; however, the 2017-2018 period is not isolated, since the higher \emph{clusters} also present a population. 
By contrast, with $q=40$ the \textit{COVID} State appears intermittently at $k=4$ and is isolated more clearly at $k=5$.}
\label{fig:C3_modaIEM_q20q40_k4k5}
\end{figure}
\begin{figure}[t]
  \centering
  
\includegraphics[width=\linewidth]{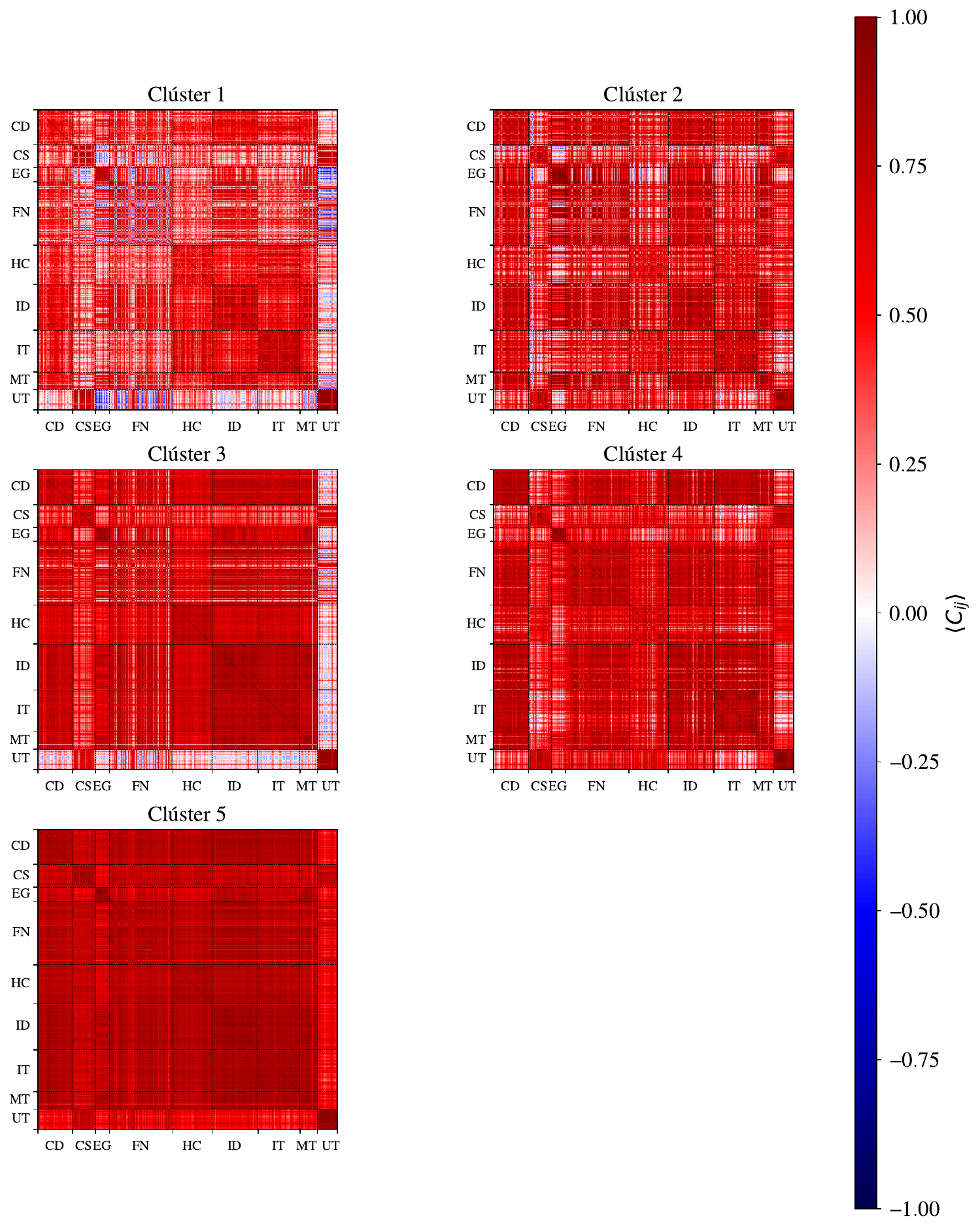}
\caption[Heatmaps of the average \texorpdfstring{$C^3$}{C³} matrices by \emph{cluster}, \texorpdfstring{$q=40\;\text{y}\; k=5$}{q=40, k=5}]%
{Heatmaps of the average $\mathbf{C}^{3}$ matrices obtained by means of \textit{k}-Means \emph{clustering} for $q=40$ and $k=5$. 
These plots exhibit a pattern similar to that of the $\mathbf{C}^{2}$ matrices (Fig.~\ref{fig:MeanMatrices_C2_q40_k5}). In \emph{cluster} 1, blue and white bands are appreciated, which reappear in \emph{cluster} 3.}
\label{fig:MeanMatrices_C3_q40_k5}
\end{figure}
\begin{figure}[ht]
\centering
\begin{minipage}{0.49\linewidth}
  \centering
  \includegraphics[width=\linewidth]{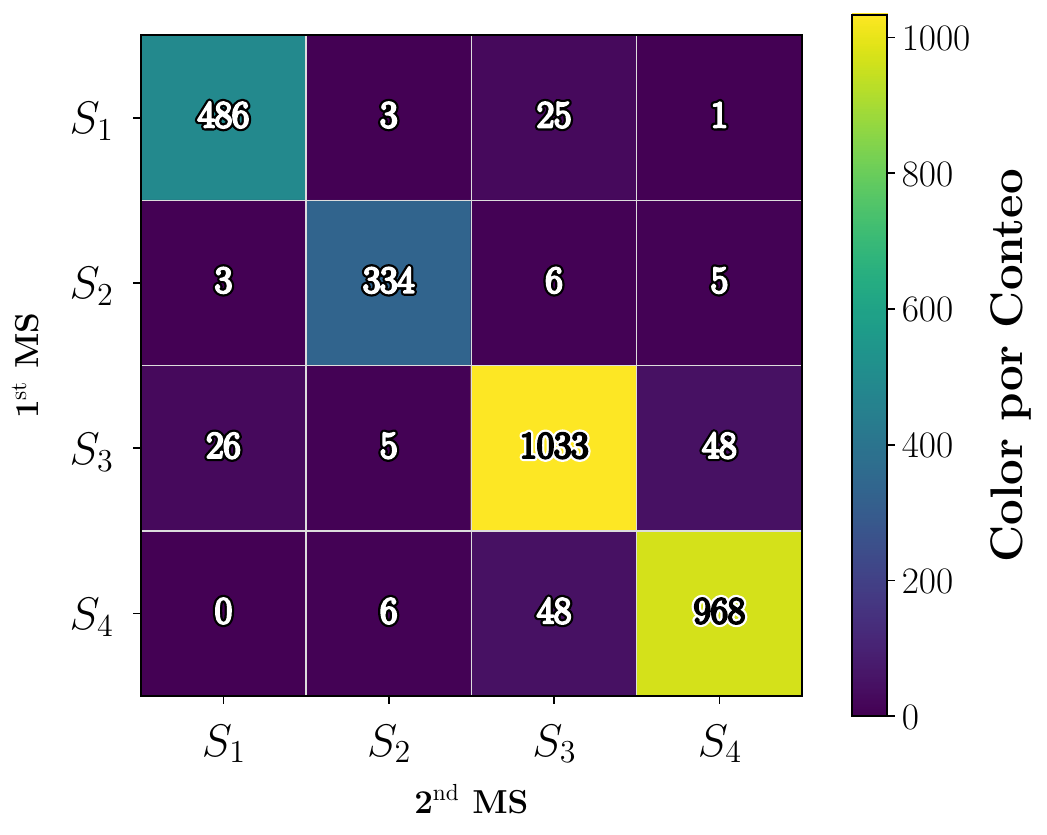}
  {\footnotesize a) $q=20$, $k=4$}
\end{minipage}\hfill
\begin{minipage}{0.49\linewidth}
  \centering
  \includegraphics[width=\linewidth]{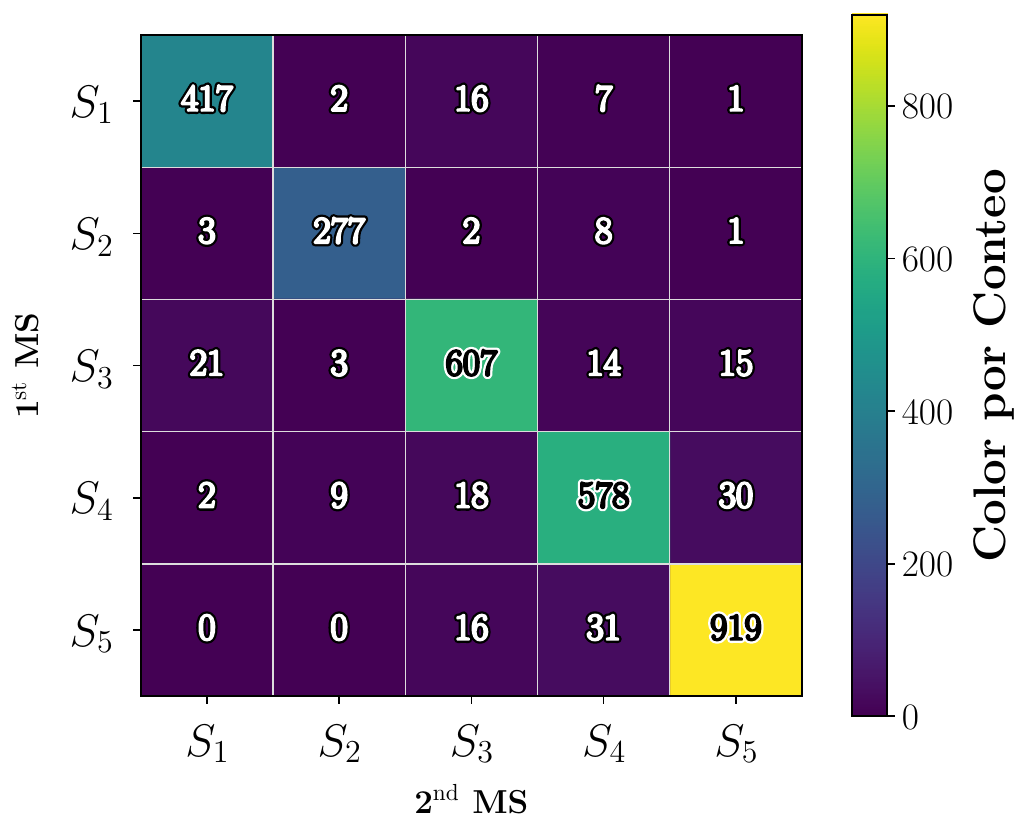}
  {\footnotesize b) $q=20$, $k=5$}
\end{minipage}
\vspace{0.8em}
\begin{minipage}{0.49\linewidth}
  \centering
  \includegraphics[width=\linewidth]{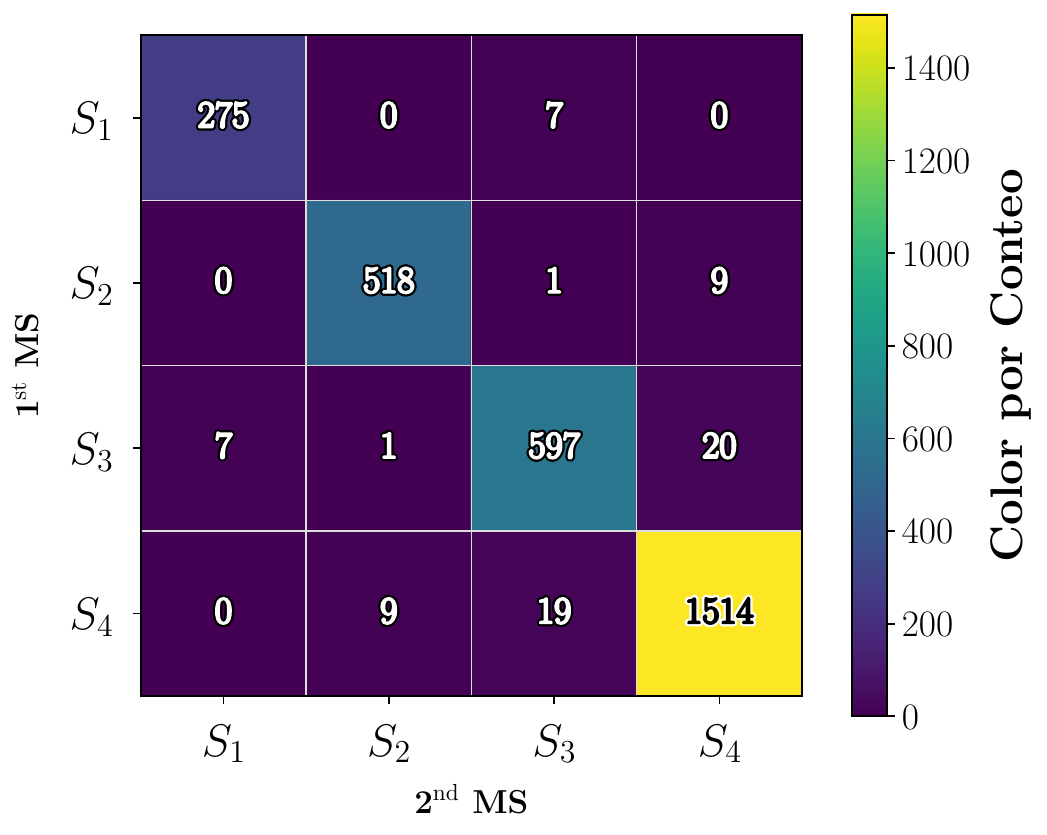}
  {\footnotesize c) $q=40$, $k=4$}
\end{minipage}\hfill
\begin{minipage}{0.49\linewidth}
  \centering
  \includegraphics[width=\linewidth]{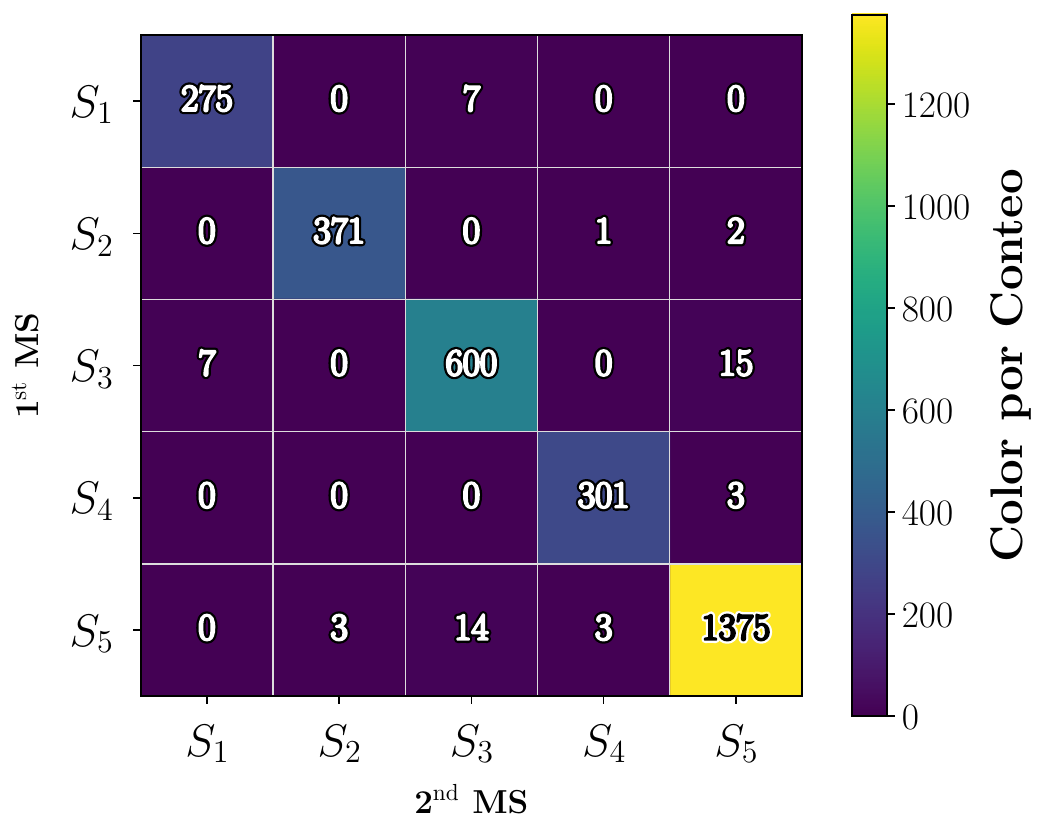}
  {\footnotesize d) $q=40$, $k=5$}
\end{minipage}
\caption[Transition matrices indicated by the \texorpdfstring{$C^3$}{C³} matrices]%
{Transition matrices derived from the $\mathbf{C}^{3}$ constructions. 
Consistent with Fig.~\ref{fig:C3_modaIEM_q20q40_k4k5}, the \emph{clusters} with the greatest occupation on the diagonal are \emph{cluster} 4 and \emph{cluster} 5. The configuration of the matrices varies considerably between $k=4$ and $k=5$ for $q=20$.} 
\label{fig: Matrices trans C3}
\end{figure}
\clearpage

\subsection{Stationary Vectors of \emph{MS}}
\label{subsection:vectoresEstacionariosMS}
\index{Vector!stationary}

To conclude this results section, a visual representation is presented of the stationary vectors
$\boldsymbol{\pi}$ associated with the correlation matrices $\mathbf{C}[\mathcal{E}(s,q)]$, defined in expression~\ref{eq:vector-estacionario}
and satisfying the asymptotic condition of~\ref{eq:lim-estacionario}. Fig.~\ref{fig:vectoresEstacionariosMS_q40_k5} shows the stationary
vectors corresponding to the transition matrices constructed from the \emph{clustering} applied to the different quantities analyzed in the present work.
The entries of the vector $\boldsymbol{\pi}$ are interpreted as the stationary probability of being in a specific state, that is, the expected fraction of time in that state within the time horizon considered.

In the figure, the elements are presented in the following order, from bottom to top:

\begin{itemize}
    \item Per-window correlation matrices, $\mathbf{C}[\mathcal{E}(s,q)]$ (Section~\ref{sec:matrices_corr}).
    \item Dominant eigenvalues $\lambda_{N}$ (Section~\ref{section: Dinamica Eigenvalores}).
    \item Dominant eigenvectors with squared entries, $[v_i^2]_{i=1}^{N}$ (Section~\ref{section:Eigenvectores al cuadrado}).
    \item Matrices $\mathbf{C}^{2}$ (Subsection~\ref{subsection: Matrices C2}).
    \item Matrices $\mathbf{C}^{3}$ (Subsection~\ref{subsection: Matrices C3}).
\end{itemize}

When comparing the stationary vectors obtained from each of the quantities shown in Fig.~\ref{fig:vectoresEstacionariosMS_q40_k5},
differences are appreciated in how the stationary probability is distributed among states. In particular, when considering the vector associated with the correlation matrices
$\mathbf{C}[\mathcal{E}(s,q)]$, the first entry (corresponding to state $S_1$) decreases when passing to $\mathbf{C}^{2}$ and $\mathbf{C}^{3}$.
By contrast, the component associated with state $S_5$ increases considerably: in $\mathbf{C}^{2}$ and $\mathbf{C}^{3}$ it reaches a value more than four times
that observed in the case of the correlation matrices.

In the stationary vectors associated with $\mathbf{C}^{2}$ and $\mathbf{C}^{3}$, the \emph{COVID} State is identified in \emph{cluster} 2 and maintains similar values between both constructions. By contrast, when considering the stationary vector associated with the dominant eigenvectors with squared entries, $[v_i^2]_{i=1}^{N}$, the \emph{COVID} episode appears in \emph{cluster} 3 for $q=40$ and $k=5$, in agreement with what was previously shown in Fig.~\ref{fig:evolucionMS_squared_q20q40}. On the other hand, the vector associated with $\lambda_{N}$ does not show a comparable concentration of the \emph{COVID} State. The complementary plots for other time windows $q$ and other values of the number of \emph{clusters} $k$ are presented in Appendix~\ref{sec: Vectores estacionarios complementarios de MS}.
\begin{figure}[ht]
  \centering
  \includegraphics[width=0.92\textwidth]{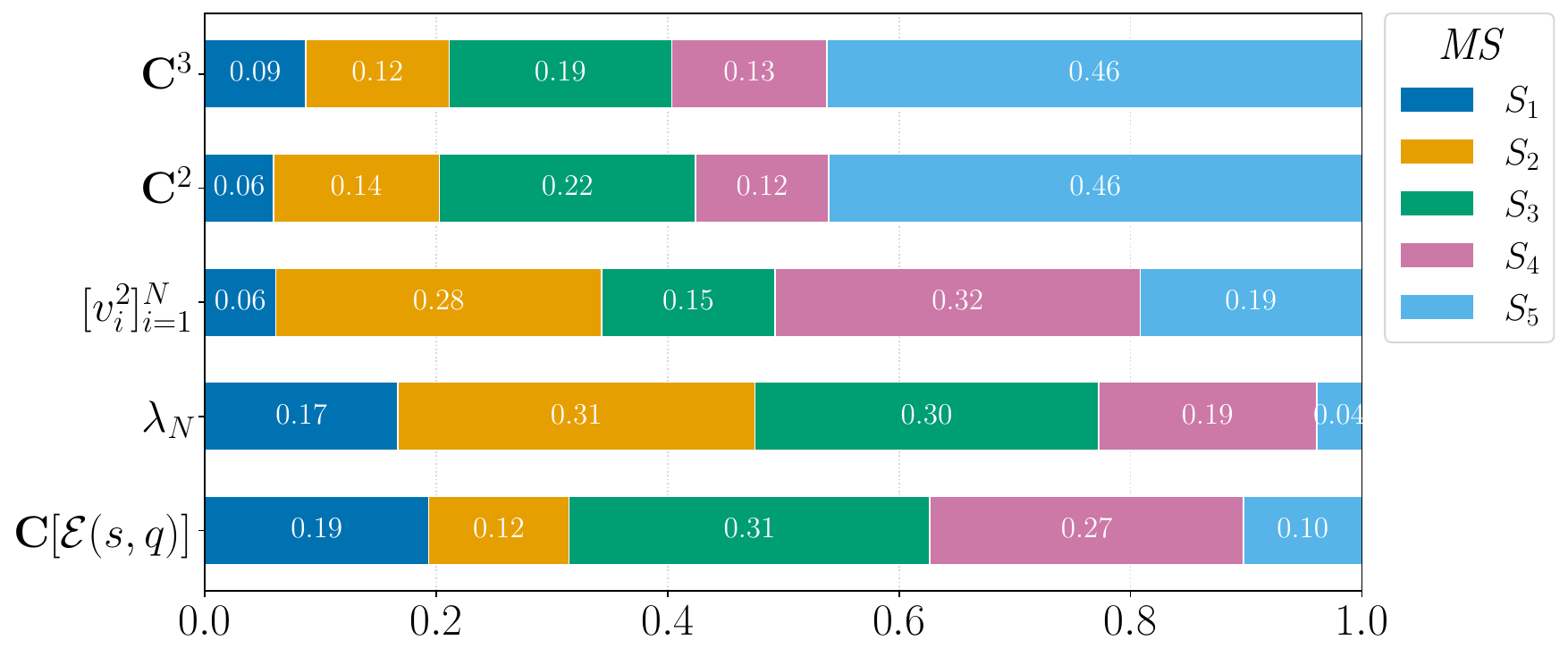}
  \caption[Stationary vectors of \emph{MS}, $q=40$, $k=5$]{Stationary vectors $\boldsymbol{\pi}$ of the transition matrices associated with the \emph{MS} for $q=40$ and $k=5$, obtained by means of \emph{clustering} from the quantities analyzed in the present work. Each entry of $\boldsymbol{\pi}$ is interpreted as the stationary probability of being in a market state. In the stationary vectors associated with $\mathbf{C}^{2}$ and $\mathbf{C}^{3}$, the \emph{COVID} episode is identified in \emph{cluster} 2, whereas for $[v_i^2]_{i=1}^{N}$ it appears in \emph{cluster} 3.}
  \label{fig:vectoresEstacionariosMS_q40_k5}
\end{figure}

\chapter*{Conclusions}
\phantomsection
\addcontentsline{toc}{chapter}{Conclusions}
This thesis focused on the analysis of the dynamics of the financial market from price fluctuations across different time windows. 
Throughout the research, dominant eigenvalues and eigenvectors were studied, as well as matrix constructions that preserved only the most relevant collective components. 
In addition, the square of the eigenvector components was analyzed with the purpose of interpreting them as a relative participation over the set of companies. This made it possible to characterize the collective behavior in each \emph{cluster}, since, during periods of crisis, the relative participation was distributed among a larger number of assets.

The $\mathbf{C}^{1}$ construction was insufficient to characterize the structure of the \emph{MS}, and was therefore discarded in the subsequent analyses. The results confirmed that the \emph{COVID} State\index{State!COVID} emerged as an \textbf{atypical state} in the correlation matrices; however, reconstructing it from spectral components required more than one principal component, since a single one is not sufficient to characterize it.

By contrast, the $\mathbf{C}^{2}$ and $\mathbf{C}^{3}$ constructions made it possible to reproduce the \emph{COVID} State, particularly for $q=40$ and $k=5$. On the other hand, when evaluating the stability by means of \emph{IEM}, the configuration $q=20$ and $k=4$ presented the greatest consensus within these same constructions. Within this framework, it was also observed that \emph{cluster} 1, corresponding to the 2017--2018 period, contributes to the description of the dynamics, although it does not manifest itself as an isolated state.

The exploration of unsupervised clustering techniques other than \textit{k}-Means\index{K-Means} remains open, since the stationary vectors derived from the transition matrices suggest limitations of the \emph{clustering} procedure when the representation of the system is based solely on a subset of components associated with the correlation matrices. In particular, a tendency was observed to decrease the population of the lower-index \emph{clusters} and to concentrate observations in those of higher index. In addition, the method is sensitive to the initial conditions and to the composition of the set of companies considered, which varies throughout the time horizon.

One line of methodological development consists in incorporating measures of informational complexity from Information Theory, such as the \textbf{Shannon entropy}\index{Entropy!Shannon}. 
In particular, the base-2 entropy provides a measure that makes it possible to compare configurations ranging from nearly uniform distributions to scenarios dominated by a reduced subset of assets.

The analysis developed in this thesis is not predictive; however, it has the potential to be applied in the design of investment portfolios and in \emph{trading} strategies, as discussed by Pharasi et al.~\cite{pharasiDynamicsMarketStates2024}. The objectives of this work focused on characterizing the collective dynamics of a financial market from the framework of complex systems and on identifying, from the correlation structure, a representative regime associated with the \emph{COVID} State, providing conceptual and methodological tools to understand its evolution more clearly.

Finally, once the model is improved and the methods proposed here are consolidated, a natural step will be to extend the comparative analysis to other international exchanges, such as the \textbf{Tokyo Stock Exchange}, \textbf{Euronext Paris}, \textbf{Frankfurter Wertpapierb\"orse}, the \textbf{Shanghai Stock Exchange}, and the \textbf{London Stock Exchange}, in order to assess the universality of the results and explore possible differences between markets.

\appendix
\formatoapendices

\chapter{Complementary Concepts}

\label{section:appendix1}

\section{Norms}
\label{appendix: Normas}


\subsection*{Euclidean Norm (\texorpdfstring{$\ell_2$}{l2})}\label{subsection:norma-euclidiana}

\noindent
For a vector \(\mathbf{v}\in\mathbb{R}^{n}\), the \textbf{Euclidean norm} is defined as:

\begin{equation}
  \|\mathbf{v}\|_{2}
  \;=\;
  \sqrt{\sum_{i=1}^{n} v_i^{2}}.
\label{eq:def-norma-euclidiana}
\end{equation}

\subsubsection*{Spectral Norm (Operator Norm)}\label{subsection:norma-espectral}
\index{Norm!spectral}
\noindent
Let \(A\) be a matrix with singular value decomposition (SVD)
\begin{equation}
  A \;=\; U\,\Sigma\,V^{\mathsf T},
\label{eq:matriz-espectral-svd}
\end{equation}
where:
\begin{equation}
  \Sigma \;=\;
  \begin{bmatrix}
    \sigma_{1} & 0          & \cdots & 0 \\
    0          & \sigma_{2} & \cdots & 0 \\
    \vdots     & \vdots     & \ddots & \vdots \\
    0          & 0          & \cdots & \sigma_{r}
  \end{bmatrix},
  \qquad
  \sigma_{1}\ge\sigma_{2}\ge\dots\ge\sigma_{r}>0.
  \label{eq:matriz-sigma}
\end{equation}
The \textbf{spectral norm} (or \emph{operator norm}) is denoted by \(\|A\|_{\mathrm{op}}\) and is defined as:
\begin{equation}
  \|A\|_{\mathrm{op}}
  \;=\;
  \sup_{\mathbf{v}\neq \mathbf{0}}
  \frac{\|A\mathbf{v}\|_{2}}{\|\mathbf{v}\|_{2}}
  \;=\;
  \sigma_{1}
  \;=\;
  \sqrt{\lambda_{\max}\!\big(A^{\mathsf T}A\big)}.
\label{eq:def-norma-espectral}
\end{equation}
In the case of symmetric matrices, \(\|A\|_{\mathrm{op}}=|\lambda_{\max}(A)|\).

\subsubsection*{Frobenius Norm}\label{subsection:norma-frobenius}
\index{Norm!Frobenius}
Taking into account the same matrix $A$ from~\ref{eq:matriz-espectral-svd}, the \textbf{Frobenius norm} is defined as:
\begin{equation}
  \|A\|_{F}
  \;=\;
  \sqrt{\sum_{i=1}^{r}\sigma_{i}^{2}}
  \;=\;
  \sqrt{\sum_{i=1}^{m}\sum_{j=1}^{n} a_{ij}^{2}}.
\label{eq:def-norma-frobenius}
\end{equation}
It is the Euclidean norm of the vector formed by all the entries of \(A\); it is invariant under orthogonal transformations.

\section{Eckart--Young--Mirsky Theorem}
\label{sec:cov_correl_espectral}
\index{Theorem!Eckart-Young-Mirsky}

\subsection{Singular Values}
Let \(A \in \mathbb{R}^{m \times n}\). \(A\) admits the \textbf{singular value decomposition} (SVD)\footnote{\emph{Singular Value Decomposition} (SVD): there exist orthogonal matrices \(U \in \mathbb{R}^{m \times m}\), \(V \in \mathbb{R}^{n \times n}\) and a (block) diagonal matrix \(\Sigma \in \mathbb{R}^{m \times n}\) such that \(A = U\,\Sigma\,V^{\mathsf T}\).}:
\begin{equation}
A \;=\; U\,\Sigma\,V^{\mathsf T},
\end{equation}
where
\begin{equation}
\Sigma \;=\;
\begin{bmatrix}
\sigma_{1} & 0          & \cdots & 0 \\
0          & \sigma_{2} & \cdots & 0 \\
\vdots     & \vdots     & \ddots & \vdots \\
0          & 0          & \cdots & \sigma_{r} \\
& & & & \mathbf{0}
\end{bmatrix},
\qquad
\sigma_{1}\ge\sigma_{2}\ge\dots\ge\sigma_{r}>0,
\qquad
r=\operatorname{rank}(A).
\end{equation}

The columns of \(U=(\mathbf{u}_1,\dots,\mathbf{u}_m)\) are called \textbf{left singular vectors}, while the columns of \(V=(\mathbf{v}_1,\dots,\mathbf{v}_n)\) are the \textbf{right singular vectors}. The scalars \(\sigma_i\) are the \textbf{singular values} of \(A\).

Unlike singular values, the \textbf{eigenvalues} \(\lambda\) are only defined for \emph{square matrices}. The fundamental relationship is:
\begin{equation}
A^{\mathsf T}A\,\mathbf{v}_i \;=\; \lambda_i\!\big(A^{\mathsf T}A\big)\,\mathbf{v}_i,
\qquad
\sigma_i \;=\; \sqrt{\,\lambda_i\!\big(A^{\mathsf T}A\big)\,}\;\ge 0,
\qquad i=1,\dots,r.
\end{equation}

Analogously,
\begin{equation}
AA^{\mathsf T}\,\mathbf{u}_i \;=\; \sigma_i^{2}\,\mathbf{u}_i,
\end{equation}
that is, the columns of \(U\) are eigenvectors of \(AA^{\mathsf T}\) with eigenvalues \(\sigma_i^{2}\).

\paragraph{Real and symmetric case.}
If \(M\in\mathbb{R}^{n\times n}\) is \textbf{symmetric} (\(M=M^{\mathsf T}\)), then \(M^{\mathsf T}M=M^{2}\) and
\(\lambda_i\!\big(M^{\mathsf T}M\big)=\lambda_i\!\big(M^{2}\big)=\lambda_i(M)^{2}\).
Therefore,
\[
\sigma_i(M)=\sqrt{\lambda_i\!\big(M^{\mathsf T}M\big)}=\sqrt{\lambda_i(M)^{2}}=|\lambda_i(M)|.
\]
If, in addition, \(M\succeq 0\) (as the correlation matrices used in this thesis), then \(\lambda_i(M)\ge 0\) and
\(\sigma_i(M)=\lambda_i(M)\), with \(\mathbf{u}_i=\mathbf{v}_i\) (the singular vectors coincide with the orthonormal eigenvectors of \(M\)).

\begin{theorem}[Eckart--Young--Mirsky]
Let \(A \in \mathbb{R}^{m \times n}\) be of rank \(N\), with singular values \(\sigma_1 \ge \cdots \ge \sigma_N > 0\).
For an integer \(r\) such that \(1 \le r < N\), define the rank-\(r\) SVD truncation
\[
A_r = \sum_{i=1}^{r} \sigma_i \mathbf{u}_i \mathbf{v}_i^{\mathsf{T}}.
\]
Then, for the spectral norm \(\|\cdot\|_2\) and the Frobenius norm \(\|\cdot\|_F\),
\[
\|A - A_r\| \le \|A - B\| \quad \text{for every matrix } B \text{ with } \operatorname{rank}(B) \le r,
\]
and, in particular,
\[
\|A - A_r\| = 
\begin{cases}
\sigma_{r+1}, & \text{if } \|\cdot\| = \|\cdot\|_2, \\
\left( \sum_{i=r+1}^{N} \sigma_i^2 \right)^{\frac{1}{2}}, & \text{if } \|\cdot\| = \|\cdot\|_F.
\end{cases}
\]
\end{theorem}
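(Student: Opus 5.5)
The plan is to start from the singular value decomposition $A=\sum_{i=1}^{N}\sigma_i\mathbf{u}_i\mathbf{v}_i^{\mathsf T}$ and compute the error of the truncation directly. Since $A-A_r=\sum_{i=r+1}^{N}\sigma_i\mathbf{u}_i\mathbf{v}_i^{\mathsf T}$ is itself written in SVD form, with orthonormal families $\{\mathbf{u}_i\}$ and $\{\mathbf{v}_i\}$, its singular values are exactly $\sigma_{r+1},\dots,\sigma_N$. By \eqref{eq:def-norma-espectral} and \eqref{eq:def-norma-frobenius}, this gives at once $\|A-A_r\|_2=\sigma_{r+1}$ and $\|A-A_r\|_F=(\sum_{i>r}\sigma_i^2)^{1/2}$. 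What remains is to show that no matrix $B$ with $\operatorname{rank}(B)\le r$ does better.

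For the spectral norm I will use a dimension-counting argument. Since $\operatorname{rank}(B)\le r$, the kernel of $B$ has dimension at least $n-r$. The span $W=\operatorname{span}\{\mathbf{v}_1,\dots,\mathbf{v}_{r+1}\}$ has dimension $r+1$, so $\ker B\cap W$ contains a unit vector $\mathbf{x}=\sum_{i\le r+1}c_i\mathbf{v}_i$. Then
\[
\|(A-B)\mathbf{x}\|_2^2=\|A\mathbf{x}\|_2^2=\sum_{i\le r+1}\sigma_i^2c_i^2\ge\sigma_{r+1}^2 ,
\]
which gives $\|A-B\|_2\ge\sigma_{r+1}$.

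For the Frobenius norm I will first establish Weyl's inequality for singular values: $\sigma_{i+j-1}(X+Y)\le\sigma_i(X)+\sigma_j(Y)$. This follows from the spectral case just proved, applied to rank-truncations of $X$ and $Y$. Concretely, $X_{i-1}+Y_{j-1}$ has rank at most $i+j-2$, so
\[
\sigma_{i+j-1}(X+Y)\le\|X+Y-X_{i-1}-Y_{j-1}\|_2\le\sigma_i(X)+\sigma_j(Y).
\]
Now take $X=A-B$ and $Y=B$, and set $j=r+1$. Since $\sigma_{r+1}(B)=0$, this yields $\sigma_{i+r}(A)\le\sigma_i(A-B)$ for every $i$. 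Squaring and summing over $i=1,\dots,N-r$ then gives $\|A-B\|_F^2\ge\sum_{i>r}\sigma_i(A)^2$, which is the required bound.

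The main obstacle is this Frobenius lower bound: a naive attempt to project onto a single subspace controls only one singular direction, not the whole tail sum. The key step is therefore the Weyl perturbation inequality, and I plan to derive it from the spectral-norm result so the argument stays self-contained. If that derivation proves awkward, my fallback is the Ky Fan / trace-maximization route: expand $\|A-B\|_F^2$, optimize over the column space of $B$ with an orthogonal projector $P$ of rank $r$, and bound $\operatorname{tr}(PA A^{\mathsf T})\le\sum_{i\le r}\sigma_i^2$.
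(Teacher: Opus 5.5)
Your spectral-norm step is exactly the paper's dimension count (a unit vector in $\ker B\cap\operatorname{span}\{\mathbf{v}_1,\dots,\mathbf{v}_{r+1}\}$). Your explicit computation of $\|A-A_r\|$ from the tail SVD spells out what the paper takes for granted.

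For the Frobenius bound you take a genuinely different route. The paper rotates to $\|\Sigma-C\|_F$ with $C=U^{\mathsf T}BV$, splits $\Sigma=\Sigma_r+\Sigma_\perp$, and asserts $\|\Sigma-C\|_F^2=\|\Sigma_r-C\|_F^2+\|\Sigma_\perp\|_F^2$. You instead derive Weyl's inequality from the spectral case and sum the bounds $\sigma_{i+r}(A)\le\sigma_i(A-B)$.

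Your route is the one that actually works. The paper's identity drops the cross term $-2\langle C,\Sigma_\perp\rangle_F$, which is nonzero in general. For example, with $r=1$, $\Sigma=\operatorname{diag}(\sigma_1,\sigma_2)$ and the rank-one $C=\operatorname{diag}(0,\sigma_2)$, the two sides are $\sigma_1^2$ and $\sigma_1^2+2\sigma_2^2$. Moreover, the paper's argument never invokes $\operatorname{rank}C\le r$, and without that hypothesis $C=\Sigma$ would give zero error. So some rank-sensitive input, such as your Weyl step or your Ky Fan fallback, is indispensable there.

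Your approach also yields more than the statement asks. Since $\sigma_i(A-A_r)=\sigma_{i+r}(A)\le\sigma_i(A-B)$ for every $i$, you get optimality of $A_r$ in every unitarily invariant norm (Mirsky's form), not only in $\|\cdot\|_2$ and $\|\cdot\|_F$.

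One piece of bookkeeping to add: in the Weyl step the spectral bound is applied with rank budget $k=i+j-2$, which may be $0$ or at least $\operatorname{rank}(X+Y)$. So state it in the general form $\|M-B'\|_2\ge\sigma_{k+1}(M)$ whenever $\operatorname{rank}B'\le k$, with the conventions $\sigma_\ell(M)=0$ for $\ell>\operatorname{rank}M$ and $X_\ell=X$ for $\ell\ge\operatorname{rank}X$. These extra cases are either trivial or handled by the same dimension count.
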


\begin{proof}
A proof adapted from \cite{wilkinson35LowrankApproximation} is presented below.

\vspace{0.5em}
\noindent
\textbf{i) Case of the spectral norm \(\|\cdot\|_2\):}

Let \(B \in \mathbb{R}^{n \times p}\) be of rank \(r\). By the rank--nullity theorem, the dimension of the null space \(N(B) \subseteq \mathbb{R}^p\) is \(\dim[N(B)] = p - r\). Consider the matrix
\[
V_{r+1} = [\mathbf{v}_1 \ldots \mathbf{v}_{r+1}] \in \mathbb{R}^{p \times (r+1)},
\]
which has rank \(r + 1\) and column space \(\mathcal{C}(V_{r+1}) \subseteq \mathbb{R}^p\). Since
\[
\dim[N(B)] + \dim[\mathcal{C}(V_{r+1})] = (p - r) + (r + 1) = p + 1,
\]
these subspaces cannot be disjoint. We then select
\[
\mathbf{w} \in N(B) \cap \mathcal{C}(V_{r+1}), \quad \|\mathbf{w}\|_2 = 1.
\]

From the definition of the spectral norm we obtain
\[
\|A - B\|_2^2 \geq \|(A - B)\mathbf{w}\|_2^2 = \|A\mathbf{w}\|_2^2,
\]
where the equality follows from the fact that \(\mathbf{w} \in N(B)\), implying \(B\mathbf{w} = 0\) and therefore \((A - B)\mathbf{w} = A\mathbf{w}\).

Applying the singular value decomposition \(A = U\Sigma V^\top\) and using the orthogonality of \(V\), we conclude that
\[
\begin{split}
\|A - B\|_2^2 &= \|A\mathbf{w}\|_2^2 = \mathbf{w}^\top V \Sigma^2 V^\top \mathbf{w} \\
&= \sum_{i=1}^{r+1} \sigma_i^2 w_i^2 \geq \sigma_{r+1}^2 \sum_{i=1}^{r+1} w_i^2 \\
&= \sigma_{r+1}^2 = \|A - A_r\|_2^2.
\end{split}
\]
This proves the desired inequality for the spectral norm.

\vspace{0.5em}

\noindent
\textbf{ii) Case of the Frobenius norm \(\|\cdot\|_F\):}

Consider the SVD of \(A = U\Sigma V^\top\), where \(\Sigma\) is diagonal with singular values \(\sigma_1 \geq \sigma_2 \geq \cdots \geq \sigma_N \geq 0\). Let \(B\) be a matrix of rank \(r\), and define
\[
B = U C V^\top, \quad C = U^\top B V,
\]
so that \(\operatorname{rank}(C) = r\). Owing to the invariance of the Frobenius norm under orthogonal transformations,
\[
\|A - B\|_F = \|\Sigma - C\|_F.
\]

We define the block-diagonal matrices:
\[
\Sigma_r = 
\begin{pmatrix}
\sigma_1 & & & \\
& \ddots & & \\
& & \sigma_r & \\
& & & 0 \\
& & & & \ddots
\end{pmatrix},
\quad
\Sigma_\perp = 
\begin{pmatrix}
0 & & & \\
& \ddots & & \\
& & 0 & \\
& & & \sigma_{r+1} \\
& & & & \ddots
\end{pmatrix},
\]
so that \(\Sigma = \Sigma_r + \Sigma_\perp\) and \(\operatorname{rank}(\Sigma_r) = r\).

Expanding the squared norm:
\[
\begin{split}
\|\Sigma - C\|_F^2 &= \langle \Sigma_r + \Sigma_\perp - C, \Sigma_r + \Sigma_\perp - C \rangle_F \\
&= \underbrace{\|\Sigma_r - C\|_F^2}_{\geq 0} + \underbrace{\|\Sigma_\perp\|_F^2}_{\text{independent of } C},
\end{split}
\]
where we have used the orthogonality \(\langle \Sigma_r, \Sigma_\perp \rangle_F = 0\).

Therefore,
\[
\|\Sigma_r - C\|_F^2 \geq \|\Sigma_\perp\|_F^2,
\]
with equality if and only if \(C = \Sigma_r\). Taking square roots and returning to the original variable \(A\), we conclude that
\[
\|A - B\|_F \geq \|A - A_r\|_F = \left( \sum_{i=r+1}^{N} \sigma_i^2 \right)^{1/2},
\]
which proves the best rank-\(r\) approximation property for the Frobenius norm.

\end{proof}

\subsection{Effective Rank of the Correlation Matrices}
\index{Rank!correlation matrices}
\label{subsec:Rango efectivo corr}
The temporal centering of the returns within a window $\mathcal{E}(s,q)$ determines fundamental algebraic properties of the resulting matrices, particularly in their \textbf{trace} and \textbf{rank}.

Let $\mathbf{C}\big[\mathcal{E}(s,q)\big]$
be the correlation matrix for the window $\mathcal{E}(s,q)$ defined in \eqref{eq:submatriz-q}; the centering operator is defined as
\begin{equation}
    H_q = I_q - \frac{1}{q}\,\mathbf{1}_q \mathbf{1}_q^{\top},
    \label{eq:def-Hq}
\end{equation}
where $I_q$ is the identity matrix of size $q\times q$ and 
\[
\mathbf{1}_q \mathbf{1}_q^{\top} \;=\;
\begin{pmatrix}
1 & \cdots & 1 \\
\vdots & \ddots & \vdots \\
1 & \cdots & 1
\end{pmatrix}\in \mathbb{R}^{q\times q}.
\]

\paragraph{Centered matrix.}
The centered version of the data in the window $\mathcal{E}(s,q)$ is:
\begin{equation}
    \mathbf{Z}_{\mathcal{E}(s,q)} = H_q \,\mathbf{C}\big[\mathcal{E}(s,q)\big] \;\in\; \mathbb{R}^{q\times N}.
    \label{eq:Z-centrada}
\end{equation}
The term $\tfrac{1}{q}\,\mathbf{1}_q \mathbf{1}_q^{\top}$ computes the temporal average in each column, and the action of $H_q$ corresponds to subtracting that average.

\paragraph{Idempotency of $H_q$.}
To verify idempotency, note that
\[
(\mathbf{1}_q \mathbf{1}_q^{\top})(\mathbf{1}_q \mathbf{1}_q^{\top}) = q\,\mathbf{1}_q \mathbf{1}_q^{\top}.
\]
In this way,
\[
H_q^2 = \Bigl(I_q - \tfrac{1}{q}\mathbf{1}_q \mathbf{1}_q^{\top}\Bigr)^2
= I_q - \tfrac{2}{q}\mathbf{1}_q \mathbf{1}_q^{\top} + \tfrac{1}{q^2}\,q\,\mathbf{1}_q \mathbf{1}_q^{\top}
= I_q - \tfrac{1}{q}\mathbf{1}_q \mathbf{1}_q^{\top} = H_q.
\]
Thus, $H_q$ is idempotent and acts as an orthogonal projector.

\paragraph{Trace and rank of $H_q$.}
The trace is computed directly as
\begin{equation}
\mathrm{tr}(H_q) = \mathrm{tr}(I_q) - \tfrac{1}{q}\,\mathrm{tr}(\mathbf{1}_q \mathbf{1}_q^{\top})
= q - \tfrac{1}{q}\,q = q-1.
\end{equation}
Therefore, $H_q$ has $q-1$ eigenvalues equal to $1$ and one equal to $0$, which implies
\begin{equation}
\mathrm{rank}(H_q) = q-1, 
\qquad 
\mathrm{null}(H_q) = 1.
\end{equation}

Since $\mathbf{Z}_{\mathcal{E}(s,q)}$ is the product of $H_q$ and $\mathbf{C}\big[\mathcal{E}(s,q)$, the bound holds
\begin{equation}
\mathrm{rank}\bigl(\mathbf{Z}_{\mathcal{E}(s,q)}\bigr) \;\le\; \min\{\,q-1,\,N\,\}.
\label{eq:cota-rango-Z}
\end{equation}

\paragraph{Invariant trace and eigenvalues.}
For a symmetric matrix $A \in \mathbb{R}^{N\times N}$ diagonalizable as $A=V\Lambda V^\top$, it holds that
\begin{equation}
\mathrm{tr}(A) 
= \mathrm{tr}(V\Lambda V^{\top}) 
= \mathrm{tr}(V^{\top}V\Lambda) 
= \mathrm{tr}(\Lambda) 
= \sum_{j=1}^{N}\lambda_j,
\label{eq:traza-espectro}
\end{equation}

Thus, the trace is invariant and is the sum of the eigenvalues. In particular, for the correlation matrices 
$
\mathbf{C}\big[\mathcal{E}(s,q)\big],
$
the bound \eqref{eq:cota-rango-Z} establishes that, in each window, at most $q-1$ eigenvalues can be strictly positive.

\section{Mar\v{c}enko-Pastur Distribution}
\index{Mar\v{c}enko-Pastur!distribution}
\label{sec: Marcenko-Pastur}
In the analysis of financial series, the correlation matrices are constructed from time windows of length $q$ with $N$ series of standardized returns. When the series present no real correlations, that is, when $\mathbb{E}[r_i r_j]=0$ for $i\neq j$ and $\mathbb{E}[r_i^2]=1$, the behavior of the eigenvalue spectrum is described by the Wishart model \cite{wishartGeneralisedProductMoment1928}. This result establishes that, in the high-dimensional regime where the ratio $Q=q/N$ is kept constant, the distribution of the eigenvalues converges to what was later known as the \emph{Mar\v{c}enko--Pastur Law} \cite{marcenkoDistributionEigenvaluesSets1967}.

The theoretical eigenvalue distribution is concentrated between the values
\begin{equation}
\lambda_{\pm}\;=\;\bigl(1\pm Q^{-1/2}\bigr)^{2},
\qquad Q>1,
\label{eq:mp-bordes}
\end{equation}
and within this interval $[\lambda_{-},\lambda_{+}]$, the density follows the expression
\begin{equation}
\rho_{\mathrm{MP}}(\lambda)\;=\;\frac{Q}{2\pi}\,
\frac{\sqrt{(\lambda_{+}-\lambda)(\lambda-\lambda_{-})}}{\lambda},
\qquad \lambda\in[\lambda_{-},\lambda_{+}],
\label{eq:mp-densidad}
\end{equation}
whereas for $Q\le 1$ a point peak appears at zero due to the singularity of the sample matrix \cite{marcenkoDistributionEigenvaluesSets1967}.

This theoretical delimitation makes it possible to separate the noise from the signal: the majority of the eigenvalues must remain within $[\lambda_{-},\lambda_{+}]$, whereas those exceeding these limits are interpreted as non-random structure. However, when working with short time series where $N \gg T$, we face the problem of singular correlation matrices that possess only $T-1$ nonzero eigenvalues, which limits the statistical analysis \cite{vyasMultivariateAnalysisShort2018}. Vyas et al. propose constructing ensembles of non-singular matrices by means of random selections of subsets of the $N$ available time series. This technique makes it possible to obtain smooth eigenvalue distributions and to adequately characterize both the central region of the spectrum and the outliers, even in highly singular regimes where conventional analysis would be statistically insufficient \cite{vyasMultivariateAnalysisShort2018}.
\chapter{Complementary Tables}

\label{section:appendix2}

\newcommand{\cabecera}[1]{\rule[-7pt]{0pt}{22pt}\color{white}\textbf{#1}}

\section{Sectors}
\label{appendix:sectores}
The financial sectors of the S\&P 500 used are shown below:
\index{Sectors!GICS}
\begin{table}[ht]
  \centering
  \small
  \begin{tabular}{|c|c|>{\centering\arraybackslash}m{8cm}|}
    \hline
    \rowcolor{teal!80}
    \cabecera{GICS} & 
    \cabecera{Sector} & 
    \cabecera{Brief Description} \\
    \hline
    CD & Consumer Discretionary & Non-essential consumer goods and services (retail, leisure, automotive). \\
    \hline
    CS & Consumer Staples       & Basic consumer goods and essential products (food, beverages, hygiene). \\
    \hline
    EG & Energy                 & Exploration, production, and services of oil and gas; traditional and transition energy. \\
    \hline
    FN & Financials             & Banks, insurers, exchanges, and diversified financial services. \\
    \hline
    HC & Health Care            & Health services, pharmaceuticals, biotechnology, and medical equipment. \\
    \hline
    ID & Industrials            & Manufacturing, transportation, aerospace, and industrial machinery. \\
    \hline
    IT & Information Technology & Software, hardware, semiconductors, and IT services. \\
    \hline
    MT & Materials              & Chemicals, metals, mining, paper, and construction materials. \\
    \hline
    TC & Telecommunication Services & Fixed and mobile telecommunication services, networks, and operators. \\
    \hline
    UT & Utilities              & Public electricity, gas, and water services for the population and industries. \\
    \hline
  \end{tabular}
  \caption[GICS sectors of the S\&P 500]{Sectors, GICS codes, and brief description used in the analysis of the S\&P 500.}
  \label{tab:sectores-gics-sp500}
\end{table}
\newpage
\section{S\&P 500 Companies Used}

\index{Companies!S\&P 500}

The list of companies analyzed is presented below, with their abbreviation (ticker), industry, and the corresponding GICS sector, ordered according to the sectors defined in Table \ref{tab:sectores-gics-sp500}. No companies from the \emph{TC} sector are used in this thesis.

\begingroup
\scriptsize 

\setlength{\tabcolsep}{4pt}          
\renewcommand{\arraystretch}{0.9}    

\setlength{\LTleft}{0pt}
\setlength{\LTright}{0pt}
\begin{longtable}{|p{0.06\textwidth}|p{0.32\textwidth}|p{0.08\textwidth}|p{0.46\textwidth}|}
\hline
\rowcolor{teal!80}
\cabecera{GICS} & \cabecera{Industry} & \cabecera{Ticker} & \cabecera{Company} \\
\hline
\endfirsthead

\hline
\rowcolor{teal!80}
\cabecera{GICS} & \cabecera{Industry} & \cabecera{Ticker} & \cabecera{Company} \\
\hline
\endhead

\hline
\multicolumn{4}{r}{\small\textit{Continued on the next page}}\\
\hline
\endfoot

\hline
\noalign{\vskip 12pt}
\caption[List of S\&P 500 companies used]{List of companies used with their abbreviated GICS sector, Industry, \textit{ticker}, and Company.}
\label{tab:companies-sp500}
\endlastfoot

CD & Internet Retail & AMZN & Amazon.com, Inc. \\
CD & Automotive Parts \& Equipment & APTV & Aptiv PLC \\
CD & Specialty Retail & AZO & AutoZone, Inc. \\
CD & Specialty Retail & BBWI & Bath \& Body Works, Inc. \\
CD & Specialty Retail & BBY & Best Buy Co., Inc. \\
CD & Travel Services & BKNG & Booking Holdings Inc. \\
CD & Auto Parts \& Equipment & BWA & BorgWarner Inc. \\
CD & Travel Services & CCL & Carnival Corporation \& plc \\
CD & Restaurants & CMG & Chipotle Mexican Grill, Inc. \\
CD & Residential Construction & DHI & D.R. Horton, Inc. \\
CD & Discount Stores & DLTR & Dollar Tree, Inc. \\
CD & Restaurants & DPZ & Domino's Pizza, Inc. \\
CD & Restaurants & DRI & Darden Restaurants, Inc. \\
CD & Travel Services & EXPE & Expedia Group, Inc. \\
CD & Auto Manufacturers & F & Ford Motor Company \\
CD & Auto Manufacturers & GM & General Motors Company \\
CD & Specialty Retail & GPC & Genuine Parts Company \\
CD & Scientific \& Technical Instruments & GRMN & Garmin Ltd. \\
CD & Leisure & HAS & Hasbro, Inc. \\
CD & Home Improvement Retail & HD & The Home Depot, Inc. \\
CD & Auto \& Truck Dealerships & KMX & CarMax, Inc. \\
CD & Residential Construction & LEN & Lennar Corporation \\
CD & Auto Parts & LKQ & LKQ Corporation \\
CD & Home Improvement Retail & LOW & Lowe's Companies, Inc. \\
CD & Resorts \& Casinos & LVS & Las Vegas Sands Corp. \\
CD & Lodging & MAR & Marriott International, Inc. \\
CD & Restaurants & MCD & McDonald's Corporation \\
CD & Resorts \& Casinos & MGM & MGM Resorts International \\
CD & Furnishings, Fixtures \& Appliances & MHK & Mohawk Industries, Inc. \\
CD & Footwear \& Accessories & NKE & NIKE, Inc. \\
CD & Residential Construction & NVR & NVR, Inc. \\
CD & Household \& Personal Products & NWL & Newell Brands Inc. \\
CD & Specialty Retail & ORLY & O'Reilly Automotive, Inc. \\
CD & Residential Construction & PHM & PulteGroup, Inc. \\
CD & Distributors & POOL & Pool Corporation \\
CD & Travel Services & RCL & Royal Caribbean Group \\
CD & Apparel Manufacturing & RL & Ralph Lauren Corporation \\
CD & Apparel Retail & ROST & Ross Stores, Inc. \\
CD & Restaurants & SBUX & Starbucks Corporation \\
CD & Tools \& Accessories & SWK & Stanley Black \& Decker, Inc. \\
CD & Discount Stores & TGT & Target Corporation \\
CD & Apparel Retail & TJX & The TJX Companies, Inc. \\
CD & Luxury Goods & TPR & Tapestry, Inc. \\
CD & Specialty Retail & TSCO & Tractor Supply Company \\
CD & Auto Manufacturers & TSLA & Tesla, Inc. \\
CD & Specialty Retail & ULTA & Ulta Beauty, Inc. \\
CD & Apparel Manufacturing & VFC & V.F. Corporation \\
CD & Furnishings, Fixtures \& Appliances & WHR & Whirlpool Corporation \\
CD & Resorts \& Casinos & WYNN & Wynn Resorts, Limited \\
CD & Restaurants & YUM & Yum! Brands, Inc. \\
CS & Farm Products & ADM & Archer-Daniels-Midland Company \\
CS & Agricultural Products \& Services & BG & Bunge Global S.A. \\
CS & Packaged Foods & CAG & Conagra Brands, Inc. \\
CS & Household \& Personal Products & CHD & Church \& Dwight Co., Inc. \\
CS & Household \& Personal Products & CL & Colgate-Palmolive Company \\
CS & Household \& Personal Products & CLX & The Clorox Company \\
CS & Discount Stores & COST & Costco Wholesale Corporation \\
CS & Packaged Foods & CPB & Campbell Soup Company \\
CS & Healthcare Plans & CVS & CVS Health Corporation \\
CS & Consumer Staples Merchandise Retail & DG & Dollar General Corporation \\
CS & Household \& Personal Products & EL & The Est\'ee Lauder Companies Inc. \\
CS & Packaged Foods & GIS & General Mills, Inc. \\
CS & Packaged Foods & HRL & Hormel Foods Corporation \\
CS & Confectioners & HSY & The Hershey Company \\
CS & Packaged Foods & K & Kellogg Company \\
CS & Beverages—Non-Alcoholic & KDP & Keurig Dr Pepper Inc. \\
CS & Household \& Personal Products & KMB & Kimberly-Clark Corporation \\
CS & Beverages—Non-Alcoholic & KO & The Coca-Cola Company \\
CS & Grocery Stores & KR & The Kroger Co. \\
CS & Confectioners & MDLZ & Mondelez International, Inc. \\
CS & Packaged Foods & MKC & McCormick \& Company, Incorporated \\
CS & Beverages—Non-Alcoholic & MNST & Monster Beverage Corporation \\
CS & Tobacco & MO & Altria Group, Inc. \\
CS & Beverages—Non-Alcoholic & PEP & PepsiCo, Inc. \\
CS & Household \& Personal Products & PG & The Procter \& Gamble Company \\
CS & Tobacco & PM & Philip Morris International Inc. \\
CS & Packaged Foods & SJM & The J. M. Smucker Company \\
CS & Beverages—Wineries \& Distilleries & STZ & Constellation Brands, Inc. \\
CS & Food Distribution & SYY & Sysco Corporation \\
CS & Beverages—Brewers & TAP & Molson Coors Beverage Company \\
CS & Farm Products & TSN & Tyson Foods, Inc. \\
CS & Pharmaceutical Retailers & WBA & Walgreens Boots Alliance, Inc. \\
CS & Discount Stores & WMT & Walmart Inc. \\
EG & Oil \& Gas E\&P & APA & APA Corporation \\
EG & Oil \& Gas Equipment \& Services & BKR & Baker Hughes Company \\
EG & Oil \& Gas E\&P & COP & ConocoPhillips \\
EG & Oil \& Gas Exploration \& Production & CTRA & Coterra Energy Inc. \\
EG & Oil \& Gas Integrated & CVX & Chevron Corporation \\
EG & Oil \& Gas E\&P & DVN & Devon Energy Corporation \\
EG & Oil \& Gas E\&P & EOG & EOG Resources, Inc. \\
EG & Oil \& Gas E\&P & EQT & EQT Corporation \\
EG & Oil \& Gas Equipment \& Services & HAL & Halliburton Company \\
EG & Oil \& Gas E\&P & HES & Hess Corporation \\
EG & Oil \& Gas Midstream & KMI & Kinder Morgan, Inc. \\
EG & Oil \& Gas Refining \& Marketing & MPC & Marathon Petroleum Corporation \\
EG & Oil \& Gas E\&P & MRO & Marathon Oil Corporation \\
EG & Oil \& Gas Midstream & OKE & ONEOK, Inc. \\
EG & Oil \& Gas E\&P & OXY & Occidental Petroleum Corporation \\
EG & Oil \& Gas Equipment \& Services & SLB & Schlumberger Limited \\
EG & Oil \& Gas Midstream & TRGP & Targa Resources Corp. \\
EG & Oil \& Gas Refining \& Marketing & VLO & Valero Energy Corporation \\
EG & Oil \& Gas Midstream & WMB & The Williams Companies, Inc. \\
EG & Oil \& Gas Integrated & XOM & Exxon Mobil Corporation \\
FN & Property \& Casualty Insurance & ACGL & Arch Capital Group \\
FN & Insurance—Life & AFL & Aflac Incorporated \\
FN & Insurance—Diversified & AIG & American International Group, Inc. \\
FN & Insurance—Specialty & AIZ & Assurant, Inc. \\
FN & Insurance Brokers & AJG & Arthur J. Gallagher \& Co. \\
FN & Property \& Casualty Insurance & ALL & The Allstate Corporation \\
FN & Asset Management & AMP & Ameriprise Financial, Inc. \\
FN & Telecom Tower REITs & AMT & American Tower Corporation \\
FN & Insurance Brokers & AON & Aon plc \\
FN & Office REITs & ARE & Alexandria Real Estate Equities, Inc. \\
FN & Multi-Family Residential REITs & AVB & AvalonBay Communities, Inc. \\
FN & Credit Services & AXP & American Express Company \\
FN & Banks—Diversified & BAC & Bank of America Corporation \\
FN & Asset Management & BEN & Franklin Resources, Inc. \\
FN & Asset Management & BK & The Bank of New York Mellon Corporation \\
FN & Asset Management & BLK & BlackRock, Inc. \\
FN & Insurance Brokers & BRO & Brown \& Brown, Inc. \\
FN & Office REITs & BXP & BXP, Inc. \\
FN & Banks—Diversified & C & Citigroup Inc. \\
FN & Property \& Casualty Insurance & CB & Chubb Limited \\
FN & Financial Exchanges \& Data & CBOE & Cboe Global Markets, Inc. \\
FN & Real Estate Services & CBRE & CBRE Group, Inc. \\
FN & Telecom Tower REITs & CCI & Crown Castle Inc. \\
FN & Insurance—Property \& Casualty & CINF & Cincinnati Financial Corporation \\
FN & Banks—Regional & CMA & Comerica Incorporated \\
FN & Financial Data \& Stock Exchanges & CME & CME Group Inc. \\
FN & Consumer Finance & COF & Capital One Financial Corporation \\
FN & Multi-Family Residential REITs & CPT & Camden Property Trust \\
FN & Real Estate Services & CSGP & CoStar Group, Inc. \\
FN & Consumer Finance & DFS & Discover Financial Services \\
FN & Data Center REITs & DLR & Digital Realty Trust, Inc. \\
FN & Reinsurance & EG & Everest Group, Ltd. \\
FN & Data Center REITs & EQIX & Equinix, Inc. \\
FN & Multi-Family Residential REITs & EQR & Equity Residential \\
FN & Multi-Family Residential REITs & ESS & Essex Property Trust, Inc. \\
FN & Self-Storage REITs & EXR & Extra Space Storage Inc. \\
FN & Financial Exchanges \& Data & FDS & FactSet Research Systems Inc. \\
FN & Banks—Regional & FITB & Fifth Third Bancorp \\
FN & REIT—Retail & FRT & Federal Realty Investment Trust \\
FN & Insurance—Life & GL & Globe Life Inc. \\
FN & Capital Markets & GS & The Goldman Sachs Group, Inc. \\
FN & Banks—Regional & HBAN & Huntington Bancshares Incorporated \\
FN & Insurance—Diversified & HIG & The Hartford Financial Services Group, Inc. \\
FN & Hotel \& Resort REITs & HST & Host Hotels \& Resorts, Inc. \\
FN & Financial Data \& Stock Exchanges & ICE & Intercontinental Exchange, Inc. \\
FN & Specialized REITs & IRM & Iron Mountain Incorporated \\
FN & Asset Management & IVZ & Invesco Ltd. \\
FN & Banks—Diversified & JPM & JPMorgan Chase \& Co. \\
FN & Banks—Regional & KEY & KeyCorp \\
FN & REIT—Retail & KIM & Kimco Realty Corporation \\
FN & Insurance—Property \& Casualty & L & Loews Corporation \\
FN & Insurance—Life & LNC & Lincoln National Corporation \\
FN & Credit Services & MA & Mastercard Incorporated \\
FN & Multi-Family Residential REITs & MAA & Mid-America Apartment Communities, Inc. \\
FN & Financial Data \& Stock Exchanges & MCO & Moody's Corporation \\
FN & Insurance—Life & MET & MetLife, Inc. \\
FN & Financial Exchanges \& Data & MKTX & MarketAxess Holdings Inc. \\
FN & Insurance Brokers & MMC & Marsh \& McLennan Companies, Inc. \\
FN & Capital Markets & MS & Morgan Stanley \\
FN & Financial Exchanges \& Data & MSCI & MSCI Inc. \\
FN & Banks—Regional & MTB & M\&T Bank Corporation \\
FN & Financial Data \& Stock Exchanges & NDAQ & Nasdaq, Inc. \\
FN & Asset Management & NTRS & Northern Trust Corporation \\
FN & REIT—Retail & O & Realty Income Corporation \\
FN & Asset Management & PFG & Principal Financial Group, Inc. \\
FN & Insurance—Property \& Casualty & PGR & The Progressive Corporation \\
FN & Industrial REITs & PLD & Prologis, Inc. \\
FN & Banks—Regional & PNC & The PNC Financial Services Group, Inc. \\
FN & Insurance—Life & PRU & Prudential Financial, Inc. \\
FN & Self-Storage REITs & PSA & Public Storage \\
FN & REIT—Retail & REG & Regency Centers Corporation \\
FN & Banks—Regional & RF & Regions Financial Corporation \\
FN & Capital Markets & RJF & Raymond James Financial, Inc. \\
FN & Telecom Tower REITs & SBAC & SBA Communications Corporation \\
FN & Capital Markets & SCHW & The Charles Schwab Corporation \\
FN & REIT—Retail & SPG & Simon Property Group, Inc. \\
FN & Financial Data \& Stock Exchanges & SPGI & S\&P Global Inc. \\
FN & Asset Management & STT & State Street Corporation \\
FN & Banks—Regional & TFC & Truist Financial Corporation \\
FN & Asset Management & TROW & T. Rowe Price Group, Inc. \\
FN & Insurance—Property \& Casualty & TRV & The Travelers Companies, Inc. \\
FN & Multi-Family Residential REITs & UDR & UDR, Inc. \\
FN & Banks—Regional & USB & U.S. Bancorp \\
FN & Credit Services & V & Visa Inc. \\
FN & Health Care REITs & VTR & Ventas, Inc. \\
FN & Health Care REITs & WELL & Welltower Inc. \\
FN & Banks—Diversified & WFC & Wells Fargo \& Company \\
FN & Insurance—Property \& Casualty & WRB & W. R. Berkley Corporation \\
FN & Insurance Brokers & WTW & Willis Towers Watson Public Limited Company \\
FN & Specialized REITs & WY & Weyerhaeuser Company \\
FN & Banks—Regional & ZION & Zions Bancorporation, National Association \\
HC & Diagnostics \& Research & A & Agilent Technologies, Inc. \\
HC & Medical Devices & ABT & Abbott Laboratories \\
HC & Medical Devices & ALGN & Align Technology, Inc. \\
HC & Drug Manufacturers—General & AMGN & Amgen Inc. \\
HC & Medical Instruments \& Supplies & BAX & Baxter International Inc. \\
HC & Medical Instruments \& Supplies & BDX & Becton, Dickinson and Company \\
HC & Drug Manufacturers—General & BIIB & Biogen Inc. \\
HC & Life Sciences Tools \& Services & BIO & Bio-Rad Laboratories, Inc. \\
HC & Drug Manufacturers—General & BMY & Bristol-Myers Squibb Company \\
HC & Medical Devices & BSX & Boston Scientific Corporation \\
HC & Health Care Distributors & CAH & Cardinal Health, Inc. \\
HC & Healthcare Plans & CI & The Cigna Group \\
HC & Healthcare Plans & CNC & Centene Corporation \\
HC & Medical Instruments \& Supplies & COO & The Cooper Companies, Inc. \\
HC & Life Sciences Tools \& Services & CRL & Charles River Laboratories International, Inc. \\
HC & Diagnostics \& Research & DGX & Quest Diagnostics Incorporated \\
HC & Life Sciences Tools \& Services & DHR & Danaher Corporation \\
HC & Medical Care Facilities & DVA & DaVita Inc. \\
HC & Medical Devices & DXCM & DexCom, Inc. \\
HC & Healthcare Plans & ELV & Elevance Health, Inc. \\
HC & Medical Devices & EW & Edwards Lifesciences Corporation \\
HC & Drug Manufacturers—General & GILD & Gilead Sciences, Inc. \\
HC & Health Care Facilities & HCA & HCA Healthcare, Inc. \\
HC & Medical Instruments \& Supplies & HOLX & Hologic, Inc. \\
HC & Medical Distribution & HSIC & Henry Schein, Inc. \\
HC & Healthcare Plans & HUM & Humana Inc. \\
HC & Diagnostics \& Research & IDXX & IDEXX Laboratories, Inc. \\
HC & Diagnostics \& Research & ILMN & Illumina, Inc. \\
HC & Biotechnology & INCY & Incyte Corporation \\
HC & Medical Instruments \& Supplies & ISRG & Intuitive Surgical, Inc. \\
HC & Drug Manufacturers—General & JNJ & Johnson \& Johnson \\
HC & Diagnostics \& Research & LH & Laboratory Corporation of America Holdings \\
HC & Drug Manufacturers—General & LLY & Eli Lilly and Company \\
HC & Medical Distribution & MCK & McKesson Corporation \\
HC & Medical Devices & MDT & Medtronic plc \\
HC & Healthcare Plans & MOH & Molina Healthcare, Inc. \\
HC & Drug Manufacturers—General & MRK & Merck \& Co., Inc. \\
HC & Diagnostics \& Research & MTD & Mettler-Toledo International Inc. \\
HC & Drug Manufacturers—General & PFE & Pfizer Inc. \\
HC & Medical Devices & PODD & Insulet Corporation \\
HC & Biotechnology & REGN & Regeneron Pharmaceuticals, Inc. \\
HC & Medical Instruments \& Supplies & RMD & ResMed Inc. \\
HC & Life Sciences Tools \& Services & RVTY & Revvity, Inc. \\
HC & Medical Instruments \& Supplies & STE & STERIS plc \\
HC & Medical Devices & SYK & Stryker Corporation \\
HC & Life Sciences Tools \& Services & TECH & Bio-Techne Corporation \\
HC & Medical Instruments \& Supplies & TFX & Teleflex Incorporated \\
HC & Diagnostics \& Research & TMO & Thermo Fisher Scientific Inc. \\
HC & Medical Care Facilities & UHS & Universal Health Services, Inc. \\
HC & Healthcare Plans & UNH & UnitedHealth Group Incorporated \\
HC & Biotechnology & VRTX & Vertex Pharmaceuticals Incorporated \\
HC & Drug Manufacturers—Specialty \& Generic & VTRS & Viatris Inc. \\
HC & Diagnostics \& Research & WAT & Waters Corporation \\
HC & Medical Instruments \& Supplies & WST & West Pharmaceutical Services, Inc. \\
HC & Medical Instruments \& Supplies & XRAY & DENTSPLY SIRONA Inc. \\
HC & Medical Devices & ZBH & Zimmer Biomet Holdings, Inc. \\
ID & Airlines & AAL & American Airlines Group Inc. \\
ID & Airlines & ALK & Alaska Air Group, Inc. \\
ID & Specialty Industrial Machinery & AME & AMETEK, Inc. \\
ID & Specialty Industrial Machinery & AOS & A. O. Smith Corporation \\
ID & Aerospace \& Defense & AXON & Axon Enterprise, Inc. \\
ID & Aerospace \& Defense & BA & The Boeing Company \\
ID & Data Processing \& Outsourced Services & BR & Broadridge Financial Solutions, Inc. \\
ID & Farm \& Heavy Construction Machinery & CAT & Caterpillar Inc. \\
ID & Integrated Freight \& Logistics & CHRW & C.H. Robinson Worldwide, Inc. \\
ID & Specialty Industrial Machinery & CMI & Cummins Inc. \\
ID & Diversified Support Services & CPRT & Copart, Inc. \\
ID & Railroads & CSX & CSX Corporation \\
ID & Specialty Business Services & CTAS & Cintas Corporation \\
ID & Passenger Airlines & DAL & Delta Air Lines, Inc. \\
ID & Farm \& Heavy Construction Machinery & DE & Deere \& Company \\
ID & Specialty Industrial Machinery & DOV & Dover Corporation \\
ID & Consulting Services & EFX & Equifax Inc. \\
ID & Specialty Industrial Machinery & EMR & Emerson Electric Co. \\
ID & Specialty Industrial Machinery & ETN & Eaton Corporation plc \\
ID & Integrated Freight \& Logistics & EXPD & Expeditors International of Washington, Inc. \\
ID & Industrial Distribution & FAST & Fastenal Company \\
ID & Integrated Freight \& Logistics & FDX & FedEx Corporation \\
ID & Aerospace \& Defense & GD & General Dynamics Corporation \\
ID & Specialty Industrial Machinery & GE & General Electric Company \\
ID & Electrical Components \& Equipment & GNRC & Generac Holdings Inc. \\
ID & Industrial Distribution & GWW & W.W. Grainger, Inc. \\
ID & Aerospace \& Defense & HII & Huntington Ingalls Industries, Inc. \\
ID & Industrial Conglomerates & HON & Honeywell International Inc. \\
ID & Specialty Industrial Machinery & IEX & IDEX Corporation \\
ID & Specialty Industrial Machinery & ITW & Illinois Tool Works Inc. \\
ID & Engineering \& Construction & J & Jacobs Solutions Inc. \\
ID & Integrated Freight \& Logistics & JBHT & J.B. Hunt Transport Services, Inc. \\
ID & Building Products \& Equipment & JCI & Johnson Controls International plc \\
ID & Research \& Consulting Services & LDOS & Leidos Holdings, Inc. \\
ID & Aerospace \& Defense & LHX & L3Harris Technologies, Inc. \\
ID & Aerospace \& Defense & LMT & Lockheed Martin Corporation \\
ID & Airlines & LUV & Southwest Airlines Co. \\
ID & Building Products \& Equipment & MAS & Masco Corporation \\
ID & Industrial Conglomerates & MMM & 3M Company \\
ID & Specialty Industrial Machinery & NDSN & Nordson Corporation \\
ID & Aerospace \& Defense & NOC & Northrop Grumman Corporation \\
ID & Railroads & NSC & Norfolk Southern Corporation \\
ID & Trucking & ODFL & Old Dominion Freight Line, Inc. \\
ID & Farm \& Heavy Construction Machinery & PCAR & PACCAR Inc \\
ID & Specialty Industrial Machinery & PH & Parker-Hannifin Corporation \\
ID & Specialty Industrial Machinery & PNR & Pentair plc \\
ID & Engineering \& Construction & PWR & Quanta Services, Inc. \\
ID & Staffing \& Employment Services & RHI & Robert Half Inc. \\
ID & Specialty Industrial Machinery & ROK & Rockwell Automation, Inc. \\
ID & Environmental \& Facilities Services & ROL & Rollins, Inc. \\
ID & Software—Application & ROP & Roper Technologies, Inc. \\
ID & Waste Management & RSG & Republic Services, Inc. \\
ID & Aerospace \& Defense & RTX & RTX Corporation \\
ID & Specialty Industrial Machinery & SNA & Snap-on Incorporated \\
ID & Aerospace \& Defense & TDG & TransDigm Group Incorporated \\
ID & Aerospace \& Defense & TDY & Teledyne Technologies Incorporated \\
ID & Building Products \& Equipment & TT & Trane Technologies plc \\
ID & Aerospace \& Defense & TXT & Textron Inc. \\
ID & Passenger Airlines & UAL & United Airlines Holdings, Inc. \\
ID & Railroads & UNP & Union Pacific Corporation \\
ID & Integrated Freight \& Logistics & UPS & United Parcel Service, Inc. \\
ID & Rental \& Leasing Services & URI & United Rentals, Inc. \\
ID & Research \& Consulting Services & VRSK & Verisk Analytics, Inc. \\
ID & Specialty Industrial Machinery & WAB & Westinghouse Air Brake Technologies Corporation \\
ID & Waste Management & WM & Waste Management, Inc. \\
ID & Specialty Industrial Machinery & XYL & Xylem Inc. \\
IT & Consumer Electronics & AAPL & Apple Inc. \\
IT & Information Technology Services & ACN & Accenture plc \\
IT & Software—Infrastructure & ADBE & Adobe Inc. \\
IT & Semiconductors & ADI & Analog Devices, Inc. \\
IT & Staffing \& Employment Services & ADP & Automatic Data Processing, Inc. \\
IT & Software—Application & ADSK & Autodesk, Inc. \\
IT & Software—Infrastructure & AKAM & Akamai Technologies, Inc. \\
IT & Semiconductor Equipment \& Materials & AMAT & Applied Materials, Inc. \\
IT & Semiconductors & AMD & Advanced Micro Devices, Inc. \\
IT & Software—Application & ANSS & ANSYS, Inc. \\
IT & Electronic Components & APH & Amphenol Corporation \\
IT & Semiconductors & AVGO & Broadcom Inc. \\
IT & Software—Application & CDNS & Cadence Design Systems, Inc. \\
IT & Software—Application & CRM & Salesforce, Inc. \\
IT & Communication Equipment & CSCO & Cisco Systems, Inc. \\
IT & Information Technology Services & CTSH & Cognizant Technology Solutions Corporation \\
IT & Information Technology Services & DXC & DXC Technology Company \\
IT & Internet Retail & EBAY & eBay Inc. \\
IT & Software—Infrastructure & FFIV & F5, Inc. \\
IT & Data Processing \& Outsourced Services & FI & Fiserv, Inc. \\
IT & Software—Application & FICO & Fair Isaac Corporation \\
IT & Information Technology Services & FIS & Fidelity National Information Services, Inc. \\
IT & Semiconductors & FSLR & First Solar, Inc. \\
IT & Software—Infrastructure & FTNT & Fortinet, Inc. \\
IT & Software—Infrastructure & GEN & Gen Digital Inc. \\
IT & Electronic Components & GLW & Corning Incorporated \\
IT & Specialty Business Services & GPN & Global Payments Inc. \\
IT & Computer Hardware & HPQ & HP Inc. \\
IT & Information Technology Services & IBM & International Business Machines Corporation \\
IT & Semiconductors & INTC & Intel Corporation \\
IT & Software—Application & INTU & Intuit Inc. \\
IT & Information Technology Services & IT & Gartner, Inc. \\
IT & Data Processing \& Outsourced Services & JKHY & Jack Henry \& Associates, Inc. \\
IT & Communication Equipment & JNPR & Juniper Networks, Inc. \\
IT & Semiconductor Equipment \& Materials & KLAC & KLA Corporation \\
IT & Semiconductor Equipment \& Materials & LRCX & Lam Research Corporation \\
IT & Semiconductors & MCHP & Microchip Technology Incorporated \\
IT & Semiconductors & MPWR & Monolithic Power Systems, Inc. \\
IT & Software—Infrastructure & MSFT & Microsoft Corporation \\
IT & Communication Equipment & MSI & Motorola Solutions, Inc. \\
IT & Semiconductors & MU & Micron Technology, Inc. \\
IT & Computer Hardware & NTAP & NetApp, Inc. \\
IT & Semiconductors & NVDA & NVIDIA Corporation \\
IT & Semiconductors & NXPI & NXP Semiconductors N.V. \\
IT & Semiconductors & ON & ON Semiconductor Corporation \\
IT & Software—Infrastructure & ORCL & Oracle Corporation \\
IT & Staffing \& Employment Services & PAYX & Paychex, Inc. \\
IT & Software—Application & PTC & PTC Inc. \\
IT & Semiconductors & QCOM & QUALCOMM Incorporated \\
IT & Software—Infrastructure & SNPS & Synopsys, Inc. \\
IT & Computer Hardware & STX & Seagate Technology Holdings plc \\
IT & Semiconductors & SWKS & Skyworks Solutions, Inc. \\
IT & Electronic Components & TEL & TE Connectivity Ltd. \\
IT & Semiconductor Equipment \& Materials & TER & Teradyne, Inc. \\
IT & Electronic Equipment \& Instruments & TRMB & Trimble Inc. \\
IT & Semiconductors & TXN & Texas Instruments Incorporated \\
IT & Software—Application & TYL & Tyler Technologies, Inc. \\
IT & Software—Infrastructure & VRSN & VeriSign, Inc. \\
IT & Computer Hardware & WDC & Western Digital Corporation \\
IT & Electronic Equipment \& Instruments & ZBRA & Zebra Technologies Corporation \\
MT & Specialty Chemicals & ALB & Albemarle Corporation \\
MT & Specialty Chemicals & APD & Air Products and Chemicals, Inc. \\
MT & Packaging \& Containers & AVY & Avery Dennison Corporation \\
MT & Metal, Glass \& Plastic Containers & BALL & Ball Corporation \\
MT & Specialty Chemicals & CE & Celanese Corporation \\
MT & Agricultural Inputs & CF & CF Industries Holdings, Inc. \\
MT & Specialty Chemicals & DD & DuPont de Nemours, Inc. \\
MT & Specialty Chemicals & ECL & Ecolab Inc. \\
MT & Specialty Chemicals & EMN & Eastman Chemical Company \\
MT & Copper & FCX & Freeport-McMoRan Inc. \\
MT & Agricultural Inputs & FMC & FMC Corporation \\
MT & Specialty Chemicals & IFF & International Flavors \& Fragrances Inc. \\
MT & Packaging \& Containers & IP & International Paper Company \\
MT & Industrial Gases & LIN & Linde plc \\
MT & Specialty Chemicals & LYB & LyondellBasell Industries N.V. \\
MT & Building Materials & MLM & Martin Marietta Materials, Inc. \\
MT & Agricultural Inputs & MOS & The Mosaic Company \\
MT & Gold & NEM & Newmont Corporation \\
MT & Steel & NUE & Nucor Corporation \\
MT & Packaging \& Containers & PKG & Packaging Corporation of America \\
MT & Specialty Chemicals & PPG & PPG Industries, Inc. \\
MT & Packaging \& Containers & SEE & Sealed Air Corporation \\
MT & Specialty Chemicals & SHW & The Sherwin-Williams Company \\
MT & Steel & STLD & Steel Dynamics, Inc. \\
MT & Building Materials & VMC & Vulcan Materials Company \\
UT & Utilities—Regulated Electric & AEE & Ameren Corporation \\
UT & Utilities—Regulated Electric & AEP & American Electric Power Company, Inc. \\
UT & Utilities—Diversified & AES & The AES Corporation \\
UT & Gas Utilities & ATO & Atmos Energy Corporation \\
UT & Water Utilities & AWK & American Water Works Company, Inc. \\
UT & Utilities—Regulated Electric & CMS & CMS Energy Corporation \\
UT & Utilities—Regulated Electric & CNP & CenterPoint Energy, Inc. \\
UT & Utilities—Regulated Electric & D & Dominion Energy, Inc. \\
UT & Utilities—Regulated Electric & DTE & DTE Energy Company \\
UT & Utilities—Regulated Electric & DUK & Duke Energy Corporation \\
UT & Utilities—Regulated Electric & ED & Consolidated Edison, Inc. \\
UT & Utilities—Regulated Electric & EIX & Edison International \\
UT & Utilities—Regulated Electric & ES & Eversource Energy \\
UT & Utilities—Regulated Electric & ETR & Entergy Corporation \\
UT & Utilities—Regulated Electric & EVRG & Evergy, Inc. \\
UT & Utilities—Regulated Electric & EXC & Exelon Corporation \\
UT & Utilities—Regulated Electric & FE & FirstEnergy Corp. \\
UT & Utilities—Regulated Electric & LNT & Alliant Energy Corporation \\
UT & Utilities—Regulated Electric & NEE & NextEra Energy, Inc. \\
UT & Utilities—Regulated Gas & NI & NiSource Inc. \\
UT & Utilities—Independent Power Producers & NRG & NRG Energy, Inc. \\
UT & Utilities—Regulated Electric & PCG & PG\&E Corporation \\
UT & Utilities—Regulated Electric & PEG & Public Service Enterprise Group Incorporated \\
UT & Utilities—Regulated Electric & PNW & Pinnacle West Capital Corporation \\
UT & Utilities—Regulated Electric & PPL & PPL Corporation \\
UT & Utilities—Regulated Electric & SO & The Southern Company \\
UT & Utilities—Diversified & SRE & Sempra \\
UT & Utilities—Regulated Electric & WEC & WEC Energy Group, Inc. \\
UT & Utilities—Regulated Electric & XEL & Xcel Energy Inc. \\

\end{longtable}
\endgroup

\section{Market \textit{Crash} Dates}
\index{Crash!dates}
Dates associated with episodes of \textit{crash} or stress in financial markets are listed. These references were used to contextualize the evolution of the correlation matrices and their associated indicators in the empirical analysis.

\begin{table}[ht]
  \centering
  \small
  \begin{tabular}{|c|>{\centering\arraybackslash}m{10cm}|}
    \hline
    \rowcolor{teal!80}
    \cabecera{Date} & 
    \cabecera{Event} \\
    \hline
    2015-08-18 & 2015--16 Chinese stock market crash \\
    \hline
    2018-09-20 & 2018 cryptocurrency crash \\
    \hline
    2020-02-24 & 2020 stock market crash \\
    \hline
    2022-01-03 & 2022 stock market decline \\
    \hline
  \end{tabular}
  \caption[Market \textit{crash} dates]{Dates and names of \textit{crash} or stress episodes used as a temporal reference.}
  \label{tab:fechas_episodios}
\end{table}

\section{Leading Companies of the Dominant Eigenvectors \texorpdfstring{$[v_i^2]_{i=1}^N$}{[vi²]}}

The $10$ companies with the maximum values from Fig.\ref{fig: EigVect cuadrado q20} are presented below.
\begin{table}[ht]
  \centering
  \small
  \begin{tabular}{|c|c|c|r|}
    \hline
    \rowcolor{teal!80}
    \cabecera{Position} &
    \cabecera{Ticker} &
    \cabecera{Sector} &
    \cabecera{Value} \\
    \hline
    1  & WTW  & FN & 0.004294 \\
    2  & AMGN & HC & 0.004002 \\
    3  & MOH  & HC & 0.003996 \\
    4  & CTRA & EG & 0.003955 \\
    5  & PFE  & HC & 0.003954 \\
    6  & DVN  & EG & 0.003950 \\
    7  & GIS  & CS & 0.003915 \\
    8  & TECH & HC & 0.003879 \\
    9  & GILD & HC & 0.003874 \\
    10 & ABT  & HC & 0.003845 \\
    \hline
  \end{tabular}
  \caption[Leading Companies in the $v_i^2$, $q=20$]{Leading companies in the entries $[v_i^2]_{i=1}^N$ for a window of $q=20$ days. The companies coincide with those of Table \ref{tab:top10-vn2-q40}, although in a different order.}
  \label{tab:top10-vn2-q20}
\end{table}

\section{Complementary Analysis of IPR and PR}
\label{sec:ipr-pr-q40-k4}

This section presents the values of $\boldsymbol{\mathrm{IPR}_{N}}$ and $\boldsymbol{\mathrm{PR}_{N}}$ associated with the dominant eigenvector for the case $q=40$ and $k=4$. In agreement with Fig.~\ref{fig:vn2-q40-k4}, \emph{cluster 1} exhibits a greater concentration of participation (higher $\boldsymbol{\mathrm{IPR}_{N}}$ and, therefore, lower $\boldsymbol{\mathrm{PR}_{N}}$) compared with the other clusters.

\vspace{0.35cm}

\begin{table}[htbp]
\centering
\begin{tabular}{|c|c|c|}
\hline
\rowcolor{teal!80}
\cabecera{\emph{Cluster}} &
\cabecera{$\boldsymbol{\mathrm{IPR}_{N}}$} &
\cabecera{$\boldsymbol{\mathrm{PR}_{N}}$} \\
\hline
\textbf{1} & 0.00345 & 289.9818 \\
\hline
\textbf{2} & 0.00291 & 343.8932 \\
\hline
\textbf{3} & 0.00276 & 362.9460 \\
\hline
\textbf{4} & 0.00249 & 402.1020 \\
\hline
\end{tabular}
\caption{Values of $\boldsymbol{\mathrm{IPR}_{N}}$ and $\boldsymbol{\mathrm{PR}_{N}}$ by \emph{cluster}, $q=40$ and $k=4$.}
\label{tab:ipr-pr-q40-k4}
\end{table}

\section{Complementary Analysis of the Leading Companies Indicated by the \emph{Clustering} of the Dominant Eigenvectors \texorpdfstring{$[v_i^2]_{i=1}^N$}{[vi²]}}

\label{sec:analisis-complementario-principales}
Table~\ref{tab:Top10_q40_k4} shows the 10 leading companies identified by the temporal averages of $[v_i^2]_{i=1}^N$ for $q=40$ and $k=4$, complementing Figure~\ref{fig:vn2-q40-k4}. 
\emph{Cluster} 1 is dominated exclusively by the \emph{HC} sector. 
The intersection between the 10 leading companies of each cluster is empty, although the company \emph{WTW} appears consistently in all the \emph{clusters} except the first, with high values.

It is worth highlighting that the first 7 companies of \emph{cluster} 3 (Table~\ref{tab:Top10_q40_k4}) coincide exactly with those of \emph{cluster} 2 of Table~\ref{tab:Top10_q40_k5}, which indicates that this group remains consistent when varying the number of \emph{clusters}. 
Likewise, \emph{cluster} 2 of Table~\ref{tab:Top10_q40_k4} reproduces the same order and values of \emph{cluster} 3 of Table~\ref{tab:Top10_q40_k5}.

These results confirm two consistent findings:
\begin{enumerate}[label=\roman*)]
    \item The \emph{Health Care} sector exclusively dominates the first cluster in both configurations.
    \item No company appears simultaneously in all the \emph{clusters}, although \emph{WTW} maintains a consistent presence in multiple groups.
\end{enumerate}

\begin{table}[ht]
\centering

\begin{subtable}[t]{0.48\linewidth}
\centering
\scriptsize
\begin{tabular}{|c|c|c|r|}
    \hline
    \rowcolor{teal!80}
    \cabecera{Position} &
    \cabecera{Ticker} &
    \cabecera{Sector} &
    \cabecera{Value} \\
    \hline
    1  & LH   & HC & 0.007271 \\
    2  & HCA  & HC & 0.007138 \\
    3  & PFE  & HC & 0.007048 \\
    4  & A    & HC & 0.006934 \\
    5  & DXCM & HC & 0.006840 \\
    6  & HUM  & HC & 0.006803 \\
    7  & BSX  & HC & 0.006794 \\
    8  & BDX  & HC & 0.006582 \\
    9  & DGX  & HC & 0.006531 \\
    10 & REGN & HC & 0.006527 \\
    \hline
\end{tabular}
\caption*{a) \emph{cluster} 1}
\end{subtable}\hfill
\begin{subtable}[t]{0.48\linewidth}
\centering
\scriptsize
\begin{tabular}{|c|c|c|r|}
    \hline
    \rowcolor{teal!80}
    \cabecera{Position} &
    \cabecera{Ticker} &
    \cabecera{Sector} &
    \cabecera{Value} \\
    \hline
    1  & WTW  & FN & 0.005238 \\
    2  & MOH  & HC & 0.005056 \\
    3  & PFE  & HC & 0.004993 \\
    4  & DHR  & HC & 0.004847 \\
    5  & V    & FN & 0.004793 \\
    6  & INCY & HC & 0.004737 \\
    7  & PM   & CS & 0.004733 \\
    8  & PEP  & CS & 0.004713 \\
    9  & HSIC & HC & 0.004681 \\
    10 & CTRA & EG & 0.004620 \\
    \hline
\end{tabular}
\caption*{b)\emph{cluster} 2}
\end{subtable}

\vspace{0.6em}

\begin{subtable}[t]{0.48\linewidth}
\centering
\scriptsize
\begin{tabular}{|c|c|c|r|}
    \hline
    \rowcolor{teal!80}
    \cabecera{Position} &
    \cabecera{Ticker} &
    \cabecera{Sector} &
    \cabecera{Value} \\
    \hline
    1  & AMGN & HC & 0.004623 \\
    2  & CTRA & EG & 0.004588 \\
    3  & TECH & HC & 0.004580 \\
    4  & WTW  & FN & 0.004441 \\
    5  & ACGL & FN & 0.004389 \\
    6  & GIS  & CS & 0.004380 \\
    7  & DLR  & FN & 0.004357 \\
    8  & DVN  & EG & 0.004293 \\
    9  & CVX  & EG & 0.004240 \\
    10 & PM   & CS & 0.004149 \\
    \hline
\end{tabular}
\caption*{c)\emph{cluster} 3}
\end{subtable}\hfill
\begin{subtable}[t]{0.48\linewidth}
\centering
\scriptsize
\begin{tabular}{|c|c|c|r|}
    \hline
    \rowcolor{teal!80}
    \cabecera{Position} &
    \cabecera{Ticker} &
    \cabecera{Sector} &
    \cabecera{Value} \\
    \hline
    1  & WTW  & FN & 0.004167 \\
    2  & AMGN & HC & 0.003912 \\
    3  & ABT  & HC & 0.003863 \\
    4  & DVN  & EG & 0.003813 \\
    5  & GILD & HC & 0.003809 \\
    6  & TECH & HC & 0.003809 \\
    7  & CTRA & EG & 0.003709 \\
    8  & JNJ  & HC & 0.003707 \\
    9  & MOH  & HC & 0.003693 \\
    10 & GIS  & CS & 0.003669 \\
    \hline
\end{tabular}
\caption*{d)\emph{cluster} 4}
\end{subtable}

\caption[Leading companies by \emph{cluster}, \texorpdfstring{$q=40\;\text{y}\; k=4$}{q=40, k=4}]{Leading companies indicated by the entries $[v_i^2]_{i=1}^{N}$ for $q=40$ and $k=4$. It stands out that, in \emph{cluster} 1, the $10$ leading contributions belong to the \emph{HC} sector.}
\label{tab:Top10_q40_k4}
\end{table}
\chapter{Complementary Figures}

\label{section:appendix3}

\section{Heatmaps}
\index{Heatmaps}
\subsection{Correlation Matrix}
Fig. \ref{fig: Mapas de Calor apendices} shows the heatmaps obtained for $q=20,\;q=40$, $k=4$. In both cases, the same pattern as in Fig. \ref{fig:avg_corr_mats} is observed: \emph{cluster} 1 concentrates the lowest values (including anticorrelations) and, as the \emph{cluster} index increases, the intensity of the correlation increases; the highest-index \emph{cluster} presents the highest average correlations.

\subsection{Matrices \texorpdfstring{$\mathbf{C}^2$}{C²}}

Fig.~\ref{fig:MeanMatrices_C2_q40_k4} presents the average $\mathbf{C}^{2}$ matrices obtained by means of \emph{clustering} with the \textit{k}-Means algorithm for $q=40$ and $k=4$. In \emph{cluster} 1, delimited bands with correlations close to $0$ or even negative are observed. In \emph{cluster} 2, the anticorrelation regions disappear, whereas in \emph{cluster} 3 those blue bands associated with negative correlation values reappear. Finally, \emph{cluster} 4 concentrates the highest correlations, showing a much more homogeneous and positive structure. 

\subsection{Matrices \texorpdfstring{$\mathbf{C}^3$}{C³}}
As shown in Fig.~\ref{fig:MeanMatrices_C3_q40_k4}, the heatmaps obtained from the $\mathbf{C}^3$ matrices with $q=40$ and $k=4$ exhibit a behavior consistent with what was observed in Fig.~\ref{fig:MeanMatrices_C3_q40_k5}. 
In \emph{cluster} 1, blue and white bands are distinguished that reflect negative or weak correlations, which reappear with greater intensity in \emph{cluster} 3. 
The sectors are not clearly delimited and, although the maps of the $\mathbf{C}^3$ matrices bear similarity to those of $\mathbf{C}^2$, they present a relevant difference: in \emph{clusters} 2 and 3 white zones emerge that indicate an absence of correlation.

\begin{figure}[!htbp]
  \centering
  \begin{subfigure}{\textwidth}
    \centering
    \includegraphics[width=0.75\textwidth, height=0.46\textheight]{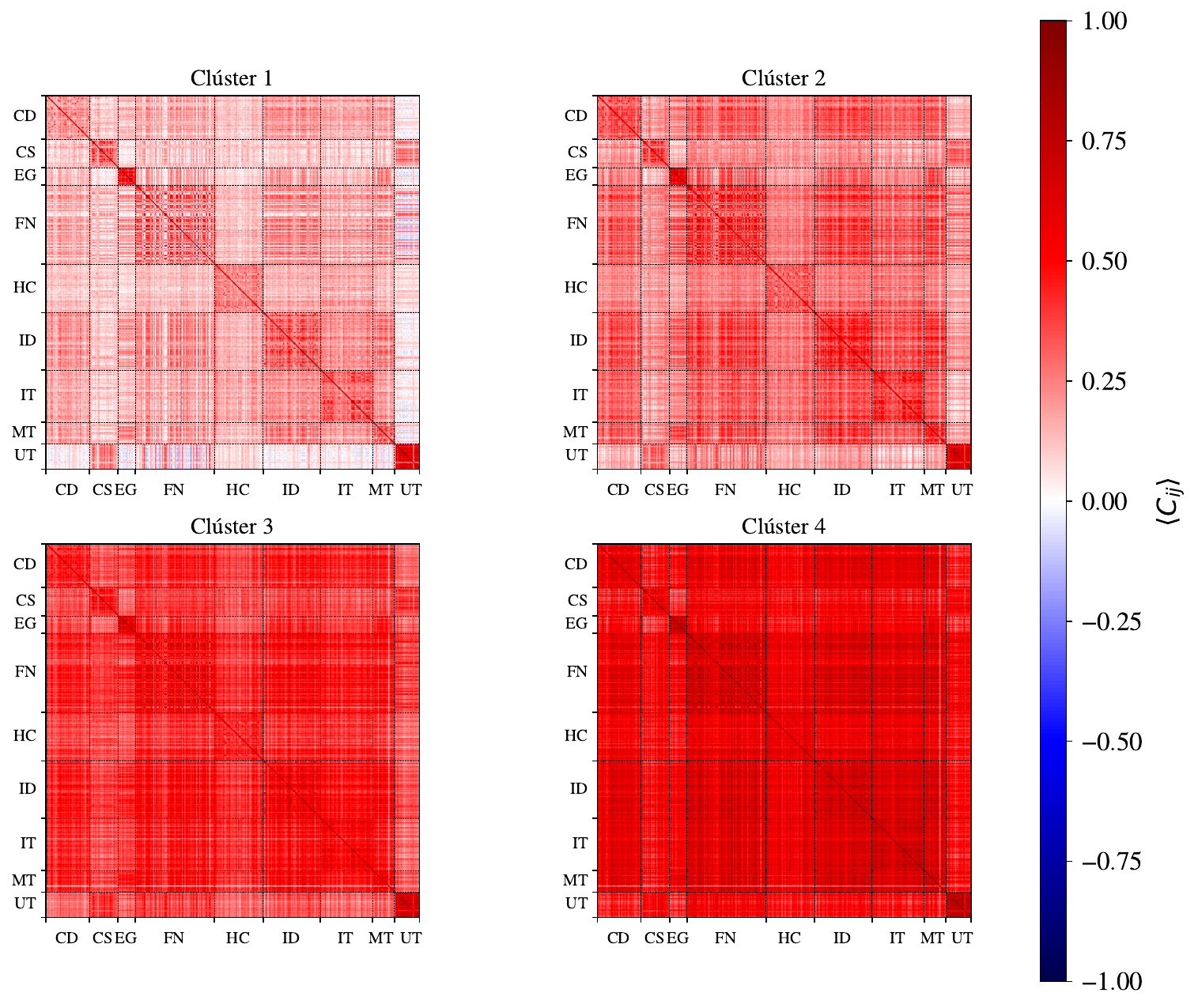}
    \caption{$q=20$, $k=4$.}
    \label{fig:heat-q20-k5}
  \end{subfigure}

  \medskip

  \begin{subfigure}{\textwidth}
    \centering
    \includegraphics[width=0.75\textwidth,height=0.64\textheight,keepaspectratio]{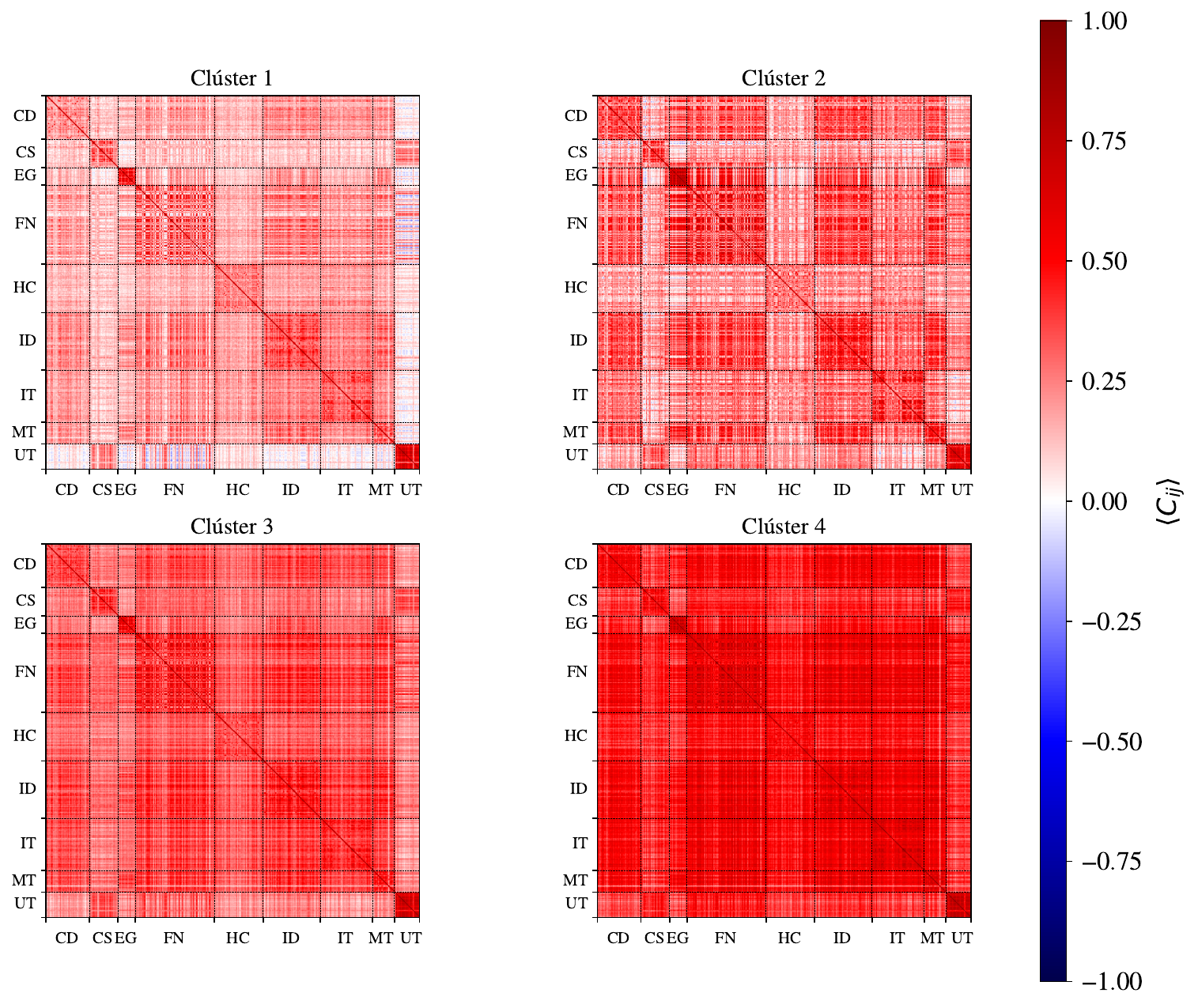}
    \caption{$q=40$, $k=4$.}
    \label{fig:heat-q40-k5}
  \end{subfigure}

  \caption[Complementary heatmaps of Correlation Matrices]{Heatmaps of Correlation Matrices. The highest-index \emph{cluster} concentrates the highest correlations.}
  \label{fig: Mapas de Calor apendices}
\end{figure}


\begin{figure}[t]
  \centering
  \includegraphics[width=\linewidth]{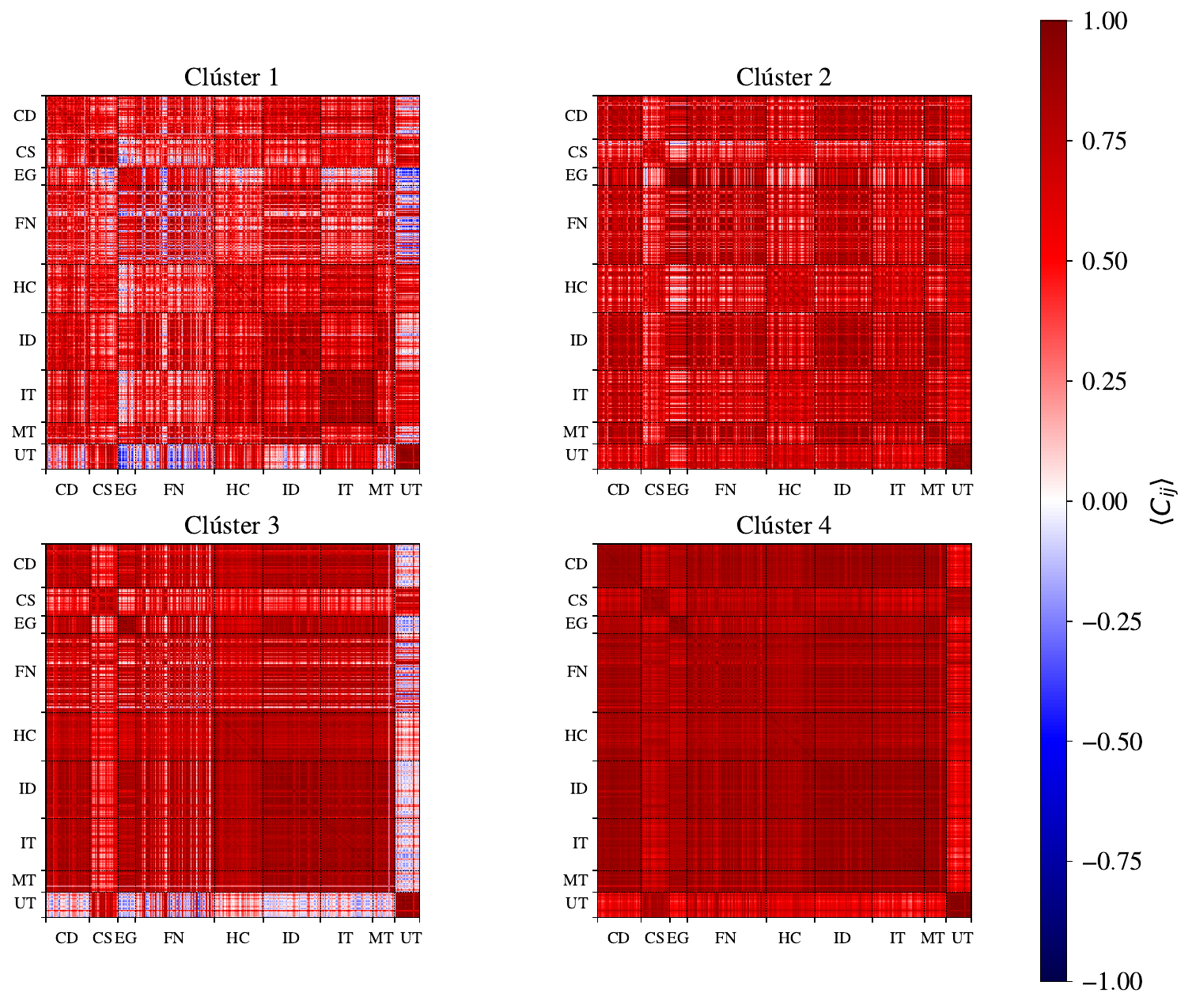}
  \caption[Heatmaps of the average \texorpdfstring{$C^2$}{C²} matrices by \emph{cluster}, \texorpdfstring{$q=40\;\text{y}\;k=4$}{q=40, k=4}]%
  {Heatmaps of the average $\mathbf{C^2}$ matrices obtained by \textit{k}-Means \emph{clustering} with $q=40$ and $k=4$. 
  In \emph{cluster} 1, delimited blocks with correlations close to $0$ or negative are observed. In \emph{cluster} 2, the blue anticorrelation bands disappear; 
  these reappear in \emph{cluster} 3.}
  \label{fig:MeanMatrices_C2_q40_k4}
\end{figure}

\begin{figure}[ht]
\centering
\includegraphics[width=\linewidth]{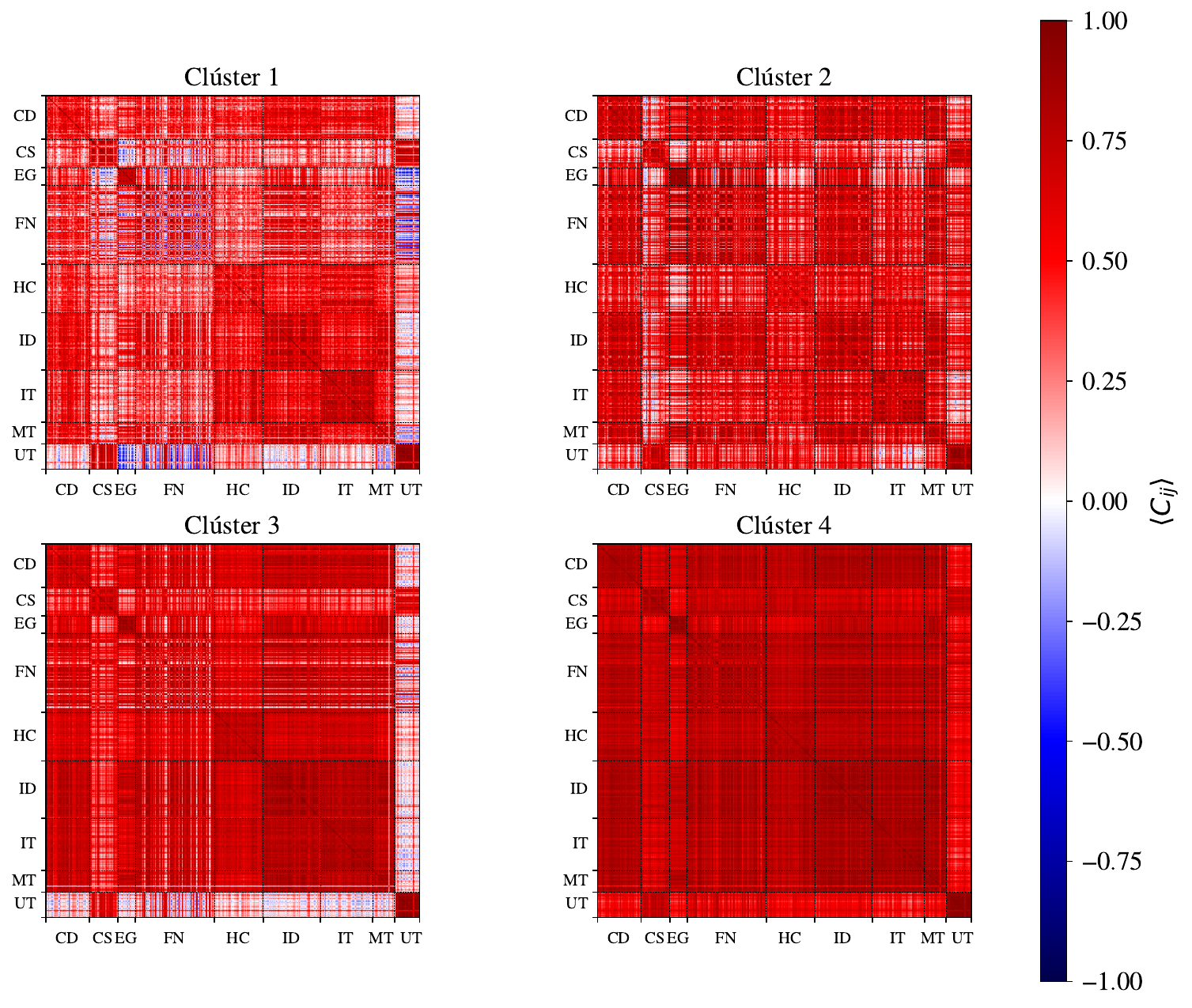}
\caption[Heatmaps of the average \texorpdfstring{$C^3$}{C³} matrices, \texorpdfstring{$q=40\;\text{y}\;k=4$}{q=40, k=4}]%
{Heatmaps of the average $\mathbf{C^3}$ matrices for $q=40$ days and $k=4$. 
The behavior is consistent with what was observed in Fig.~\ref{fig:MeanMatrices_C3_q40_k5}, since in \emph{cluster} 1 blue and white bands associated with weak or negative correlations are appreciated, whereas in \emph{cluster} 3 these reappear in a more marked manner. 
The sectors are not fully delimited and, unlike $\mathbf{C}^2$, the intensity of the red color is reduced in several \emph{clusters}, which indicates correlation values that are not necessarily at the extreme.}
\label{fig:MeanMatrices_C3_q40_k4}
\end{figure}

\section{Second and Third Largest Eigenvalues}
\label{sec: segundos y terceros eigen}
Next, in Fig.~\ref{fig:lambdaNmenos1_grid} and Fig.~\ref{fig:lambdaNmenos2_grid}, the results corresponding to the second ($\lambda_{N-1}$) and third ($\lambda_{N-2}$) leading eigenvalues are presented, organized according to the window size and the number of \emph{clusters}. Unlike what was observed for $\lambda_N$ in Fig.~\ref{fig:lambdaN_scatter_png}, after a crisis period, no significant increases in value are identified in these eigenvalues.
\index{Eigenvalue!penultimate}
\begin{figure}[ht]
\centering

\begin{subfigure}{0.49\linewidth}
  \centering
  \includegraphics[width=\linewidth]{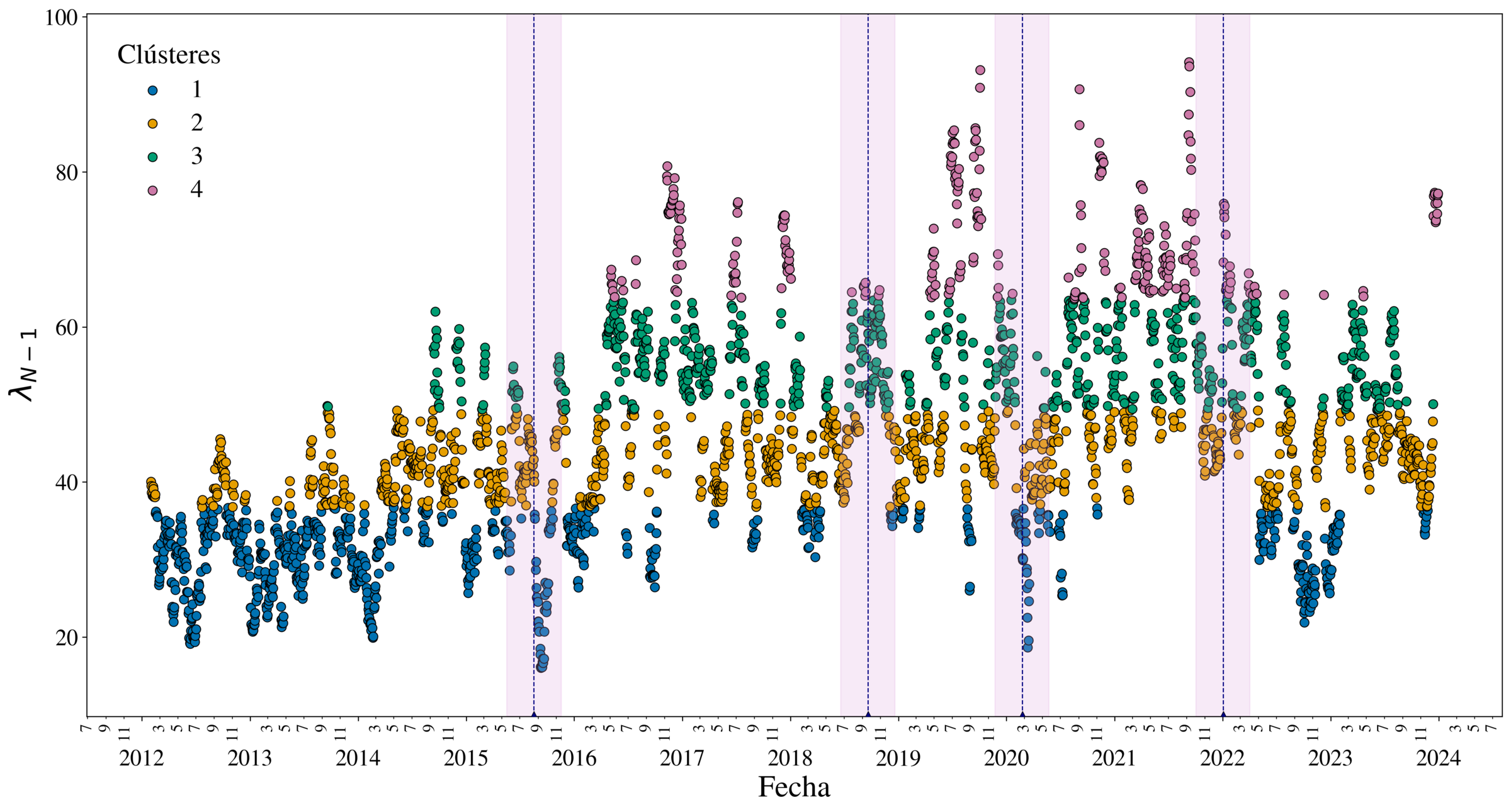}
  \caption*{a) $q=20$, $k=4$}
\end{subfigure}\hfill
\begin{subfigure}{0.49\linewidth}
  \centering
  \includegraphics[width=\linewidth]{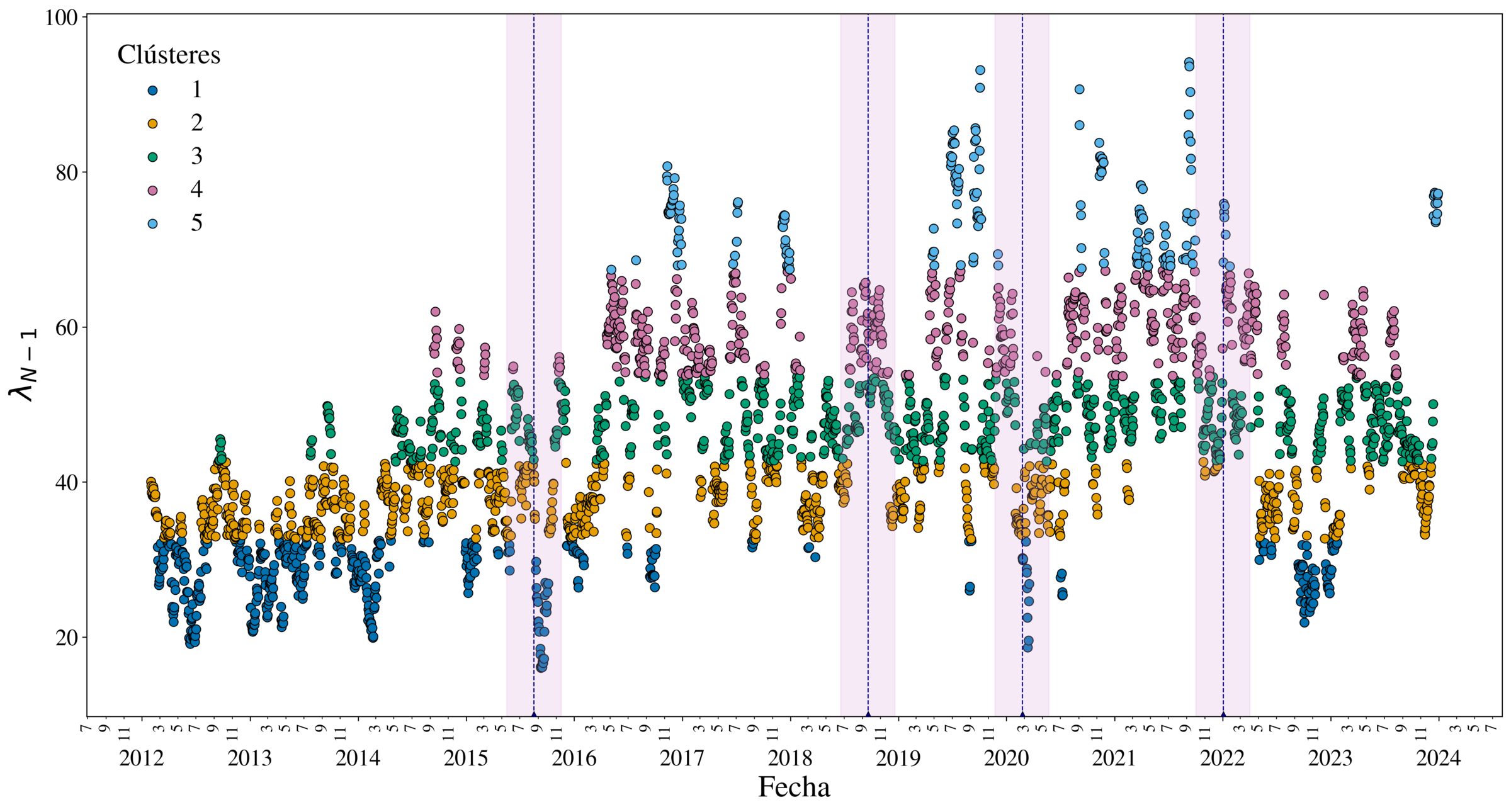}
  \caption*{b) $q=20$, $k=5$}
\end{subfigure}

\vspace{0.3em}

\begin{subfigure}{0.49\linewidth}
  \centering
  \includegraphics[width=\linewidth]{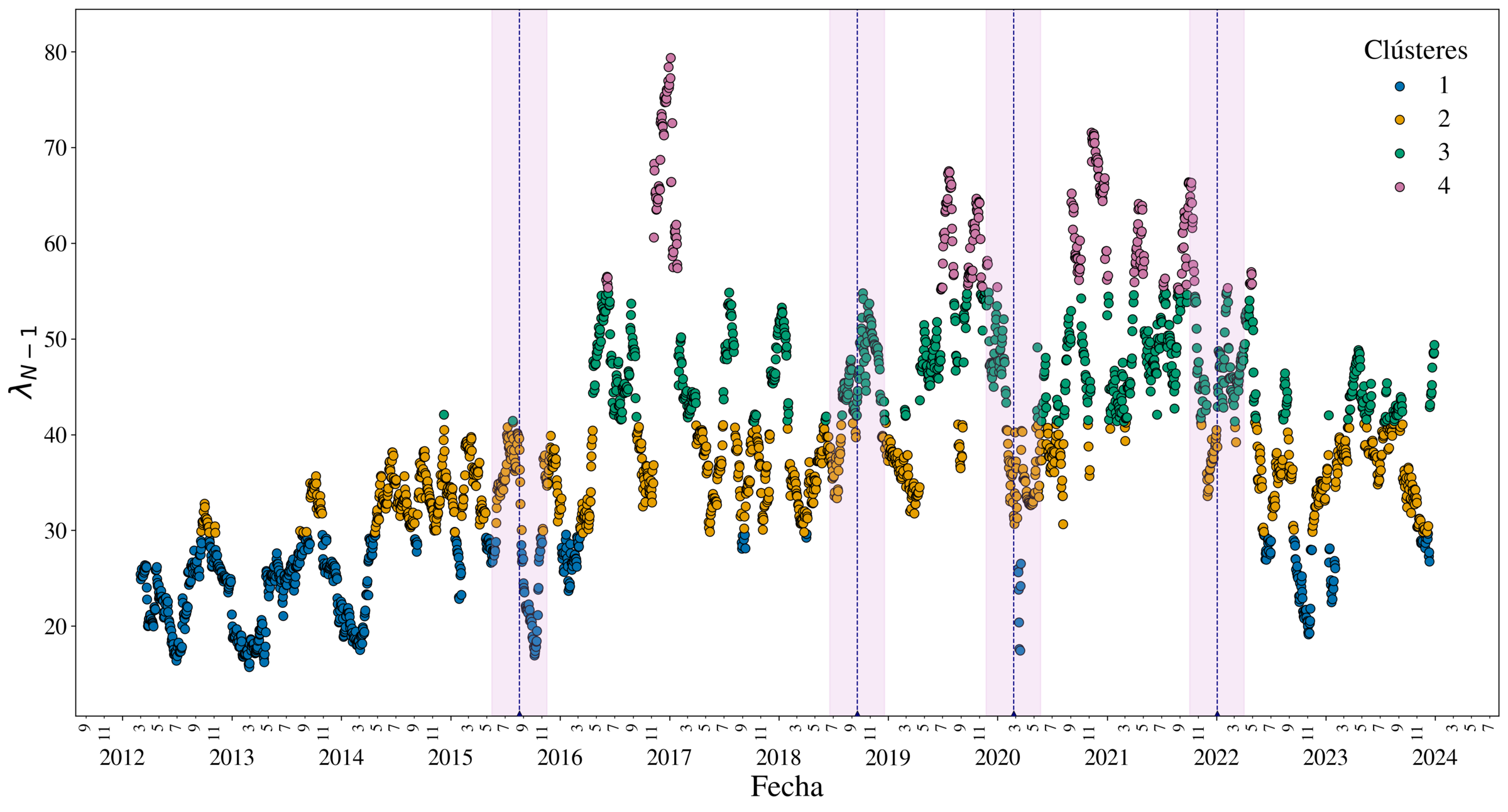}
  \caption*{c) $q=40$, $k=4$}
\end{subfigure}\hfill
\begin{subfigure}{0.49\linewidth}
  \centering
  \includegraphics[width=\linewidth]{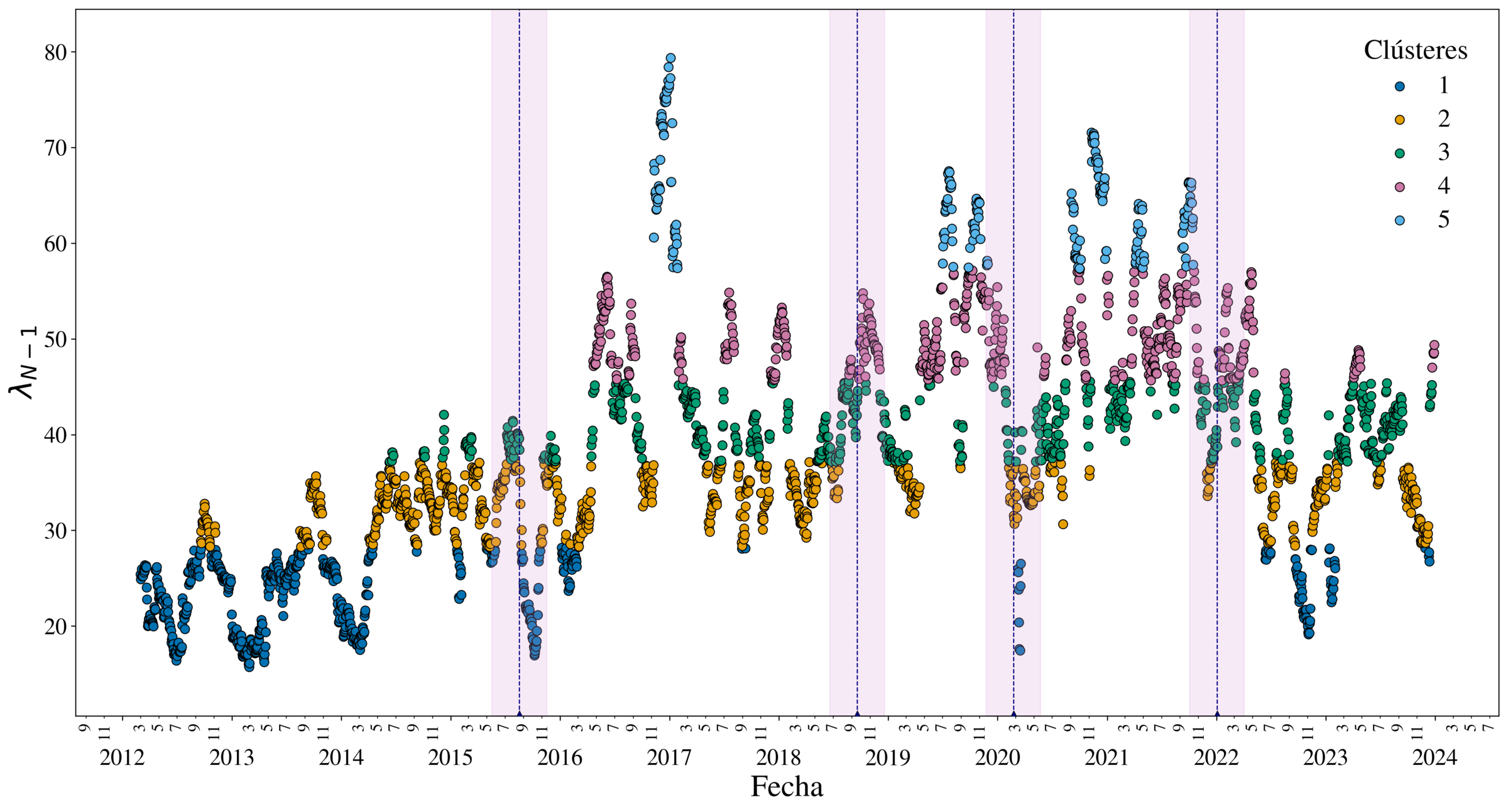}
  \caption*{d) $q=40$, $k=5$}
\end{subfigure}

\caption[Comparison of the second leading eigenvalues]{Comparison of $\lambda_{N-1}$ by \emph{cluster}. 
Top row: $q=20$; bottom row: $q=40$; columns: $k=4$ and $k=5$.}
\label{fig:lambdaNmenos1_grid}
\end{figure}

\index{Eigenvalue!antepenultimate}
\begin{figure}[ht]
\centering

\begin{subfigure}{0.49\linewidth}
  \centering
  \includegraphics[width=\linewidth]{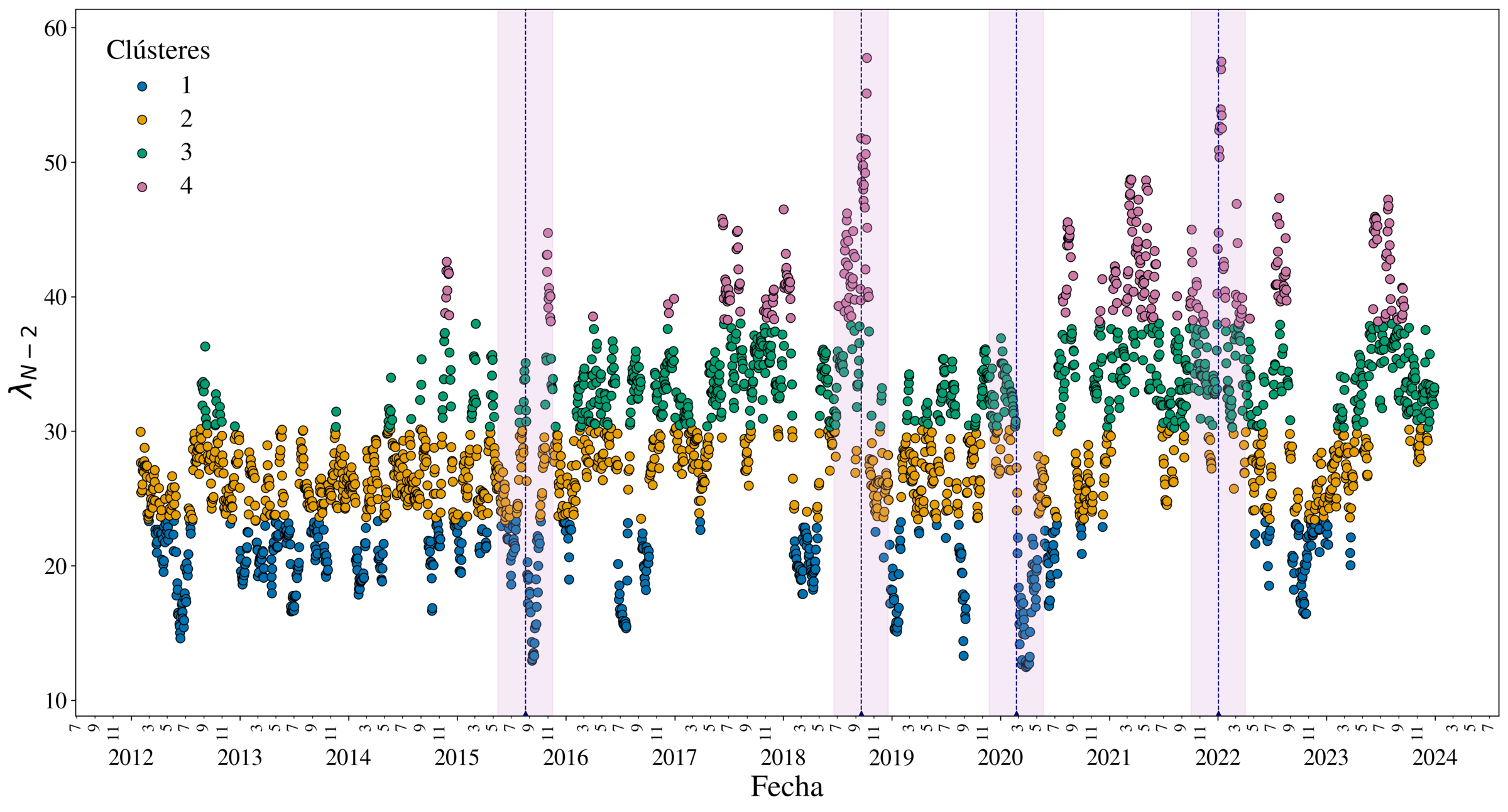}
  \caption*{a) $q=20$, $k=4$}
\end{subfigure}\hfill
\begin{subfigure}{0.49\linewidth}
  \centering
  \includegraphics[width=\linewidth]{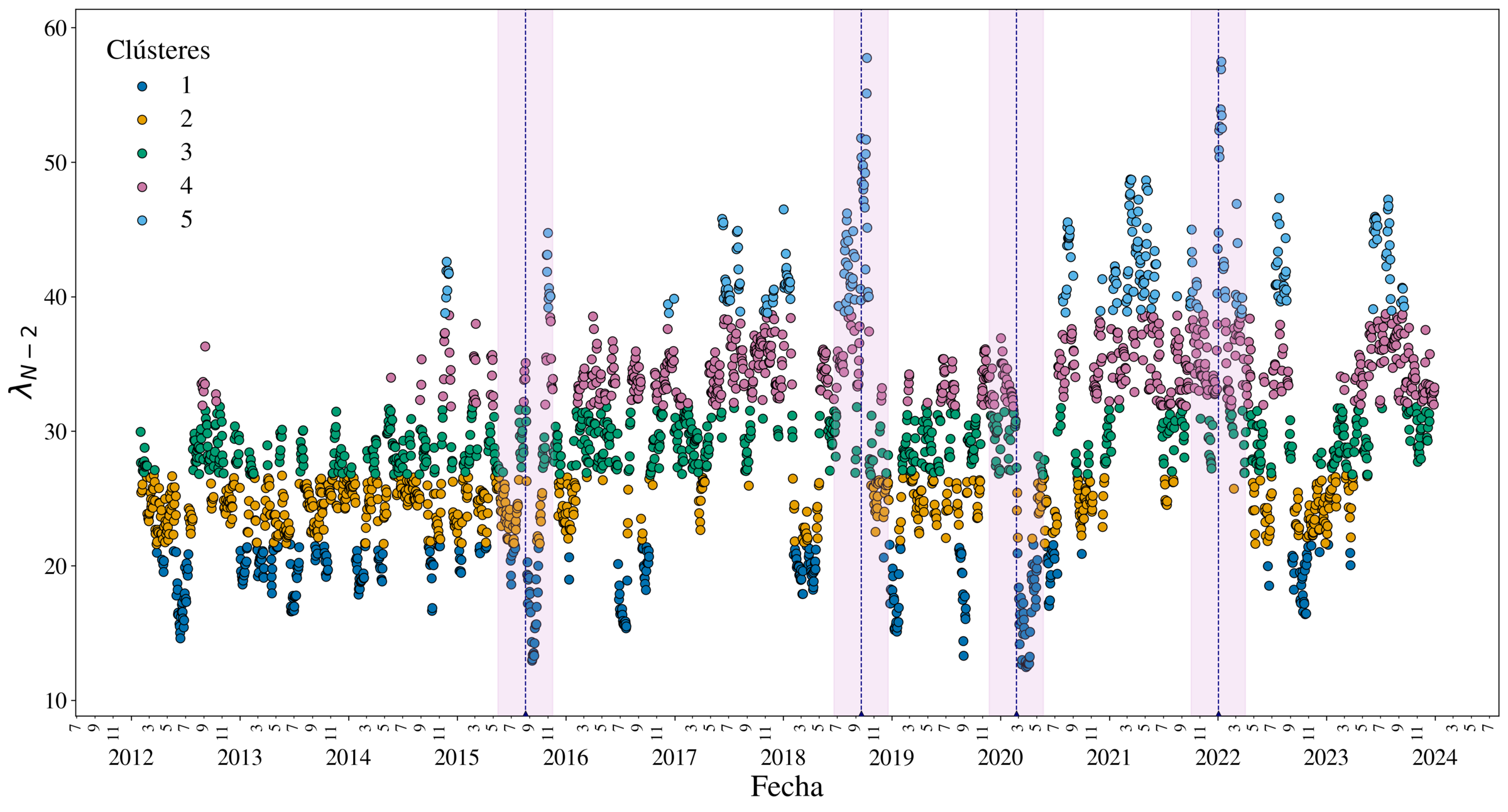}
  \caption*{b) $q=20$, $k=5$}
\end{subfigure}

\vspace{0.3em}

\begin{subfigure}{0.49\linewidth}
  \centering
  \includegraphics[width=\linewidth]{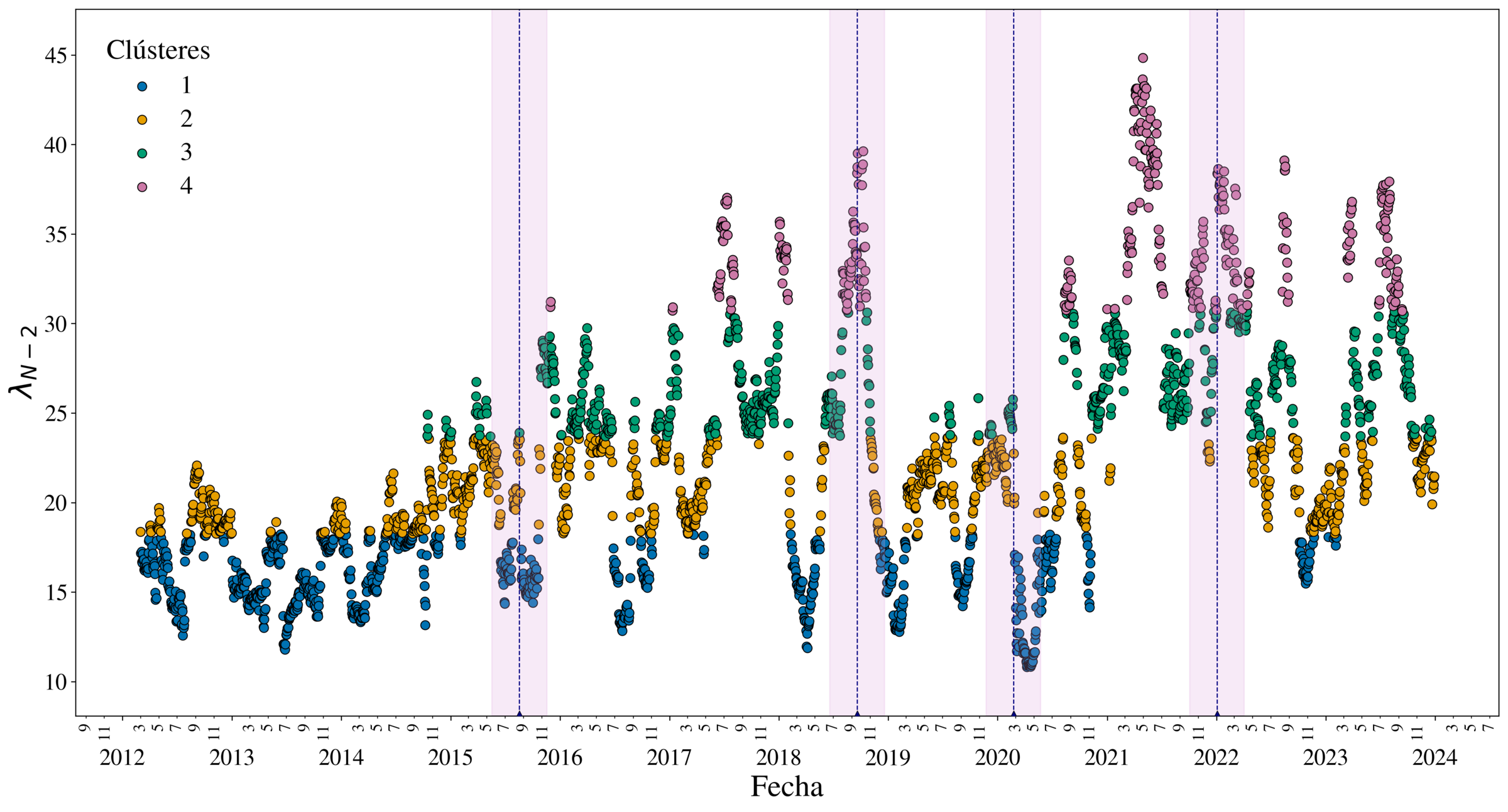}
  \caption*{c) $q=40$, $k=4$}
\end{subfigure}\hfill
\begin{subfigure}{0.49\linewidth}
  \centering
  \includegraphics[width=\linewidth]{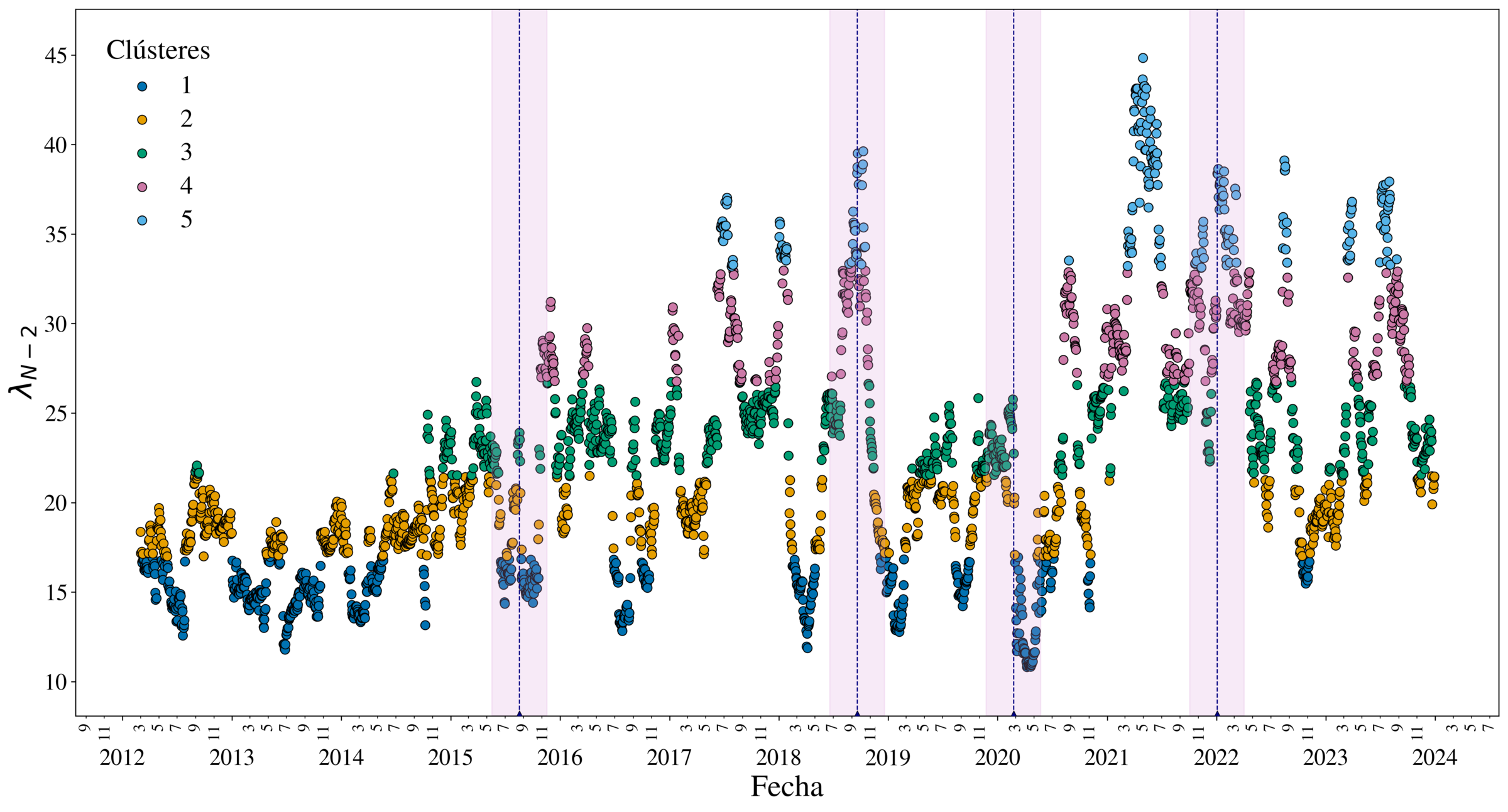}
  \caption*{d) $q=40$, $k=5$}
\end{subfigure}

\caption[Comparison of the third leading eigenvalues]{Comparison of $\lambda_{N-2}$ by \emph{cluster}. 
Top row: $q=20$; bottom row: $q=40$; columns: $k=4$ and $k=5$.}
\label{fig:lambdaNmenos2_grid}
\end{figure}

\clearpage

\section{Dominant Eigenvectors \texorpdfstring{$[v_i^2]_{i=1}^N$}{[vi²]}}
\index{Eigenvector!squared entries}
\index{Sectors!GICS}
Fig.~\ref{fig: EigVect cuadrado q20} corresponds to the case $q=20$ and preserves the same visual structure as Fig.~\ref{fig: EigVect cuadrado q40}.  
Each bar represents the temporal average of the \emph{relative participation weight} $w_i \equiv v_i^{2}$, where $v$ is the eigenvector associated with $\lambda_{N}$, computed over all the time windows. Due to the normalization of the eigenvector, it holds that $\sum_{i=1}^{N} w_i = 1$. For this case a value of $\boldsymbol{\mathrm{IPR}_{N}}=0.00254$ and $\boldsymbol{\mathrm{PR}_{N}}=394.1433$ was obtained.

\begin{figure}[htbp]
    \centering
    \includegraphics[width=\textwidth]{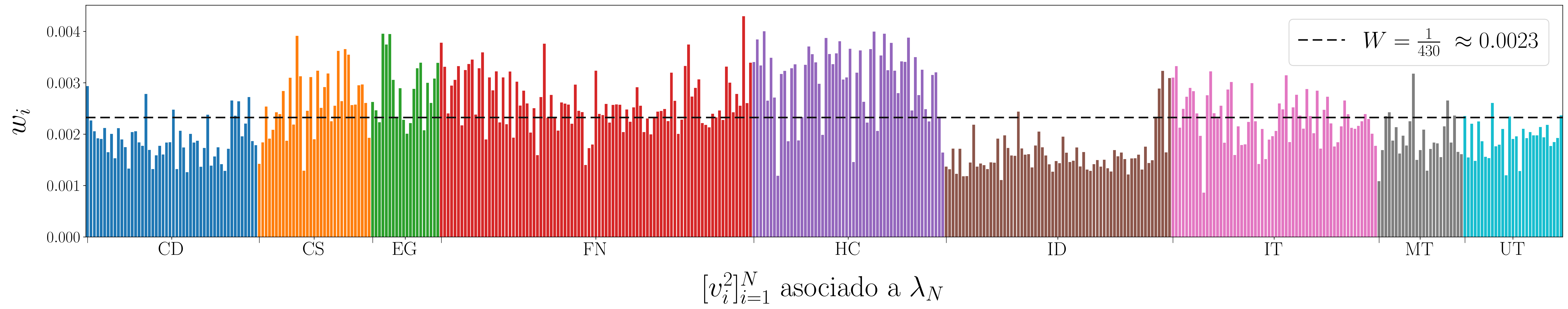}
    \caption[Averages of dominant eigenvectors $v_i^2$, \texorpdfstring{$q=20$}{q=20}]%
    {Temporal averages of the \emph{relative participation weights} $w_i \equiv v_i^{2}$, where $v$ is the eigenvector associated with $\lambda_{N}$, for $q=20$. 
    Due to the normalization of the eigenvector, it holds that $\sum_{i=1}^{N} w_i = 1$. 
    The dotted line indicates the uniform level $W=\frac{1}{430}\approx 0.0023$.
    For this case a value of $\boldsymbol{\mathrm{IPR}_{N}}=0.00254$ and $\boldsymbol{\mathrm{PR}_{N}}=394.1433$ was obtained.}
    \label{fig: EigVect cuadrado q20}
\end{figure}

\section{\textit{Clustering} of the Dominant Eigenvectors \texorpdfstring{$[v_i^2]_{i=1}^N$}{[vi²]}}
\label{sec: Clustering de los eigenvectores cuadrados}
\index{Clusters}

Fig.~\ref{fig:vn2-q40-k4} shows the \emph{temporal averages} of the \emph{relative participation weights} $w_i\equiv v_i^2$, where $v$ is the eigenvector associated with $\lambda_N$, for $q=40$ and $k=4$, grouped by sector (Table \ref{tab:sectores-gics-sp500}). 
By the normalization of the eigenvector, it holds that $\sum_{i=1}^{N}w_i=1$. In this case $N=430$, and the dotted line marks the uniform level $w=1/N=1/430\approx 0.0023$. 
In \emph{cluster} 1, \emph{HC} dominates, whereas in \emph{cluster} 2, \emph{CS}, \emph{EG}, \emph{FN}, and \emph{HC} stand out. 
In addition, the $10$ leading companies for each \emph{cluster} are presented in Table \ref{tab:Top10_q40_k4}.

\begin{figure}[ht]
  \centering
  \captionsetup[subfigure]{justification=centering,singlelinecheck=false}

  \begin{subfigure}[t]{0.98\textwidth}
    \centering
    \includegraphics[width=\linewidth]{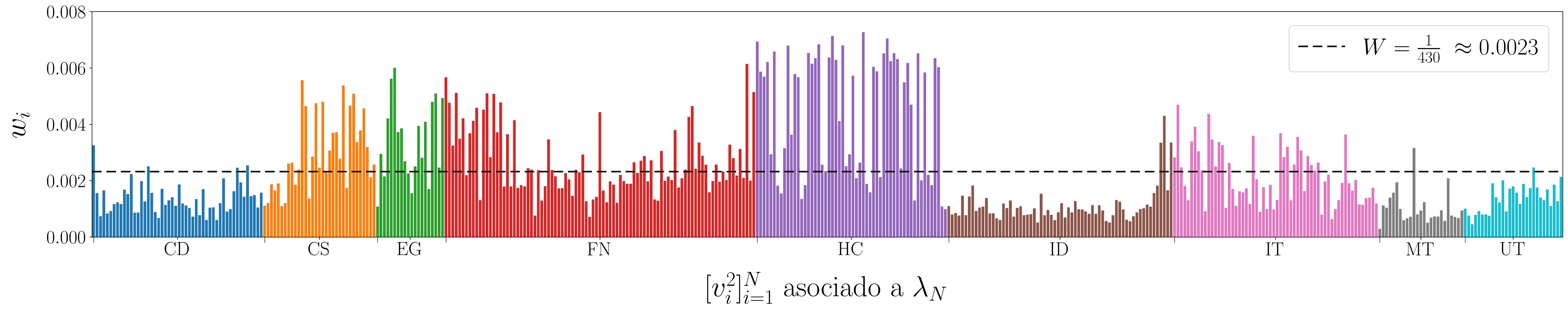}
    \caption{\emph{cluster} 1}
  \end{subfigure}

  \begin{subfigure}[t]{0.98\textwidth}
    \centering
    \includegraphics[width=\linewidth]{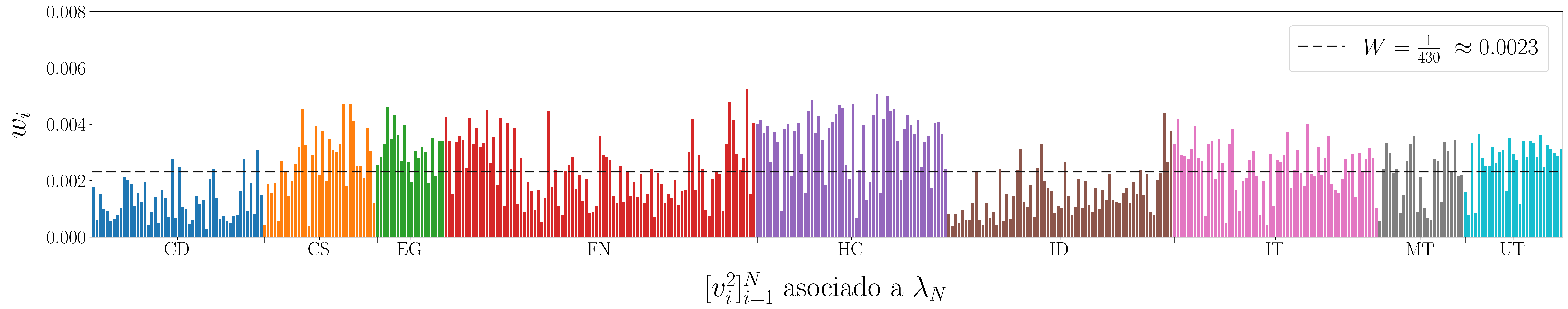}
    \caption{\emph{cluster} 2}
  \end{subfigure}

  \begin{subfigure}[t]{0.98\textwidth}
    \centering
    \includegraphics[width=\linewidth]{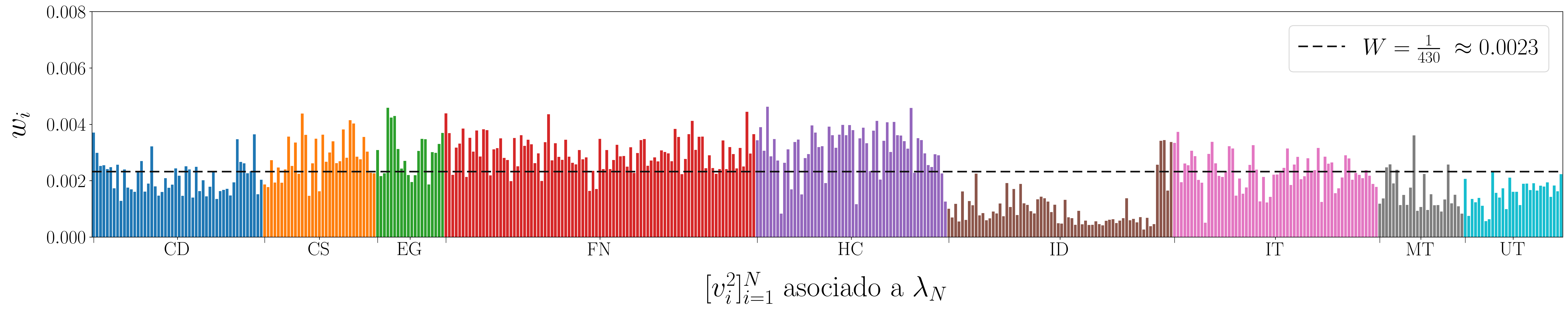}
    \caption{\emph{cluster} 3}
  \end{subfigure}

  \begin{subfigure}[t]{0.98\textwidth}
    \centering
    \includegraphics[width=\linewidth]{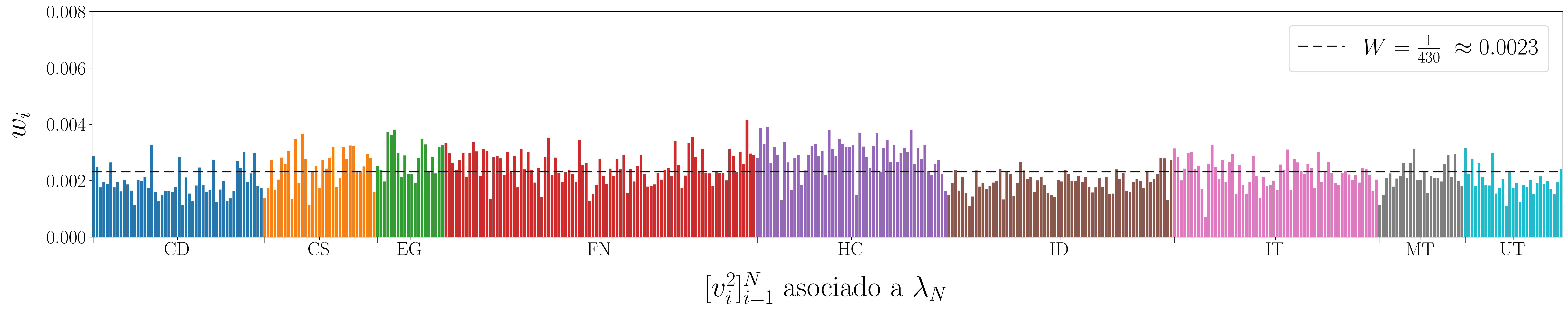}
    \caption{\emph{cluster} 4}
  \end{subfigure}

 \caption[Averages of dominant eigenvectors $v_i^2$, \texorpdfstring{$q=40\;\text{y}\; k=4$}{q=40, k=4}]%
{Temporal averages of the \emph{relative participation weights} $w_i\equiv v_i^2$, where $v$ is the eigenvector associated with $\lambda_N$, for $q=40$ and $k=4$. By the normalization of the eigenvector, it holds that $\sum_{i=1}^{N}w_i=1$, so that $w_i$ is interpreted as a \textbf{discrete participation measure} of company $i$. In this case $N=430$ and the dotted line marks the uniform level $W=1/N=1/430\approx 0.0023$. Both \emph{cluster} 1 and \emph{cluster} 2 follow a trend similar to Fig.~\ref{fig:vn2-q40-k5}.}
\label{fig:vn2-q40-k4}

\end{figure}

\section{Stationary Vectors of \emph{MS}}
\label{sec: Vectores estacionarios complementarios de MS}
\index{Vector!stationary}
Fig.~\ref{fig:vectoresEstacionariosMS_ab} presents complementary visualizations of the stationary vectors associated with the configurations $q=20$, $k=5$ and $q=40$, $k=4$, obtained from different accumulations carried out in the present work. In the case $q=20$ and $k=5$, the same behavior described in Fig.~\ref{fig:vectoresEstacionariosMS_q40_k5} is observed: for $\mathbf{C}^{2}$ and $\mathbf{C}^{3}$, the \emph{COVID} State is identified in \emph{cluster} 2 and maintains similar values between both constructions, whereas for $[v_i^2]_{i=1}^{N}$ it appears in \emph{cluster} 3. For its part, for $q=40$ and $k=4$, the \emph{COVID} State is located in \emph{cluster} 2 in $\mathbf{C}^{2}$, $\mathbf{C}^{3}$, and $[v_i^2]_{i=1}^{N}$, in agreement with Fig.~\ref{fig:evolucionMS_squared_q20q40}. In these constructions, a tendency is also appreciated to reduce the stationary probability associated with the first \emph{cluster} and to increase it in the highest-index \emph{cluster}; finally, the vector associated with $\lambda_{N}$ does not clearly reproduce the \emph{COVID} State.
\begin{figure}[htbp]
  \centering
  \begin{minipage}{0.98\textwidth}
    \centering
    \includegraphics[width=\textwidth]{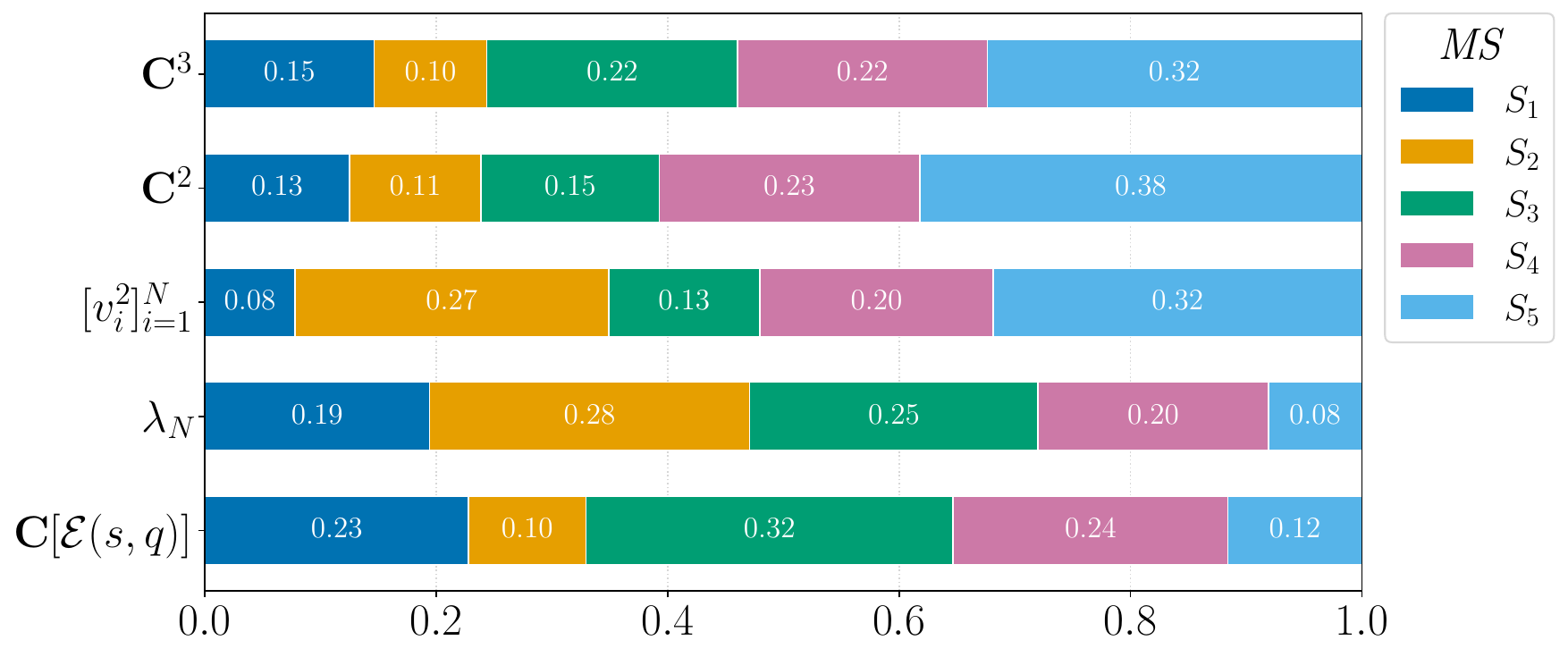}
    \caption*{\textbf{a)} $q=20$, $k=5$.}
  \end{minipage}
  \vspace{0.8cm}
  \begin{minipage}{0.98\textwidth}
    \centering
    \includegraphics[width=\textwidth]{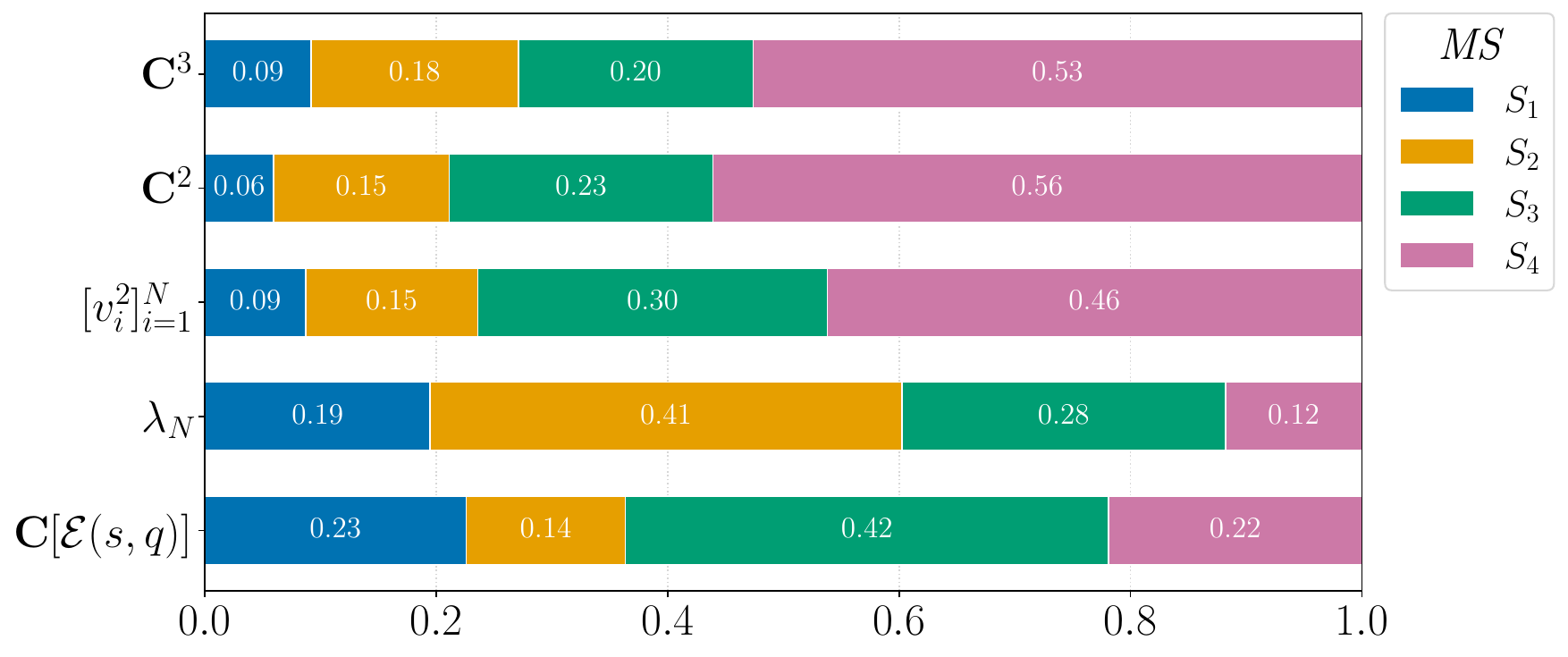}
    \caption*{\textbf{b)} $q=40$, $k=4$.}
  \end{minipage}
  \caption[Stationary vectors associated with \emph{MS}, $q=20$, $k=5$ and $q=40$, $k=4$.]{Fig.~\ref{fig:vectoresEstacionariosMS_ab} presents complementary visualizations of the stationary vectors associated with the configurations $q=20$, $k=5$ and $q=40$, $k=4$, obtained from different accumulations carried out in the present work. In the case $q=20$ and $k=5$, for $\mathbf{C}^{2}$ and $\mathbf{C}^{3}$ the \emph{COVID} State is identified in \emph{cluster} 2 and maintains similar values between both constructions, whereas for $[v_i^2]_{i=1}^{N}$ it appears in \emph{cluster} 3. For its part, for $q=40$ and $k=4$, the \emph{COVID} State is located in \emph{cluster} 2 in $\mathbf{C}^{2}$, $\mathbf{C}^{3}$, and $[v_i^2]_{i=1}^{N}$, in agreement with Fig.~\ref{fig:evolucionMS_squared_q20q40}. In these constructions, a tendency is also appreciated to reduce the stationary probability associated with the first \emph{cluster} and to increase it in the highest-index \emph{cluster}. Finally, the vector associated with $\lambda_{N}$ does not clearly reproduce the \emph{COVID} State.}
  \label{fig:vectoresEstacionariosMS_ab}
\end{figure}

\clearpage
\printbibliography[heading=bibintoc]

\clearpage
\printindex

\end{document}